\newcommand\dom[1]{{\tt dom}({#1})}
\renewcommand\l{\lambda}
\newcommand\recdef{ \Coloneqq }
\newcommand\sysnormalpr[1]{{\tt normal}_{#1}}
\newcommand\sysneutralpr[1]{{\tt neutral}_{#1}}
\newcommand\sysabspr[1]{{\tt abs}_{#1}}
\newcommand\sysnotabspr[1]{\neg \sysabspr{#1}}
\newcommand\sysnormal[2]{\sysnormalpr{#1}(#2)}
\newcommand\sysneutral[2]{\sysneutralpr{#1}(#2)}
\newcommand\sysabs[2]{\sysabspr{#1}(#2)}
\newcommand\sysnotabs[2]{\sysnotabspr{#1}(#2)}
\newcommand\spnormalpr{\sysnormalpr\spi}
\newcommand\spneutralpr{\sysneutralpr\spi}
\newcommand\spnormal[1]{\sysnormal\spi{#1}}
\newcommand\spneutral[1]{\sysneutral\spi{#1}}
\newcommand\sknormalpr{\sysnormalpr\ske}
\newcommand\skneutralpr{\sysneutralpr\ske}
\newcommand\skabspr{\sysabspr\ske}
\newcommand\sknormal[1]{\sysnormal\ske{#1}}
\newcommand\skneutral[1]{\sysneutral\ske{#1}}
\newcommand\lhneutralvoid{{\tt neutral}_{\lsp}}
\newcommand\lhnormalvoid{{\tt normal}_{\lsp}}
\newcommand\lhnormal[1]{\lhnormalvoid(#1)}
\newcommand\lhnormalclosepr{{\tt normal}_{\lsp}^{\#}}
\newcommand\lhnormalclose[1]{\lhnormalclosepr(#1)}
\newcommand\lhneutral[1]{{\tt neutral}_{\lsp}(#1)}
\newcommand\hdneutral[1]{\sysneutral\hd{#1}}
\newcommand\hdnormal[1]{\sysnormal\hd{#1}}
\newcommand\loneutral[1]{\sysneutral\lo{#1}}
\newcommand\lonormal[1]{\sysnormal\lo{#1}}
\newcommand\lhnormalpr[1]{{\tt normal}_{\lsp}^{#1}}
\newcommand\lhnormalp[2]{\lhnormalpr{#2}(#1)}
\newcommand\lhneutralpr[1]{{\tt neutral}_{\lsp}^{#1}}
\newcommand\lhneutralp[2]{\lhneutralpr{#2}(#1)}
\newcommand\neutral{\mathtt{neutral}}
\newcommand\neutype{\neutral}
\newcommand\abstype{\mathtt{abs}}
\newcommand\col{ : }
\newcommand\Deribbase[5]{{#3}\ {\pmb\vdash}_{#2}^{#1} {#4}\  {:}\  {#5}}
\newcommand{\Deribase}[1]{%
  \def\DeribW[##1]{\Deribbase{##1}{#1}}%
  \def\DeribWO{\Deribbase{}{#1}}%
  \@ifnextchar[\DeribW\DeribWO%
  }
  \newcommand{\Deri}{%
  \def\DeriW_##1{\Deribase{##1}}%
  \def\DeriWO{\Deribase{}}%
  \@ifnextchar_\DeriW\DeriWO%
  }
\newcommand\single[1]{\mult{#1}}
\newcommand{\tightpred}[1]{{\tt tight}({#1})}
\newcommand\MSigma[2]{[#1]_{{#2}}}
\newcommand{\steps}{b}
\newcommand{\esteps}{e}
\newcommand{\estepstwo}{\esteps'}
\newcommand{\estepsthree}{\esteps''}
\newcommand{\Steps}{B}
\newcommand{\ESteps}{E}
\newcommand{\stepstwo}{\steps'}
\newcommand{\stepsthree}{\steps''}
\newcommand{\result}{r}
\newcommand{\Result}{R}
\newcommand{\resulttwo}{\result'}
\newcommand{\resultthree}{\result''}
\newcommand{\garbage}{g}
\newcommand{\mtype}{{\tt M}}
\newcommand{\mtypetwo}{{\tt N}}
\newcommand{\M}{\mtype}
\newcommand{\N}{\mtypetwo}
\newcommand{\type}{\tau}
\newcommand{\typetwo}{\sig}
\newcommand{\typethree}{\rho}
\newcommand{\typefour}{\type'}
\newcommand{\atype}{T}
\newcommand{\atypetwo}{U}
\newcommand{\atypethree}{V}
\newcommand{\mneutral}{\mathtt{Neutral}}
\newcommand{\mtight}{\mathtt{Tight}}
\newcommand{\iI}{{i \in I}}
\newcommand{\jJ}{{j \in J}}
\newcommand{\exder}{%
  \def\exderW[##1]{\triangleright_{##1}\ }%
  \def\exderWO{\triangleright\ }%
  \@ifnextchar[\exderW\exderWO%
  }
\newcommand{\Gam}{\Gamma}
\newcommand{\Del}{\Delta}
\newcommand{\tight}{{\tt tight}}
\newcommand\nf{\tight}
 \newcommand\TDeri[5][]{{#2} \tri {#3} \vdash^{#1} {#4} : {#5}}
\renewcommand\single[1]{[#1]}
\newcommand{\subslemma}{{\tt subs}}
\newcommand{\partialsubslemma}{{\tt partial\mbox{-}subs}}
\newcommand{\fun}{{\tt fun}}
\newcommand{\funsteps}{\fun_\steps}
\newcommand{\funresult}{\fun_\result}
\newcommand{\app}{{\tt app}}
\newcommand{\appsteps}{\app_\steps}
\newcommand{\appresult}{%
  \def\appresultW<##1>{\app_\result^{##1}}%
  \def\appresultWO{\app_\result}%
  \@ifnextchar<\appresultW\appresultWO%
  }
\newcommand{\esrule}{{\tt ES}}
\newcommand{\sig}{\sigma}
\newcommand{\tarrow}[2]{#1 \rightarrow #2}
\newcommand{\ty}[2]{\tarrow{#1}{#2}}
\newcommand{\mult}[1]{[ #1 ] }
\newcommand{\tyjn}[4]{{#3} \vdash^{#1} #2{:}#4}
\newcommand{\tyjnpre}[2]{\vdash_{\nf}^{#1} #2}
\newcommand{\emm}{ [\,] }
\newcommand{\cset}[1]{ \{ #1 \} }
\newcommand{\es}{\emptyset}
\newcommand{\tingD}[2]{{#1} \tri z {#2}}
\newcommand{\tri}{\triangleright}
\newcommand{\tderiv}{\Phi}
\newcommand{\tderivtwo}{\tderiv'}
\newcommand{\tderivthree}{\tderiv''}
\newcommand{\tderivfour}{\tderiv'''}
\newcommand{\emptyctx}{\epsilon}
\newcommand{\typctx}{\Gamma}
\newcommand{\typctxtwo}{\Delta}
\newcommand{\typctxthree}{\Pi}
\newcommand{\Spine}{\Result}
\newcommand{\spine}{\result}
\newcommand{\spinetwo}{\spine'}
\newcommand{\spinethree}{\spine''}
\newcommand{\intprecise}{garbage-tight\xspace}
\newcommand{\precise}{tight\xspace}
\newcommand{\maxprecise}{mx-tight\xspace}
\newcommand{\Maxprecise}{Mx-tight\xspace}
\newcommand{\Precise}{Tight\xspace}
\newcommand{\preciseness}{tightness\xspace}
\newcommand{\dynamic}{traditional\xspace}
\newcommand{\Dynamic}{Traditional\xspace}
\newcommand{\communicative}{\dynamic}
\newcommand{\atomtype}{X}
\newcommand{\iz}{i_1}
\newcommand{\typeocc}[2]{{{\tt Occ}_{#1}(#2)}}
\newcommand{\possubtype}[1]{\typeocc{+}{#1}}
\newcommand{\negsubtype}[1]{\typeocc{-}{#1}}
\newcommand{\emptymset}{\emm}
\newcommand{\putinctx}[2]{#1 \langle #2 \rangle }
\newcommand{\polcomp}[2]{\delta(#1,#2)}
\newcommand{\losize}[1]{| #1|_{\lo} }
\newcommand{\spsize}[1]{| #1|_{\spi}}
\newcommand{\hdsize}[1]{| #1|_{\hd}}
\newcommand{\sksize}[1]{| #1|_{\ske} }
\newcommand{\mxsize}[1]{| #1|_{\systemax}}
\newcommand{\lhsize}[1]{| #1|_{\lhd} }
\newcommand{\lspsize}[1]{| #1|_{\lsp} }
\newcommand{\tysize}[1]{| #1|_{ty} }
\newcommand{\syssize}[2]{| #2|_{#1}}
\newcommand{\lhvar}{{\tt lhvar}}
\newcommand{\lhnn}{{\tt lhnn}}
\newcommand{\lhnlam}{{\tt lhnlam}}
\newcommand{\lhhclose}{{\tt lhhclose}}
\newcommand{\lhnapp}{{\tt lhnapp}}
\newcommand{\lhnesub}{{\tt lhnesub}}
\newcommand{\lhnosub}{{\tt lhnosub}}
\newcommand{\lhmsub}{{\tt lhmsub}}
\newcommand{\lhlamm}{{\tt lhlamm }}
\newcommand{\lhnpn}{{\tt lhnpn }}
\newcommand{\lhnmn}{{\tt lhnmn }}
\newcommand{\spi}{hd}
\newcommand{\ske}{{lo}}
\newcommand{\toske}{\Rew{\ske}}
\newcommand{\hd}{{hd}}
\newcommand{\tohd}{\Rew{\hd}}
\newcommand{\lod}{{lo}}
\newcommand{\tolod}{\Rew{\lod}}
\newcommand\system{\stratsym}
\newcommand\systemax{mx}
\newcommand\systemlsp{\lh}
\newcommand{\abslh}[1]{\sysabspr{\lh}(#1)}
\newcommand{\abslhvoid}{\sysabspr{\lh}}
\newcommand{\stratsym}{S}
\newcommand{\tostrat}{\Rew{\stratsym}}
\newcommand{\sm}{\mathbin{{\setminus}\mspace{-5mu}{\setminus}}}
\newcommand{\Terms}{\mathcal T}
\newcommand{\id}{{\tt I}}
    \newtheorem{theorem}{Theorem}[section]
    \newtheorem{lemma}[theorem]{Lemma}
    \newtheorem{proposition}[theorem]{Proposition}
    \newtheorem{definition}[theorem]{Definition}
\newcommand{\myproof}[1]{
\ifthenelse{\boolean{withproofs}}{#1}{}
}
\newcommand{\lterms}{\Lambda}
\newcommand{\ltermslsc}{\Lambda_{{\tt lsc}}}
\newcommand{\la}[1]{\lambda #1.}
\newcommand{\tm}{t}
\newcommand{\tmtwo}{p}
\newcommand{\tmthree}{u}
\newcommand{\tmfour}{q}
\newcommand{\tmfive}{m}
\newcommand{\tmb}{u}  
\newcommand{\tmbb}{v}  %
\newcommand{\var}{x}
\newcommand{\vartwo}{y}
\newcommand{\varthree}{z}
\newcommand{\varfour}{w}
\newcommand{\Rewbase}{%
  \def\RewbaseW[##1]##2##3{\ {\xrightarrow{##1}}{}_{##2}^{##3}\ }%
  \def\RewbaseWO##1##2{\ {\xrightarrow{}}{}_{##1}^{##2}\ }%
  \@ifnextchar[\RewbaseW\RewbaseWO%
  }
\newcommand{\Rew}[1]{%
  \def\RewW[##1]{\Rewbase[##1]{#1}{}}%
  \def\RewWO{\Rewbase{#1}{}}%
  \@ifnextchar[\RewW\RewWO%
  }
\newcommand{\Rewn}[2][*]{%
  \def\RewnW[##1]{\Rewbase[##1]{#2}{#1}}%
  \def\RewnWO{\Rewbase{#2}{#1}}%
  \@ifnextchar[\RewnW\RewnWO%
  }
\renewcommand{\to}{\Rew{}}
\newcommand{\Bsym}{{\tt B}}
\newcommand{\esym}{{\mathtt e}}
\newcommand{\msym}{{\mathtt m}}
\newcommand{\ctxholep}[1]{\langle #1\rangle}
\newcommand{\ctxhole}{\ctxholep{\cdot}}
\newcommand{\nbvctxtwo}[1]{\nbvctxtwo{#1}}
\newcommand{\defeq}{:=}
\newcommand{\grameq}{::=}
\newcommand{\isub}[2]{\{#1/#2\}}
\newcommand{\esub}[2]{[#1/#2]}
\renewcommand{\esub}[2]{[#1 \backslash #2]}
\renewcommand{\isub}[2]{\{#1{\shortleftarrow}#2\}}
\renewcommand{\L}{{\tt L}}
\newcommand{\letexp}[3]{{\tt let}\ #1=#2\ {\tt in}\ #3}
\renewcommand{\lhd}{{lhd}}
\newcommand{\lh}{\lhd}
\newcommand{\lsp}{\lhd}
\newcommand{\lspb}{(\lsp_\msym)}
\newcommand{\lspe}{(\lsp_{\esym})}
\newcommand{\lspapp}{(\lsp_{@})}
\newcommand{\lspabs}{(\lsp_{\l})}
\newcommand{\lspsub}{(\lsp_{s})}
\newcommand{\lhb}{(\lh_\msym)}
\newcommand{\lhs}{(\lh_{\esym})}
\newcommand{\lhapp}{(\lh_{@})}
\newcommand{\lhabs}{(\lh_{\l})}
\newcommand{\lhsub}{(\lh_{s})}
\newcommand{\tolh}{\Rew{\lh}}
\newcommand{\tolsp}{\rightarrow_{\lsp}}
\newcommand{\lhsp}{\lsp}
\newcommand{\tolhb}{\Rew{\msym}}
\newcommand{\tolhs}{\Rew{\esym}}
\newcommand{\tolspe}{\Rew{\esym}}
\newcommand{\tolspb}{\Rew{\msym}}
\newcommand{\lhc}{{\tt H}}
\newcommand{\cwc}[1]{\langle  \! \langle    #1 \rangle  \! \rangle}
\newcommand{\llbrace}{\{ \kern -0.27em \vert}
\newcommand{\rrbrace}{\vert \kern -0.27em \}}
\renewcommand{\l}{\lambda}
\newcommand{\ie}{{\em i.e.}\xspace}
\newcommand{\eg}{{\em e.g.}\xspace}
\newcommand{\ih}{{\emph{i.h.}}\xspace}
\newcommand{\fv}[1]{{\tt fv}(#1)}
\newcommand{\ignore}[1]{}
\newcommand{\msep}{\hspace*{0.2cm}}
\newcommand{\sep}{\hspace*{0.5cm}}
\newcommand{\colspace}{@{\hspace{.5cm}}}
\newcommand{\myinput}[1]{\ifthenelse{\boolean{withimages}}{\input{#1}}{}}
\newcommand{\refappendix}[1]{Appendix~\ref{app:#1}}
\newcommand{\reflemma}[1]{Lemma~\ref{l:#1}}
\newcommand{\refprop}[1]{Proposition~\ref{prop:#1}}
\newcommand{\refpropp}[2]{Proposition~\ref{prop:#1}.\ref{p:#1-#2}}
\newcommand{\refsect}[1]{Sect.~\ref{sect:#1}}
\newcommand{\reffig}[1]{Fig.~\ref{fig:#1}}
\newcommand{\refdef}[1]{Definition~\ref{def:#1}}
\newcommand{\refsection}[1]{Section~\ref{s:#1}}
\newcommand{\axresult}{\ax_\result}
\newcommand{\axres}{\axresult}
\newcommand{\ax}{\mathsf{ax}}
\newcommand{\many}{\mathsf{many}}
\newcommand{\manystrict}{\mathsf{many}_{{>}0}}
\newcommand{\none}{\mathsf{none}}
\newcommand{\set}[1]{\{#1\}}
\newcommand{\nat}{\mathbb{N}}
\newcommand{\tom}{\Rew{\msym}}
\newcommand{\toe}{\Rew{\esym}}
\newcommand{\size}[1]{|#1|}
\newcommand{\unfsym}{\rotatebox[origin=c]{-90}{$\rightarrow$}}
\newcommand{\unf}[1]{#1\unfsym}
\newcommand{\lo}{{LO}\xspace}
\renewcommand{\lo}{{lo}\xspace}
\newcommand{\tolo}{\Rew{\lo}}
\newcommand{\perpe}{mx}
\newcommand{\syssizeSN}[1]{\syssize{\perpe}{#1}}
\newcommand{\toperpNE}{\Rew{\systemax}[0]}
\newcommand\mplus{\uplus}
\newcommand\indset{I}
\newcommand{\lintransfpr}{\mathcal L}
\newcommand{\lintransf}[1]{\lintransfpr(#1)}
\newcommand{\nonlintransfpr}{\mathcal{N}}
\newcommand{\nonlintransf}[1]{\nonlintransfpr(#1)}
\newcommand{\sem}[1]{[\![#1]\!]}
\newcommand\longshort[2]{#1}
\begin{document}

\title[Tight Typings and Split Bounds]{Tight Typings and Split Bounds}


\author{Beniamino Accattoli}
\affiliation{
  \position{}
  \department{LIX}              
  \institution{Inria \& \'Ecole Polytechnique}            
  \streetaddress{}
  \city{}
  \state{}
  \postcode{}
  \country{France}                    
}
\email{beniamino.accattoli@inria.fr}          

\author{St\'ephane Graham-Lengrand}
\affiliation{
  \position{}
  \department{LIX}              
  \institution{CNRS, Inria \& \'Ecole Polytechnique}            
  \streetaddress{}
  \city{}
  \state{}
  \postcode{}
  \country{France}                    
}
\email{graham-lengrand@lix.polytechnique.fr}          

\author{Delia Kesner}
\affiliation{
  \position{}
  \department{IRIF}              
  \institution{CNRS and Universit\'e Paris-Diderot}            
  \streetaddress{}
  \city{}
  \state{}
  \postcode{}
  \country{France}                    
}
\email{kesner@irif.fr}          

\begin{abstract}

Multi types---aka non-idempotent intersection types---have been used
to obtain quantitative bounds on higher-order programs, as pioneered
by de Carvalho. Notably, they bound at the same time the number of
evaluation steps \emph{and} the size of the result.  Recent results
show that the number of steps can be taken as a reasonable time
complexity measure. At the same time, however, these results 
suggest that
multi types provide quite lax complexity bounds, because the size of
the result can be exponentially bigger than the number of steps.

Starting from this observation, we refine and generalise a technique introduced
by Bernadet \& Graham-Lengrand to provide {\it exact} bounds for the maximal strategy.
Our typing judgements carry two counters, one measuring evaluation lengths and
the other measuring result sizes.  In order to emphasise the modularity of the
approach, we provide exact bounds for four evaluation strategies, both in the
$\l$-calculus (head, leftmost-outermost, and maximal evaluation) and in the
linear substitution calculus (linear head evaluation).

Our work aims at both capturing the results in the literature and
extending them with new outcomes. Concerning the literature, it
unifies de Carvalho and Bernadet \& Graham-Lengrand via a uniform technique
and a complexity-based perspective. The two main novelties are exact
split bounds for the leftmost strategy---the only known strategy that
evaluates terms to full normal forms and provides a reasonable
complexity measure---and the observation that the computing device
hidden behind multi types is the notion of substitution at a distance,
as implemented by the linear substitution calculus.

\end{abstract}


\begin{CCSXML}
<ccs2012>
<concept>
<concept_id>10011007.10011006.10011008</concept_id>
<concept_desc>Software and its engineering~General programming languages</concept_desc>
<concept_significance>500</concept_significance>
</concept>
<concept>
<concept_id>10003456.10003457.10003521.10003525</concept_id>
<concept_desc>Social and professional topics~History of programming languages</concept_desc>
<concept_significance>300</concept_significance>
</concept>
</ccs2012>
\end{CCSXML}

\ccsdesc[500]{Software and its engineering~General programming languages}
\ccsdesc[300]{Social and professional topics~History of programming languages}

\keywords{lambda-calculus, type systems, cost models}  

\maketitle

\section{Introduction
}

Type systems enforce properties of programs, such as termination,
deadlock-freedom, or productivity. This paper studies a class of type
systems for the $\l$-calculus that refines termination 
by providing
exact bounds for evaluation lengths and normal forms.

\paragraph{Intersection types and multi types.} One of the
cornerstones of the theory of $\l$-calculus is that intersection types
\emph{characterise} termination: not only typed programs terminate,
but all terminating programs are typable as well
\cite{CDC78,CDC80,Pottinger80,krivine1993lambda}. In fact, the $\l$-calculus comes
with different notions of evaluation (\eg call-by-name, call-by-value,
call-by-need, etc) to different notions of normal forms
(head/weak/full, etc) and, accordingly, with different systems of
intersection types.

Intersection types are a flexible tool and, even when one fixes a
particular notion of evaluation and normal form, the type system can
be formulated in various ways. A flavour that became quite convenient
in the last 10 years is that of \emph{non-idempotent} intersection
types \cite{DBLP:conf/tacs/Gardner94,DBLP:journals/logcom/Kfoury00,DBLP:conf/icfp/NeergaardM04,Carvalho07} (a survey can be
found in~\cite{BKV17}), where the
intersection $A \cap A$ is not equivalent to $A$. Non-idempotent
intersection types are more informative than idempotent ones because
they give rise to a \emph{quantitative} approach, that allows counting
resource consumption. 

Non-idempotent intersections can be seen as multi-sets, which is why, to ease the
language, we prefer to call them \emph{multi types} rather than
\emph{non-idempotent intersection types}.
Multi types have two main features:
\begin{enumerate}
  \item \emph{Bounds on evaluation lengths}: they go beyond simply
    qualitative characterisations of termination, as typing derivations provide 
    quantitative bounds
    on the length of evaluation (\ie\ on the number of
    $\beta$-steps). Therefore, they give intensional insights on
    programs, and seem to provide a tool to reason about the complexity 
    of  programs.
  
  \item \emph{Linear logic interpretation}: multi types are deeply
    linked to linear logic. The relational model~\cite{Girard88,DBLP:journals/apal/BucciarelliE01} of linear logic
    (often considered as a sort of canonical model of linear logic) is
    based on multi-sets, and multi types can be seen as a syntactic
    presentation of the relational model of the $\l$-calculus induced
    by the interpretation into linear logic. 
\end{enumerate}

These two facts together have a potential, fascinating consequence:
they suggest that denotational semantics may provide abstract tools
for complexity analyses, that are theoretically solid,
being grounded on linear logic.

Various works in the literature explore the bounding power of multi
types. Often, the bounding power is used \emph{qualitatively}, \ie without
explicitely counting the number of steps, to characterise termination and / or
the properties of the induced relational model.  Indeed, multi types provide
combinatorial proofs of termination that are simpler than those developed for
(idempotent) intersection types (\eg reducibility candidates).  Several papers
explore this approach under the call-by-name
\cite{DBLP:journals/apal/BucciarelliEM12,DBLP:conf/ictac/KesnerV15,DBLP:conf/rta/KesnerV17,DBLP:journals/mscs/PaoliniPR17,DBLP:conf/lics/Ong17}
or the call-by-value
\cite{DBLP:conf/csl/Ehrhard12,DBLP:conf/lfcs/Diaz-CaroMP13,DBLP:conf/fossacs/CarraroG14}
operational semantics, or both \cite{DBLP:conf/ppdp/EhrhardG16}. Sometimes,
precise \emph{quantitative} bounds are provided instead, as
in~\cite{Carvalho07,Bernadet-Lengrand2013}. Multi types can also be used to
provide characterisation of complexity classes
\cite{DBLP:journals/iandc/BenedettiR16}. Other qualitative
\cite{DBLP:conf/csl/Carvalho16,DBLP:conf/rta/GuerrieriPF16} and quantitative
\cite{DBLP:journals/tcs/CarvalhoPF11,DBLP:journals/iandc/CarvalhoF16} studies
are also sometimes done in the more general context of linear logic, rather than
in the $\l$-calculus.

\paragraph{Reasonable cost models.}  
 Usually, the quantitative works 
define  a measure for  typing derivations and show that the measure
provides a bound on the length of evaluation sequences for typed terms.
A criticism that could be raised against these results is, or rather
was, that the number of $\beta$-steps of the bounded evaluation
strategies might not be a reasonable cost model, that is, it might not
be a reliable complexity measure. This is because no reasonable cost
models for the $\l$-calculus were known at the time. But the
understanding of cost models for the $\l$-calculus made significant
progress in the last few years. Since the nineties, it is known that
the number of steps for \emph{weak} strategies (\ie not reducing under
abstraction) is a reasonable cost
model~\cite{DBLP:conf/fpca/BlellochG95}, where \emph{reasonable} means
polynomially related to the cost model of Turing machines. It is only
in 2014, that a solution for the general case has been obtained: the
length of leftmost evaluation to normal form was shown to be a
reasonable cost model in~\cite{DBLP:journals/corr/AccattoliL16}. In
this work we essentially update the study of the bounding power of
multi types with the insights coming from the study of reasonable cost
models. In particular, we provide new answers to the question of
whether denotational semantics can really be used as an accurate tool
for complexity analyses.

\paragraph{Size explosion and lax bounds.} The study of cost models made
clear that evaluation lengths are independent from the size of their results.
The skepticism about taking the number of $\beta$-steps as a reliable complexity
measure comes from the \emph{size explosion problem}, that is, the fact that the
size of terms can grow exponentially with respect to the number of
$\beta$-steps.  When $\lambda$-terms are used to encode decision
  procedures, the normal forms (encoding true or false) are of constant size,
  and therefore there is no size explosion issue. But when $\lambda$-terms are
  used to compute other normal forms than Boolean values, there are families of
terms $\{\tm_n\}_{n\in\nat}$ where $\tm_n$ has size linear in $n$, it evaluates
to normal form in $n$ $\beta$-steps, and produces a result $\tmtwo_n$ of size
$\Omega(2^n)$, \ie exponential in $n$. Moreover, the size explosion problem is
extremely robust, as there are families for which the size explosion is
independent of the evaluation strategy. The difficulty in proving that the
length of a given strategy provides a reasonable cost model lies precisely in
the fact that one needs a compact representation of normal forms, to avoid to
fully compute them (because they can be huge and it would be too expensive).  A
divulgative introduction to reasonable cost models and size explosion is
\cite{AccattoliLSFA}.
 
Now, multi typings do bound the number of $\beta$-steps of reasonable
strategies, but these bounds are too generous since they bound at the
same time the length of evaluations and the size of the normal
forms. Therefore, even a notion of \emph{minimal} typing (in
  the sense of being the smallest derivation) provides a bound that
in some cases is exponentially worse than the number of $\beta$-steps.

Our observation is that the typings themselves are in fact much
bigger than evaluation lengths, and so the widespread point of view for which multi
types---and so the relational model of linear logic---faithfully
capture evaluation lengths, or even the complexity, is misleading.

\subsection*{Contributions}
\paragraph{The tightening technique.} Our starting point is a technique
introduced in a technical
report by~\cite{bernadet13}.  They study the case of strong
normalisation, and present a multi type system where typing
derivations of terms provide an \emph{upper bound}
on the number of $\beta$-steps to normal
form.  More interestingly, they show that every strongly normalising
term admits a typing derivation that is sufficiently tight, 
where the
obtained bound is \emph{exactly} the length of the longest
$\beta$-reduction path.  This improved on previous results,
\eg~\cite{DBLP:conf/fossacs/BernadetL11,Bernadet-Lengrand2013} where
multi types provided the exact measure of longest 
evaluation paths
\emph{plus the size of the normal forms} which, as discussed above,
can be exponentially bigger.  Finally, they enrich the structure of
base types so that, for those typing derivations providing the exact
lengths, the type of a term gives the structure (and hence the
size) of its normal form.  This paper embraces this tightening
technique, simplifying it with the use of \emph{tight constants} for
base types, and generalising it to a range of other evaluation strategies,
described below.

It is natural to wonder how natural the tightening technique is---a
malicious reader may indeed suspect that we are cooking up an ad-hoc
way of measuring evaluation lengths, betraying the
\emph{linear-logic-in-disguise} spirit of multi types. To remove any
doubt, we show that our tight typings are actually isomorphic to
minimal multi typings without tight constants. Said differently, the
tightening technique turns out to be a way of characterising minimal
typings in the standard multi type framework (aka the relational model).
Let us point out that, in the literature, there are
characterisations of minimal typings (so-called principal typings) only for normal forms, and they
extend to non-normal terms only \emph{indirectly}, that is, by subject
expansion of those for normal forms. Our approach, instead, provides a
\emph{direct} description, for any typable term.

\paragraph{Modular approach.} 
We develop all our results by using a unique schema that modularly
applies to different evaluation strategies.  Our approach isolates the key
concepts for the correctness and completeness of multi types,
providing a powerful and modular technique, having at least two
by-products. First, it reveals the relevance of \emph{neutral terms}
and of their properties with respect to types.  Second, the concrete
instantiations of the schema on four different cases always require
subtle definitions, stressing the key conceptual properties of each
case study.

\paragraph{Head and leftmost evaluation.} Our first application of the
tightening technique is to the head and leftmost evaluation
strategies. The head case is the simplest possible one. The leftmost
case is the natural iteration of the head one, and the only known
strong strategy whose number of steps provides a reasonable cost
model~\cite{DBLP:journals/corr/AccattoliL16}. 
Multi types bounding the lengths of leftmost normalising terms
have been also studied in~\cite{DBLP:conf/ifipTCS/KesnerV14}, but the
exact number of steps taken by the leftmost strategy has not
been measured via multi types before---therefore, this is a new
result, as we now explain.

The study of the head and the leftmost strategies, at first sight, seems to be a minor reformulation of de Carvalho's results about measuring via  multi types the length of executions of the Krivine abstract machine (shortened KAM)---implementing weak head evaluation---and of the iterated KAM---that implements leftmost evaluation \cite{DBLP:journals/corr/abs-0905-4251}. The study of cost models is here enlightening: de Carvalho's iterated KAM does implement leftmost evaluation, but the overhead of the machine (that is counted by de Carvalho's measure) is exponential in the number of $\beta$-steps, while here we only measure the number of $\beta$-steps, thus providing a much more parsimonious (and yet reasonable) measure.

The work of de Carvalho, Pagani and Tortora de Falco \cite{DBLP:journals/tcs/CarvalhoPF11}, using the relational model of linear logic to measure evaluation lengths in proof nets, is also closely related. They do not however split the bounds, that is, they do not have a way to measure separately the number of steps and the size of the normal form. Moreover, their notion of cut-elimination by levels does not correspond to leftmost evaluation.

\paragraph{Maximal evaluation.} We also apply the technique to the
\emph{maximal strategy}, which takes the maximum number of steps to normal form,
if any, and diverges otherwise.  The maximal strategy has been bounded in
\cite{DBLP:conf/fossacs/BernadetL11}, and exactly measured in~\cite{bernadet13}
via the idea of tightening, as described above.  With respect to~\cite{bernadet13}, our technical development is simpler. The
differences are:

\begin{enumerate}
\item 
  \emph{Uniformity with other strategies}: The typing system used
  in~\cite{bernadet13} for the maximal strategy has a special rule for typing a
  $\lambda$-abstraction whose bound variable does not appear in the body. This
  special case is due to the fact that the \emph{empty} multi type is forbidden
  in the grammar of function types. Here, we align the type grammar with that
  used for other evaluation strategies, allowing the empty multi type, which in
  turn allows the typing rules for $\lambda$-abstractions to be the same as for
  head and leftmost evaluation. This is not only simpler, but it also
  contributes to making the whole approach more uniform across the different
  strategies that we treat in the paper. Following the head and leftmost
  evaluation cases, our completeness theorem for the maximal strategy bears
  quantitative information (about \eg evaluation lengths), in contrast
  with~\cite{bernadet13}.

\item 
  \emph{Quantitative aspects of normal forms}: Bernadet and Graham-Lengrand
  encode the shape of normal forms into base types. We simplify this by only
  using two tight constants for base types. On the other hand, we decompose the
  actual size of a typing derivation as the sum of two quantities: the first one
  is shown to match the maximal evaluation \emph{length} of the typed term, and
  the second one is shown to match the \emph{size} of its normal form together
  with the size of all terms that are erased by the evaluation process.
  Identifying what the second quantity captures is a new contribution.
\item \emph{Neutral terms}: 
  we emphasise the key role of neutral terms in the technical development
  by describing their specificities with respect to typing.
  This is not explicitly broached in~\cite{bernadet13}.

\end{enumerate}

\paragraph{Linear head evaluation.} Last, we apply the tightening
technique to \emph{linear} head evaluation \cite{DBLP:journals/tcs/MascariP94,Danos04headlinear} ($\lh$ for short), formulated in
the linear substitution calculus (LSC) \cite{DBLP:conf/rta/Accattoli12,DBLP:conf/popl/AccattoliBKL14}, a $\l$-calculus with explicit substitutions that is strongly related to linear logic proof nets, and also a minor variation over a calculus by Milner \cite{DBLP:journals/entcs/Milner07}. The literature contains a
characterisation of $\lh$-normalisable
terms~\cite{DBLP:conf/ifipTCS/KesnerV14}. Moreover,~\cite{Carvalho07}  measures
the executions of the KAM, a result that can also be interpreted as a measure of
$\lh$-evaluation. What we show however is stronger, and somewhat
unexpected.

To bound $\lh$-evaluation, in fact, we can strongly 
stand on the bounds obtained for head evaluation.
More precisely, the result for the exact bounds for \emph{head} evaluation takes
only into account  the number of abstraction and
application typing rules. For \emph{linear} head evaluation, instead, we simply
need to count also the axioms, \ie\  the rules typing variable
occurrences, nothing else. It turns out  that the length of a
linear head evaluation plus the size of the linear head normal form is
\emph{exactly} the size of the tight typing.

Said differently, multi typings simply encode evaluations in the LSC.
In particular, we do not have to adapt multi types to the LSC, as for
instance de Carvalho does to deal with the KAM. It actually is the
other way around. As they are, multi typings naturally measure
evaluations in the LSC. To measure evaluations in the $\l$-calculus,
instead, one has to forget the role of the axioms. The best
way to stress it, probably, is that the LSC is the computing device
behind multi types.

\longshort{Most proofs have been moved to the Appendix.}{
  Most proofs are provided in the long version of this paper~\cite{AccattoliGLK2018-long}.}

\subsection*{Other Related Works}

Apart from the papers already cited, let us mention some other related works. A
recent, general categorical framework to define intersection and multi type
systems is in~\cite{DBLP:journals/pacmpl/MazzaPV18}.

While the inhabitation problem is undecidable for idempotent intersection
types~\cite{Urzyczyn99}, the quantitative aspects provided by multi types make
it decidable~\cite{DBLP:conf/ifipTCS/BucciarelliKR14}.
Intersection type are also used in~\cite{DBLP:conf/popl/DudenhefnerR17} to give
a bounded dimensional description of $\lambda$-terms via a notion of
\emph{norm}, which is resource-aware and orthogonal to that of {\it rank}. It is
proved that inhabitation in bounded dimension is decidable (EXPSPACE-complete)
and subsumes decidability in rank $2$~\cite{Urzyczyn09}.

Other works propose a more practical perspective on resource-aware analyses for
functional programs. In particular, type-based techniques for automatically
inferring bounds on higher-order functions have been developed, based on sized
types~\cite{HughesPS96,DBLP:conf/ifl/PortilloHLV02,DBLP:conf/ifl/VasconcelosH03,DBLP:journals/pacmpl/AvanziniL17}
or amortized
analysis~\cite{DBLP:conf/popl/HofmannJ03,HoffHof10,DBLP:journals/jar/JostVFH17}.
This led to practical cost analysis tools like \emph{Resource-Aware
  ML}~\cite{HAH12} (see \url{raml.co}).  Intersection types have been
used~\cite{DBLP:conf/types/SimoesHFV06} to address the \emph{size aliasing}
problem of sized types, whereby cost analysis sometimes overapproximates cost to
the point of losing all cost information~\cite{DBLP:conf/ifl/PortilloHLV02}. How
our multi types could further refine the integration of intersection types with
sized types is a direction for future work, as is the more general combination
of our method with the type-based cost analysis techniques mentioned above.

\section{A Bird's Eye View} 
\label{s:bird}
Our study is based on a schema that is repeated for different
evaluation strategies, making most notions parametric
in the strategy $\tostrat$ under study. The following concepts
constitute the main ingredients of our technique:




\begin{enumerate}
\item \emph{Strategy, together with the normal, neutral, and abs predicates}: there
  is a (deterministic) evaluation strategy $\tostrat$ whose normal
  forms are characterised via two related predicates, $\sysnormal\system\tm$
    and $\sysneutral\system\tm$, the intended meaning
    of the second one is that $\tm$ is $\system$-normal and can never behave as an
    abstraction (that is, it does not create a redex when applied to
    an argument). We further parametrise  also this last notion by using a
    predicate $\sysabs\system\tm$ identifying abstractions, because
    the definition of deterministic strategies  requires some subterms
    to not be abstractions.


  \item \emph{Typing derivations}: there is a multi types system which has three features:
\begin{itemize}
    \item \emph{Tight constants}: there are two new type constants
      $\neutype$ and $\abstype$, and rules to introduce them.  As
      their name suggests, the constants $\neutype$ and $\abstype$
      are  used to type terms whose normal form is neutral
      or an abstraction, respectively.

  \item \emph{Tight derivations}: there is a notion of tight derivation 
 that requires a special use of the constants. 

    \item \emph{Indices}: typing judgements have the shape $\Deri[(\steps, \result)] {\typctx}{\tm}{\type}$, where $\steps$ and $\result$ are indices meant to count,
respectively, the number of steps to normal form and the size of the normal form.
  \end{itemize}

  \item \emph{Sizes}: there is a notion of size of terms that depends on the strategy, noted $\syssize{\stratsym}{\tm}$. Moreover, there is a notion of size of typing derivations $\syssize{\stratsym}{\tderiv}$ that also depends on the strategy / type system, that coincides with the sum of the indices associated to the last judgement of $\tderiv$.

  \item \emph{Characterisation}: we prove that  $\Deri[(\steps,
    \result)] {\typctx}{\tm}{\type}$ is a tight typing relatively to
    $\tostrat$ if and only if
    there exists an  $\stratsym$ normal term     $\tmtwo$
   such that $\tm \rightarrow_\stratsym^{\steps/2} \tmtwo$ and 
    $\syssize{\stratsym}{\tmtwo} = \result$.

  \item \emph{Proof technique}: the characterisation is obtained
    always through the same sequence of intermediate
    results. Correctness follows from the fact that all tight typings
    of normal forms precisely measure their size, a \emph{substitution
      lemma} for typing derivations and \emph{subject reduction}.
    Completeness follows from the fact that every normal form admits a
    tight typing, an \emph{anti-substitution lemma} for typing
      derivations, and \emph{subject expansion}.

\item \emph{Neutral terms}: we  stress the relevance of neutral terms in normalisation proofs from a typing perspective. In particular, correctness theorems always rely on a lemma about them. Neutral terms are a common concept in the
  study of $\l$-calculus, playing a key role in, for instance, the
  reducibility candidate technique \cite{Girard:1989:PT:64805}. 

\end{enumerate}

The proof schema is illustrated in the next section on two standard strategies,
namely \emph{head} and \emph{leftmost-outermost evaluation}. It is then slightly adapted to deal with \emph{maximal evaluation} in \refsect{maximal} and \emph{linear head evaluation} in \refsect{linear}.

\paragraph{Evaluation systems.}
Each case study treated in the paper relies on the same properties of the strategy $\tostrat$ and the related predicates $\sysnormal\system\tm$, $\sysneutral\system\tm$, and $\sysabs\system\tm$, that we collect under the notion of \emph{evaluation system}. 

\begin{definition}[Evaluation system]

Let $\Terms_\system$ be a set of terms, $\tostrat$ be a (deterministic) strategy and $\sysnormalpr\system$, $\sysneutralpr\system$, and $\sysabspr\system$ be predicates on $\Terms_\system$. All together they form an \emph{evaluation system} $\system$ if for all $\tm, \tmtwo, \tmtwo_1, \tmtwo_2 \in \Terms_\system$:
\begin{enumerate}
\item \label{p:determinism} \emph{Determinism of $\tostrat$}: if $\tm \tostrat \tmtwo_1$ and $\tm \tostrat \tmtwo_2$ 
then $\tmtwo_1 = \tmtwo_2$.
\item \label{p:normal-forms}\emph{Characterisation of $\system$-normal terms}: $\tm$ is $\tostrat$-normal if and only if $\sysnormal\system\tm$. 
\item \label{p:neutral-forms}\emph{Characterisation of $\system$-neutral terms}: $\sysneutral\system\tm$  if and only if  $\sysnormal\system\tm$ and $\sysnotabs\system\tm$.
\end{enumerate}
\end{definition}

Given a strategy $\tostrat$ we use $\Rewn[k]{\stratsym}$ for its $k^{th}$ iteration and $\Rewn{\stratsym}$ for its transitive closure.


\section{Head and Leftmost-Outermost Evaluation}
\label{sect:head-skeleton}

In this section we consider two evaluation systems at once. The two strategies are the famous \emph{head} and \emph{leftmost-outermost
  evaluation}. We treat the two cases together to stress the modularity of our technique.
The set of $\l$-terms $\lterms$ is given by ordinary $\l$-terms:
\begin{center}
$\begin{array}{c\colspace \colspace \colspace ccc}
    \textsc{$\l$-Terms} & \tm,\tmtwo & \recdef & \var \mid \la\var\tm\mid \tm\tmtwo
  \end{array}$
\end{center}
\paragraph{Normal, neutral, and abs predicates.} The predicates $\spnormalpr$ and $\sknormalpr$ defining head and leftmost-outermost (shortened LO in the text and $\lo$ in mathematical symbols) normal terms are in \reffig{normalforms}, and they are based on two  auxiliary predicates defining neutral terms: \emph{$\spneutralpr$} and \emph{$\skneutralpr$}---note that $\skneutral{\tm}$ implies $\spneutral{\tm}$. The predicates \emph{$\sysabs\spi\tm$} and \emph{$\sysabs\ske\tm$} are
equal for the systems $\spi$ and $\ske$ and they are true simply when $\tm$ is  an abstraction.

\paragraph{Small-step semantics.} The \emph{head} and \emph{leftmost-outermost}
strategies $\tohd$ and $\tolo$  are both defined in \reffig{strategies}. Note that these
definitions rely on the predicates defining neutral terms and
abstractions. 

\begin{proposition}[Head and LO evaluation systems]
  Let $\system \in \set{\hd,\lo}$.
  Then\\ $(\lterms,\tostrat, \sysneutralpr \system, \sysnormalpr \system,
  \sysabspr\system)$ is an evaluation system.
\end{proposition}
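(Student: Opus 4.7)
The plan is to verify the three clauses of the definition of an evaluation system for each of $\hd$ and $\lo$ in parallel, exploiting the parallel structure of their inductive definitions in \reffig{strategies} and \reffig{normalforms}. Since the predicates $\sysneutralpr\system$ and $\sysnormalpr\system$ are defined by mutual induction and the strategies $\tohd$, $\tolo$ are defined by induction on the term's shape (using $\sysneutralpr\system$ and $\sysabspr\system$ as side conditions), each clause will follow by a straightforward structural induction on $\tm \in \lterms$ (or, equivalently, on the relevant derivation).

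For clause~\ref{p:determinism} (determinism), I would proceed by induction on $\tm$, considering the cases $\tm = \var$ (no redex, vacuous), $\tm = \la\var\tmtwo$ (only applicable for $\lo$, and the reduct is determined by the inductive hypothesis on $\tmtwo$), and $\tm = \tmtwo_1\tmtwo_2$. In the application case, the rules are designed to partition the possibilities: either $\tmtwo_1$ is an abstraction (triggering a $\beta$-redex at the root), or $\tmtwo_1$ reduces (forcing reduction in the left subterm), or—only for $\lo$—$\tmtwo_1$ is neutral and the reduction must occur in $\tmtwo_2$. The side conditions $\sysabs\system{\tmtwo_1}$, $\sysnotabs\system{\tmtwo_1}$, and $\sysneutral\lo{\tmtwo_1}$ are mutually exclusive and together with the inductive hypothesis yield uniqueness of the reduct.

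For clause~\ref{p:normal-forms} (characterisation of normal forms), I would prove the two directions separately by induction. The left-to-right direction (if $\tm$ is $\tostrat$-normal then $\sysnormal\system\tm$) proceeds by induction on $\tm$: a variable satisfies $\sysneutralpr\system$ hence $\sysnormalpr\system$; an abstraction $\la\var\tmtwo$ is normal only when $\tmtwo$ is normal (vacuously for $\hd$, by inversion of the $\lo$ rule otherwise), so the inductive hypothesis plus the abstraction rule of $\sysnormalpr\system$ applies; an application $\tmtwo_1\tmtwo_2$ being normal forces $\tmtwo_1$ not to be an abstraction (else a root redex would fire) and to be $\system$-normal, and for $\lo$ additionally $\tmtwo_2$ must be $\lo$-normal. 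The right-to-left direction is by induction on the derivation of $\sysnormal\system\tm$ (equivalently of $\sysneutral\system\tm$), checking that at each construction of $\sysnormalpr\system$ no strategy rule applies: abstractions over normal bodies cannot reduce at the root and the inductive hypothesis blocks the subterm, while neutral applications $\tmtwo_1\tmtwo_2$ with $\sysneutral\system{\tmtwo_1}$ block the root redex (since $\sysnotabs\system{\tmtwo_1}$) and the sub-reductions by the inductive hypothesis.

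Clause~\ref{p:neutral-forms} is essentially bookkeeping on the inductive definitions: the neutral predicate $\sysneutralpr\system$ is designed so that its clauses (variable, and application of a neutral to a normal term) are exactly those clauses of $\sysnormalpr\system$ that do not produce an abstraction, so $\sysneutral\system\tm$ implies $\sysnormal\system\tm \land \sysnotabs\system\tm$ directly, and conversely a normal non-abstraction must be either a variable or an application, and inversion of $\sysnormalpr\system$ in the latter case gives $\sysneutralpr\system$ on the head. I expect no real obstacle: the statement is essentially a sanity check that the definitions in \reffig{normalforms} and \reffig{strategies} are coherent, and all the work is a routine mutual induction; the only mild subtlety is keeping track, in the $\lo$ case, of the extra side condition $\sysneutral\lo{\tmtwo_1}$ used to select which subterm to reduce in an application.
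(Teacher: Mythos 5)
The paper offers no proof of this proposition---it is declared routine and omitted even from the appendix---and your sketch is exactly the standard mutual-induction argument one would write, so it is essentially the intended proof. One small correction: the congruence rule $\la\var\tm \tostrat \la\var\tmtwo$ is present in \emph{both} strategies (head evaluation does reduce under abstractions, see \reffig{strategies}), not only in $\lo$; the case is handled identically in the two systems by the inductive hypothesis on the body, so nothing else changes. It is also worth making explicit that, for $\lo$, the mutual exclusivity of the two application congruence rules rests on the fact that a $\lo$-neutral left subterm is $\lo$-normal and hence irreducible, i.e.\ the determinism clause for applications quietly uses (one direction of) the normal-form characterisation, so the three clauses should be established by a single simultaneous induction rather than strictly one after the other.
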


\longshort{The proof is routine, and it is then omitted also from the Appendix.}{
  The proof is routine.
}

\begin{figure*}
  \centering
  \ovalbox{\small
    $
    \begin{array}{c@{\qquad}c@{\qquad}c@{\qquad}c@{\qquad}c}
      \multicolumn{4}{c}{\textsc{Head normal forms}}
      \\[3pt]
      
      \infer[]{\spneutral{\var}}{ \phantom{\spneutral{\var}} } 
      &
      \infer[]{ \spneutral{\tm \tmtwo} }{ \spneutral{\tm} }
      &
      \infer[]{ \spnormal{\tm} }{ \spneutral{\tm } }
      &
      \infer[]{ \spnormal{\la\vartwo\tm} }{ \spnormal{\tm}} 
      \\\\[-3pt]
      \hline\\[-8pt]
      \multicolumn{4}{c}{\textsc{Leftmost-outermost normal forms}}
      \\[3pt]
      \infer[]{\skneutral{\var}}{ \phantom{\skneutral{\var}} } 
      &
      \infer[]{ \skneutral{\tm \tmtwo} }{ \skneutral{\tm} \quad \sknormal{\tmtwo}}
      &
      \infer[]{ \sknormal{\tm} }{ \skneutral{\tm}  }
      &
      \infer[]{ \sknormal{\la\vartwo\tm} }{ \sknormal{\tm}} 
    \end{array}
    $
  }
  \caption{Head and leftmost-outermost neutral and normal terms}
  \label{fig:normalforms}
\end{figure*}

\begin{figure*}
  \centering
  \ovalbox{\small
    $\begin{array}{c}
      \multicolumn1{c}{\textsc{Head evaluation}}\\
      \infer{(\la\var \tmthree) \tmfour \tohd \tmthree \isub \var \tmfour}
            {\phantom{(\la\var \tmthree) \tmfour \tohd \tmthree \isub \var \tmfour}}
      \sep
      \infer{\la\var \tm \tohd \la\var \tmtwo}{\tm \tohd \tmtwo}
      \sep
      \infer{\tm \tmthree \tohd \tmtwo \tmthree}{
        \tm \tohd \tmtwo \quad \sysnotabs\spi\tm
      }
      \\\\[-5pt]\hline\\[-8pt]
      \multicolumn1{c}{\textsc{Leftmost-outermost evaluation}}\\[3pt]
      \infer{(\la\var \tmthree) \tmfour \tolod \tmthree \isub \var \tmfour}
            {\phantom{(\la\var \tmthree) \tmfour \tolod \tmthree \isub \var \tmfour}}
      \sep
      \infer{\la\var \tm \tolod \la\var \tmtwo}{\tm \tolod \tmtwo}
      \sep
      \infer{\tm \tmthree \tolod \tmtwo \tmthree}{
        \tm \tolod \tmtwo \quad  \sysnotabs\ske\tm
        }      
      \sep
      \infer{\tmthree \tm \tolod \tmthree \tmtwo}{
        \skneutral {\tmthree} \quad  \tm \tolod \tmtwo
      } 
    \end{array}$
    \newline
   }
  \caption{Head and leftmost-outermost strategies}
  \label{fig:strategies}
\end{figure*}

\paragraph{Sizes.} The notions of \emph{head size} $\spsize\tm$ and \emph{LO size} $\sksize\tm$ of a term $\tm$ are  defined as follows---the difference is on applications:
\[
\begin{array}{r@{\ }l\colspace \colspace \colspace r@{\ }l}
\multicolumn{2}{c}{\textsc{Head size}} & \multicolumn{2}{c}{\textsc{LO size}}\\
\spsize\var &\defeq 0 & \sksize\var &\defeq 0\\
\spsize{\la\var\tmtwo} &\defeq \spsize\tmtwo + 1
&\sksize{\la\var\tmtwo} &\defeq \sksize\tmtwo + 1 \\
\spsize{\tmtwo \tmthree} &\defeq \spsize\tmtwo + 1
&\sksize{\tmtwo \tmthree} &\defeq \sksize\tmtwo+\sksize\tmthree + 1
\end{array}
\]



\paragraph{Multi types.} We define the following notions about types.
\label{s:typing}

\begin{figure}
  \centering
  \ovalbox{
    $\begin{array}{c}
      \infer[\ax]{\Deri[(0, 0)] {\var \col \single \type} \var \type}{} 
      \\\\
      \infer[\funsteps]{
        \Deri[(\steps+ 1, \spine)]{\typctx \sm \var} {\la\var\tm} {\typctx(\var) \rightarrow \type}
      }{\Deri[(\steps, \spine)] {\typctx } \tm \type} 
      \msep 
      \infer[\funresult]{
        \Deri[(\steps, \spine +1)] {\typctx \sm \var } { \la\var\tm } \abstype
      }{
        \Deri[(\steps, \spine)] {\typctx } \tm \nf\sep \tightpred{\typctx(\var)}
      }  \\\\

      \infer[\appsteps]{
        \Deri[(\steps + \steps' + 1, \spine + \spine')] {\typctx \mplus \typctxtwo}
             {\tm \tmtwo} \type
      }{
        \Deri[(\steps, \spine)] \typctx \tm {\M \rightarrow \type}
        \quad
        \Deri[(\steps', \spine')] {\typctxtwo} \tmtwo {\M}
      }
      \msep 
      \infer[\appresult<\spi>]{
        \Deri[(\steps, \spine+1)] {\typctx} {\tm \tmtwo} \neutype
      } {\Deri[(\steps, \spine)] \typctx \tm \neutype}\\\\

      \infer[\many]{
        \Deri[(+_{\iI} \steps_i, +_{\iI} \spine_i )] {
          \mplus_{\iI}\typctxtwo_i  } \tm {\MSigma {\type_i} {\iI}
        }
      }{
        (\Deri[(\steps_i, \spine_i)] {\typctxtwo_i} \tm {\type_i})_{\iI}
      } 
      \msep 
      \infer[\appresult<\ske>]{
        \Deri[(\steps + \stepstwo, \spine + \spinetwo+1)] {\typctx \mplus \typctxtwo} {\tm \tmtwo} \neutype
      } {
        \Deri[(\steps, \spine)] \typctx \tm \neutype 
        \quad
        \Deri[(\stepstwo, \spinetwo)] \typctxtwo \tmtwo \nf
      }

    \end{array}$

  }
  \caption{Type system for head and LO evaluations}
  \label{fig:type-system}
\end{figure}

\begin{itemize}
  \item \emph{Multi types} are defined by the following grammar:
  \[
  \begin{array}{r\colspace \colspace rcl}
    \textsc{Tight constants} & \tight & \grameq  &  \neutype \mid \abstype \\
    \textsc{Types} & \type, \typetwo  & \grameq  & \tight \mid  \atomtype \mid \ty{\M}{\type} \\   
    \textsc{Multi-sets} &\M  & \grameq & \mult{\type_i}_{\iI}\  (I \mbox{ a finite set}) \\
  
  \end{array}
  \]
  where $\atomtype$ ranges over a non-empty set of \emph{atomic types}
  and $\MSigma{\ldots}{}$ denotes the multi-set constructor.
  
  \item \emph{Examples} of multisets: $\MSigma{\type, \type, \typetwo}{}$ is a multi-set containing two occurrences of
  $\type$ and one occurrence of $\typetwo$, and $\emm$ is the empty multi-set.
  
  \item A \emph{typing context} $\typctx$ is a map
  from variables to finite multisets $\M$ of types
  such that only finitely many variables are not mapped to the empty multi-set $\emm$.
  We write $\dom{\typctx}$ for the domain of $\typctx$, \ie\ the set
  $\set{x \mid \typctx(\var) \neq \emm}$.
  
  \item \emph{Tightness}: we use the notation
    $\mtight$ for $ \mult{\tight}_{\iI}\ (I \mbox{ a finite set})$.
    Moreover, we write
    $\tightpred \type$ if $\type$ is of the form $\tight$,
    $\tightpred\M$ if $\M$ is of the form $\mtight$,
    and $\tightpred \typctx$ if $\tightpred  {\typctx(\var)}$ for all $\var$,
    in which case we also say that $\typctx$ is \emph{\precise}.
  
  \item The 
  \emph{multi-set union} $\mplus$ is extended to typing contexts point-wise,
  \ie\  $\typctx \mplus \typctxtwo$ 
  maps  each variable $\var$ to $\typctx(\var) \mplus \typctxtwo(\var)$.
  This notion is extended to several contexts as expected so that
  $\mplus_{\iI} \typctx_i$ denotes a finite union of contexts (when $I = \emptyset$ the
notation is to be understood as the empty context).
  We write $\typctx; \var \col \M$ for $\typctx\mplus (\var \mapsto \M)$
  only if  $\var \notin \dom{\typctx}$. More generally,
  we write $\typctx; \typctxtwo$ if the intersection between
  the domains of $\typctx$ and $\typctxtwo$ is empty.
  
  \item The \emph{restricted} context $\typctx$ with respect to the variable $\var$,
  written $\typctx \sm \var$ is defined by 
  $(\typctx \sm \var)(\var) \defeq \emm$ and
   $(\typctx \sm \var)(\vartwo) \defeq \typctx(\vartwo)$ if $\vartwo \neq \var$.

\end{itemize}

\paragraph{Typing systems.} There are two typing systems,
one for head and one for LO evaluation.
Their typing rules 
are presented in \reffig{type-system},
the head system $\spi$ contains
all the rules except $\appresult<\ske>$,
the LO system $\ske$ contains
all the rules except $\appresult<\spi>$. 

Roughly, the intuitions behind the typing rules are 
(please ignore the indices $\steps$ and
$\result$ for the time being):
\begin{itemize}
  \item \emph{Rules $\ax$, $\funsteps$, and $\appsteps$}: this rules are essentially the traditional rules for multi types for head and LO evaluation~(see \eg\ \cite{BKV17}), modulo the presence of the indices.
  \item \emph{Rule $\many$}: this is a structural rule allowing typing terms with a multi-set of  types. In some presentations of multi types $\many$ is hardcoded in the right premise of the $\appsteps$ rule (that requires a multi-set). For technical reasons, 
it is preferable to separate it from $\appsteps$. Morally, it corresponds to the $!$-promotion rule in linear logic.
  \item \emph{Rule $\funresult$}: $\tm$ has already been tightly typed, and all the types associated to $\var$ are also tight constants. Then $\la\var\tm$ receives the tight constant $\abstype$ for abstractions. The consequence is that this abstraction can no longer be applied, because it has not an arrow type, and there are no rules to apply terms of type $\abstype$. Therefore, the abstraction constructor cannot be consumed by evaluation and it ends up in the normal form of the term, that has the form $\la\var\tm'$.

  \item \emph{Rule $\appresult<\hd>$}: $\tm$ has already been tightly typed with $\neutral$ and so morally it head normalises to a term $\tm'$ having neutral form $\var \tmthree_1 \dots \tmthree_k$. The rule adds a further argument $\tmtwo$ that cannot be consumed by evaluation, because $\tm$ will never become an abstraction. Therefore, $\tmtwo$ ends up in the head normal form $\tm'\tmtwo$ of $\tm\tmtwo$, that is still neutral---correctly, 
so that $\tm\tmtwo$ is also typed with $\neutral$. Note that there is no need to type $\tmtwo$ because head evaluation never enters into arguments.
  
  \item \emph{Rule $\appresult<\lo>$}: similar to rule $\appresult<\hd>$, except that LO evaluation enters into arguments and so the added argument $\tmtwo$ now also has to be typed, and with a tight constant. Note a key difference with $\appsteps$: in $\appresult<\lo>$ the argument $\tmtwo$ is typed exactly once (that is, the type is not a multi-set)---correctly, because its LO normal form $\tmtwo'$ appears exactly once in the LO normal form $\tm'\tmtwo'$ of $\tm\tmtwo$ (where $\tm'$ is the LO normal form of $\tm$).

\item 
\emph{Tight constants and predicates}: there is of course a correlation between the tight constants $\neutral$ and $\abstype$ and the predicates $\sysneutralpr\system$ and $\sysabspr\system$. Namely, a term $\tm$ is $\system$-typable with $\neutral$ if and only if the $\system$-normal form of $\tm$ verifies the predicate $\sysneutralpr\system$, as we shall prove. For the tight constant $\abstype$ and the predicate $\sysabspr\system$ the situation is similar but weaker: if the $\system$-normal form of $\tm$ verifies $\sysabspr\system$ then $\tm$ is typable with $\abstype$, but not the other way around---for instance a variable is typable with $\abstype$ without being an abstraction.

\item The type systems are not syntax-directed, \eg\ given an abstraction (resp. an application), it can be typed with rule $\funresult$ or $\funsteps$ (resp. $\appresult$ or $\appsteps$), depending on whether the constructor typed by the rule ends up in the normal form or not. Thus for example, given the term $\id \id$, where $\id$ is the identity function $\la \varthree \varthree$, the second occurrence of $\id$ can be typed with $\abstype$ using rule $\funresult$, while the first one can be typed with $\ty{\mult{\abstype}}{\abstype}$ using rule $\funsteps$.
\end{itemize}

Typing judgements are of the form
$\Deri[(\steps, \result)] {\typctx}{\tm}{\type}$,
where $(\steps, \result)$ is a pair of integers
whose intended meaning is explained in the next paragraph.
We write
$\tderiv \exder[\system] \Deri[(\steps, \result)] {\typctx}{\tm}{\type}$,
with $\system$ being either $\hd$ or $\lo$,
if $\tderiv$ is a typing derivation in the system
$\system$ and ends
in the judgement
$\Deri[(\steps, \result)] { \typctx}{\tm}{ \type}$.

\paragraph{Indices.} The roles of $\steps$ and $\result$ can be described as follows:
\begin{itemize}

\item \emph{$\steps$ and $\beta$-steps}:
  $\steps$ counts the rules of the derivation that can be used to form $\beta$-redexes,
  \ie\  the number of $\funsteps$ and $\appsteps$ rules.
  Morally, $\steps$ is at least twice the number of $\beta$-steps to normal form
  because typing a $\beta$-redex requires two rules.
  For tight typing derivations (introduced below), we are going to prove that $\steps$ is the exact (double of the) length of the evaluation of
  the typed term to its normal form, according to the chosen evaluation strategy. 

\item \emph{$\result$ and size of the result}:
  $\result$ counts the rules typing constructors
  that cannot be consumed by $\beta$-reduction
  according to the chosen evaluation strategy.
  It counts the number of $\funresult$ and $\appresult$.
  These rules type the result of the evaluation,
  according to the chosen strategy, and measure the size of the result.
  Both the notion of result and the way its size is measured depend on the evaluation strategy.

\end{itemize}

\paragraph{Typing size.} We define both the \emph{head} and the \emph{LO size} $\syssize\spi\tderiv$ and $\syssize\ske\tderiv$ of a typing derivation $\tderiv$ as the number of rules in $\tderiv$,
not counting rules $\ax$ and $\many$. The size of a derivation is
 reflected by the  pair of indices $(\steps, \result)$ on
its final judgement: whenever $\tderiv \exder[\system] \Deri[(\steps,
  \result)] {\typctx}{\tm}{\type}$, we have $\steps + \result =
\syssize{\stratsym}{\tderiv}$.  Note indeed that every rule (except $\ax$ and $\many$)
adds exactly $1$ to this size.

For systems $\hd$ and $\lo$, the indices on typing judgements are not really needed, as $\steps$ can be recovered as the number of $\funsteps$ and $\appsteps$ rules, and $\result$ as the number of $\funresult$ and $\appresult<\hd>$/$\appresult<\lo>$ rules. We prefer to make them explicit because 1) we want to stress the separate counting, and 2) for linear head evaluation in \refsect{linear} the counting shall be more involved, and the indices shall not be recoverable.

The fact that $\ax$ is not counted for $\syssize\spi\tderiv$ and
$\syssize\ske\tderiv$ shall change in \refsect{linear}, where we show
that counting $\ax$ rules corresponds to measure evaluations in the
linear substitution calculus. The fact that $\many$ is not counted,
instead, is due to the fact that it does not correspond to any
constructor on terms. A further reason is that the rule may be
eliminated by absorbing it in the $\appsteps$ rule, that is the only
rule that uses multi-sets---it is however technically convenient to
separate the two.

\paragraph{Subtleties and easy facts.} Let us overview some peculiarities and consequences of the definition of our type systems.

  \begin{enumerate} 

  \item \emph{Relevance}: 
    No weakening is allowed in axioms. An easy induction on typing derivations shows that a variable declaration $\var : \M \neq \emptymset$ appears explicitly in the typing context $\typctx$ of a type derivation for $\tm$ only if $\var$ occurs free in some \emph{typed} subterm of $\tm$. In system $\ske$, all subterms of $\tm$ are typed, and so $\var : \M \neq \emptymset$  appears in $\typctx$ if and only if $\var \in \fv\tm$. In system $\spi$, instead, arguments of applications might not be typed (because of rule $\appresult<\spi>$), and so there may be $\var \in \fv\tm$ but not appearing in $\typctx$.

\item \emph{Vacuous abstractions}:  we rely on the convention that the two abstraction rules can always abstract a variable $\var$ not explicitly occurring in the context.
Indeed, in the $\funsteps$ rule, if $\var \notin \dom{\typctx}$, then
$\typctx \sm \var$ is equal to $\typctx$ and $\typctx(\var)$ is $\emm$, while in
the $\funresult$ rule, if  $\var \notin \dom{\typctx}$, then $\typctx(\var)$ is $\emm$
and thus $\tightpred{\emm}$ holds. 

\item \emph{Head typings and applications}: note the $\appresult<\spi>$ rule types an application $\tm\tmtwo$ without typing the right subterm $\tmtwo$. This matches the fact that $\tm\tmtwo$ is a head normal form when $\tm$ is, independently of the status of $\tmtwo$.
  \end{enumerate}

\paragraph{Tight derivations.} A given term $\tm$ may have many different typing derivations, indexed by different pairs $(\steps, \result)$. They always provide upper bounds on $\tostrat$-evaluation lengths and lower bounds on the $\system$-size $\syssize{\stratsym}{\cdot}$ of $\system$-normal forms, respectively. The interesting aspect of our type systems, however, is that there is a simple description of a class of typing derivations that provide \emph{exact} bounds for these quantities, as we shall show. Their definition relies on tight constants.

\begin{definition}[\Precise derivations]\label{def:tightderiv}\strut
  

  Let $\system \in \set{\spi, \ske}$. A derivation $\tderiv \exder[\system] \Deri[(\steps, \result)]{\typctx}{\tm}{\typetwo}$
  is \emph{\precise} 
  if $\tightpred\typetwo$ and $\tightpred{\typctx}$. 
\end{definition}
Let us stress that, remarkably, tightness is expressed as a property of the last judgement only. This is however not so unusual: characterisations of weakly normalising terms via intersection/multi types also rely on properties of the last judgement only, as discussed in~\refsect{shrinking}.

In \refsect{shrinking}, in particular, we show the the size of a tight derivation for a
term $\tm$ is \emph{minimal} among derivations for $\tm$. Moreover, it is
also the same size of the minimal derivations making no use of tight
constants nor rules using them. Therefore, tight derivations may be
thought as a characterisation of minimal derivations.

\paragraph{Example.} Let $\tm_0 = (\la {\var_1} (\la {\var_0} \var_0 \var_1) \var_1) \id$, where
$\id$ is the  identity function $\la \varthree \varthree$.
Let us first consider the head evaluation of $\tm_0$ to $\spi$ normal-form:
\[ \begin{array}{l}
   (\la {\var_1} (\la {\var_0} \var_0 \var_1) \var_1) \id  \tohd 
   (\la {\var_0} \var_0 \id) \id  \tohd 
   \id \id \tohd
   \id
   \end{array} \] 
The evaluation sequence has length $3$. The head normal form has size $1$.  
To give a \precise\ typing for the term $\tm_0$ let us write $\abstype_1$ for $\ty{\mult{\abstype}}{\abstype}$.
Then, 
{\scriptsize
\[ \begin{prooftree}
   \prooftree
   \prooftree
     \prooftree
       \prooftree
         \Deri[(0,0)]{\var_0:\mult{\abstype_1}}{ \var_0 }{\abstype_1} \quad 
        \prooftree
         \Deri[(0,0)]{\var_1:\mult{\abstype}}{\var_1}{\abstype}
       \justifies{\Deri[(0,0)]{\var_1:\mult{\abstype}}{\var_1}{\mult{\abstype}}}
        \endprooftree
       \justifies{\Deri[(1,0)]{\var_0:\mult{\abstype_1}, \var_1:\mult{\abstype}}{ \var_0 \var_1}{\abstype}}
      \endprooftree 
      \justifies{\Deri[(2,0)]{\var_1:\mult{\abstype}}{\la {\var_0} \var_0 \var_1}{\ty{\mult{\abstype_1}}{\abstype}} }
      \endprooftree \quad
      \prooftree
      \Deri[(0,0)]{ \var_1:\mult{ \abstype_1} }{\var_1}{\abstype_1 }
\justifies{\Deri[(0,0)]{ \var_1:\mult{ \abstype_1} }{\var_1}{\mult{\abstype_1} } } 
      \endprooftree
    \justifies{\Deri[(3,0)] {\var_1:\mult{ \abstype, \abstype_1} }{ (\la {\var_0} \var_0 \var_1) \var_1}{\abstype }}
   \endprooftree
   \justifies{\Deri[(4,0)] {}{\la {\var_1} (\la {\var_0} \var_0 \var_1) \var_1}{\ty{\mult{ \abstype, \abstype_1}}{\abstype} } }
   \endprooftree
   \quad
   \prooftree
   \vdots
   \justifies{\Deri[(1,1)] {}{ \id  }{\mult{ \abstype, \abstype_1}}}
   \endprooftree
   \justifies{\Deri[(6,1)] {}{(\la {\var_1} (\la {\var_0} \var_0 \var_1) \var_1) \id  }{\abstype}} 
   \end{prooftree} \] 
}
Indeed, the pair $(6,1)$ represents $6/2=3$ evaluation steps to $\spi$ normal-form
and a head normal form of size $1$.


\subsection{Tight Correctness}\label{s:tightcorrectness}
Correctness of tight typings is the fact that whenever a term is
\emph{tightly} typable with indices $(\steps,\result)$,
then $\steps$ is exactly  (the
double of) the number of evaluation steps to $\system$-normal form
while  $\result$ is exactly the size of the $\system$-normal form.
Thus, tight typing in system $\spi$ (resp. $\ske$)
gives information about $\spi$-evaluation to $\spi$-normal form
(resp. $\ske$-evaluation to $\ske$-normal form).  The correctness theorem is always obtained
via three intermediate steps.

\paragraph{First step: tight typings of normal forms.}
The first step is to show that,
when a tightly typed term is a $\system$-normal form,
then the first index $\steps$ of its type derivation is $0$,
so that it correctly captures the (double of the) number of steps,
and the second index $\result$ coincides exactly with its $\system$-size.

\begin{toappendix}
\begin{proposition}[Properties of $\spi$ and $\ske$ tight typings for normal forms]
  \label{prop:tight-normal-forms-indicies}
Let $\system \in \set{\spi, \ske}$, $\tm$ be such that $\sysnormal\system\tm$, and
$\tderiv\exder[\system] \Deri[(\steps, \result)]\typctx{\tm}\type$ be a
typing derivation. 
\begin{enumerate}
\item 
\emph{Size bound}: $\syssize\system\tm \leq \syssize\system\tderiv$.

\item 
\emph{Tightness}: if $\tderiv$ is \precise\  then $\steps = 0$ and $\result = \syssize\system\tm$. 

\item 
\emph{Neutrality}: if $\type=\neutype$ then $\sysneutral{\system}{\tm}$.
\end{enumerate}
\end{proposition}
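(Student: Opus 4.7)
The three properties will be proved simultaneously by mutual induction on the derivations of $\sysneutral\system\tm$ and $\sysnormal\system\tm$, analysing the last typing rule of $\tderiv$ in each case. The base case $\tm = \var$ is immediate: the only rule producing a single type for a variable is $\ax$, so $\tderiv$ reduces to $\Deri[(0,0)]{\var : \mult\type}{\var}\type$; the indices $(0,0)$ match $\syssize\system\var = 0$, and tightness and neutrality hold by inspection.

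For $\tm = \la\vartwo\tmtwo$ with $\sysnormal\system\tmtwo$, the last rule is $\funsteps$ (producing an arrow type, hence never last in a tight derivation---the size bound then follows directly from the IH on $\tmtwo$) or $\funresult$ (producing the tight type $\abstype$ with side-condition $\tightpred{\typctx(\vartwo)}$); in the $\funresult$ case, tightness of $\tderiv$ together with the side condition makes the premise derivation on $\tmtwo$ itself tight, so the IH delivers $\steps_\tmtwo = 0$ and $\spine_\tmtwo = \syssize\system\tmtwo$, whence $\steps = 0$ and $\result = \spine_\tmtwo + 1 = \syssize\system{\la\vartwo\tmtwo}$. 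For $\tm = \tm_1\tm_2$ with $\sysneutral\system{\tm_1\tm_2}$ and last rule $\appresult<\system>$, the tight context splits via $\mplus$ into tight sub-contexts; the sub-derivations (typing $\tm_1$ with $\neutype$ and, in $\ske$, $\tm_2$ with $\nf$) are therefore tight; the IH yields $\steps_i = 0$ and $\spine_i = \syssize\system{\tm_i}$, and summing with the $+1$ from the $\appresult$ rule reconstructs $\syssize\system{\tm_1\tm_2}$ via the formulas $\spsize{\tm_1\tm_2} = \spsize{\tm_1} + 1$ and $\sksize{\tm_1\tm_2} = \sksize{\tm_1} + \sksize{\tm_2} + 1$; neutrality propagates from $\sysneutral\system{\tm_1}$.

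The central obstacle is ruling out $\appsteps$ as the last rule of a tight derivation of a neutral application, which must be done to force the $\appresult<\system>$ sub-case above. I plan to strengthen the induction by a parallel auxiliary claim on neutral terms: any derivation of a neutral term in a tight context has a tight end type. For $\var$ this reads off the axiom; for a neutral application the claim's own IH forbids an $\appsteps$-ending derivation, since its left premise would type a neutral subterm with an arrow in a tight context. Applied back to the main proof, this excludes $\appsteps$ in any tight neutral derivation. A minor technicality for the size bound in $\ske$ is that the right premise of $\appsteps$ is routed through $\many$; here the paper's standing convention that in $\ske$ every subterm is typed ensures $\many$ has at least one premise, giving the IH a hook for the $\sksize{\tm_2}$ summand.
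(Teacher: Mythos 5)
Your argument is essentially the paper's: the same induction over the normal/neutral predicates with a case analysis on the last typing rule, and the auxiliary claim you introduce---every derivation of a neutral term under a tight typing context ends with a tight type---is precisely the paper's Lemma~\ref{l:tight-spreading-spi-ske} (\emph{tight spreading on neutral terms}), proved there by the same induction and deployed for the same purpose of excluding $\appsteps$ as the last rule of a tight derivation of a neutral term. Whether that claim is packaged as a stand-alone lemma (the paper) or as a strand of a mutual induction (you) is immaterial; the variable, abstraction and $\appresult$ cases are handled exactly as in the paper.

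The one step that does not hold as you state it is the patch for the size bound in the $\ske$/$\appsteps$ sub-case. There is no convention forcing $\many$ to have at least one premise in system $\ske$: the empty multi-set $\emm$ is in the type grammar and the $0$-ary instance of $\many$ is explicitly permitted---the paper itself relies on it, e.g.\ to type $\var(\delta\delta)$ with $\var : \mult{\emm\rightarrow\atomtype}$ in \refsect{shrinking}. So an $\appsteps$ whose argument is typed with $\emm$ contains no sub-derivation of that argument, and your ``hook'' for the $\sksize{\tm_2}$ summand vanishes. This is not merely a missing justification: typing $\var(\vartwo\varthree)$ with one $\ax$, a $0$-ary $\many$ and one $\appsteps$ gives a derivation of size $1$, while $\sksize{\var(\vartwo\varthree)}=2$, so in this sub-case the inequality really does need the argument to be typed at least once. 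You are in good company---the paper's own proof of this sub-case silently uses the head-size equation $\sksize{\tmtwo\tmthree}=\sksize\tmtwo+1$ in place of the LO one and has the same hole---and the sub-case is vacuous for points 2 and 3, since tightness excludes $\appsteps$ there, which is all the correctness theorem needs. But as a justification of point 1 in system $\ske$, the appeal to ``every subterm is typed'' does not go through.
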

\end{toappendix}

The proof is by induction on the typing derivation $\tderiv$. Let us stress three points:
\begin{enumerate}
\item \emph{Minimality}: the size of typings of a normal form $\tm$ always bounds the size of $\tm$ (\refprop{tight-normal-forms-indicies}.1), and therefore tight typings, that provide an exact bound (\refprop{tight-normal-forms-indicies}.2), are typing of minimal size. For the sake of conciseness, in most of the paper we focus on tight typings only. In \refsect{shrinking}, however, we study in detail the relationship between arbitrary typings and tight typings, extending their minimality beyond normal forms.

\item \emph{Size of tight typings}: note that \refprop{tight-normal-forms-indicies}.2 indirectly shows that all tight typings have the same indices, and therefore the same size. The only way in which two tight typings can differ, in fact, is whether the variables in the typing context are typed with $\neutral$ or $\abstype$, but the structure of different typings is necessarily the same (which is also the structure of the $\system$-normal form itself).

\item \emph{Unveiling of a key structural property}: \refprop{tight-normal-forms-indicies} relies on the following interesting lemma about $\system$-neutral terms and tight typings.

\begin{toappendix}
\begin{lemma}[Tight spreading on neutral terms]
\label{l:tight-spreading-spi-ske}
  Let $\system \in \set{\spi, \ske}$, $\tm$ be such that $\spneutral \tm$, and $\tderiv\exder[\system] \Deri[(\steps, \result)]\typctx{\tm}\type$ be a typing derivation such that $\tightpred \typctx$.
  Then $\tightpred{\type}$.
\end{lemma}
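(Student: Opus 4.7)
The plan is to proceed by induction on the derivation of $\spneutral{\tm}$. Recall that a head-neutral term has shape $\var\, \tm_1 \cdots \tm_n$ for some variable $\var$ and $n \geq 0$, so the rule structure for typing such terms is highly constrained: the last rule of $\tderiv$ must be $\ax$ in the base case, and either $\appsteps$ or $\appresult<\system>$ in the inductive case. The rule $\many$ cannot appear at the root since it types a multi-set, not a type $\type$.

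For the base case $\tm = \var$, the derivation $\tderiv$ ends in $\ax$, forcing $\typctx(\var) = \mult{\type}$ (with all other variables mapped to $\emm$). The hypothesis $\tightpred{\typctx}$ then gives $\tightpred{\mult{\type}}$, i.e.\ $\type$ is a tight constant, as desired.

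For the inductive case $\tm = \tm' \tmtwo$ with $\spneutral{\tm'}$, I would case-split on the last rule of $\tderiv$. If it is $\appresult<\system>$ (for either $\system \in \set{\spi,\ske}$), the conclusion is immediate because the rule forces $\type = \neutype$, which is a tight constant. If instead it is $\appsteps$, there is a premise $\tderiv_1 \exder[\system] \Deri{\typctx_1}{\tm'}{\ty{\M}{\type}}$ with $\typctx = \typctx_1 \mplus \typctx_2$. A small preliminary observation is needed here: tightness of a typing context is stable under taking ``sub-contexts'' with respect to $\mplus$, since $\typctx(\var) = \typctx_1(\var) \mplus \typctx_2(\var)$ consists only of tight constants exactly when both summands do. Hence $\tightpred{\typctx_1}$ holds, and the induction hypothesis applied to $\tderiv_1$ (justified since $\spneutral{\tm'}$) yields $\tightpred{\ty{\M}{\type}}$. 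But an arrow type is never a tight constant, contradiction; this case cannot occur.

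The only real subtlety is the auxiliary fact that $\tightpred{\cdot}$ is preserved under $\mplus$-decomposition of contexts, which is a routine consequence of the pointwise definition of $\mplus$ and the grammar of tight multi-sets. Note also that stating the lemma with the head-neutrality predicate $\spneutralpr$ (rather than the system-specific $\sysneutralpr\system$) is essential: it covers both $\spi$ and $\ske$ uniformly since $\skneutral{\tm}$ implies $\spneutral{\tm}$, and the argument above only uses the shape $\var\,\tm_1\cdots\tm_n$ that $\spneutral{\cdot}$ already provides.
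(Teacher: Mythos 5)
Your proof is correct and follows essentially the same route as the paper's: induction on the neutrality derivation, with the variable case read off the $\ax$ rule and the application case resolved by observing that tightness of $\typctx$ passes to the sub-context of the left premise, so the induction hypothesis forces that premise's type to be a tight constant, ruling out the arrow type that $\appsteps$ would require and leaving only $\appresult$ with conclusion $\neutype$. The auxiliary observation that $\tightpred{\cdot}$ is stable under $\mplus$-decomposition is exactly the (implicit) step the paper also relies on, so no gap remains.
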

\end{toappendix}

The lemma expresses the fact that tightness of neutral terms only
depends on their contexts. Morally, this fact is what makes tightness
to be expressible as a property of the final judgement only. We shall
see in \refsect{shrinking} that a similar property is hidden in more
traditional approaches to weak normalisation (see
\reflemma{shrinking-neutral-spreading}). Such a spreading property
  appears repeatedly in our study, and we believe that its isolation
  is one of the contributions of our work, induced by the modular and
  comparative study of various strategies.
\end{enumerate}

\paragraph{Second step: substitution lemma.}
Then one has to show that types, typings, and indices
behave well with respect to substitution,
which is essential, given that $\beta$-reduction is based on it. 

\begin{toappendix}
\begin{lemma}[Substitution and typings for $\spi$ and $\ske$]
  \label{l:typing-substitution-overview}
  The following rule is admissible in both systems $\spi$ and $\ske$: 
  \[
   \infer[\subslemma]{
     \Deri[(\steps + \steps', \result + \result')]
          {\typctx\mplus\typctxtwo}{\tm\isub\var\tmtwo }{\type}
   }{
     \Deri[(\steps, \result)]{\typctx} {\tmtwo}{{\M}}
     \quad
     \Deri[(\steps', \result')] {\typctxtwo; \var: {\M}} {\tm}{\type}
   }
  \]
  Moreover if the derivations of the premisses are \precise 
  then so is the derivation of the conclusion.
\end{lemma}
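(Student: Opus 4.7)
The plan is to proceed by induction on the structure of the derivation $\tderiv' \exder[\system] \Deri[(\steps', \result')]{\typctxtwo; \var: \M}{\tm}{\type}$ of the second premise. A preliminary observation simplifies the case analysis considerably: since $\many$ is the only typing rule whose conclusion types a term with a multi-set, the derivation $\tderiv \exder[\system] \Deri[(\steps, \result)]{\typctx}{\tmtwo}{\M}$ of the first premise must end with $\many$. Inverting that rule yields an index set $I$ with $\M = \MSigma{\type_i}{\iI}$ together with subderivations $\tderiv_i \exder[\system] \Deri[(\steps_i, \result_i)]{\typctx_i}{\tmtwo}{\type_i}$ for each $\iI$, such that $\mplus_{\iI}\typctx_i = \typctx$, $\sum_{\iI}\steps_i = \steps$, and $\sum_{\iI}\result_i = \result$. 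These per-type subderivations are the building blocks that will be distributed across the occurrences of $\var$ as the induction unfolds.

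For the base case the last rule of $\tderiv'$ is $\ax$. If it types $\var$ itself, then $\M = \mult{\type}$ is a singleton, so $I$ is a singleton, and the single subderivation $\tderiv_i$ already proves $\Deri[(\steps, \result)]{\typctx}{\tmtwo}{\type}$, which is precisely the required conclusion since $\tm\isub\var\tmtwo = \tmtwo$ and $\steps' = \result' = 0$. If instead $\ax$ types a different variable $\vartwo$, the assumption $\var : \M$ is vacuous, forcing $\M = \emm$, $I = \emptyset$, $\typctx$ empty, and $\steps = \result = 0$; the substitution leaves $\tm = \vartwo$ unchanged and the original axiom derivation is returned.

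For the inductive step, the abstraction rules $\funsteps$, $\funresult$ and the rule $\many$ are straightforward: one applies the induction hypothesis to each immediate subderivation (using $\alpha$-conversion to keep the bound variable in the abstraction rules disjoint from $\var$ and from $\fv{\tmtwo}$) and reassembles the corresponding rule at the bottom, with indices and contexts adding as required; for $\many$ one additionally partitions $I$ across the premises and recombines. The application rules $\appsteps$, $\appresult<\spi>$, and $\appresult<\ske>$ are the main work: the context of the application is split as $\typctx_L \mplus \typctx_R$ over the two subterms, so one must partition $I = I_L \uplus I_R$ in such a way that the declarations $\var : \M_L$ inside $\typctx_L$ and $\var : \M_R$ inside $\typctx_R$ match $\mplus_{i \in I_L}\mult{\type_i}$ and $\mplus_{i \in I_R}\mult{\type_i}$ respectively; the subderivations $\tderiv_i$ are then regrouped by $\many$ according to this partition, so that each side of the application has exactly the substitution witness needed by its invocation of the induction hypothesis. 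Summing the resulting indices and contexts reconstructs $(\steps + \steps', \result + \result')$ and $\typctx \mplus \typctxtwo$.

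The main obstacle is precisely this application case: one must correctly correlate the partition of $I$ with the split of the context produced by the $\mplus$ operation, which relies crucially on the ability to decompose $\many$ at the top of the derivation of $\tmtwo$ along any chosen partition of $I$. Tightness preservation is then a straightforward verification at each case, since tightness is a pointwise property of the types in the context and of the final type, multi-set union of tight multi-sets is tight, and the reassembly above preserves whatever tight types were already present in the inputs.
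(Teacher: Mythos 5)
Your proposal is correct and follows essentially the same route as the paper's proof: induction on the derivation of the second premise, inverting the $\many$ rule at the root of the derivation of $\tmtwo$ to obtain one subderivation per element of $\M$, handling the axiom case by the same two subcases, and doing the real work in the application cases by partitioning the index set consistently with the splitting of the typing context and regrouping the subderivations for $\tmtwo$ under fresh $\many$ rules. The only cosmetic difference is that you treat $\many$ as a separate inductive case (implicitly generalising the statement to multi-set conclusions), whereas the paper absorbs that step into the $\appsteps$ case.
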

\end{toappendix}

The proof is by induction on the derivation of $\Deri[(\steps', \result')] {\typctxtwo; \var: {\M}} {\tm}{\type}$.

Note that the lemma also holds for $\M =\emm$, in which
case $\typctx$ is necessarily empty. In system $\ske$, it is also true that if $\M =\emm$ then $\var \notin \fv\tm$ and $\tm\isub\var\tmtwo = \tm$, because all free variables of $\tm$ have non empty type in the typing context. As already pointed out, in system $\spi$ such a matching between free variables and typing contexts does not hold, and it can be that $\M =\emm$ and yet $\var \in \fv\tm$ and $\tm\isub\var\tmtwo \neq \tm$.  

\paragraph{Third step: quantitative subject reduction.}
Finally, one needs to shows a quantitative form of type preservation along evaluation.
When the typing is tight,
every evaluation step decreases the first index $\steps$ of exactly 2 units,
accounting for the application and abstraction constructor \emph{consumed}
by the firing of the redex.

\begin{toappendix}
\begin{proposition}[Quantitative subject reduction for $\spi$ and $\ske$]
  \label{prop:subject-reduction}
  Let $\system \in \set{\spi, \ske}$. If  $\tderiv\exder[\system] \Deri[(\steps, \result)]\typctx{\tm}\type$ is \precise
  and $\tm\Rew{\system}\tmtwo$
  then $\steps\geq 2$ and
  there exists a \precise  typing $\tderivtwo$ such that
  $\tderivtwo\exder[\system] \Deri[(\steps-2, \result)]\typctx{\tmtwo}\type$.
\end{proposition}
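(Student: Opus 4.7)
The plan is to proceed by induction on the derivation of $\tm \Rew{\system} \tmtwo$, with case analysis on the last rule of $\tderiv$, using the tightness assumption to eliminate impossible combinations.

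For the base case---a root $\beta$-step $\tm = (\la\var\tmthree)\tmfour \Rew{\system} \tmthree\isub\var\tmfour$---the outer rule of $\tderiv$ must be $\appsteps$: indeed, both $\appresult<\spi>$ and $\appresult<\ske>$ require the left subterm to be typed with $\neutype$, but an abstraction can only receive an arrow type (via $\funsteps$) or $\abstype$ (via $\funresult$). The sub-derivation of $\la\var\tmthree$ must similarly end with $\funsteps$ rather than $\funresult$. Invoking the substitution lemma (\reflemma{typing-substitution-overview}) on the body and argument sub-derivations yields a derivation of $\tmthree\isub\var\tmfour$ with the same type $\type$ and context $\typctx$, and with step count decreased by exactly $2$---accounting for the consumed $\appsteps$ and $\funsteps$ rules. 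Tightness is preserved, and $\steps \geq 2$ is immediate.

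In the inductive cases (reduction inside a subterm), tightness of $\tderiv$ constrains the last rule. For $\tm = \la\var\tmthree$ it must be $\funresult$ (since $\funsteps$ gives a non-tight arrow type), and the body sub-derivation is tight: its type is $\nf$ and its context $\typctx$ is tight because the outer context $\typctx \sm \var$ is tight and $\typctx(\var)$ is tight by the $\funresult$ side condition. For $\tm = \tmthree\tmfour$ typed by $\appresult<\spi>$ or $\appresult<\ske>$, the relevant sub-derivations (left in both systems; also right in $\ske$ when the reduction is there) have tight types ($\neutype$ for the left, $\nf$ for the right) and tight sub-contexts, so the IH applies directly and one reassembles with the same rule. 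For $\tm = \tmthree\tmfour$ typed by $\appsteps$, a reduction inside $\tmfour$ is only possible in $\ske$ when $\tmthree$ is LO-neutral, but the \emph{tight spreading on neutral terms} lemma (\reflemma{tight-spreading-spi-ske}), applied to the tight sub-context of $\tmthree$, would then force $\tmthree$'s type to be tight, contradicting its arrow form $\M \rightarrow \type$; this case is vacuous.

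The main obstacle is the remaining subcase: $\tm = \tmthree\tmfour$ typed by $\appsteps$ with the reduction inside $\tmthree$. Here the sub-derivation of $\tmthree$ has the non-tight arrow type $\M \rightarrow \type$, so the IH does not apply verbatim. The plan is to generalize the statement so that the IH handles such non-tight sub-derivations, exploiting the fact that the head or leftmost-outermost redex lies on the leftmost spine of the derivation: it descends through successive left premises of $\appsteps$, $\appresult<\spi>$ and $\appresult<\ske>$ rules, and through the bodies of $\funsteps$ and $\funresult$, a path that never enters the multiset premise of $\appsteps$ nor a $\many$ rule. Tight spreading, applied repeatedly at intermediate levels, guarantees that no detour into a multiset-typed subterm can occur without contradicting the tightness of the enclosing context. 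Consequently, the redex is typed exactly once in $\tderiv$, and the substitution lemma applied at this unique depth yields $\tderivtwo$ with step count decreased by exactly $2$, preserving the type, the context, and the tightness of the original derivation.
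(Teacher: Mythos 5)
Your decomposition is the paper's: the root $\beta$-step handled by the substitution lemma (\reflemma{typing-substitution-overview}), tightness forcing $\funresult$ on abstractions, tight spreading (\reflemma{tight-spreading-spi-ske}) making the case of an $\appsteps$ rule with the reduction in the argument vacuous, and the left premise of $\appsteps$ isolated as the one genuinely problematic case. The paper resolves that last case by exactly the strengthening you announce but never write down: it proves, by induction on $\tm\Rew{\system}\tmtwo$, that if $\tightpred\typctx$ and \emph{either} $\tightpred\type$ \emph{or} $\sysnotabs\system\tm$, then $\tmtwo$ admits a derivation with the same $\typctx$ and $\type$ and first index $\steps-2$. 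This disjunction is precisely the invariant your ``leftmost spine'' discussion is tracking: descending into the left premise of $\appsteps$ loses tightness of the type, but the side condition of the reduction rule supplies $\sysnotabs\system{\tmthree}$, so the weakened hypothesis is re-established; the abstraction case forces the first disjunct and hence $\funresult$; the right-argument case is killed by spreading since only tightness of the \emph{context} is used there. Your alternative plan---locate the unique occurrence of the redex along the spine and perform the substitution at that depth---is also sound (every rule on the path is additive in the first index, so the decrement by $2$ propagates to the root, and no judgement's context or type changes), but the claim that the redex is typed exactly once, by $\appsteps$ over $\funsteps$, with a tight context, itself requires an induction carrying exactly the disjunction above. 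So the approach is right and essentially the paper's; the one missing step is to state the strengthened induction hypothesis explicitly, after which all cases close without further ideas.
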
  
\end{toappendix}

The proof is by induction on $\tm\Rew{\system}\tmtwo$, and it relies
on the substitution lemma (\reflemma{typing-substitution-overview}) for the
base case of $\beta$-reduction at top level.

It is natural to wonder what happens when the typing is not tight. In the head case, the index $\steps$ still decreases exactly of 2. In the \lo case things are subtler---they are discussed in \refsect{shrinking}.

\paragraph{Summing up.} The tight correctness theorem is proved by a straightforward induction on the evaluation length relying on quantitative subject reduction (\refprop{subject-reduction}) for the inductive case, and the properties of tight typings for normal forms (\refprop{tight-normal-forms-indicies}) for the base case.

\begin{toappendix}
\begin{theorem}[Tight correctness for $\spi$ and $\ske$]
  \label{thm:correctness} 
  Let $\system \in \set{\spi, \ske}$ and $\tderiv \exder[\system]
    \Deri[(\steps, \result)] {\typctx}{\tm}{\type}$ be a tight
    derivation. Then there exists $\tmtwo$ such that $\tm
  \Rewn[\steps/2]{\system} \tmtwo$, $\sysnormal\system\tmtwo$, and
  $\syssize\system\tmtwo = \result$.  Moreover, if $\type=\neutype$ then $\sysneutral{\system}{\tmtwo}$.
\end{theorem}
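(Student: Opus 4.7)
The plan is a straightforward induction on the first index $\steps$, with a case split on whether $\sysnormal\system\tm$ holds. The two previously established results compose exactly as needed: \refprop{tight-normal-forms-indicies} handles the normal-form case, while \refprop{subject-reduction} strips off a single $\system$-step in the other case.

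For the base case, I assume $\sysnormal\system\tm$. Since $\tderiv$ is tight, \refprop{tight-normal-forms-indicies}.2 gives $\steps = 0$ and $\result = \syssize\system\tm$. Taking $\tmtwo \defeq \tm$ satisfies all three main conclusions: $\tm \Rewn[0]\system \tmtwo$ trivially, $\sysnormal\system\tmtwo$ by assumption, and $\syssize\system\tmtwo = \result$ by the previous equality. The moreover clause is then exactly \refprop{tight-normal-forms-indicies}.3: whenever $\type = \neutype$, we conclude $\sysneutral\system\tm$.

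For the inductive case, I assume $\tm$ is not $\system$-normal. By the characterisation of normal forms and determinism (properties~\ref{p:normal-forms} and~\ref{p:determinism} of an evaluation system), there is a unique $\tm'$ with $\tm \Rew\system \tm'$. Applying \refprop{subject-reduction} yields $\steps \geq 2$ together with a tight derivation $\tderivtwo \exder[\system] \Deri[(\steps-2, \result)]\typctx{\tm'}\type$. Since $\steps - 2 < \steps$, the induction hypothesis applied to $\tderivtwo$ produces a witness $\tmtwo$ satisfying $\tm' \Rewn[(\steps-2)/2]\system \tmtwo$, $\sysnormal\system\tmtwo$, $\syssize\system\tmtwo = \result$, and the neutrality clause when $\type = \neutype$. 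Prepending the step $\tm \Rew\system \tm'$ to this $(\steps-2)/2$-step reduction gives $\tm \Rewn[\steps/2]\system \tmtwo$, and the remaining clauses transfer unchanged.

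The induction is well-founded because $\steps$ strictly decreases by $2$ at each recursive call, and the base case triggers whenever $\tm$ is normal. The only subtlety worth flagging is that $\steps = 1$ never arises for a tight derivation: were it to, \refprop{tight-normal-forms-indicies}.2 would force $\tm$ to be non-normal (otherwise $\steps$ would equal $0$), but then subject reduction would demand $\steps \geq 2$, a contradiction. In short, the genuine mathematical content resides in the two cited propositions---and ultimately in the tight spreading lemma \reflemma{tight-spreading-spi-ske} underlying the normal-form case; at this level, the theorem drops out mechanically, with no real obstacle to overcome.
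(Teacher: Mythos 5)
Your proof is correct and follows essentially the same route as the paper's: the paper also proceeds by induction (on the derivation size $\syssize\system\tderiv = \steps + \result$ rather than on $\steps$ alone, an immaterial difference since subject reduction decreases $\steps$ while preserving $\result$), splitting on whether $\tm$ is normal, using \refprop{tight-normal-forms-indicies} for the base case and \refprop{subject-reduction} for the step. Your additional remark that $\steps = 1$ cannot occur is not needed but is accurate.
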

\end{toappendix}


\subsection{Tight Completeness}
\label{s:tightcompleteness}
Completeness of tight typings (in system $\system \in \set{\spi, \ske}$)
expresses the fact that every $\system$-normalising term has a tight derivation (in system $\system$). As for correctness, the completeness theorem is always obtained via three intermediate steps, dual to those for correctness. Essentially, one shows that every normal form has a tight derivation and then extends the result to $\system$-normalising term by pulling typability back through evaluation using a subject expansion property.

\paragraph{First step: normal forms are tightly typable.} A simple induction on the structure of normal forms proves the following proposition.

\begin{toappendix}
\begin{proposition}[Normal forms are tightly typable for $\spi$ and $\ske$]
\label{prop:normal-forms-are-tightly-typable} 
Let $\system \in \set{\spi, \ske}$ and $\tm$ be such that $\sysnormal\system\tm$. Then 
there exists a \precise derivation $\tderiv\exder[\system] \Deri[(0, \syssize\system\tm)]\typctx{\tm}\type$. Moreover, if $\sysneutral\system\tm$ then $\type = \neutype$, and if $\sysabs\system\tm$ then $\type = \abstype$.
\end{proposition}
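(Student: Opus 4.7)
The plan is to proceed by a straightforward mutual induction on the structure of the predicates $\sysnormalpr\system$ and $\sysneutralpr\system$, simultaneously proving the main statement for normal forms and the strengthened statement that neutral terms receive type $\neutype$. The statement about $\sysabs\system\tm$ concerns only the single shape $\tm = \la\vartwo\tmtwo$, so it is handled directly in the abstraction case. There are essentially four cases to inspect: the variable (neutral base case), the abstraction rule for $\sysnormalpr\system$, the inclusion of neutral terms into normal ones, and the application rule for $\sysneutralpr\system$, the last one being the only place where systems $\spi$ and $\ske$ differ.

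For the base case $\tm = \var$, the axiom $\ax$ produces $\Deri[(0,0)]{\var\col\mult\neutype}{\var}{\neutype}$, which is tight (both the context and the type are tight constants), has indices $(0, 0) = (0, \syssize\system\var)$, and whose type is $\neutype$, as required for a neutral term. For $\tm = \la\var\tmtwo$, the induction hypothesis on $\tmtwo$ gives a tight derivation with indices $(0, \syssize\system\tmtwo)$, to which rule $\funresult$ applies: the side condition $\tightpred{\typctx(\var)}$ follows from $\tightpred\typctx$, and the resulting derivation is still tight, has type $\abstype$ (matching $\sysabs\system{\la\var\tmtwo}$), and indices $(0, \syssize\system\tmtwo + 1) = (0, \syssize\system{\la\var\tmtwo})$. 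The case where $\tm$ is normal because neutral is immediate: the inductive tight derivation already provides what we need, and the type $\neutype$ obtained on the neutral subderivation is tight.

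The application case is where the two systems split. For $\system = \spi$, $\spneutral{\tm\tmtwo}$ requires only $\spneutral\tm$; the induction hypothesis yields a tight derivation of $\tm$ with type $\neutype$, and a direct application of $\appresult<\spi>$ produces a tight derivation of $\tm\tmtwo$ with type $\neutype$ and indices $(0, \spsize\tm + 1) = (0, \spsize{\tm\tmtwo})$, without typing $\tmtwo$ at all (consistent with the fact that $\spi$ does not enter arguments). For $\system = \ske$, the predicate $\skneutral{\tm\tmtwo}$ requires both $\skneutral\tm$ and $\sknormal\tmtwo$, so the induction hypothesis provides a tight derivation of $\tm$ with type $\neutype$ and indices $(0, \sksize\tm)$ and a tight derivation of $\tmtwo$ with some tight type and indices $(0, \sksize\tmtwo)$; rule $\appresult<\ske>$ combines them into a tight derivation with type $\neutype$ and indices $(0, \sksize\tm + \sksize\tmtwo + 1) = (0, \sksize{\tm\tmtwo})$, the contexts remaining tight under $\mplus$.

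There is no real obstacle here: the proof is essentially dictated by the shape of the typing rules, once one observes that each constructor of a normal form has a corresponding $\cdot^{\mathtt{r}}$ rule that adds exactly $1$ to the $\result$ index while adding $0$ to $\steps$, and that each such rule preserves tightness of contexts and produces a tight result type. The only small care needed is to track which inductive predicate ($\sysnormalpr\system$ or $\sysneutralpr\system$) is being unfolded, so that the stronger conclusion ($\type = \neutype$) is available when one needs to apply $\appresult<\spi>$ or $\appresult<\ske>$ in the next step.
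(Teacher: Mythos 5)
Your proof is correct and follows essentially the same route as the paper's: a structural induction on the normal/neutral predicates, with $\ax$ for variables, $\funresult$ for abstractions, and $\appresult<\spi>$/$\appresult<\ske>$ for the two application cases, tracking that neutral subterms carry type $\neutype$. The only cosmetic difference is that you present it as a mutual induction on the two predicates while the paper folds the neutral cases into a single induction on $\sysnormal\system\tm$; the content is identical.
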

\end{toappendix}

In contrast to the proposition for normal forms of the correctness part (\refprop{tight-normal-forms-indicies}), here there are no auxiliary lemmas, so the property is simpler.

\paragraph{Second step: anti-substitution lemma.} In order to pull typability back along evaluation sequence, we have to first show that typability can also be pulled back along substitutions.
\begin{toappendix}
\begin{lemma}[Anti-substitution and typings for $\spi$ and $\ske$]
  \label{l:anti-substitution}
  Let $\system \in \set{\spi, \ske}$ and $\tderiv \exder[\system] \Deri[(\steps, \spine)]\typctx {\tm\isub\var\tmtwo } \type$. Then there exist:
  \begin{itemize}
  \item a multi-set ${\M}$;
  \item a typing derivation
    $\tderiv_\tm \exder[\system] \Deri[(\steps_\tm, \spine_\tm)]
    {\typctx_\tm;  \var \col {\M}}{\tm}  \type$;
    and
  \item a typing derivation
    $\tderiv_\tmtwo \exder[\system] \Deri[(\steps_\tmtwo, \spine_\tmtwo)]{\typctx_\tmtwo}{\tmtwo}{\M}$
  \end{itemize}
  such that:
  \begin{itemize}
  \item \emph{Typing context}: $\typctx = \typctx_\tm \mplus \typctx_\tmtwo$;

  \item \emph{Indices}: $(\steps, \spine) = (\steps_\tm + \steps_\tmtwo, \spine_\tm + \spine_\tmtwo)$.
  \end{itemize}
  Moreover, if $\tderiv$ is \precise\ then so are
  $\tderiv_\tm$ and $\tderiv_\tmtwo$.
\end{lemma}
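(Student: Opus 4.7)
The plan is to prove the lemma by induction on the typing derivation $\tderiv$, doing case analysis on its last rule. Because the $\appsteps$ rule has a right premise with a multi-set type (necessarily obtained via $\many$), I would first strengthen the statement to simultaneously cover derivations ending in $\many$, \ie judgements of the form $\Deri[(\steps, \spine)]\typctx{\tm\isub\var\tmtwo}{\M}$. Both variants would be proved at once by mutual induction.

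For the base case ($\ax$), the term $\tm\isub\var\tmtwo$ is a variable $\vartwo$, so $\tm$ must be a variable. If $\tm = \vartwo \neq \var$, the substitution is vacuous: take $\M = \emm$, let $\tderiv_\tm$ be $\tderiv$, and let $\tderiv_\tmtwo$ be the empty instance of $\many$ giving $\Deri[(0,0)]{\emptyset}{\tmtwo}{\emm}$. If $\tm = \var$, then $\tm\isub\var\tmtwo = \tmtwo$ coincides with the typed term; take $\M = \mult{\type}$, let $\tderiv_\tm$ be $\ax$ yielding $\Deri[(0,0)]{\var : \mult{\type}}{\var}{\type}$, and let $\tderiv_\tmtwo$ be $\many$ applied to $\tderiv$ alone. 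In both subcases the index split and context split hold trivially.

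For $\funsteps$ and $\funresult$, the term $\tm$ is an abstraction $\la\vartwo\tmthree$, with $\vartwo \neq \var$ and $\vartwo \notin \fv\tmtwo$ by $\alpha$-renaming, so the premise derives $\tmthree\isub\var\tmtwo$. Invoking the IH on this premise yields a multi-set $\M$, a derivation of $\tmthree$ with $\var : \M$ in its context, and a derivation of $\tmtwo$ of type $\M$; reapplying the same abstraction rule to the first produces $\tderiv_\tm$, and $\tderiv_\tmtwo$ is taken directly from the IH. The $+1$ added by the abstraction rule is absorbed into the indices of $\tderiv_\tm$, leaving $\tderiv_\tmtwo$ untouched. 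For $\appsteps$ (and $\appresult<\ske>$), $\tm$ is an application $\tmthree\,\tmfour$. The IH applied to the left premise gives $\M_3$ and derivations $\tderiv_\tmthree$ and $\tderiv_\tmtwo^3$ of $\tmthree$ and $\tmtwo$; the strengthened IH applied to the right premise (whose last rule is $\many$) gives $\M_4$ together with $\tderiv_\tmfour$ and $\tderiv_\tmtwo^4$. Setting $\M \defeq \M_3 \mplus \M_4$, $\tderiv_\tm$ is obtained by reapplying the application rule to $\tderiv_\tmthree$ and $\tderiv_\tmfour$, while $\tderiv_\tmtwo$ is assembled by merging the $\many$ structures at the roots of $\tderiv_\tmtwo^3$ and $\tderiv_\tmtwo^4$ into a single $\many$ whose premises are the disjoint union of those of the originals. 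The $\appresult<\spi>$ case is simpler since the IH is invoked only on the left premise. The $\many$ case of the strengthened statement is handled analogously by applying the IH to each premise and regrouping the resulting $\many$ derivations.

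The main obstacle is the bookkeeping in the application case: merging two derivations of $\tmtwo$ (one per premise of $\appsteps$) into a single derivation of $\tmtwo$ of multi-set type $\M_3 \mplus \M_4$, and verifying that the contexts sum correctly under $\mplus$ and that the indices split as $(\steps_\tm + \steps_\tmtwo, \spine_\tm + \spine_\tmtwo) = (\steps, \spine)$ once the $+1$ contributions of the non-$\many$ rules are attributed to $\tderiv_\tm$. Preservation of tightness is routine: every reapplied rule propagates tight constants in the final type, and the union of tight contexts remains tight, so if the conclusion of $\tderiv$ is \precise\ then so are those of $\tderiv_\tm$ and $\tderiv_\tmtwo$.
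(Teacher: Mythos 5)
Your proposal is correct and follows essentially the same route as the paper's proof: induction on the derivation (the paper phrases it as induction on $\tm$ with a case analysis on the last rule, which amounts to the same thing), with the base case split on whether the variable is $\var$, the abstraction cases obtained by reapplying $\funsteps$/$\funresult$ after the inductive hypothesis, and the application case handled by merging the two derivations of $\tmtwo$ into a single $\many$ whose premises are the disjoint union of the originals, with $\M = \M_3 \mplus \M_4$. The only difference is presentational: you make explicit the strengthening to $\many$-final (multi-set-typed) judgements, which the paper uses implicitly when it applies the inductive hypothesis to the right premise of $\appsteps$ and then ``sums'' the resulting derivations by inverting and reapplying $\many$.
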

\end{toappendix}

The proof is by induction on $\tderiv$. 

Let us point out that the anti-substitution lemma holds also in the
degenerated case in which $\var$ does not occur in $\tm$ and $\tmtwo$
is not $\system$-normalising: rule $\many$ can indeed be used to type
\emph{any} term $\tmtwo$ with $\Deri[(0, 0)] {}{\tmtwo
}{{\emm}}$ by taking an empty set $I$ of indices for the
premises. Note also that this is \emph{forced} by the fact that $\var
\notin \fv \tm$, and so $\typctx_\tm(\var) = \emm$. Finally,
this fact does not contradict the correctness theorem, because here
$\tmtwo$ is typed with a multi-set, while the theorem requires a type.

\paragraph{Third step: quantitative subject expansion.}
This property guarantees that typability can be pulled back along evaluation sequences.
\begin{toappendix}
\begin{proposition}[Quantitative subject expansion for $\spi$ and $\ske$]
  \label{prop:subject-expansion}
  Let $\system \in \set{\spi, \ske}$ and $\tderiv \exder[\system] \Deri[(\steps, \spine)]\typctx\tmtwo\type$ be a tight derivation. If $\tm \tostrat \tmtwo$
  then there exists a (tight) typing $\tderivtwo$ such that
  $\tderivtwo \exder[\system] \Deri[(\steps+2, \spine)] \typctx \tm \type$.
\end{proposition}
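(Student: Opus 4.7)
The plan is to proceed by induction on the reduction $\tm \tostrat \tmtwo$, mirroring the structure of subject reduction (\refprop{subject-reduction}). As is typical when some sub-derivations of a tight derivation are themselves not tight, it is convenient to first establish a slightly generalized statement---dropping the tightness hypothesis on $\tderiv$---and then observe that the reconstruction preserves both $\typctx$ and $\type$, so tightness propagates automatically from the input to the output.

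\textbf{Base case ($\beta$-redex).} If $\tm = (\la\var\tmthree)\tmfour$ and $\tmtwo = \tmthree\isub\var\tmfour$, invoke the anti-substitution lemma (\reflemma{anti-substitution}) on $\tderiv$ to obtain a multi-set $\M$, a derivation $\tderiv_\tmthree \exder[\system] \Deri[(\steps_\tmthree,\spine_\tmthree)]{\typctx_\tmthree;\var:\M}{\tmthree}{\type}$, and a derivation $\tderiv_\tmfour \exder[\system] \Deri[(\steps_\tmfour,\spine_\tmfour)]{\typctx_\tmfour}{\tmfour}{\M}$, with $\typctx = \typctx_\tmthree \mplus \typctx_\tmfour$ and indices summing appropriately. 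Apply $\funsteps$ to $\tderiv_\tmthree$ and then $\appsteps$ with $\tderiv_\tmfour$; this adds exactly $2$ to the first index (one per rule) while leaving $\spine$, $\typctx$, and $\type$ unchanged, producing the desired derivation of $(\la\var\tmthree)\tmfour$ with indices $(\steps+2,\spine)$. Tightness of $\tderiv$ transfers to $\tderiv_\tmthree$ and $\tderiv_\tmfour$ through the \emph{moreover} clause of the anti-substitution lemma, and hence to the reassembled derivation.

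\textbf{Inductive cases.} For each context rule of $\tostrat$, case-split on the last rule of $\tderiv$, which must match the outer constructor of $\tmtwo$ (so $\many$ is excluded because $\type$ is a single type). In every case the premise typing the reduced subterm is fed to the IH, and the result is reassembled with the same last rule, so the $+2$ increment comes entirely from the IH. In detail: for abstraction reduction, $\tderiv$ ends in $\funsteps$ or $\funresult$; for left-application reduction, $\tderiv$ ends in $\appsteps$, $\appresult<\spi>$ (in system $\spi$) or $\appresult<\ske>$ (in system $\ske$); for right-application reduction in $\ske$ (which requires $\skneutral{\tm_1}$), the tight-spreading lemma for neutral terms (\reflemma{tight-spreading-spi-ske}), combined with the fact that tightness of $\typctx = \typctx_1 \mplus \typctx_2$ implies tightness of $\typctx_1$, rules out the $\appsteps$ case---its left premise would type the $\ske$-neutral $\tm_1$ with a non-tight arrow type---so $\tderiv$ must end in $\appresult<\ske>$, whose right premise types $\tmtwo_2$ with a single tight type on which the IH applies.

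\textbf{Main obstacle.} The principal subtlety is the left-application case with $\tderiv$ ending in $\appsteps$: the IH must apply to a derivation whose conclusion has a non-tight arrow type. This is precisely what motivates the generalization of the statement to arbitrary (non-tight) inputs; without it, the induction would not carry through, since nested $\appsteps$ rules accumulate arrow types along the spine of the derivation even when the final type is tight. Once that generalization is in place, each case is a purely structural reassembly, and tightness propagation reduces to the observation that both the context and the final type of the outermost judgement are preserved by every rule used in the reconstruction.
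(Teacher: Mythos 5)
Your overall architecture (induction on the reduction step, anti-substitution for the root case, case analysis on the last typing rule, tight spreading for the right-application case) matches the paper's, but the strengthened statement you induct on is wrong, and this is not a cosmetic issue. You propose to drop the tightness hypothesis on $\tderiv$ altogether. For $\ske$ the resulting statement is false: take $\var\,((\la\vartwo\vartwo)\varthree) \tolo \var\,\varthree$ and type $\var\,\varthree$ with a final $\appsteps$ whose right premise is a $\many$ rule typing $\varthree$ with a multi-set of cardinality $n$; expanding then forces you to type $(\la\vartwo\vartwo)\varthree$ $n$ times, so the first index grows by $2n$, not by $2$. Your own treatment of the right-application case betrays the problem: to rule out $\appsteps$ via the spreading lemma you invoke $\tightpred{\typctx}$ --- precisely the hypothesis you discarded. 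The correct invariant, which the paper uses, keeps $\tightpred\typctx$ throughout and weakens only the condition on the final type to the disjunction ``$\tightpred\type$ or $\sysnotabs\system\tm$''.

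That disjunction is also what makes your abstraction case close, and you omit it. If you keep context tightness in the induction hypothesis (as you must, by the above), then in the case $\la\var\tm \tostrat \la\var\tmtwo$ with $\tderiv$ ending in $\funsteps$, the premise lives in the context $\typctx; \var\col\M$ with $\M$ arbitrary, so the induction hypothesis does not apply to it. The paper's invariant excludes this subcase outright: since the source is an abstraction, the disjunct forces $\tightpred\type$, hence the last rule is $\funresult$, whose premise does have a tight extended context. The disjunct is then maintained because every recursive call on a non-tight type happens on the left subterm of an application, where the evaluation rule's side condition supplies $\sysnotabs\system\tm$. Your diagnosis of the ``main obstacle'' (the left-application $\appsteps$ case needs the IH on an arrow type) is accurate, but the fix is this asymmetric weakening, not the removal of tightness; as stated, your induction does not go through for $\ske$.
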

\end{toappendix}
The proof is a simple induction over $\tm \tostrat \tmtwo$ using the anti-substitution lemma in the base case of evaluation at top level.

\paragraph{Summing up.} The tight completeness theorem is proved by a straightforward induction on the evaluation length relying on quantitative subject expansion (\refprop{subject-expansion}) for the inductive case, and the existence of tight typings for normal forms (\refprop{normal-forms-are-tightly-typable}) for the base case.
\begin{toappendix}
  \begin{theorem}[Tight completeness for $\spi$ and $\ske$]
    \label{thm:completeness}
    Let $\system \in \set{\spi, \ske}$ and $\tm \Rewn[k]{\system} \tmtwo$ with $\sysnormal\system\tmtwo$.
    Then there exists a \precise typing
    $\tderiv \exder[\system] \Deri[(2k, \syssize\system\tmtwo)] \typctx \tm \type$.
    Moreover, if $\sysneutral\system\tmtwo$ then $\type = \neutype$, and if $\sysabs\system\tmtwo$ then $\type = \abstype$.  
  \end{theorem}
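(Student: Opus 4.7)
The plan is to proceed by straightforward induction on the evaluation length $k$, exactly as announced in the ``Summing up'' paragraph preceding the theorem. The two propositions needed, namely \refprop{normal-forms-are-tightly-typable} (for the base case) and \refprop{subject-expansion} (for the inductive step), have already been established, so the work here is essentially to glue them together and to check that the indices, the tightness condition, and the refinement on the final type all line up correctly.

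For the base case $k=0$, I have $\tm = \tmtwo$ and $\sysnormal\system\tm$, so \refprop{normal-forms-are-tightly-typable} directly yields a tight derivation $\tderiv \exder[\system] \Deri[(0, \syssize\system\tm)]\typctx\tm\type$, matching $(2k, \syssize\system\tmtwo) = (0, \syssize\system\tmtwo)$. That same proposition already delivers the refinement on $\type$: if $\sysneutral\system\tmtwo$ then $\type=\neutype$, and if $\sysabs\system\tmtwo$ then $\type=\abstype$.

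For the inductive step, suppose $\tm \tostrat \tm' \Rewn[k-1]{\system} \tmtwo$ with $\sysnormal\system\tmtwo$. The induction hypothesis applied to $\tm' \Rewn[k-1]{\system} \tmtwo$ provides a tight derivation $\tderiv' \exder[\system] \Deri[(2(k-1),\syssize\system\tmtwo)]\typctx{\tm'}\type$, with $\type$ refined as required in the ``moreover'' part. Then \refprop{subject-expansion} applied to $\tm \tostrat \tm'$ and $\tderiv'$ produces a tight derivation $\tderiv \exder[\system] \Deri[(2(k-1)+2,\syssize\system\tmtwo)]\typctx\tm\type = \Deri[(2k,\syssize\system\tmtwo)]\typctx\tm\type$. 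Crucially, subject expansion preserves both the typing context $\typctx$ and the final type $\type$, so the refinement for $\type$ inherited from the inductive hypothesis is transported unchanged to $\tderiv$, which takes care of the ``moreover'' clause.

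There is no real obstacle here: all the heavy lifting has been done in the anti-substitution lemma (\reflemma{anti-substitution}), which underlies subject expansion, and in the structural analysis of normal forms behind \refprop{normal-forms-are-tightly-typable}. The only thing worth double-checking is that \refprop{subject-expansion} indeed keeps the final type (and the typing context) intact while incrementing only the first index by exactly $2$ — this is exactly its statement, so the induction goes through with no additional bookkeeping.
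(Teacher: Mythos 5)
Your proof is correct and follows exactly the paper's own argument: induction on the evaluation length, with the base case given by \refprop{normal-forms-are-tightly-typable} (which also supplies the \emph{moreover} clause) and the inductive step given by quantitative subject expansion (\refprop{subject-expansion}), whose preservation of the typing context and final type transports tightness and the type refinement through the expansion. Nothing is missing.
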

\end{toappendix}

\section{Extensions and Deeper Analyses}
In the rest of the paper we are going to further explore the properties of the tight approach to multi types along three independent axes:

\begin{enumerate}
\item \emph{Maximal evaluation}:
    we adapt the methodology to the case of maximal evaluation,
    which relates to strong normalisation in that
    the maximal evaluation strategy terminates only
    if the term being evaluated is strongly normalising.
    This case is a simplification of~\cite{bernadet13}
    that can be directly related  to the head and leftmost evaluation cases.
    It is in fact very close to leftmost evaluation
    but for the fact that, during evaluation,
    typing contexts are not necessarily preserved
    and the size of the terms being erased has to be taken into account.
    The statements of the properties in Sections~\ref{s:tightcorrectness}
    and~\ref{s:tightcompleteness} have to be adapted accordingly.

\item \emph{Linear head evaluation}: we
  reconsider head evaluation in the linear substitution calculus
  obtaining exact bounds on the number of steps and on the size of
  normal forms. The surprise here is that the type system is
  essentially unchanged and that it is enough to count also axiom
  rules (that are ignored for head evaluation in the $\l$-calculus)
  in order to exactly bound also the number of \emph{linear substitution} steps.

\item \emph{LO evaluation and minimal typings}: we explore the
  relationship between tight typings and traditional typings without
  tight constants. This study is done in the context of LO
  evaluation, that is the more relevant one with respect to cost
  models for the $\l$-calculus. We show in particular that tight
  typings are isomorphic to minimal traditional typings.
  
\end{enumerate}
Let us stress that these three variations on a theme can be read independently.

\section{Maximal Evaluation}

\label{sect:maximal}
In this section we consider the maximal strategy, which gives the
longest evaluation sequence from any strongly normalising term to its normal form.  The
maximal evaluation strategy is \emph{perpetual} in that, if a term
$\tm$ has a diverging evaluation path then the maximal strategy
diverges on $\tm$.  Therefore, its termination subsumes the
termination of any other strategy, which is why it is often used to
reason about the strong normalisation
property~\cite{DBLP:journals/iandc/RaamsdonkSSX99}.

\paragraph{Strong normalisation and erasing steps} 

It is well-known that in the framework of relevant (\ie\ without weakening) multi  types it is technically harder to deal with strong
normalisation (all evaluations terminate)---which 
is equivalent to the termination of the maximal strategy---
than with weak normalisation (there is a terminating evaluation)---which
is equivalent to the termination of the LO strategy. The reason is
that one has to ensure that all
subterms that are erased along any evaluation are themselves strongly
normalising.

The simple proof technique that we used in the previous section does
not scale up---in general---to strong normalisation (or to the maximal
strategy), because subject reduction breaks for erasing steps, as they
change the final typing judgement. Of course the same is true for
subject expansion. There are at least three ways of circumventing
this problem:
\begin{enumerate}
  \item \emph{Memory}: to add a memory constructor, as in Klop's calculus~\cite{phdklop}, that records the erased terms and allows evaluation inside the memory, so that diverging subterms are preserved. Subject reduction then is recovered.
  
  \item \emph{Subsumption/weakening}: adding a simple form of sub-typing, that allows stabilising  the final typing judgement in the case of an erasing step, or more generally, adding a strong form of weakening, 
that essentially removes the empty multi type.
  
  \item \emph{Big-step subject reduction}: abandon the preservation of the typing judgement
in the erasing cases, and rely on a more involved big-step subject reduction property relating the term directly to its normal form, stating in particular that the normal form is typable, potentially by a different type.
\end{enumerate}

Surprisingly, the tight characterisation of the maximal strategy that
we are going to develop does not need any of these workarounds: in the
case of tight typings subject reduction for the maximal strategy
holds, and the simple proof technique used before adapts smoothly. To
be precise, an evaluation step may still change the final typing
judgement, but the key point is that the judgement stays
tight. Morally, we are employing a form of subsumption of \precise\  contexts, but an
extremely light one, that in particular does not require a sub-typing
relation. We believe that this is a remarkable feature of tight multi
types.

\paragraph{Maximal evaluation and predicates} 
The maximal strategy shares with LO evaluation the predicates $\skneutralpr$,
$\sknormalpr$, $\skabspr$, and the notion of term size $\sksize\tm$, which we
respectively write $\sysneutralpr\systemax$, $\sysnormalpr\systemax$,
$\sysabspr\systemax$, and $\mxsize\tm$. We actually define, in
\reffig{maxevaluation}, a version of the maximal strategy, denoted
$\Rew{\perpe}[r]$, that is indexed by an integer $r$ representing the size of
what is erased by the evaluation step.  We define the transitive closure of
$\Rew{\perpe}[r]$ as follows:
\[
\infer{\tm\Rewn[0]{\perpe}[0]\tm}{\strut}
\qquad
\infer{\tm\Rewn[k+1]{\perpe}[r_1+r_2]\tmthree}{
  \tm\Rew{\perpe}[r_1]\tmtwo\quad\tmtwo\Rewn[k]{\perpe}[r_2]\tmthree
}
\qquad
\infer{\tm\Rewn{\perpe}[r]\tmtwo}{\tm\Rewn[k]{\perpe}[r]\tmtwo}
\]

\begin{figure*}
  \centering
  \ovalbox{
    $\begin{array}{c}
      \infer
            {
    (\la\var \tmthree) \tmfour \Rew{\perpe}[0] \tmthree \isub \var \tmfour
    }{\var\in \fv\tmthree }
    \qquad 
    \infer
        {
      (\la\var \tmthree) \tmfour \Rew{\perpe}[\mxsize\tmfour] \tmthree
    }{
      {\sysnormalpr \systemax {( \tmfour )} } \msep
      \var\notin \fv\tmthree
    }
    \qquad 
    \infer
          {
      (\la\var \tmthree) \tm \Rew{\perpe}[r] (\la\var \tmthree) \tmtwo
    }{
      \tm \Rew{\perpe}[r]\tmtwo\msep \var\notin \fv\tmthree
    }
    \\ \\ 
    
    \infer
          {\la\var \tm \Rew{\perpe}[r] \la\var \tmtwo}{\tm \Rew{\perpe}[r] \tmtwo}
    \qquad 
    \infer
          {
      \tm \tmthree \Rew{\perpe}[r] \tmtwo \tmthree
    }{
     { \sysnotabs \systemax \tm } \msep \tm \Rew{\perpe}[r]\tmtwo 
    }
    \qquad 
    \infer
          {
      \tmthree \tm \Rew{\perpe}[r]\tmthree \tmtwo
    }{
       { \sysneutral\systemax \tmthree } \msep   \tm \Rew{\perpe}[r]\tmtwo
    }
    \end{array}$
  }
  \caption{Deterministic maximal strategy}
  \label{fig:maxevaluation}
\end{figure*}

\begin{figure}
  \centering
  \ovalbox{
    $\begin{array}{@{}c@{}}

      \mbox{ Typing rules }
      \{ \ax, \funsteps, \funresult, \appsteps, \appresult<\ske>\}
      \mbox{ plus }\\\\

      \infer[\manystrict]{
        \Deri[(+_{\iI} \steps_i, +_{\iI} \result_i)] 
             { +_{\iI}\typctxtwo_i  } \tm {\MSigma {\type_i} {\iI}}
      }{
        (\Deri[(\steps_i, \result_i)]
        {\typctxtwo_i} \tm {\type_i})_{\iI}\quad \size{I}>0
      }

      \qquad
      
      \infer[\none]{
        \Deri[(\steps, \result)] 
             { \typctxtwo } \tm {\emm}
      }{
        \Deri[(\steps, \result)]
             {\typctxtwo} \tm {\type}
      }
    \end{array}$  }
  \caption{Type system for maximal evaluation}
  \label{fig:max-type-system}
\end{figure}

\begin{proposition}[$\perpe$ evaluation system]
\label{prop:memory-bievaluation-max} 
  $(\lterms,\Rew{\perpe},  \sysneutralpr \systemax, \sysnormalpr \systemax,
  \sysabspr\systemax)$ 
is an evaluation system.
\end{proposition}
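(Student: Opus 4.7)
The triple of predicates $\sysneutralpr\systemax$, $\sysnormalpr\systemax$, $\sysabspr\systemax$ coincides by definition with the LO predicates $\skneutralpr$, $\sknormalpr$, $\skabspr$, so property (3) of the definition of evaluation system (the characterisation of neutrals as normals that are not abstractions) is inherited for free from the fact that $\ske$ is already known to be an evaluation system. It therefore remains to prove determinism of $\Rew{\perpe}$ and the normal-form characterisation: $\tm$ is $\Rew{\perpe}$-normal iff $\sysnormal\systemax\tm$. Because determinism will invoke the normal-form characterisation to separate certain rules, I would prove the latter first.

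\textbf{Normal-form characterisation (property (2)).} I would prove both directions by induction on $\tm$. For the direction ``$\sysnormal\systemax\tm$ implies $\Rew{\perpe}$-normal'', I would inspect the three shapes of normal terms from \reffig{normalforms} and check that no rule of \reffig{maxevaluation} is applicable, using the IH to rule out reductions inside subterms. For the converse ``$\Rew{\perpe}$-normal implies $\sysnormal\systemax\tm$'', I would argue by case on $\tm$: a variable trivially satisfies $\skneutralpr$; for $\tm=\la\var\tmtwo$, the absence of rule~4 forces $\tmtwo$ to be $\Rew{\perpe}$-normal, so the IH gives $\sknormal{\tmtwo}$; for $\tm=\tmtwo\tmthree$, the absence of rules~1--3 forces $\tmtwo$ not to be an abstraction, then the absence of rule~5 gives $\tmtwo$ $\Rew{\perpe}$-normal, hence by IH $\sknormal\tmtwo$ and in fact $\skneutral\tmtwo$; finally the absence of rule~6 gives $\tmthree$ $\Rew{\perpe}$-normal and the IH yields $\sknormal\tmthree$, so $\sknormal{\tmtwo\tmthree}$ holds.

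\textbf{Determinism (property (1)).} I would proceed by structural induction on $\tm$. Variables are inert. For an abstraction $\la\var\tm$ only rule~4 can apply and the reduct is uniquely determined by the IH on $\tm$. The main case is an application $\tmtwo\tmthree$:
\begin{itemize}
\item If $\tmtwo=\la\var\tmthree'$, then rules~1, 2, 3 are the only candidates. Rule~1 is mutually exclusive with rules~2 and 3 via the side condition on $\var\in\fv{\tmthree'}$. Rules~2 and 3 are mutually exclusive because rule~2 requires $\tmthree$ to satisfy $\sysnormalpr\systemax$ while rule~3 requires $\tmthree$ to reduce, and by the characterisation above these cannot simultaneously hold.
\item If $\tmtwo$ is not an abstraction, only rules~5 and 6 are candidates. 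Rule~5 needs $\tmtwo$ to reduce; rule~6 needs $\tmtwo$ to be neutral, hence normal, so by the characterisation again the two are mutually exclusive.
\end{itemize}
In each applicable rule the reduct is unique by the IH applied to the strictly smaller subterm that is being reduced, and the index $r$ is uniquely determined as either $0$, $\mxsize{\tmfour}$, or (inductively) the index of the sub-reduction.

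\textbf{Expected obstacle.} The only subtle point is the interlock between determinism and the normal-form characterisation: rules~2/3 and rules~5/6 are separated precisely by the ``normal vs. reducing'' dichotomy for one subterm. Establishing the characterisation first (which does not require determinism) and then using it as a lemma for the case analysis in the determinism proof is what makes the argument go through cleanly; everything else is purely routine structural reasoning.
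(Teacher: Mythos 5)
Your proposal is correct. The paper itself gives no proof of this proposition---it is explicitly declared routine and omitted even from the appendix---so there is nothing to compare against in detail; your sketch is a faithful instantiation of the intended routine argument. The two points worth confirming are exactly the ones you isolate: property (3) is free because $\sysneutralpr\systemax$, $\sysnormalpr\systemax$, $\sysabspr\systemax$ are literally the LO predicates, and the normal-form characterisation must be established before determinism, since the mutual exclusivity of the rule pairs (reduce the argument vs.\ the argument is $\systemax$-normal, and reduce the function vs.\ the function is neutral) rests precisely on the equivalence between ``satisfies $\sysnormalpr\systemax$'' and ``has no $\Rew{\perpe}$-redex''. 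The only detail left implicit in your application case is that when the head is an abstraction, rules 5 and 6 are excluded because an abstraction satisfies neither $\sysnotabspr\systemax$ nor $\sysneutralpr\systemax$, but that is immediate.
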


\longshort{Also in this case the proof is routine, and it is then omitted even from the Appendix.}{Also in this case the proof is routine.}

\paragraph{Multi types}

Multi types are defined exactly as in~\refsection{typing}.
The type system $\systemax$ for $\perpe$-evaluation is defined
in~\reffig{max-type-system}.  Rules $\manystrict$ and $\none$, which
is a special 0-ary version of $\many$, are used to
prevent an argument $\tmtwo$ in rule $\appsteps$ to be {\it untyped}:
either it is typed by means of rule $\manystrict$---and thus it is
typed with at least one type---or it is typed by means of rule
$\none$---and thus it is typed with exactly one type: the type itself
is then forgotten, but requiring the premise to have a type forces the
term to be normalising. The fact that arguments are always typed, even
those that are erased during reduction, is essential to guarantee
strong normalisation: system $\systemax$ cannot type anymore a term
like $\var \Omega$.  Note that if
$\tderiv \exder[\systemax] \Deri[(\steps, \result)]{\typctx}{\tm}{\type}$,
then $\var\in\fv\tm$ if and only if $\typctx(\var)\neq\emm$.

Similarly to the head and leftmost-outermost cases,
we define the \emph{size} $\syssizeSN\tderiv$ of a typing derivation $\tderiv$
as the number of rule applications in $\tderiv$,
not counting rules $\ax$ and $\manystrict$ and $\none$.
And again if $\tderiv \exder[\systemax] \Deri[(\steps, \result)]{\typctx}{\tm}{\type}$
then $\steps + \result = \syssizeSN\tderiv$.

For maximal evaluation, we need also to refine the notion of
\preciseness of typing derivations,
which becomes a global condition because it is no longer a property of
the final  judgment only: 
\begin{definition}[\Maxprecise derivations]
  A derivation
  $\tderiv \exder[\systemax] \Deri[(\steps, \result)]{\typctx}{\tm}{\type}$
  is \emph{\intprecise}
  if in every instance of rule $(\none)$ in $\tderiv$ we have $\tightpred\type$.
  It is \emph{\maxprecise}
  if also 
  $\tderiv$ is \precise, in the sense of Definition~\ref{def:tightderiv}.
\end{definition}

Similarly to the head and LO cases,
the quantitative information in \maxprecise derivations
characterises evaluation lengths and sizes of normal forms,
as captured by the correctness and completeness theorems.

\subsection{Tight Correctness}
The correctness theorem is proved following the same schema used for
head and LO evaluations. Most proofs are similar,
and are therefore omitted\longshort{ even from the Appendix}{}.

We start with the properties of typed normal forms. As before, we need
an auxiliary lemma about neutral terms, analogous
to~\refprop{tight-normal-forms-indicies}.
\begin{lemma}[Tight spreading on neutral terms for $\systemax$]
\label{l:mxtight-spreading}
  If $\spneutral \tm$ and $\tderiv\exder[\systemax] \Deri[(\steps, \result)]\typctx{\tm}\type$ such that $\tightpred \typctx$,
  then $\tightpred{\type}$.
\end{lemma}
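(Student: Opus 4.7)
The plan is to prove this by structural induction on the head-neutral term $\tm$, relying on the fact that $\spneutral{\tm}$ forces $\tm$ to have the shape $\var\, \tm_1 \cdots \tm_n$ for some $n \geq 0$, and case-analysing the last rule of $\tderiv$ accordingly.

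In the base case $\tm = \var$, the only rule that can conclude $\tderiv$ is $\ax$, which forces $\typctx(\var) = \mult{\type}$. Tightness of $\typctx$ then gives that $\mult{\type}$ is tight, and hence so is $\type$.

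In the inductive step $\tm = \tmtwo \tmthree$ with $\spneutral{\tmtwo}$, the key observation is that the last rule of $\tderiv$ can only be $\appsteps$ or $\appresult<\ske>$: the rules $\funsteps$ and $\funresult$ are excluded because $\tm$ is not an abstraction, $\ax$ because $\tm$ is not a variable, and the structural rules $\manystrict$ and $\none$ because they conclude with a multi-set type rather than a single type $\type$. The $\appresult<\ske>$ case is immediate, since it yields $\type = \neutype$, which is tight. The case $\appsteps$ is where the inductive hypothesis comes in: the left premise provides a derivation $\tderivtwo \exder[\systemax] \Deri[(\steps_1, \result_1)]{\typctx_1}{\tmtwo}{\M \rightarrow \type}$ with $\typctx = \typctx_1 \mplus \typctxtwo$. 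Since tightness of $\typctx$ propagates to each summand of the $\mplus$-decomposition, $\typctx_1$ is also tight. Applying the IH to $\tderivtwo$ would then force $\M \rightarrow \type$ to be a tight constant, which contradicts the grammar of types. Hence this case is vacuous.

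The conceptually subtle point is realising that the seemingly problematic $\appsteps$ case is actually ruled out by the IH: tightness of the context, combined with head-neutrality of the left subterm, is incompatible with that subterm having an arrow type. This is exactly the spreading phenomenon already isolated for $\spi$ and $\ske$ in \reflemma{tight-spreading-spi-ske}, and the extra rules $\manystrict$ and $\none$ specific to $\systemax$ add no difficulty, since they never conclude a judgement assigning a single type $\type$ and thus cannot be the last rule of $\tderiv$.
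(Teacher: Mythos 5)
Your proof is correct and follows essentially the same route as the paper, which omits the proof of this lemma precisely because it is the same induction on the neutrality predicate used for \reflemma{tight-spreading-spi-ske}: the axiom case reads tightness off the context, and in the application case the IH applied to the (tightly-contexted) neutral left subterm rules out $\appsteps$, leaving $\appresult<\ske>$ with conclusion $\neutype$. Your additional observation that $\manystrict$ and $\none$ cannot close the derivation because they conclude with multi-sets is the only point specific to $\systemax$, and you handle it correctly.
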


The general properties of typed normal forms hold as well.

\begin{proposition}[Properties of \maxprecise typings for normal forms]
  \label{prop:mxtight-normal-forms-indices}
  Given $\tderiv\exder[\systemax] \Deri[(\steps, \result)]\typctx{\tm}\type$
  with ${\sysnormal\systemax\tm}$,
  \begin{enumerate}
  \item \label{p:mxtight-normal-forms-indices-size-bound}
    \emph{Size bound}: $\syssize\systemax\tm \leq \steps+\result$.

  \item \label{p:mxtight-normal-forms-indices-tightness}
    \emph{Tightness}: if $\tderiv$ is \maxprecise\  then $\steps = 0$ and $\result = \syssize\systemax\tm$. 

  \item \label{p:mxtight-normal-forms-indices-neutrality}
    \emph{Neutrality}: if $\type=\neutype$ then $\sysneutral{\systemax}{\tm}$.
  \end{enumerate}
\end{proposition}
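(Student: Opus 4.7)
The plan is to proceed by induction on the structure of the $\systemax$-normal term $\tm$, following the grammar of $\sysnormalpr{\systemax}$-normal forms, with a case analysis on the last rule of $\tderiv$. The three parts are established simultaneously, with the tight spreading lemma (\reflemma{l:mxtight-spreading}) as the central tool. We rely on the fact, spelled out when the type system was introduced in \refsect{head-skeleton}, that the metavariable $\nf$ appearing in rules $\funresult$ and $\appresult<\ske>$ ranges over the tight constants $\neutype$ and $\abstype$.

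The easy cases are $\tm = \var$ (only rule $\ax$ applies, giving at once $\steps = \result = 0 = \syssize{\systemax}{\var}$ and all three parts) together with the cases where the last rule is $\funresult$ (so $\tm = \la\var\tmtwo$) or $\appresult<\ske>$ (so $\tm = \tm_1\tm_2$): the target type is then $\abstype$ or $\neutype$, and the \maxprecise assumption propagates to the sub-derivation(s)---their contexts are tight (using that $\typctx$ is tight, plus the side condition $\tightpred{\typctx(\var)}$ in the $\funresult$ case), and their types are tight because they are among the $\nf$'s or explicitly $\neutype$. Applying the induction hypothesis yields the index identity $\steps = 0$, the size identity $\result = \syssize{\systemax}{\tm}$, and the size bound of part 1. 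Part 3 follows from the grammar of normal forms: $\var$ is neutral, and $\tm_1\tm_2$ with $\tm_1$ forced to be neutral (because $\tm$ is normal) is also neutral.

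The interesting cases are those in which the last rule is $\funsteps$ or $\appsteps$, whose target type is an arrow. Part 1 still goes through easily: for $\funsteps$ we invoke the inductive hypothesis on the body, and for $\appsteps$ the right premise typing $\tm_2$ with a multi-set $\M$ is derived via $\manystrict$ or $\none$; in either case it contains at least one sub-derivation typing $\tm_2$ with an actual type, to which we apply the inductive hypothesis, using that the indices of any single sub-derivation are bounded by the totals of the $\manystrict$/$\none$ application. Part 3 is vacuous here since arrow types are not $\neutype$. \textbf{The real obstacle is part 2}: we must argue that the \maxprecise assumption makes these cases impossible. For $\funsteps$ this is immediate, as an arrow type violates the definition of \precise. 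For $\appsteps$, the left premise would type the neutral term $\tm_1$ with an arrow type $\M \rightarrow \type$ in the context $\typctx_1$, which is tight by decomposition of the tight $\typctx = \typctx_1 \mplus \typctx_2$; but then \reflemma{l:mxtight-spreading} forces $\M \rightarrow \type$ to be tight, a contradiction. This is the same spreading argument that underpinned \refprop{tight-normal-forms-indicies} for head and leftmost evaluation, confirming that the schema transports cleanly to $\systemax$ despite the introduction of $\manystrict$ and $\none$.
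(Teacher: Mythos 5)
Your proof is correct and follows exactly the schema the paper intends for this statement (the paper omits the proof, deferring to the head/LO analogue \refprop{tight-normal-forms-indicies}, which is proved by the same induction on the normal form with the spreading lemma ruling out $\appsteps$ and $\funsteps$ under tightness); your handling of $\manystrict$/$\none$ in the size bound for the $\appsteps$ case is the only genuinely new ingredient of the maximal system, and you get it right. One small inaccuracy: part 3 is not vacuous when the last rule is $\appsteps$, since the conclusion type there is the codomain $\type$ of the arrow and may well be $\neutype$; the conclusion nonetheless holds by the observation you already make elsewhere, namely that a $\systemax$-normal application is necessarily neutral by the grammar of normal forms.
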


Then we can type substitutions:
\begin{lemma}[Substitution and typings for $\systemax$]
  \label{l:mxtyping-substitution}
  The following rule is admissible in system $\systemax$: 
  \[
   \infer[\subslemma]{
     \Deri[(\steps + \steps', \result + \result')]
          {\typctx\mplus\typctxtwo}{\tm\isub\var\tmtwo }{\type}
   }{
     \Deri[(\steps, \result)]{\typctx} {\tmtwo}{{\M}}
     \quad
     \Deri[(\steps', \result')] {\typctxtwo; \var: {\M}} {\tm}{\type}
     \quad
     \M \neq \emm
   }
  \]
  Moreover if the derivations of the premisses are \intprecise,
  then so is the derivation of the conclusion.
\end{lemma}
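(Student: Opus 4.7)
The plan is to prove the lemma by induction on the typing derivation $\tderiv'$ of the second premise $\Deri[(\steps', \result')] {\typctxtwo; \var: {\M}} {\tm}{\type}$, in direct analogy with the proof of \reflemma{typing-substitution-overview} for systems $\spi$ and $\ske$. The hypothesis $\M \neq \emm$ is crucial: it guarantees that the first-premise derivation of $\tmtwo : \M$ must end with rule $\manystrict$, so there is at least one sub-derivation of $\tmtwo$ with a single type to feed into the base case.

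First, the base case is when $\tderiv'$ ends with $\ax$, so $\tm = \var$, $\M = \mult\type$, $\typctxtwo$ is empty and $\tm\isub\var\tmtwo = \tmtwo$. The derivation of $\tmtwo : \mult\type$ must end with $\manystrict$ applied to a single premise typing $\tmtwo$ with $\type$, whose indices, context and conclusion exactly match those required by the conclusion of the lemma. For the inductive cases $\funsteps$ and $\funresult$, I would $\alpha$-convert so that $\tm = \la\vartwo\tmthree$ with $\vartwo \neq \var$ and $\vartwo \notin \fv\tmtwo$, apply the IH to the sub-derivation typing $\tmthree$, and reapply the same rule; the tightness side-condition $\tightpred{\typctxtwo(\vartwo)}$ in $\funresult$ concerns $\vartwo \neq \var$ and is therefore unaffected by the substitution.

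For $\appsteps$ and $\appresult<\ske>$, with $\tm = \tmthree \tmfour$, the input context splits as $\typctx_1 \mplus \typctx_2 = \typctxtwo; \var : \M$, so $\var$ receives $\M_a$ in $\typctx_1$ and $\M_b$ in $\typctx_2$ with $\M = \M_a \mplus \M_b$. Correspondingly, the premises of the $\manystrict$ producing $\tmtwo : \M$ are partitioned into sub-derivations of $\tmtwo : \M_a$ and $\tmtwo : \M_b$; on each side where the multi-set is non-empty the IH applies, and the results are recombined by $\appsteps$ (resp.~$\appresult<\ske>$). Preservation of \intpreciseness is immediate: the assembled derivation introduces no new instance of rule $\none$, so any $\none$-instance already satisfied $\tightpred\type$ in the premise derivations.

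The subtle point is the case where $\M_a = \emm$ or $\M_b = \emm$: the induction hypothesis cannot be invoked on that side because the ambient lemma explicitly excludes an empty multi-set for $\tmtwo$. Here I would exploit the relevance property of system $\systemax$ recalled in the paper (namely $\var \in \fv\tm$ iff $\typctx(\var)\neq\emm$ in any $\systemax$-derivation of $\tm$): when, say, $\M_a = \emm$, then $\var \notin \fv\tmthree$, hence $\tmthree\isub\var\tmtwo = \tmthree$ syntactically, and the original sub-derivation of $\tmthree$ is reused verbatim, with all of $(\steps,\result)$ absorbed into the $\tmfour$-side by the IH (the symmetric case is analogous). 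I expect this edge case---absent from the $\spi/\ske$ proof, where relevance fails---to be the main technical obstacle, while the remaining bookkeeping of indices and contexts is routine.
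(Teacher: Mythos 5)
Your proof is correct and follows essentially the same route the paper intends: the paper omits this proof as "similar" to Lemma~\ref{l:typing-substitution-overview}, whose argument is exactly the induction on the derivation of $\tm$ that you carry out. The one genuinely new wrinkle created by the hypothesis $\M \neq \emm$ --- the sub-case of an application where the split of $\M$ leaves one side empty --- is correctly discharged by your appeal to the relevance property of $\systemax$ (stated explicitly in the paper), which gives $\var \notin \fv{\tmthree}$ and lets the original sub-derivation be reused verbatim.
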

Note that, in contrast to~\reflemma{typing-substitution-overview}
in~\refsection{tightcorrectness},
we assume that the multi-set $\M$ is not empty,
so that the left premiss is derived with rule $\manystrict$
rather than $\none$.

\paragraph{Subject reduction} The statement here slightly
differs from the corresponding one in~\refsection{tightcorrectness}.
Indeed, the typing environment $\typctx$ for term $\tm$ is
not necessarily preserved when typing $\tmtwo$,
because the evaluation step may erase a subterm.
Consider for instance term $\tm =
(\la\var \var')(\vartwo\vartwo)$.  In
any $\systemax$-typing derivation of $\tm$, the typing context
must declare $\vartwo$ with an appropriate
type that ensures that, when applying a well-typed substitution to
$\tm$, the resulting term is still normalising for
$\Rew{\perpe}$. For instance, the context should declare
  $\vartwo:\mult{\ty{\mult{\type}}{\type}, \type }$,
  or even $\vartwo: \mult{\neutype}$ if
  the typing derivation for $\tm$ is \maxprecise.
However, as
$\tm\Rew{\perpe}[1]\var'$, the typing derivation for $\var'$ will
clearly have a typing environment $\typctx'$ that maps
$\vartwo$ to $\emm$. Hence, the subject reduction property has to take into account the change of typing context, as shown below.

\begin{toappendix}
  \begin{proposition}[Quantitative subject reduction for $\systemax$]
    \label{prop:mxsubject-reduction}
    If $\tderiv\exder[\systemax] \Deri[(\steps, \result)]\typctx{\tm}\type$ is \maxprecise
    and $\tm\Rew{\systemax}[e]\tmtwo$,
    then there exist $\typctx'$ and an \maxprecise  typing $\tderivtwo$ such that
    $\tderivtwo\exder[\systemax] \Deri[(\steps-2, \result-e)]{\typctx'}{\tmtwo}\type$.
  \end{proposition}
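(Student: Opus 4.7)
The plan is to proceed by induction on the derivation of $\tm \Rew{\perpe}[e] \tmtwo$, following the same overall schema as \refprop{subject-reduction}, but with extra care in the cases involving erasure, which is where $e$ may be nonzero and where the typing context is allowed to shrink.

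For the two base cases, the root typing of the redex $(\la\var\tmthree)\tmfour$ must end in $\appsteps$ applied to a left premiss derived by $\funsteps$ (and not $\funresult$, since the abstraction is consumed). In the non-erasing case $\var\in\fv\tmthree$, the relevance property for $\systemax$ (free variables of the subject have non-empty type in the context) forces the argument $\tmfour$ to be typed with a non-empty multiset via $\manystrict$, so \reflemma{mxtyping-substitution} applies directly and yields a \maxprecise typing of $\tmthree\isub\var\tmfour$; the indices drop by $(2,0)$ and the context is unchanged, matching $e=0$. In the erasing case $\var\notin\fv\tmthree$ with $\tmfour$ normal, the right premiss has the shape $\Deri[(\steps'', \result'')]{\typctxtwo}{\tmfour}{\emm}$ and must be obtained via rule $\none$ whose inner premiss, by the \intprecise side condition and the fact that $\typctxtwo$ is a restriction of the tight ambient context, is a \maxprecise typing of $\tmfour$ at a tight type. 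Applying \refprop{mxtight-normal-forms-indices} to $\tmfour$ then forces $\steps''=0$ and $\result''=\mxsize\tmfour=e$. Stripping off the outer $\appsteps$ and $\funsteps$ leaves the required \maxprecise typing of $\tmthree$, whose context $\typctx'$ is the original minus the contribution of $\typctxtwo$, with overall indices dropping exactly by $(2, e)$.

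The inductive cases through abstractions, through the left of an application, and through the right-hand side of an application whose head is neutral are routine applications of the IH, preserving \maxpreciseness and the index bookkeeping. The one deserving attention is the step under an erased subterm, $(\la\var\tmthree)\tm \Rew{\perpe}[r] (\la\var\tmthree)\tmtwo$ with $\var\notin\fv\tmthree$: here the \maxprecise typing forces $\tm$ to be typed via rule $\none$, but the inner premiss of that rule is a \maxprecise typing of $\tm$ at a tight type, which is exactly the shape needed to invoke the IH. The IH produces a new context $\typctxtwo'$ and a \maxprecise typing of $\tmtwo$ with indices $(\steps''{-}2,\result''{-}r)$, which is then re-wrapped by $\none$, $\funsteps$, and $\appsteps$; the overall context changes from $\typctx\mplus\typctxtwo$ to $\typctx\mplus\typctxtwo'$, as anticipated in the statement.

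The main obstacle, and where this proof differs conceptually from \refprop{subject-reduction}, is precisely the combined handling of rule $\none$ and the \intprecise condition. The key insight that makes \maxpreciseness robust under erasing reduction --- without resorting to a memory operator, subsumption, or big-step subject reduction --- is that \intprecise derivations force every erased subterm to be typed at a tight type, so \refprop{mxtight-normal-forms-indices} pins down its contribution to the result index as its $\systemax$-size; this is exactly the quantity $e$ that the statement subtracts from $\result$, and it is what keeps the reduct's typing \maxprecise despite the context change.
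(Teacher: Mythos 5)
Your proposal is correct and follows essentially the same route as the paper's proof: induction on the reduction derivation, the substitution lemma with a non-empty multiset in the non-erasing base case, and, in the erasing cases, the observation that the \intprecise\ condition forces the erased argument to be typed at a tight type so that \refprop{mxtight-normal-forms-indices} pins its indices to $(0,\mxsize\tmfour)$. The only point you leave implicit is that the induction must be carried out on the same strengthened statement as in \refprop{subject-reduction} (tight context, and tight type \emph{or} non-abstraction subject, concluding that the new context is again tight), since in the left-application case the subderivation for the function part has an arrow type and is therefore not itself \maxprecise; your explicit appeal to that schema covers this.
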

\end{toappendix}

\begin{toappendix}[
    \begin{proof}See \longshort{Appendix~\thisappendix}{the long version of this paper~\cite{AccattoliGLK2018-long}}.\end{proof}
  ]

\begin{proof}
  We prove, by induction on $\tm\Rew{\systemax}[e]\tmtwo$,
  the stronger statement:
  
  Assume
  $\tm\Rew{\perpe}[e]\tmtwo$,
  $\tderiv\exder[\systemax] \Deri[(\steps, \result)]\typctx{\tm}\type$ is \intprecise,
  $\tightpred\typctx$,
  and either $\tightpred\type$ or $\sysnotabs \systemax \tm$.
  
  Then there exist $\typctx'$ and a \intprecise typing
  $\tderivtwo\exder[\systemax] \Deri[(\steps-2, \result-e)]{\typctx'}{\tmtwo}\type$
  such that $\tightpred{\typctx'}$.

  \begin{itemize}
  \item Rule 
    \[\infer{
      (\la\var \tmthree) \tmfour \toperpNE \tmthree \isub \var \tmfour
    }{\var\in\fv\tmthree}
    \]
    Assume 
    $\tderiv \exder[\systemax]\Deri[(\steps,\result)]
    {\typctx}{(\la\var \tmthree) \tmfour}\type$ is \intprecise
    and $\tightpred\typctx$.
    The derivation $\tderiv$ must end with rule $\appsteps$,
    and the derivation of its premiss for $(\la\var \tmthree)$
    must end with $\funsteps$.
    Hence, there are two \intprecise derivations
    $\tderiv_\tmthree\exder[\systemax]\Deri[
      (\steps_\tmthree, \result_\tmthree)
    ]{\typctx_\tmthree; \var: {\M}}{\tmthree}{\type}$
    and
    $\tderiv_\tmtwo\exder[\systemax]\Deri[
      (\steps_\tmtwo, \result_\tmtwo)
    ]{\typctx_\tmtwo}{\tmtwo}{\M}$,
    with $(\steps,\result) = (\steps_u+\steps_\tmfour+2, \result_u+\result_\tmfour)$
    and $\typctx=\typctx_\tmthree\mplus\typctx_\tmtwo$.
    Moreover, $\M\neq\emm$ as $\var\in\fv\tmthree$,
    and therefore applying Lemma~\ref{l:mxtyping-substitution}
    gives a \intprecise $\tderiv'\exder[\systemax]\Deri[(\steps_u+\steps_\tmfour,\result_u+\result_\tmfour)]
    {\typctx}{\tmthree \isub \var \tmfour}\type$.

  \item Rule 
    \[\infer{
      (\la\var \tmthree) \tmfour \Rew{\perpe}[\mxsize\tmfour] \tmthree
    }{
      \sysnormal\systemax  \tmfour\quad
      \var\notin \fv\tmthree
    }\]
    Assume 
    $\tderiv \exder[\systemax]\Deri[(\steps,\result)]
    {\typctx}{(\la\var \tmthree) \tmfour}\type$ is \intprecise
    and $\tightpred\typctx$.
    The derivation $\tderiv$ must end with rule $\appsteps$,
    and the derivation of its premiss for $(\la\var \tmthree)$
    must end with $\funsteps$.
    Since $\var\notin \fv\tmthree$,
    $\tderiv$ must be of the form
    \[
    \infer{
      \Deri[
        (\steps_\tmthree+\steps_\tmfour+2,\result_\tmthree+\result_\tmfour)
      ]{\typctx_\tmthree\mplus\typctx_\tmfour}{(\la\var \tmthree) \tmfour}\type
    }{
      \infer{
        \Deri[
          (\steps_\tmthree+1,\result_\tmthree)
        ]{\typctx_\tmthree}{\la\var\tmthree}{\emm\rightarrow\type}
      }{
        \tderiv_\tmthree \exder[\systemax]
        \Deri[
          (\steps_\tmthree,\result_\tmthree)
        ]{\typctx_\tmthree}{\tmthree}\type
      }
      \qquad
      \infer{
        \Deri[(\steps_\tmfour,\result_\tmfour)]{\typctx_\tmfour}{\tmfour}{\emm}
      }{
        \tderiv_\tmfour \exder[\systemax]\Deri[
          (\steps_\tmfour,\result_\tmfour)
        ]{\typctx_\tmfour}{\tmfour}{\type_\tmfour}
      }
    }
    \]
    with $(\steps,\result) = (\steps_u+\steps_\tmfour+2, \result_u+\result_\tmfour)$
    and $\typctx=\typctx_\tmthree\mplus\typctx_\tmtwo$.
    Since $\tderiv$ is \intprecise,
    $\type_\tmfour$ must be $\nf$,
    and since $\sysnormal\systemax  \tmfour$,
    we can apply Theorem~\ref{prop:mxtight-normal-forms-indices}
    and get $(\steps_\tmfour,\result_\tmfour)=(0,\mxsize\tmfour)$,
    so that $(\steps_\tmthree,\result_\tmthree)=(\steps-2,\result-\mxsize\tmfour)$,.
    Since $\tightpred{\typctx_\tmthree\mplus\typctx_\tmfour}$
    we have $\tightpred{\typctx_\tmthree}$,
    so $\tderiv_\tmthree$ is the desired \intprecise derivation.

  \item Rule 
    \[
    \infer{\la\var \tm \Rew{\perpe}[e] \la\var \tmtwo}{
      \tm \Rew{\perpe}[e] \tmtwo
    }
    \]
    Assume
    $\tderiv\exder[\systemax]\Deri[(\steps,\result)]
    {\typctx}{\la\var \tm}\type$ is \intprecise
    and $\tightpred\typctx$.
    Since $\sysabs \systemax {\la\var \tm}$ we must have hypothesis $\tightpred\type$,
    and as $\tderiv$ must then finish with rule $\funresult$
    we must have a subderivation
    $\tderiv_\tm \exder[\systemax]
    \Deri[
      (\steps, \result-1)
    ]{\typctx, \var \col \mtight}{\tm}{\nf}$.
    As $\tderiv_\tm$ is \intprecise and $\tightpred{\typctx, \var \col \mtight}$
    we can apply the \ih and get the premiss of the derivation $\tderiv'$ below,
    where $\tderiv_\tmtwo$ is \intprecise and $\tightpred{\typctx', \var \col \mtight}$:
    \[
    \infer{
      \Deri[(\steps+2, \result-e)]{
        \typctx'
      }{\la\var\tmtwo}\type
    }{
      \tderiv_\tmtwo \exder[\systemax]
      \Deri[(\steps+2, \result-1-e)]{\typctx', \var \col \mtight}{\tmtwo}\nf
    }
    \]
    Then $\tderiv'$ is \intprecise and $\tightpred{\typctx'}$.

  \item Rule 
    \[
    \infer{
      \tm \tmthree \Rew{\perpe}[e] \tmtwo \tmthree
    }{
      \sysnotabs \systemax \tm \quad \tm \Rew{\perpe}[e]\tmtwo 
    }
    \]
    Assume
    $\tderiv\exder[\systemax]\Deri[(\steps,\result)]
    {\typctx}{\tm \tmthree}\type$ is \intprecise
    and $\tightpred\typctx$.
    The derivation $\tderiv$ must end
    with rule $\appsteps$ or $\appresult<\ske>$,
    and therefore there are two \intprecise derivations
    $\tderiv_\tm\exder[\systemax]\Deri[
      (\steps_\tm, \result_\tm)
    ]{\typctx_\tm}{\tm}{\type_\tm}$
    and
    $\tderiv_\tmthree\exder[\systemax]\Deri[
      (\steps_\tmthree, \result_\tmthree)
    ]{\typctx_\tmthree}{\tmthree}{\type_\tmthree}$,
    for some types $\type_\tm$ and $\type_\tmthree$,
    with $\typctx=\typctx_\tm\mplus\typctx_\tmthree$.
    Since $\tightpred{\typctx}$ we have
    $\tightpred{\typctx_\tm}$ and $\tightpred{\typctx_\tmthree}$.
    Since $\sysnotabs \systemax \tm$, we can apply the \ih and get
    the \intprecise derivation
    $\tderiv_\tmtwo \exder[\systemax]
    \Deri[
      (\steps_\tm-2, \result_\tm-e)
    ]{\typctx_\tmtwo}{\tmtwo}{\type_\tm}$,
    with $\tightpred{\typctx_\tmtwo}$.
    Then the same rule $\appsteps$ or $\appresult<\ske>$
    can be applied to get the \intprecise derivation
    $\tderiv'\exder[\systemax]\Deri[(\steps-2,\result-e)]
    {\typctx_\tmtwo\mplus\typctx_\tmthree}{\tmtwo \tmthree}\type$,
    with $\tightpred{\typctx_\tmtwo\mplus\typctx_\tmthree}$.
    
  \item Rule 
    \[
    \infer{
      \tmthree \tm \Rew{\perpe}[e]\tmthree \tmtwo
    }{
      \sysneutral\systemax \tmthree \quad  \tm \Rew{\perpe}[e]\tmtwo
    }
    \]
    Assume
    $\tderiv\exder[\systemax]\Deri[(\steps,\result)]
    {\typctx}{\tmthree \tm}\type$ is \intprecise
    and $\tightpred\typctx$.
    The derivation $\tderiv$ must end
    with rule $\appsteps$ or $\appresult<\ske>$,
    and therefore there are two \intprecise derivations
    $\tderiv_\tmthree\exder[\systemax]\Deri[
      (\steps_\tmthree, \result_\tmthree)
    ]{\typctx_\tmthree}{\tmthree}{\type_\tmthree}$
    and
    $\tderiv_\tm\exder[\systemax]\Deri[
      (\steps_\tm, \result_\tm)
    ]{\typctx_\tm}{\tm}{\type_\tm}$,
    for some types $\type_\tmthree$ and $\type_\tm$,
    with $\typctx=\typctx_\tmthree\mplus\typctx_\tm$.
    Since $\tightpred{\typctx}$ we have
    $\tightpred{\typctx_\tmthree}$ and $\tightpred{\typctx_\tm}$.
    Theorem~\ref{l:mxtight-spreading}
    concludes $\tightpred{\type_\tmthree}$ from $\sysneutral\systemax \tmthree$.
    So the last rule of $\tderiv$ must be $\appresult<\ske>$,
    whence $\type=\neutype$ and $\type_\tm=\tight$.
    Therefore we can apply the \ih to get the \precise derivation
    $\tderiv_\tmtwo \exder[\systemax]
    \Deri[
      (\steps_\tm-2, \result_\tm-e)
    ]{\typctx_\tmtwo}{\tmtwo}{\nf}$.
    Then $\appresult<\ske>$
    can be applied to get the \precise derivation
    $\tderiv'\exder[\systemax]\Deri[(\steps-2,\result-e)]
    {\typctx_\tmthree\mplus\typctx_\tmtwo}{\tmthree\tmtwo}\neutype$.

  \item Rule 
    \[\infer{
      (\la\var \tmthree) \tm \Rew{\perpe}[e] (\la\var \tmthree) \tmtwo
    }{
      \tm \Rew{\perpe}[e]\tmtwo\quad\var\notin \fv\tmthree
    }\]
    Assume
    $\tderiv\exder[\systemax]\Deri[(\steps,\result)]
    {\typctx}{(\la\var \tmthree) \tm}\type$ is \intprecise
    and $\tightpred\typctx$.
    The derivation $\tderiv$ must end
    with rule $\appsteps$,
    and therefore there are two \intprecise derivations
    $\tderiv_\tmthree\exder[\systemax]\Deri[
      (\steps_\tmthree, \result_\tmthree)
    ]{\typctx_\tmthree}{\la\var\tmthree}{\emm\rightarrow\type}$
    and
    $\tderiv_\tm\exder[\systemax]\Deri[
      (\steps_\tm, \result_\tm)
    ]{\typctx_\tm}{\tm}{\nf}$,
    with $\typctx=\typctx_\tmthree\mplus\typctx_\tm$
    and $(\steps,\result)=(\steps_\tm+\steps_\tmthree+1,\result_\tm+\result_\tmthree)$.
    We can apply the \ih to get
    the \precise derivation
    $\tderiv_\tmtwo \exder[\systemax]
    \Deri[
      (\steps_\tm-2, \result_\tm-e)
    ]{\typctx_\tmtwo}{\tmtwo}{\nf}$.
    Then $\appsteps$
    can be applied to get the \intprecise derivation
    $\tderiv'\exder[\systemax]\Deri[(\steps-2,\result-e)]
    {\typctx_\tmthree\mplus\typctx_\tmtwo}{(\la\var \tmthree)\tmtwo}\type$,
    with $\tightpred{\typctx_\tmthree\mplus\typctx_\tmtwo}$.
  \end{itemize}
\end{proof}

\end{toappendix}

\paragraph{Correctness theorem} Now the correctness theorem easily follows. It differs from the corresponding theorem in~\refsection{tightcorrectness}
in that the second index in the \maxprecise typing judgement does not
only measure the size of the normal form but also the sizes of all the
terms erased during evaluation (and necessarily in normal form).

\begin{theorem}[Tight correctness for $\perpe$-evaluation]\label{th:mxcorrect}
Let $\tderiv \exder[\systemax]  \Deri[(\steps, \result)] {\typctx}{\tm}{\type}$
be a \maxprecise\ derivation.
Then there is an integer $e$ and  a  term  $\tmtwo$ such that
$\sysnormal\systemax\tmtwo$, $\tm\Rewn[\steps/2]{\perpe}[e] \tmtwo$
and $\mxsize\tmtwo  + e = \result$. Moreover, if $\type=\neutype$
  then $\sysneutral\systemax\tmtwo$.
\end{theorem}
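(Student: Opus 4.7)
The plan is to proceed by induction on the first index $\steps$ of the \maxprecise derivation $\tderiv$, following the same schema as Theorem~\ref{thm:correctness} for head and leftmost evaluation, but enriched with additive bookkeeping of the cumulative erasure size $e$. For the base case $\steps = 0$, I first argue that $\tm$ must be $\systemax$-normal: otherwise \refprop{memory-bievaluation-max} would yield a step $\tm \Rew{\perpe}[e_1] \tm'$ and \refprop{mxsubject-reduction} would force $\steps \geq 2$, a contradiction. Given $\sysnormal\systemax\tm$, \refprop{mxtight-normal-forms-indices} then provides $\result = \mxsize\tm$ and, when $\type = \neutype$, also $\sysneutral\systemax\tm$. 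Taking $\tmtwo := \tm$ and $e := 0$ discharges every clause via the reflexive rule of $\Rewn{\perpe}$.

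For the inductive step $\steps > 0$, the contrapositive of the tightness clause of \refprop{mxtight-normal-forms-indices} shows that $\tm$ is not $\systemax$-normal, hence by the evaluation-system property there exists a unique step $\tm \Rew{\perpe}[e_1] \tm_1$. \refprop{mxsubject-reduction} applied to $\tderiv$ then delivers some $\typctx'$ and a \maxprecise derivation $\tderivtwo \exder[\systemax] \Deri[(\steps - 2, \result - e_1)]{\typctx'}{\tm_1}{\type}$ preserving the type $\type$. The induction hypothesis applied to $\tderivtwo$ yields an integer $e_2$ and a term $\tmtwo$ with $\sysnormal\systemax\tmtwo$, $\tm_1 \Rewn[(\steps-2)/2]{\perpe}[e_2] \tmtwo$, and $\mxsize\tmtwo + e_2 = \result - e_1$, together with $\sysneutral\systemax\tmtwo$ if $\type = \neutype$. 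Prepending the step $\tm \Rew{\perpe}[e_1] \tm_1$ via the transitive-closure rule gives $\tm \Rewn[\steps/2]{\perpe}[e_1 + e_2] \tmtwo$, and setting $e := e_1 + e_2$ closes the argument since $\mxsize\tmtwo + e = \result$, with the neutrality clause inherited directly from the induction hypothesis.

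The main obstacle is not in this top-level induction, which is genuinely routine once the supporting lemmas are in hand, but rather concentrated upstream in \refprop{mxsubject-reduction}: one must show that a single $\Rew{\perpe}$-step always decreases $\steps$ by exactly $2$ and $\result$ by exactly the erased size $e_1$, \emph{and} that the resulting derivation remains \maxprecise even though the typing context may genuinely change (variables occurring only in an erased subterm disappear from it). This invariance of \preciseness under context change is the conceptually novel ingredient compared to the head and leftmost cases, where the context was preserved pointwise; once granted, the present theorem is essentially a clean threading of the additive counter $e$ through the transitive closure.
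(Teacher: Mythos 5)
Your proposal is correct and follows essentially the same schema as the paper, which proves this theorem (like its head/LO analogue) by induction on a measure that subject reduction decreases, using Proposition~\ref{prop:mxtight-normal-forms-indices} for the base case and Proposition~\ref{prop:mxsubject-reduction} for the inductive case, threading the erasure counter $e$ additively through the transitive closure exactly as you do. Your choice to induct on $\steps$ and derive normality of $\tm$ from $\steps=0$ by contradiction (rather than case-splitting directly on whether $\tm$ is normal) is a cosmetic reorganization, not a different route, and your closing remark correctly locates the real work in the context-changing subject reduction property.
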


\subsection{Tight Completeness}
Completeness is again similar to that in~\refsection{tightcompleteness},
and differs from it in the same way as correctness differs
from that in~\refsection{tightcorrectness}. Namely, the second index in the completeness theorem also accounts for the size of erased terms, and the \longshort{appendix}{long version of this paper~\cite{AccattoliGLK2018-long}} provides the proof of the subject expansion property. The completeness statement follows.

\begin{proposition}[Normal forms are tightly typable in \systemax]
  \label{prop:perp-normal-forms}
  Let $\tm$ be such that $\sysnormal\systemax\tm$. Then 
  there exists a \maxprecise derivation $\tderiv\exder[\systemax] \Deri[(0, \mxsize\tm)]\typctx{\tm}\type$.
  Moreover, if $\sysneutral\systemax\tm$ then $\type = \neutype$, and if $\sysabs\systemax\tm$ then $\type = \abstype$.
\end{proposition}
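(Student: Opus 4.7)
The plan is to prove the statement by structural induction on $\tm$, mutually with a companion claim for neutrals. Concretely, I would prove two claims together: (a) if $\sysneutral\systemax\tm$, there is a \maxprecise derivation $\tderiv \exder[\systemax] \Deri[(0, \mxsize\tm)]{\typctx}{\tm}{\neutype}$; and (b) if $\sysnormal\systemax\tm$, there is a \maxprecise derivation $\tderiv \exder[\systemax] \Deri[(0, \mxsize\tm)]{\typctx}{\tm}{\type}$ with $\tightpred\type$, and moreover $\type = \abstype$ whenever $\sysabs\systemax\tm$. The ``neutral'' half of the proposition's moreover clause is then just (a), and the ``abstraction'' half is the abstraction subcase of (b).

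The three cases mirror the inductive definition of LO normal and neutral forms. For a variable $\var$ (which is neutral), the axiom rule immediately yields $\Deri[(0, 0)]{\var : \mult{\neutype}}{\var}{\neutype}$, whose context and type are both tight. For an application $\tmtwo\tmthree$ that is normal, the definition forces $\skneutral\tmtwo$ and $\sknormal\tmthree$; (a) applied to $\tmtwo$ and (b) applied to $\tmthree$ provide \maxprecise derivations of $\Deri[(0,\mxsize\tmtwo)]{\typctx_\tmtwo}{\tmtwo}{\neutype}$ and $\Deri[(0,\mxsize\tmthree)]{\typctx_\tmthree}{\tmthree}{\type_\tmthree}$ with $\tightpred{\type_\tmthree}$, and one application of $\appresult<\ske>$ combines them into the \maxprecise derivation $\Deri[(0, \mxsize\tmtwo + \mxsize\tmthree + 1)]{\typctx_\tmtwo \mplus \typctx_\tmthree}{\tmtwo\tmthree}{\neutype}$; note that the sum of tight contexts remains tight. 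For an abstraction $\la\var\tmthree$ in normal form, (b) applied to $\tmthree$ gives a \maxprecise $\Deri[(0,\mxsize\tmthree)]{\typctx}{\tmthree}{\type}$ whose tight context already ensures $\tightpred{\typctx(\var)}$, so $\funresult$ applies and produces $\Deri[(0,\mxsize\tmthree+1)]{\typctx \sm \var}{\la\var\tmthree}{\abstype}$, still \maxprecise.

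A key simplifying observation is that the construction only ever uses the three ``non-consuming'' rules $\ax$, $\funresult$, and $\appresult<\ske>$. In particular no $\none$ rule appears, so the derivation is trivially \intprecise, and since every conclusion carries a tight context and a tight type, it is also \precise in the sense of Definition~\ref{def:tightderiv}, hence \maxprecise. The first index stays at $0$ because none of the three rules contributes to it; the second index grows by $+1$ at abstractions and applications and by $0$ at variables, matching $\mxsize{\cdot}$ exactly.

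There is no serious obstacle here: the induction is driven by the grammar of LO normal forms, and the only point that deserves care is the abstraction case, where one has to see that the tightness side-condition of $\funresult$ is a free consequence of the strengthened induction hypothesis that asks for tight contexts on the sub-derivations. That is precisely why claims (a) and (b) are formulated with \maxprecise (and not merely \intprecise) derivations from the outset.
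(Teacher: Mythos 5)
Your proposal is correct and follows essentially the same route as the paper: the paper omits the proof of this proposition as being analogous to the LO case (Proposition~\ref{prop:normal-forms-are-tightly-typable}), whose appendix proof is exactly the structural induction on the normal/neutral predicates you describe, building the derivation from $\ax$, $\funresult$ and $\appresult<\ske>$ only. Your explicit remark that the absence of $\none$ makes the derivation vacuously \intprecise is the one genuinely $\systemax$-specific point, and it is handled correctly.
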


\begin{lemma}[Anti-substitution and typings for $\systemax$]
  \label{l:maxanti-substitution}
  If $\tderiv \exder[\systemax] \Deri[(\steps, \result)] \typctx {\tm\isub\var\tmtwo } \type$
  and $x\in\fv t$,
  then there exist:
  \begin{itemize}
  \item a multiset ${\M}$ different from $\emm$;
  \item a typing derivation
    $\tderiv_\tm \exder[\systemax] \Deri[(\steps_\tm, \result_\tm)]
    {\typctx_\tm;  \var \col {\M}}{\tm}  \type$;
    and
  \item a typing derivation
    $\tderiv_\tmtwo \exder[\systemax] \Deri[(\steps_\tmtwo, \result_\tmtwo)]{\typctx_\tmtwo}{\tmtwo}{\M}$
  \end{itemize}
  such that:
  \begin{itemize}
  \item \emph{Typing context}: $\typctx = \typctx_\tm \mplus \typctx_\tmtwo$;

  \item \emph{Indices}: $(\steps, \result) = (\steps_\tm + \steps_\tmtwo, \result_\tm + \result_\tmtwo)$.
  \end{itemize}
  Moreover, if $\tderiv$ is \intprecise\ then so are $\tderiv_\tm$ and $\tderiv_\tmtwo$.
\end{lemma}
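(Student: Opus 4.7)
The plan is to proceed by induction on $\tderiv$, strengthening the statement to simultaneously cover derivations of $\tm\isub\var\tmtwo$ whose conclusion is a multiset $\M'$ rather than a type $\type$. This strengthening is unavoidable because the right premise of $\appsteps$ has a multiset conclusion reached via $\manystrict$ or $\none$, and the induction must descend into such subderivations. In both variants of the statement the hypothesis $\var\in\fv\tm$ is maintained; it is precisely what lets us extract a non-empty multiset $\M$ for $\tmtwo$.

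In the base case $\tm=\var$ we have $\tm\isub\var\tmtwo=\tmtwo$: I set $\M\defeq\mult{\type}$, lift $\tderiv$ by a single $\manystrict$ instance to obtain $\tderiv_\tmtwo$, and take $\tderiv_\tm$ to be the axiom $\Deri[(0,0)]{\var\col\mult{\type}}\var\type$. For $\tm=\la\vartwo\tmthree$, $\alpha$-rename so that $\vartwo\neq\var$; then $\var\in\fv\tmthree$, the IH applies to the premise of $\funsteps$ or $\funresult$, and I reapply the same rule. For $\tm=\tmthree\tmfour$, whose derivation ends with $\appsteps$ or $\appresult<\ske>$, we have $\var\in\fv\tmthree\cup\fv\tmfour$. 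I apply the IH to each premise whose subterm contains $\var$ (using the type variant for the left premise and the appropriate variant for the right one), producing multisets $\M_3$ and $\M_4$ for $\tmtwo$ (taken empty when the subterm does not contain $\var$, in which case the premise is reused unchanged). Since $\var\in\fv\tm$, at least one of $\M_3,\M_4$ is non-empty, so $\M\defeq\M_3\mplus\M_4\neq\emm$, and the two $\tmtwo$-derivations merge via $\manystrict$. Rebuilding the application rule yields $\tderiv_\tm$; contexts and indices split additively.

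The auxiliary multiset cases of the strengthened statement are routine. If $\tderiv$ ends with $\manystrict$, apply the IH (type case) to each of its premises and combine the resulting $\tmtwo$-derivations with $\manystrict$. If $\tderiv$ ends with $\none$, apply the IH to its unique type-conclusion premise and reapply $\none$ to the extracted derivation for $\tm$. Preservation of the \intprecise property is immediate: every $\none$ instance in the extracted derivations either already occurs in $\tderiv$ (applied to a tight type by hypothesis) or mirrors such an existing instance.

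The main obstacle is the application case: one must carefully split the IH between the two premises to guarantee $\M\neq\emm$, relying critically on $\var\in\fv\tm$ to ensure that at least one sub-invocation produces a non-empty multiset, and one must recombine the resulting $\tmtwo$-derivations with $\manystrict$ rather than $\none$. A secondary nuisance is verifying that the context splits $\typctx_\tm,\typctx_\tmtwo$ and the index splits returned by the sub-inductions sum exactly to the original $\typctx$ and $(\steps,\result)$, but this follows mechanically from the additive structure of $\mplus$ and of the indices attached to each typing rule.
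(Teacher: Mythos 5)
Your proof is correct and follows essentially the route the paper intends (and uses explicitly for the analogous $\spi/\ske$ anti-substitution lemma): induction on the derivation, splitting the extracted multiset for $\tmtwo$ among the subderivations and recombining, with the hypothesis $\var\in\fv\tm$ guaranteeing via relevance that at least one branch yields a non-empty multiset so that $\manystrict$ rather than $\none$ can be used. Your strengthening to multiset conclusions is just a presentational variant of the paper's device of partitioning the multiset among the premises of the $\many$/$\manystrict$ rule, and your handling of the $\none$ case and of the preservation of the \intprecise{} property is sound.
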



\begin{toappendix}
  \begin{proposition}[Quantitative subject expansion for $\systemax$]
    \label{prop:mxsubject-expansion}
    If $\tderiv\exder[\systemax] \Deri[(\steps, \result)]\typctx{\tmtwo}\type$ is \maxprecise
    and $\tm\Rew{\systemax}[e]\tmtwo$,
    then there exist $\typctx'$ and an \maxprecise typing $\tderivtwo$ such that
    $\tderivtwo\exder[\systemax] \Deri[(\steps+2, \result+e)]{\typctx'}{\tm}\type$.
  \end{proposition}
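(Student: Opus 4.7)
The proof of quantitative subject expansion for $\systemax$ mirrors the proof of subject reduction (Proposition~\ref{prop:mxsubject-reduction}), proceeding by induction on $\tm \Rew{\perpe}[e] \tmtwo$. As in the subject reduction case, the direct statement is too weak to propagate through the contextual cases; I would instead prove the strengthened form: if $\tm \Rew{\perpe}[e] \tmtwo$, $\tderiv \exder[\systemax] \Deri[(\steps, \result)]{\typctx}{\tmtwo}{\type}$ is \intprecise with $\tightpred\typctx$, and either $\tightpred\type$ or $\sysnotabs\systemax\tm$, then there exist $\typctx'$ with $\tightpred{\typctx'}$ and an \intprecise typing $\tderivtwo \exder[\systemax] \Deri[(\steps+2, \result+e)]{\typctx'}{\tm}{\type}$. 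The statement of the proposition then follows by taking $\tightpred\type$ from \maxpreciseness.

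For the non-erasing base case $(\la\var\tmthree)\tmfour \Rew{\perpe}[0] \tmthree\isub\var\tmfour$ with $\var\in\fv\tmthree$, I invoke the anti-substitution lemma (Lemma~\ref{l:maxanti-substitution}), whose $\var\in\fv\tmthree$ hypothesis is met, to split $\tderiv$ into \intprecise derivations of $\tmthree$ (with $\var:\M$, for some $\M\neq\emm$) and of $\tmfour$ (with $\M$), then recombine them with $\funsteps$ followed by $\appsteps$, adding exactly $2$ to the first index and $0$ to the second. For the erasing base case $(\la\var\tmthree)\tmfour \Rew{\perpe}[\mxsize\tmfour] \tmthree$ with $\sysnormal\systemax\tmfour$ and $\var\notin\fv\tmthree$, I must fabricate a typing for the erased argument $\tmfour$: Proposition~\ref{prop:perp-normal-forms} supplies a \maxprecise derivation of $\tmfour$ with indices $(0, \mxsize\tmfour)$ and a tight type, and rule $\none$ then converts it into an $\emm$-typing while preserving \intpreciseness (precisely because the typed type is tight). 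Applying $\funsteps$ to $\tderiv$ (using $\var\notin\fv\tmthree$) and then $\appsteps$ yields the required derivation with indices $(\steps+2, \result+\mxsize\tmfour)$, and the multi-set union of the two tight contexts gives a tight $\typctx'$.

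The inductive cases are the contextual rules, each handled by inverting the last rule of $\tderiv$, applying the \ih to the relevant sub-derivation, and rebuilding with the same final rule. The side conditions of the strengthened \ih are verified case-by-case: for $\tm\tmthree \Rew{\perpe}[e] \tmtwo\tmthree$, the reduction premise $\sysnotabs\systemax\tm$ itself supplies the ``not-abs'' branch; for $\tmthree\tm \Rew{\perpe}[e] \tmthree\tmtwo$ with $\sysneutral\systemax\tmthree$, the tight-spreading lemma (Lemma~\ref{l:mxtight-spreading}) forces the last rule of $\tderiv$ to be $\appresult<\ske>$, whence $\type=\neutype$ and the subderivation typing $\tmtwo$ has tight type $\nf$; for $(\la\var\tmthree)\tm \Rew{\perpe}[e] (\la\var\tmthree)\tmtwo$ with $\var\notin\fv\tmthree$, the right premise of $\appsteps$ must have type $\emm$, so the subderivation for $\tmtwo$ ends with $\none$ applied to a tight type, again supplying the tight-type side condition. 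The abstraction and $\funsteps$ cases are analogous.

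The main obstacle is the erasing base case, since one has to produce a full derivation for a subterm ($\tmfour$) that does not appear in the post-reduction term. This is exactly the difficulty that makes strong normalisation harder to capture than weak normalisation in relevant multi type systems; here it is resolved cleanly because Proposition~\ref{prop:perp-normal-forms} guarantees that every $\systemax$-normal form is tightly typable, and rule $\none$ is designed to absorb such a tight typing into a single $\emm$-typed judgement while keeping the overall derivation \intprecise. The symmetric delicacy in subject reduction was handled by Proposition~\ref{prop:mxtight-normal-forms-indices}; here the two rely on the same structural fact, from opposite directions.
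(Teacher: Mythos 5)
Your proposal is correct and follows essentially the same route as the paper's proof: the identical strengthened induction hypothesis (context tightness plus the disjunction of a tight final type or a non-abstraction redex), the anti-substitution lemma for the non-erasing root case, the tight typability of normal forms combined with rule $\none$ for the erasing root case, and the tight-spreading lemma to force the $\appresult<\ske>$ shape in the neutral-left application case. No gaps.
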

\end{toappendix}

\begin{toappendix}[
    \begin{proof}See \longshort{Appendix~\thisappendix}{the long version of this paper~\cite{AccattoliGLK2018-long}}.\end{proof}
  ]
\begin{proof}
  We prove, by induction on $\tm\Rew{\systemax}[e]\tmtwo$,
  the stronger statement:
  
  Assume
  $\tm\Rew{\perpe}[e]\tmtwo$,
  $\tderiv\exder[\systemax] \Deri[(\steps, \result)]\typctx{\tmtwo}\type$ is \intprecise,
  $\tightpred\typctx$,
  and either $\tightpred\type$ or $\sysnotabs \systemax \tm$.
  
  Then there exist $\typctx'$ and a \intprecise typing
  $\tderivtwo\exder[\systemax] \Deri[(\steps+2, \result+e)]{\typctx'}{\tm}\type$
  such that $\tightpred{\typctx'}$.

  \begin{itemize}
  \item Rule 
    \[\infer{
      (\la\var \tmthree) \tmfour \toperpNE \tmthree \isub \var \tmfour
    }{\var\in\fv\tmthree}
    \]
    Assume 
    $\tderiv \exder[\systemax]\Deri[(\steps,\result)]
    {\typctx}{\tmthree \isub \var \tmfour}\type$ is \intprecise
    and $\tightpred\typctx$.
    By applying Lemma~\ref{l:maxanti-substitution}
    we get the premisses of the following derivation $\tderiv'$:
    \[
    \infer{
      \Deri[
        (\steps_u+\steps_\tmfour+2,\result_u+\result_\tmfour)
      ]{\typctx_u\mplus \typctx_\tmfour}{(\la\var \tmthree) \tmfour}\type
    }{
      \infer{
        \Deri[
          (\steps_u+1,\result_u)
        ]{\typctx_u}{\la\var\tmthree}{\M\rightarrow\type}
      }{
        \tderiv_\tmthree \exder[\systemax]
        \Deri[
          (\steps_u,\result_u)
        ]{\typctx_u, \var \col \M}{\tmthree}\type
      }
      \qquad
      \tderiv_\tmfour \exder[\systemax]
      \Deri[
        (\steps_\tmfour,\result_\tmfour)
      ]{\typctx_\tmfour}{\tmfour}{\M}
    }
    \]
    with $(\steps,\result) = (\steps_u+\steps_\tmfour, \result_u+\result_\tmfour)$
    and $\typctx=\typctx_u\mplus\typctx_\tmfour$.
    Moreover, $\tderiv_u$ and $\tderiv_\tmfour$ are all \intprecise,
    so $\tderiv'$ is \intprecise.

  \item Rule 
    \[\infer{
      (\la\var \tmthree) \tmfour \Rew{\perpe}[\mxsize\tmfour] \tmthree
    }{
      \sysnormal\systemax  \tmfour\quad
      \var\notin \fv\tmthree
    }\]

    Assume
    $\tderiv \exder[\systemax]\Deri[(\steps,\result)]
    {\typctx}{\tmthree}\type$ is \intprecise
    and $\tightpred\typctx$.
    By applying \refprop{perp-normal-forms}
    we get the \precise derivation $\tderiv_\tmfour$
    used in the construction of derivation $\tderiv'$ below:
    \[
    \infer{
      \Deri[
        (\steps+2,\result+\mxsize\tmfour)
      ]{\typctx\mplus\typctx_\tmfour}{(\la\var \tmthree) \tmfour}\type
    }{
      \infer{
        \Deri[
          (\steps+1,\result)
        ]{\typctx}{\la\var\tmthree}{\emm\rightarrow\type}
      }{
        \tderiv \exder[\systemax]
        \Deri[
          (\steps,\result)
        ]{\typctx}{\tmthree}\type
      }
      \qquad
      \infer{
        \Deri[(0,\mxsize\tmfour)]{\typctx_\tmfour}{\tmfour}{\emm}
      }{
        \tderiv_\tmfour \exder[\systemax]\Deri[(0,\mxsize\tmfour)]{\typctx_\tmfour}{\tmfour}{\nf}
      }
    }
    \]
    Moreover, $\tightpred{\typctx\mplus\typctx_\tmfour}$ and $\tderiv'$ is \intprecise.

  \item Rule 
    \[
    \infer{\la\var \tm \Rew{\perpe}[e] \la\var \tmtwo}{
      \tm \Rew{\perpe}[e] \tmtwo
    }
    \]
    Assume
    $\tderiv\exder[\systemax]\Deri[(\steps,\result)]
    {\typctx}{\la\var \tmtwo}\type$ is \intprecise
    and $\tightpred\typctx$.
    Since $\sysabs \systemax {\la\var \tm}$ we must have hypothesis $\tightpred\type$,
    and as $\tderiv$ must then finish with rule $\funresult$
    we must have a subderivation
    $\tderiv_\tmtwo \exder[\systemax]
    \Deri[
      (\steps, \result-1)
    ]{\typctx, \var \col \mtight}{\tmtwo}{\nf}$.
    As $\tderiv_\tmtwo$ is \intprecise and $\tightpred{\typctx, \var \col \mtight}$
    we can apply the \ih and get the premiss of the derivation $\tderiv'$ below,
    where $\tderiv_\tm$ is \intprecise and $\tightpred{\typctx', \var \col \mtight}$:
    \[
    \infer{
      \Deri[(\steps+2, \result+e)]{
        \typctx'
      }{\la\var\tm}\type
    }{
      \tderiv_\tm \exder[\systemax]
      \Deri[(\steps+2, \result-1+e)]{\typctx', \var \col \mtight}{\tm}\nf
    }
    \]
    Then $\tderiv'$ is \intprecise and $\tightpred{\typctx'}$.

  \item Rule 
    \[
    \infer{
      \tm \tmthree \Rew{\perpe}[e] \tmtwo \tmthree
    }{
      \sysnotabs \systemax \tm \quad \tm \Rew{\perpe}[e]\tmtwo 
    }
    \]
    Assume
    $\tderiv\exder[\systemax]\Deri[(\steps,\result)]
    {\typctx}{\tmtwo \tmthree}\type$ is \intprecise
    and $\tightpred\typctx$.
    The derivation $\tderiv$ must end
    with rule $\appsteps$ or $\appresult<\ske>$,
    and therefore there are two \intprecise derivations
    $\tderiv_\tmtwo\exder[\systemax]\Deri[
      (\steps_\tmtwo, \result_\tmtwo)
    ]{\typctx_\tmtwo}{\tmtwo}{\type_\tmtwo}$
    and
    $\tderiv_\tmthree\exder[\systemax]\Deri[
      (\steps_\tmthree, \result_\tmthree)
    ]{\typctx_\tmthree}{\tmthree}{\type_\tmthree}$,
    for some types $\type_\tmtwo$ and $\type_\tmthree$,
    with $\typctx=\typctx_\tmtwo\mplus\typctx_\tmthree$.
    Since $\tightpred{\typctx}$ we have
    $\tightpred{\typctx_\tmtwo}$ and $\tightpred{\typctx_\tmthree}$.
    Since $\sysnotabs \systemax \tm$, we can apply the \ih and get
    the \intprecise derivation
    $\tderiv_\tm \exder[\systemax]
    \Deri[
      (\steps_\tmtwo+2, \result_\tmtwo+e)
    ]{\typctx_\tm}{\tm}{\type_\tmtwo}$,
    with $\tightpred{\typctx_\tm}$.
    Then the same rule $\appsteps$ or $\appresult<\ske>$
    can be applied to get the \intprecise derivation
    $\tderiv'\exder[\systemax]\Deri[(\steps+2,\result+e)]
    {\typctx_\tm\mplus\typctx_\tmthree}{\tm \tmthree}\type$,
    with $\tightpred{\typctx_\tm\mplus\typctx_\tmthree}$.
    
  \item Rule 
    \[
    \infer{
      \tmthree \tm \Rew{\perpe}[e]\tmthree \tmtwo
    }{
      \sysneutral\systemax \tmthree \quad  \tm \Rew{\perpe}[e]\tmtwo
    }
    \]
    Assume
    $\tderiv\exder[\systemax]\Deri[(\steps,\result)]
    {\typctx}{\tmthree \tmtwo}\type$ is \intprecise
    and $\tightpred\typctx$.
    The derivation $\tderiv$ must end
    with rule $\appsteps$ or $\appresult<\ske>$,
    and therefore there are two \intprecise derivations
    $\tderiv_\tmthree\exder[\systemax]\Deri[
      (\steps_\tmthree, \result_\tmthree)
    ]{\typctx_\tmthree}{\tmthree}{\type_\tmthree}$
    and
    $\tderiv_\tmtwo\exder[\systemax]\Deri[
      (\steps_\tmtwo, \result_\tmtwo)
    ]{\typctx_\tmtwo}{\tmtwo}{\type_\tmtwo}$,
    for some types $\type_\tmthree$ and $\type_\tmtwo$,
    with $\typctx=\typctx_\tmthree\mplus\typctx_\tmtwo$.
    Since $\tightpred{\typctx}$ we have
    $\tightpred{\typctx_\tmthree}$ and $\tightpred{\typctx_\tmtwo}$.
    Theorem~\ref{l:mxtight-spreading}
    concludes $\tightpred{\type_\tmthree}$ from $\sysneutral\systemax \tmthree$.
    So the last rule of $\tderiv$ must be $\appresult<\ske>$,
    whence $\type=\neutype$ and $\type_\tmtwo=\tight$.
    Therefore we can apply the \ih to get the \precise derivation
    $\tderiv_\tm \exder[\systemax]
    \Deri[
      (\steps_\tmtwo+2, \result_\tmtwo+e)
    ]{\typctx_\tm}{\tm}{\nf}$.
    Then $\appresult<\ske>$
    can be applied to get the \precise derivation
    $\tderiv'\exder[\systemax]\Deri[(\steps+2,\result+e)]
    {\typctx_\tm\mplus\typctx_\tmthree}{\tmthree\tm}\neutype$.

  \item Rule 
    \[\infer{
      (\la\var \tmthree) \tm \Rew{\perpe}[e] (\la\var \tmthree) \tmtwo
    }{
      \tm \Rew{\perpe}[e]\tmtwo\quad\var\notin \fv\tmthree
    }\]
    Assume
    $\tderiv\exder[\systemax]\Deri[(\steps,\result)]
    {\typctx}{(\la\var \tmthree) \tmtwo}\type$ is \intprecise
    and $\tightpred\typctx$.
    The derivation $\tderiv$ must end
    with rule $\appsteps$,
    and therefore there are two \intprecise derivations
    $\tderiv_\tmthree\exder[\systemax]\Deri[
      (\steps_\tmthree, \result_\tmthree)
    ]{\typctx_\tmthree}{\la\var\tmthree}{\emm\rightarrow\type}$
    and
    $\tderiv_\tmtwo\exder[\systemax]\Deri[
      (\steps_\tmtwo, \result_\tmtwo)
    ]{\typctx_\tmtwo}{\tmtwo}{\nf}$,
    with $\typctx=\typctx_\tmthree\mplus\typctx_\tmtwo$
    and $(\steps,\result)=(\steps_\tmtwo+\steps_\tmthree+1,\result_\tmtwo+\result_\tmthree)$.
    We can apply the \ih to get
    the \precise derivation
    $\tderiv_\tm \exder[\systemax]
    \Deri[
      (\steps_\tmtwo+2, \result_\tmtwo+e)
    ]{\typctx_\tm}{\tm}{\nf}$.
    Then $\appsteps$
    can be applied to get the \intprecise derivation
    $\tderiv'\exder[\systemax]\Deri[(\steps+2,\result+e)]
    {\typctx_\tm\mplus\typctx_\tmthree}{(\la\var \tmthree)\tm}\type$,
    with $\tightpred{\typctx_\tm\mplus\typctx_\tmthree}$.
  \end{itemize}
\end{proof}

\end{toappendix}

\begin{theorem}[Tight completeness for for $\systemax$]\label{th:mxcomplete}
  If $\tm \Rewn[k]{\perpe}[e] \tmtwo$ with $\sysnormal\systemax\tmtwo$, then
  there exists an \maxprecise typing
  $\tderiv \exder[\systemax] \Deri[(2k, \mxsize\tmtwo
  +e)]\typctx\tm\type$.  Moreover, if $\sysneutral\systemax\tmtwo$ then $\type = \neutype$, and if $\sysabs\systemax\tmtwo$ then $\type = \abstype$.
\end{theorem}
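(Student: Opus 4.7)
The plan is to proceed by induction on the number $k$ of reduction steps, exactly mirroring the proof schema used for tight completeness in systems $\spi$ and $\ske$ (\refthm{completeness}), with the base case handled by \refprop{perp-normal-forms} (normal forms are tightly typable) and the inductive case handled by the quantitative subject expansion property stated just above.

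For the base case $k = 0$, we have $\tm = \tmtwo$ and necessarily $e = 0$, since the indexed transitive closure $\Rewn[0]{\perpe}[0]$ relates a term to itself with index $0$. Since $\sysnormal\systemax\tmtwo$, \refprop{perp-normal-forms} directly supplies an \maxprecise derivation $\tderiv \exder[\systemax] \Deri[(0, \mxsize\tmtwo)]\typctx\tm\type$, and the indices match $(2 \cdot 0, \mxsize\tmtwo + 0)$. The "moreover" clause about $\type$ also comes directly from \refprop{perp-normal-forms}.

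For the inductive case, decompose $\tm \Rewn[k+1]{\perpe}[e] \tmtwo$ as $\tm \Rew{\perpe}[r_1] \tm' \Rewn[k]{\perpe}[r_2] \tmtwo$ with $e = r_1 + r_2$, following the definition of the indexed transitive closure. Applying the induction hypothesis to the shorter reduction $\tm' \Rewn[k]{\perpe}[r_2] \tmtwo$, we obtain an \maxprecise derivation $\tderiv' \exder[\systemax] \Deri[(2k, \mxsize\tmtwo + r_2)]{\typctx'}{\tm'}{\type}$, with the appropriate type constant when $\tmtwo$ is neutral or an abstraction. Then we apply the quantitative subject expansion proposition for $\systemax$ to the step $\tm \Rew{\perpe}[r_1] \tm'$, which yields an \maxprecise derivation $\tderiv \exder[\systemax] \Deri[(2k + 2, \mxsize\tmtwo + r_2 + r_1)]\typctx\tm\type$; the indices rewrite to $(2(k+1), \mxsize\tmtwo + e)$ as desired, and the final type $\type$ is preserved throughout the expansion, so the "moreover" clause carries over.

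No real obstacle is expected here: the heavy technical work has been concentrated in the two lemmas (\refprop{perp-normal-forms} and the anti-substitution lemma underlying quantitative subject expansion), and the theorem is just a clean induction assembling them. The only minor point to be careful about is correctly matching the bookkeeping of the erasure index $e$ with the additivity built into the indexed transitive closure $\Rewn[k]{\perpe}[r]$, and verifying that the type $\type$ does not change along the expansion chain so that the final clause about $\neutype$ and $\abstype$ is preserved from the base case all the way up.
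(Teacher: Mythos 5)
Your proof is correct and follows exactly the schema the paper intends for this theorem: induction on the length $k$ of the evaluation, with \refprop{perp-normal-forms} closing the base case (where $k=0$ forces $\tm=\tmtwo$ and $e=0$) and \refprop{mxsubject-expansion} handling the inductive step, with the erasure indices adding up via the definition of $\Rewn[k]{\perpe}[r]$ and the final type $\type$ preserved along the expansion. You also correctly account for the fact that subject expansion in $\systemax$ may change the typing context, which is harmless since the theorem only asserts the existence of some $\typctx$.
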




\section{Linear Head Evaluation}

\label{sect:linear}
In this section we consider the linear version of the head evaluation system, where \emph{linear} comes from the \emph{linear substitution calculus} (LSC) \cite{DBLP:conf/rta/Accattoli12,DBLP:conf/popl/AccattoliBKL14}, a refinement of the $\l$-calculus where the language is extended with an explicit substitution constructor $\tm\esub\var\tmtwo$, and \emph{linear substitution} is a micro-step rewriting rule replacing one occurrence at a time---therefore, \emph{linear} does not mean that variables have at most one occurrence, only that their occurrences are replaced one by one. Linear head evaluation---first studied in \cite{DBLP:journals/tcs/MascariP94,Danos04headlinear}---admits various presentations. The one in the LSC adopted here has been introduced in \cite{DBLP:conf/rta/Accattoli12} and is the simplest one.

The insight here is that switching from head to linear head, and from the $\l$-calculus to the LSC only requires counting $\ax$ rules for the size of typings and the head variable for the size of terms---the type system, in particular is the same. The correspondence between the two system is spelled out in the last subsection of this part. Of course, switching to the LSC some details have to be adapted: a further index traces linear substitution steps, there is a new typing rule to type the new explicit substitution constructor, and the proof schema slightly changes, as the (anti-)substitution lemma is replaced by a partial substitution one---these are unavoidable and yet inessential modifications.

Thus, the main point of this section is to
split the complexity measure among the multiplicative steps (beta
steps) and the exponential ones (substitutions). Moreover, linear logic proof-nets are known to simulate the $\lambda$-calculus,
and LSC is known to be isomorphic to the proof-nets used in the simulation. Therefore, the results of
this section directly apply to those proof-nets.

\paragraph{Explicit substitutions.} We start by introducing the syntax of our language, which is given by the
following set $\ltermslsc$ of terms, where $\tm\esub{\var}{\tmtwo}$ is a new constructor called \emph{explicit substitution} (shortened ES), that is 
equivalent to $\letexp\var\tmtwo\tm$: 
\[\begin{array}{c\colspace \colspace \colspace ccc}
    \textsc{LSC Terms} & \tm,\tmtwo & \recdef & \var \mid \la\var\tm\mid \tm\tmtwo\mid \tm\esub{\var}{\tmtwo}
  \end{array}\] 
The notion of \emph{free} variable is defined as expected, in particular,
$\fv{\tm\esub{\var}{\tmtwo}} :=
(\fv{\tm} \setminus \set{\var}) \cup \fv{\tmtwo}$.
\emph{(List of) substitutions} and \emph{linear head contexts} are given by the following grammars: 
\[ \begin{array}{r\colspace \colspace lll}
\textsc{(List of) substitution contexts} &   \L & \grameq & \ctxhole \mid \L\esub{\var}{\tm} \\
\textsc{Linear head contexts} &    \lhc   & \grameq & \ctxhole \mid \la \var \lhc \mid \lhc \tm \mid \lhc\esub{\var}{\tm}\\
   \end{array} \] We write $\putinctx{\L}{\tm}$
(resp. $\putinctx{\lhc}{\tm}$) for the term obtained by replacing the
whole $\ctxhole$ in context $\L$ (resp. $\lhc$) by the term
$\tm$. This \emph{plugging} operation, as usual with contexts, can
capture variables. We write $\lhc\cwc{\tm}$ when we want to stress that the context $\lhc$ does not capture the free variables of $\tm$. 

\paragraph{Normal, neutral, and abs predicates.} The predicate 
\emph{$\lhnormalvoid$}  defining linear head normal terms 
and  \emph{$\lhneutralvoid$}  defining linear head neutral terms are
introduced in~\reffig{linear head-normal-forms}. They are a bit more involved than before, because switching to the micro-step granularity of the LSC the study of normal forms requires a finer analysis. The predicates are now based on three auxiliary predicates $\lhneutralpr{\var}$, $\lhnormalpr{\var}$, and $\lhnormalclosepr$: the first two characterise neutral and normal terms whose head variable $\var$ is free, the third instead characterises normal forms whose head variable is bound. Note also that the abstraction predicate $\abslhvoid$ is now  defined \emph{modulo} ES, that is, a term such as $(\la\var\tm)\esub\varthree\tmtwo\esub\vartwo\tmthree$ satisfies the predicate. 
It is worth noticing that a term $\tm$  of the form $\lhc\cwc{\vartwo}$ 
does not necessarily verify $\lhnormal \tm$, \eg\ 
$(\l \varthree. (\vartwo \var)\esub{\var}{\vartwo})\tmtwo$. 
Examples of linear head normal forms
are $\la \var \var \vartwo$ and $(\vartwo \var)\esub{\var}{\varthree}(I I)$.

\begin{figure*}
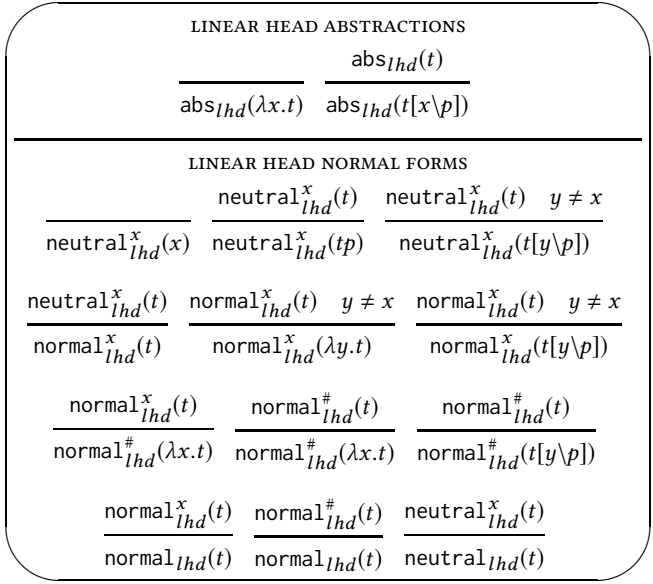

\centering
  \ovalbox{\small
$\begin{array}{c}
\multicolumn{1}{c}{\textsc{linear head abstractions}}\\[3pt]
\infer{\abslh{\l \var. \tm}}{\phantom{\abslh{\l \var. \tm}}} 
\msep 
\infer{\abslh{\tm\esub{\var}{\tmtwo}}}{\abslh{\tm}} \\\\[-3pt]
      \hline\\[-8pt]
\multicolumn{1}{c}{\textsc{linear head normal forms}}\\[3pt]
\infer[
      ]{\lhneutralp{\var}{\var}}{ \phantom{\lhneutralp{\var}{\var}} } 
\msep
\infer[
      ]{ \lhneutralp {\tm \tmtwo}\var }
        { \lhneutralp \tm \var } 
\msep
\infer[
      ]{ \lhneutralp {\tm \esub\vartwo\tmtwo} \var }
        { \lhneutralp \tm \var \quad \vartwo\neq\var } 
\\\\

\infer[
      ]{ \lhnormalp \tm \var }
        { \lhneutralp \tm \var } 
\msep 
\infer[
      ]{ \lhnormalp {\la\vartwo\tm} \var }
        { \lhnormalp \tm \var \quad \vartwo \neq \var} 
\msep
\infer[
      ]{ \lhnormalp {\tm \esub\vartwo\tmtwo} \var }
        { \lhnormalp \tm \var \quad \vartwo\neq\var } 
\\\\
\infer[
      ]{ \lhnormalclose {\la\var\tm} }{ \lhnormalp {\tm} \var } 
\msep 
\infer[
      ]{ \lhnormalclose {\la\var\tm} }{ \lhnormalclose \tm} 
\msep
\infer[
      ]{ \lhnormalclose {\tm \esub\vartwo\tmtwo} }
        { \lhnormalclose \tm  } \\ \\
\infer[
      ]{ \lhnormal \tm}
        { \lhnormalp \tm \var} 
\msep  
\infer[
      ]{ \lhnormal \tm}
        { \lhnormalclose \tm  }
\msep
\infer[]{\lhneutral{\tm}}{ \lhneutralp \tm \var} 
\end{array}$}
\caption{linear head  neutral and normal forms}
  \label{fig:linear head-normal-forms}
\end{figure*}

\paragraph{Small-step semantics.}
Linear head evaluation is often specified by means of a
non-deterministic strategy (having the diamond
property)~\cite{DBLP:conf/rta/Accattoli12}. Here, however, we
present a minor deterministic variant, in order to follow the general
schema presented in the introduction. The deterministic
notion of linear head evaluation $ \lh$ is given in
\reffig{deterministic-linear-head}. An example of $\tolsp$-sequence is
\[ \begin{array}{lllll}
((\l \varthree. (\var \var)\esub{\var}{\vartwo}) \tmtwo) \esub{\vartwo}{\varfour} & \tolh &
 (\var \var)\esub{\var}{\vartwo} \esub{\varthree}{\tmtwo} \esub{\vartwo}{\varfour} & \tolh \\
 (\vartwo \var)\esub{\var}{\vartwo} \esub{\varthree}{\tmtwo} \esub{\vartwo}{\varfour} & \tolh & 
(\varfour \var)\esub{\var}{\vartwo} \esub{\varthree}{\tmtwo} \esub{\vartwo}{\varfour} \\
\end{array} \]

\begin{figure*}
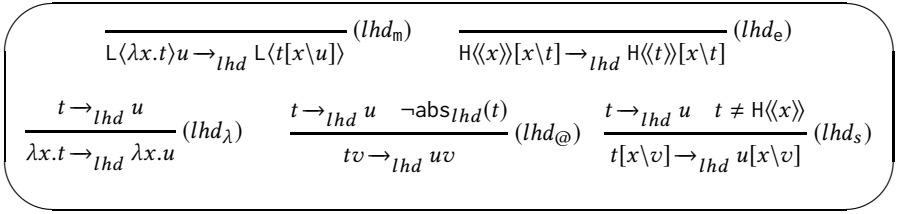

\centering
  \ovalbox{\small
$\begin{array}{c}
\infer[\lhb]{\putinctx{\L}{\la \var \tm} \tmb \tolh \putinctx \L {\tm \esub{\var}{\tmb}}  }{ \phantom{.} } \qquad

\infer[\lhs]{ \lhc\cwc{\var}\esub{\var}{\tm} \tolh  \lhc\cwc{\tm}\esub{\var}{\tm}}{\phantom{.} } \\\\

\infer[\lhabs]{\la \var \tm \tolh \la \var  \tmb}
              {\tm \tolh  \tmb } \qquad

\infer[\lhapp]{\tm \tmbb  \tolh \tmb \tmbb}
              {\tm \tolh \tmb\quad\neg \abslh{\tm} } \quad

\infer[\lhsub]{\tm \esub{\var}{\tmbb}  \tolh \tmb \esub{\var}{\tmbb}}
              {\tm \tolh \tmb \quad \tm \neq \lhc\cwc\var } \\ \\

\end{array}$}
\caption{Deterministic linear-head evaluation}
\label{fig:deterministic-linear-head}
\end{figure*}

From now on, we split the evaluation relation $\tolsp$ in two different
relations, \emph{multiplicative} $\tolspb$ and \emph{exponential}
$\tolspe$ evaluation, where $\tolspb$ (resp. $\tolspe$) is generated
by the base case $\lspb$ (resp. $\lspe$) and closed by the three rules
$\lspabs, \lspapp, \lspsub$. The
terminology \emph{multiplicative} and \emph{exponential} comes from
the linear logic interpretation of the LSC. The literature contains
also an alternative terminology, using \emph{$\Bsym$ at a distance}
for $\tolspb$ (or \emph{distant $\Bsym$}, where $\Bsym$ is a common name
for the variant of $\beta$ introducing an ES instead of using
meta-level substitution) and \emph{linear substitution} for
$\tolspe$.

\begin{toappendix}
\begin{proposition}[linear head  evaluation system]
\label{prop:memory-bievaluation} \mbox{} \\
  $(\ltermslsc,\tolsp, \sysneutralpr \systemlsp, \sysnormalpr \systemlsp, \sysabspr \systemlsp)$ is an evaluation system.
\end{proposition}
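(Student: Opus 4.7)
The plan is to verify the three defining properties of an evaluation system in turn, exploiting the fact that the LSC rules in~\reffig{deterministic-linear-head} and the predicates in~\reffig{linear head-normal-forms} are designed precisely so that they dovetail.

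For determinism of $\tolsp$, I would proceed by structural induction on $\tm$. The key observation is that the side conditions on the closure rules are mutually exclusive with the two base cases: the condition $\neg\abslh\tm$ in $\lhapp$ is exactly what prevents overlap with $\lhb$ (since $\abslh\tm$ characterises the shape $\L\cwc{\la\var\tmthree}$), while $\tm \neq \lhc\cwc\var$ in $\lhsub$ rules out overlap with $\lhs$. For applications and ES terms, at most one of the rules applies, and within each closure rule the reduct of the sub-term is unique by the induction hypothesis. Cases for variables (no redex) and abstractions (only $\lhabs$) are trivial.

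For the normal form characterisation, I would prove both directions. The $\Rightarrow$ direction (if $\lhnormal\tm$ then $\tm$ is $\tolsp$-normal) is by a simultaneous induction on the three auxiliary predicates $\lhneutralpr\var$, $\lhnormalpr\var$, and $\lhnormalclosepr$: for a $\lhneutralpr\var$ term, the head variable $\var$ is free, so neither $\lhb$ (head is a variable, not an abstraction under $\L$) nor $\lhs$ (the side-conditions $\vartwo \neq \var$ in $\lhneutralpr\var$ and $\lhnormalpr\var$ block exposed head-variable substitutions) can fire, and the inductive cases handle reduction in sub-terms. The $\Leftarrow$ direction (if $\tm$ is $\tolsp$-normal then $\lhnormal\tm$) is by induction on $\tm$: a normal variable satisfies $\lhneutralp\var\var$; a normal abstraction $\la\var\tmtwo$ is normal if $\tmtwo$ is, and one shows by structural analysis of $\tmtwo$ that either $\lhnormalp\tmtwo\var$ (giving $\lhnormalclose{\la\var\tmtwo}$) or $\lhnormalclose\tmtwo$ holds; normal applications must have a neutral left sub-term (otherwise $\lhb$ or $\lhapp$ would fire) and so verify $\lhneutralp{\tmtwo\tmthree}\var$; normal ES terms $\tm\esub\vartwo\tmthree$ must have $\tm$ normal and moreover $\vartwo$ not the exposed head variable of $\tm$, which is exactly the side condition needed for $\lhnormalp{\tm\esub\vartwo\tmthree}\var$ or $\lhnormalclose{\tm\esub\vartwo\tmthree}$.

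For the neutral characterisation, I would show that $\lhneutral\tm$ holds iff $\lhnormal\tm$ and $\sysnotabs\systemlsp\tm$. The $\Rightarrow$ direction is straightforward: $\lhneutralp\tm\var$ gives $\lhnormalp\tm\var$ (hence $\lhnormal\tm$), and a simple induction shows that a term with an exposed free head variable cannot be of the form $\L\cwc{\la\vartwo\tmthree}$, so $\sysnotabs\systemlsp\tm$. For the $\Leftarrow$ direction, one observes that every derivation of $\lhnormal\tm$ concludes either via $\lhnormalp\tm\var$ (giving $\lhneutral\tm$ directly through the axiom, since one then must have a matching $\lhneutralp\tm\var$ at the base of the derivation) or via $\lhnormalclose\tm$; the second branch however only introduces terms of shape $\L\cwc{\la\var\tmtwo}$, contradicting $\sysnotabs\systemlsp\tm$.

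The main obstacle is the $\Leftarrow$ direction of the normal form characterisation for ES terms: one must carefully track how explicit substitutions interact with the head variable. In particular, a term such as $(\l\varthree.(\vartwo\var)\esub\var\vartwo)\tmtwo$ (mentioned in the excerpt) illustrates that being of the shape $\lhc\cwc\vartwo$ is not sufficient for normality; the reasoning must therefore identify the head variable of a normal candidate and verify that no ES on the spine captures it, which is precisely the content of the side conditions $\vartwo\neq\var$ in the rules for $\lhnormalpr\var$ and $\lhneutralpr\var$.
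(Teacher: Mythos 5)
Your proposal is correct and follows essentially the same route as the paper's appendix proof: a case analysis showing that the side conditions $\neg\abslh{\tm}$ and $\tm\neq\lhc\cwc{\var}$ make the rules mutually exclusive (determinism), followed by inductions relating the reduction rules to the auxiliary predicates, with the genuinely delicate points being exactly the ones you identify (ES capturing the head variable, and $\lhc\cwc{\vartwo}$ not implying normality). One small organisational remark: in the application case of the direction ``$\tolsp$-normal implies $\lhnormal{\tm}$'' you pass from ``$\tmtwo$ normal and $\neg\abslh{\tmtwo}$'' to $\lhneutralp{\tmtwo}{\var}$, which is an instance of the neutrality characterisation (item 3); since that characterisation concerns only the static predicates, it should be established first (or as an inline lemma) so that the induction for item 2 can invoke it without circularity.
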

\end{toappendix}

In the linear case the proof is subtler than for the head, LO, and maximal cases.
\longshort{It is in \refappendix{linear-spine}.}{
  It can be found in the long version of this paper~\cite{AccattoliGLK2018-long}.
}

\paragraph{Sizes.} The notion of \emph{linear head size} $\lhsize\tm$ 
extends the head size to terms with ES by
counting 1 for variables---note that ES do not
contribute to the linear head size:

\[ \begin{array}{ccl\colspace \colspace ccl}
 \multicolumn{6}{c}{\textsc{Linear head size}}\\
\lhsize \var & :=  & 1 &     \lhsize { \la \var \tm}  & := & \lhsize \tm +1 \\
\lhsize { \tm \tmb } &  := & \lhsize \tm  +1 &     \lhsize {\tm\esub{\var}{\tmb}} & :=& \lhsize \tm 
\end{array}     \]

\paragraph{Multi types.} We consider the same multi types of \refsect{head-skeleton},
but now typing judgements are of the form
$\Deri[(\steps, \esteps, \spine)] {\typctx}{\tm}{\type}$, where
$(\steps, \esteps, \spine)$ is a \emph{triple} of integers whose
intended meaning is explained in the next paragraph.  The typing
system $\systemlsp$ is defined in \reffig{linear head-type-system}.
By abuse of notation, we use for all the typing rules---except
$\esrule$ which is a new rule---the same names used for $\spi$.

The \emph{linear head size} $\syssize\systemlsp\tderiv$ of a typing
derivation $\tderiv$ is the number of rules in $\tderiv$, without
counting the occurrences of $\many$.

\begin{figure}
\centering
\ovalbox{
$\begin{array}{cc}

\infer[\ax]{\Deri[(0, 0, 1)] {\var \col \single \type} \var \type}
			     {} 

&

\infer[\many]{\Deri[(+_{\iI} \steps_i, +\esteps_i, +_{\iI} \spine_i)] 
             { \uplus_{\iI}\typctxtwo_i  } \tm {\MSigma {\type_i} {\iI}}}
             {(\Deri[(\steps_i, \esteps_i, \spine_i)] {\typctxtwo_i} \tm {\type_i})_{\iI}} 

\\\\

\infer[\funsteps]{\Deri[(\steps+ 1, \esteps + \size\M, \spine - \size\M)]
                       {\typctx} {\la\var\tm} {\M \rightarrow \type}}
                 {\Deri[(\steps, \esteps,  \spine)] {\typctx;  \var \col \M} \tm \type } 
&

\infer[\funresult]{\Deri[(\steps,\esteps,  \spine +1)] {\typctx} { \la\var\tm } \abstype}
                  {\Deri[(\steps,\esteps, \spine)] {\typctx; \var \col \mtight} \tm \tight} 
\\\\

\infer[\appsteps]{\Deri[(\steps + \steps' + 1, \esteps+ \esteps', \spine + \spine')]
                       {\typctx \uplus_{\iI} \typctxtwo_i}
                       {\tm \tmtwo} \type}
                 {\Deri[(\steps, \esteps, \spine)] \typctx \tm {\M \rightarrow \type} \quad
                  \Deri[(\steps', \esteps', \spine')] {\typctxtwo} \tmtwo {\M} }  

&

\infer[\appresult]{\Deri[(\steps, \esteps, \spine+1)] 
                        {\typctx} {\tm \tmb} \neutype}
                  {\Deri[(\steps, \esteps, \spine)] \typctx \tm \neutype}
\\\\
\multicolumn{2}{c}{
\infer[\esrule]{\Deri[(\steps + \steps',  \esteps+ \esteps' + \size\M, \spine  + \spine' - \size\M)]
                     {\typctx \uplus \typctxtwo}
                     {\tm \esub\var\tmb} \type}
               {\Deri[(\steps,  \esteps, \spine)] {\typctx; \var  \col \M } \tm \type 
                  \quad \quad
                \Deri[(\steps',  \esteps', \spine')] {\typctxtwo} \tmb \M
                 }
}
\end{array}$
}
\caption{Type system for linear head evaluation.}
\label{fig:linear head-type-system}
\end{figure}

Note that $\Deri[(\steps,  \esteps, \spine)]{\typctx} {\lhc\cwc{\var}} \type $
implies that $\typctx = \typctx' ; \var:\M$ with $\M \neq \emm$.

\paragraph{Indices.} The roles of 
the three components of $(\steps, \esteps, \spine)$ 
in a typing derivation $\Deri[(\steps,  \esteps, \spine)]{\typctx} {\tm} \type$ can be described as follows:

\begin{itemize}
\item \emph{$\steps$ and multiplicative steps}:
  $\steps$ counts the rules of the derivation that can be used to form multiplicative redexes,
  \ie\  subterms of the form $\putinctx \L{\la \var \tm} \tmb$. To count this
  kind of redexes it is necessary to 
  count the  number of $\funsteps$ and $\appsteps$ rules.
  As in the case of head evaluation (\refsect{head-skeleton}), $\steps$ is at least twice the number of 
  multiplicative steps  to normal form
  because typing such  redexes  requires (at least) these  two rules.

\item \emph{$\esteps$ and exponential steps}:
  $\esteps$ counts the rules of the derivation that can be used to
  form exponential redexes, \ie\ subterms of the form
  $\lhc\cwc{\var}\esub{\var}{\tm}$.  To count this kind of redexes it
  is necessary to count, for every variable $\var$, the number of
  occurrences that can be substituted during evaluation. The ES is not
  counted because a single ES can be involved in many
  exponential steps along an evaluation sequence.

\item \emph{$\spine$ and size of the result}:
  $\spine$ counts the rules typing variables, abstractions and applications
  (\ie\ $\ax$, $\funresult$ and $\appresult$)
  that cannot be consumed by $\lhsp$ evaluation, so that they 
  appear in the linear head normal form of a term. 
  Note that the ES constructor is not consider part of
  the head of terms. 
\end{itemize}

Note also that the typing rules assume that variable occurrences
(corresponding to $\ax$ rules) end up in the result, by having the
third index set to $1$. When a variable $\var$ becomes bound by an ES
(rule $\esrule$) or by an abstraction destined to be applied
($\funsteps$), the number of uses of $\var$, expressed by the
multiplicity of the multi-set $\M$ typing it, is subtracted from the
size of the result, because those uses of $\var$ correspond to the
times that it shall be replaced via a linear substitution step, and
thus they should no longer be considered as contributing  to the
result. Coherently, that number instead contributes to the index
tracing linear substitution steps.

\begin{definition}[\Precise derivations]\strut
  A derivation $\tderiv \exder[\systemlsp] \Deri[(\steps, \esteps, \spine)]{\typctx}{\tm}{\typetwo}$
  is \emph{\precise} 
  if $\tightpred\typetwo$ and $\tightpred{\typctx}$. 
\end{definition}

\paragraph{Example}

Let us give  a concret example in  
system $\lsp$. Consider again the term
$\tm_0 = (\la {\var_1} (\la {\var_0} \var_0 \var_1) \var_1) \id$, where
$\id$ is the  identity function $\la \varthree \varthree$. 
The linear head evaluation sequence from $\tm_0$ to $\lsp$ normal-form
is given below, in which we distinguish the multiplicative steps from the exponential ones. 
\[ \begin{array}{llll}
   (\la {\var_1} (\la {\var_0} \var_0 \var_1) \var_1) \id  & \tolspb &
   ((\la {\var_0} \var_0 \var_1) \var_1) \esub{\var_1}{\id} & \tolspb \\
   (\var_0 \var_1)\esub{\var_0}{\var_1} \esub{\var_1}{\id} & \tolspe &
   (\var_1 \var_1)\esub{\var_0}{\var_1} \esub{\var_1}{\id} & \tolspe \\
   (\id \var_1)\esub{\var_0}{\var_1} \esub{\var_1}{\id}  & \tolspb &
   \var_3\esub{\var_3}{\var_1} \esub{\var_0}{\var_1} \esub{\var_1}{\id} & \tolspe \\
   \var_1 \esub{\var_3}{\var_1} \esub{\var_0}{\var_1} \esub{\var_1}{\id} & \tolspe &
   \id \esub{\var_3}{\var_1} \esub{\var_0}{\var_1} \esub{\var_1}{\id}
   \end{array} \]
The evaluation sequence has length $7$: $3$ multiplicative steps and $4$ exponential steps.
The linear head normal form has size $2$.  
We now give a \precise\ typing for the term $\tm_0$, by writing again
$\abstype_1$ for $\ty{\mult{\abstype}}{\abstype}$.
 {\scriptsize
\[ \prooftree
   \prooftree
   \prooftree
     \prooftree
       \prooftree
        \Deri[(0,0,1)]{\var_0:\mult{\abstype_1}}{ \var_0 }{\abstype_1} \quad 
        \prooftree
        \Deri[(0,0,1)]{\var_1:\mult{\abstype}}{\var_1}{\abstype}
        \justifies{\Deri[(0,0,1)]{\var_1:\mult{\abstype}}{\var_1}{\mult{\abstype}}}
        \endprooftree
       \justifies{\Deri[(1,0,2)]{\var_0:\mult{\abstype_1}, \var_1:\mult{\abstype}}{ \var_0 \var_1}{\abstype}}
      \endprooftree 
      \justifies{\Deri[(2,1,1)]{\var_1:\mult{\abstype}}{\la {\var_0} \var_0 \var_1}{\ty{\mult{\abstype_1}}{\abstype}} }
      \endprooftree \quad
      \prooftree
       \Deri[(0,0,1)]{\var_1:\mult{\abstype_1}}{\var_1}{\abstype_1}
       \justifies{\Deri[(0,0,1)]{\var_1:\mult{\abstype_1}}{\var_1}{\mult{\abstype_1}}}
      \endprooftree
    \justifies{\Deri[(3,1,2)] {\var_1:\mult{ \abstype, \abstype_1} }{ (\la {\var_0} \var_0 \var_1) \var_1}{\abstype }}
   \endprooftree
   \justifies{\Deri[(4,3,0)] {}{\la {\var_1} (\la {\var_0} \var_0 \var_1) \var_1}{\ty{\mult{ \abstype, \abstype_1}}{\abstype} } }
   \endprooftree
   \quad
   \prooftree
   \vdots
   \justifies{\Deri[(1,1,2)] {}{ \id  }{\mult{ \abstype, \abstype_1}}}
   \endprooftree
   \justifies{\Deri[(6,4,2)] {}{(\la {\var_1} (\la {\var_0} \var_0 \var_1) \var_1) \id  }{\abstype}} 
   \endprooftree \] 
}
Indeed, the pair $(6,4,2)$ represents $6/2=3$ (resp. $4$)
multiplicative (resp. exponential) evaluation steps to $\lsp$
normal-form, and a linear head normal form of size $2$.

\subsection{Tight Correctness}

As in the case of head and LO evaluation, the correctness proof is based
on three main properties: properties of normal forms---themselves based on a lemma about neutral terms---the  
interaction between (linear head) substitution and typings, and
subject reduction. 

\paragraph{Neutral terms ad properties of normal forms} 
As for the head case, the properties of tight typing of $\lh$ normal forms depend on a spreading property of $\lh$ neutral terms. Additionally, they also need to characterise the shape of typing contexts for tight typings of neutral and normal terms.

\begin{toappendix}
\begin{lemma}[Tight spreading on neutral terms, plus typing contexts]
\label{l:lin-head-headvar-in-context} 
Let $\tderiv \exder[\systemlsp] \Deri[(\steps, \esteps, \spine)]{\typctx}{\tm}{\type}$ be a derivation. 
\begin{enumerate}
  \item 
  If $\lhneutralp\tm\var$ then $\var \in \dom\typctx$. Moreover, if $\typctx(\var) = \mtight$ then $\type = \tight$ and $\dom\typctx = \set\var$.

  \item 
  If $\lhnormalp\tm\var$ then $\var \in \dom\typctx$. Moreover, if $\typctx(\var) = \mtight$ then $\dom\typctx = \set\var$.

  \item 
  If $\lhnormalclose \tm$  and $\type = \tight$ then $\type = \abstype$ and $\typctx$ is empty.
\end{enumerate}
\end{lemma}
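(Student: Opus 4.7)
\medskip

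\noindent\textbf{Proof plan.} I would prove the three statements simultaneously, by induction on the derivations of the predicates $\lhneutralp\tm\var$, $\lhnormalp\tm\var$, and $\lhnormalclose\tm$ in \reffig{linear head-normal-forms}. For each case I would inspect the last rule of the typing derivation $\tderiv \exder[\systemlsp] \Deri[(\steps, \esteps, \spine)]{\typctx}{\tm}{\type}$; a preliminary observation is that, since $\type$ is a type rather than a multi-set, $\tderiv$ cannot end with rule $\many$, so the number of cases is limited. I would also use throughout the elementary fact that, if $(\typctx_1\uplus\typctx_2)(\var) = \mtight$, then $\typctx_i(\var) = \mtightsub{i}$ for both $i$, which is how tightness at $\var$ propagates into premisses.

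For part~1, the base case $\tm=\var$ forces the last rule to be $\ax$, giving $\typctx = \var\col\single\type$ and hence both conclusions immediately. For $\tm = \tm_1\tmtwo$ with $\lhneutralp{\tm_1}\var$, the derivation ends in either $\appresult$ or $\appsteps$: in the first case the IH on $\tm_1$ passes through unchanged; in the second one the IH assuming $\typctx(\var) = \mtight$ would force $\M\to\type$ to be a tight constant, which is impossible since arrow types are not tight, and so this subcase is actually incompatible with the hypothesis $\typctx(\var) = \mtight$ and can be discarded. The case $\tm = \tm_1\esub\vartwo\tmtwo$ with $\vartwo \neq \var$ forces the last rule to be $\esrule$, so $\typctx = \typctx_1\uplus\typctx_2$ with $\Deri[...]{\typctx_1;\vartwo\col\M}{\tm_1}{\type}$ and $\Deri[...]{\typctx_2}{\tmtwo}{\M}$; the IH applied to $\tm_1$ gives the first conclusion, and when $\typctx(\var)=\mtight$ the IH also gives $\type$ tight and $\dom{\typctx_1;\vartwo\col\M}=\set\var$, which being disjoint from $\vartwo$ forces $\M=\emm$, hence $\typctx_2$ is empty (only $\many$ with empty $I$ can derive $\tmtwo$ with type $\emm$), and therefore $\dom\typctx = \set\var$.

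Part~2 follows the same pattern. The neutral subcase reduces to part~1. For $\tm = \la\vartwo\tm_1$ with $\lhnormalp{\tm_1}\var$ and $\vartwo\neq\var$, the last rule is $\funsteps$ or $\funresult$, and in both cases the premiss for $\tm_1$ has context $\typctx;\vartwo\col\M$ (with $\M = \mtight$ in the $\funresult$ case), so the IH hands back the two conclusions after ruling out $\vartwo\in\dom{\cdot}$ exactly as above. The $\esrule$ case is essentially identical to the corresponding case in part~1.

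Part~3 is then proved by induction on the derivation of $\lhnormalclose\tm$, with the twist that every case forces the last typing rule to be $\funresult$ or $\esrule$: rule $\funsteps$ and $\ax/\appsteps/\appresult$ cannot conclude a tight type, so $\type=\abstype$ is automatic. For $\tm=\la\vartwo\tm_1$ with $\lhnormalp{\tm_1}\vartwo$ the premiss has tight context entry $\vartwo\col\mtight$, and applying part~2 to it yields that the whole premiss context is $\set\vartwo$, making $\typctx$ empty. For $\tm=\la\vartwo\tm_1$ with $\lhnormalclose{\tm_1}$ the IH gives that the premiss context $\typctx;\vartwo\col\mtight$ is empty, hence $\typctx=\emptyset$ and $\mtight=\emm$. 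Finally for $\tm=\tm_1\esub\vartwo\tmtwo$ the last rule is $\esrule$ and the IH forces $\typctx_1;\vartwo\col\M$ empty, so $\M=\emm$, which in turn forces $\typctx_2$ empty via $\many$. The only mildly delicate point across all three parts is keeping track of which \emph{emptiness} conclusions propagate through $\uplus$ and through $\esrule$ when $\M=\emm$; this is not deep but is what prevents the simultaneous induction from collapsing into either part alone.
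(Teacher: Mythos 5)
Your proposal is correct and follows essentially the same route as the paper: simultaneous induction on the derivations of the three predicates, case analysis on the last typing rule, propagating tightness of the context through $\uplus$, discarding the $\appsteps$ case because an arrow type cannot be a tight constant, and forcing $\M=\emm$ (hence an empty context for the ES/abstraction argument) whenever the domain collapses to a singleton. The only cosmetic slip is in Part~3, where $\ax$, $\appsteps$ and $\appresult$ are excluded by the shape of the term (an abstraction or an ES) rather than by the tightness of the type — only $\funsteps$ is ruled out on type grounds — but this does not affect the argument.
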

\end{toappendix}


\begin{toappendix}
\begin{proposition}[Properties of $\lhsp$ tight typings for normal forms]
\label{prop:tight-normal-forms-indicies-linear-head}
Let $\tm$ be such that $\sysnormal\systemlsp\tm$, and
$\tderiv\exder[\systemlsp] \Deri[(\steps, \esteps, \spine)]\typctx{\tm}\type$ be a
typing derivation. 
\begin{enumerate}
\item 
\emph{Size bound}: $\lspsize\tm \leq \syssize\systemlsp\tderiv$.

\item 
\emph{Tightness}: if $\tderiv$ is \precise\  then $\steps = \esteps = 0$ and $\spine = \lspsize \tm$.

\item 
\emph{Neutrality}: if $\type=\neutype$ then $\sysneutral{\systemlsp}{\tm}$.
\end{enumerate}
\end{proposition}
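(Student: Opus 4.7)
The plan is to proceed by mutual structural induction on the derivation of $\sysnormal\systemlsp\tm$---equivalently, on the derivation of whichever of $\lhneutralp{\tm}{\var}$, $\lhnormalp{\tm}{\var}$, or $\lhnormalclose{\tm}$ witnesses normality---proving all three items simultaneously. In every case I analyse the last rule of $\tderiv$, using the context/spreading lemma (\reflemma{lin-head-headvar-in-context}) to rule out incompatible rules under the tightness hypothesis of item~2 and to establish neutrality for item~3. The base case is $\tm=\var$ with $\lhneutralp{\var}{\var}$: the only applicable typing rule is $\ax$, giving $(\steps,\esteps,\spine)=(0,0,1)=(0,0,\lspsize{\var})$, and if $\type=\neutype$ then $\sysneutral{\systemlsp}{\var}$ is immediate.

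For the inductive neutral cases, consider first $\tm=\tm_1\tmtwo$ with $\lhneutralp{\tm_1}{\var}$. The derivation ends with $\appsteps$ or $\appresult$. If $\tderiv$ is tight then $\typctx$ is tight, hence the subcontext typing $\tm_1$ is tight; the spreading part of \reflemma{lin-head-headvar-in-context} forces the type of $\tm_1$ to be a tight constant, which rules out $\appsteps$. With $\appresult$ the IH on $\tm_1$ gives $\steps_1=\esteps_1=0$ and $\spine_1=\lspsize{\tm_1}$, and the extra $+1$ on $\spine$ in the rule matches $\lspsize{\tm_1\tmtwo}=\lspsize{\tm_1}+1$. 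For $\tm=\tm_1\esub{\vartwo}{\tmtwo}$ with $\lhneutralp{\tm_1}{\var}$ and $\vartwo\neq\var$, the derivation must end with $\esrule$. In the tight case, combining the tightness of $\typctx_1\uplus\typctx_2$ with part~(1) of the context lemma applied to $\tm_1$ (whose head variable $\var$ sits in $\dom{\typctx_1}$ with tight multiset) yields $\dom{\typctx_1;\vartwo{:}\M}=\set{\var}$, hence $\M=\emm$. The second premise must then be the trivial instance of $\many$ with empty index set, contributing $(0,0,0)$ to the indices and making the $\pm\size\M$ corrections vanish; the IH on $\tm_1$ closes the case. For the size bound of item~1 the same rule analysis suffices without invoking spreading, since every counted rule contributes enough to preserve $\lspsize\tm\leq\syssize\systemlsp\tderiv$.

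For the non-neutral normal cases, $\tm=\la\vartwo{\tm'}$ with $\lhnormalp{\tm'}{\var}$ (and $\vartwo\neq\var$) or $\lhnormalclose{\tm'}$: the derivation ends with $\funsteps$ or $\funresult$, and tightness rules out $\funsteps$ (its arrow output is not tight), so we use $\funresult$; the IH applies to the premise whose context $\typctx;\vartwo{:}\mtight$ is still tight and whose type is tight, and the $+1$ in $\spine$ matches the binder in $\lspsize{\la\vartwo{\tm'}}$. For $\tm=\tm'\esub{\vartwo}{\tmtwo}$ with $\lhnormalclose{\tm'}$, the rule is $\esrule$; part~(3) of the context lemma applied to $\tm'$ (whose type in the first premise must be tight by spreading over the subcontext) forces that premise's type to be $\abstype$ and its context to be empty, hence $\typctx_1=\emptyset$ and $\M=\emm$, so the second premise is again trivial and the IH on $\tm'$ finishes the case, using $\lspsize{\tm'\esub{\vartwo}{\tmtwo}}=\lspsize{\tm'}$. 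Neutrality of item~3 falls out of the same case analysis by inspecting which rules can produce $\neutype$: only $\ax$, $\appresult$ and the version of $\esrule$ whose first premise types a neutral term, each of which keeps the underlying term neutral by IH.

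The main obstacle is the ES case inside a neutral term. It is the place where the three ingredients of \reflemma{lin-head-headvar-in-context}---the presence of the head variable in the context, the spreading of tightness on neutral terms, and the rigidity of the context shape when the head variable is typed with a tight multiset---must interact precisely to force $\M=\emm$; only then do the $+\size\M$ and $-\size\M$ contributions of the $\esrule$ rule cancel out so that the indices transfer from the first premise to the conclusion without perturbation. This delicate interplay is what distinguishes the linear-head analysis from the simpler head and LO cases of \refsect{head-skeleton}, where no analogous index correction appears.
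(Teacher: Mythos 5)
Your proposal is correct and follows essentially the same route as the paper: induction on the (mutually defined) normal/neutral predicates with a case analysis on the last typing rule, using \reflemma{lin-head-headvar-in-context} both to rule out $\funsteps$/$\appsteps$ under tightness and, crucially, to force $\M = \emm$ in the $\esrule$ cases so that the $\pm\size\M$ index corrections vanish and the second premise degenerates to the $0$-ary $\many$. You correctly identify that last point as the genuinely new ingredient relative to the head and LO cases; the only nitpick is that in the $\lhnormalclose{}$ case the tightness of the first premise's type follows directly from it being the conclusion's type (no spreading argument needed), and the neutrality item silently relies on uniqueness of the head variable to get $\varthree\neq\vartwo$ in the $\esrule$ case, but neither affects the soundness of the argument.
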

\end{toappendix}


\paragraph{Partial substitution lemma} The main difference 
in the proof schema with respect to the head case is about the
substitution lemma, that is now expressed differently,
because evaluation no longer relies on meta-level substitution.
Linear substitutions consume one type at a time: performing a linear head substitution on a term of the form
$\lhc\cwc{\var}\esub{\var}{\tm}$
consumes \emph{exactly} one type resource 
associated to the variable $\var$, and
all the other ones remain
in the typing context after the partial substitution. Formally, 
an induction on $\lhc$ is used to show:

\begin{toappendix}
 \begin{lemma}[Partial substitution and typings for $\lhd$]
\label{l:subst:hwmilner}
The following rule is admissible in system $\systemlsp$: 
 \[
   \infer[\partialsubslemma]{
     \Deri[(\steps+\steps_\tmthree, \esteps + \esteps_\tmthree, \spine + \spine_\tmthree -1)]{x:\M\setminus \sig;
  \typctx \mplus \typctx_\tmthree}{\lhc\cwc{\tmthree}}{\tau}
   }{
     \Deri[(\steps, \esteps, \spine)]{x:\M;\typctx}{\lhc\cwc{x}}{\tau}
     \quad
     \Deri[(\steps_\tmthree, \esteps_\tmthree,
    \spine_\tmthree)]{\typctx_\tmthree}{\tmthree}{\sig} \quad \sig \in \M
   }
  \]
\end{lemma}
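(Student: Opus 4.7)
The plan is to proceed by induction on the structure of the linear head context $\lhc$, analysing in each case the last rule of the typing derivation of $\lhc\cwc{x}$.

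For the base case $\lhc = \ctxhole$, the term $\lhc\cwc{x}$ is just $x$, so the derivation must consist of a single application of $\ax$. This forces $(\steps,\esteps,\spine) = (0,0,1)$, $\M = \mult{\tau}$, and $\typctx$ empty. Since $\sig\in\M$, we have $\sig = \tau$ and $\M \setminus \sig = \emm$. The premise derivation $\typctx_\tmthree \vdash \tmthree : \sig$ is then already the desired conclusion, with indices $(\steps_\tmthree,\esteps_\tmthree,\spine_\tmthree) = (0+\steps_\tmthree, 0+\esteps_\tmthree, 1+\spine_\tmthree-1)$.

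For the inductive cases, I would analyse the shape of $\lhc$. If $\lhc = \la \vartwo.\lhc'$, the derivation ends with $\funsteps$ or $\funresult$; by $\alpha$-conversion we may assume $\vartwo \neq x$ and $\vartwo \notin \fv\tmthree$, apply the IH to the premise typing $\lhc'\cwc{x}$, and reapply the same rule to the modified premise. If $\lhc = \lhc'\,\tm$, the derivation ends with $\appsteps$ or $\appresult$; the hypothesis $x:\M$ occurring in the conclusion is split between the premises as $x:\M_1$ in the left premise (typing $\lhc'\cwc{x}$) and $x:\M_2$ in the right one (typing $\tm$), with $\M = \M_1 \mplus \M_2$. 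The key observation is that $\sig \in \M_1$: indeed, unfolding the derivation of $\lhc'\cwc{x}$ down to the axiom that types the head occurrence of $x$ shows that the resource $\sig$ used to type $\lhc'\cwc{x}$ with its arrow (or neutral) type is drawn from $\M_1$. One then applies the IH to the left premise and reapplies the same rule, leaving the right premise untouched. The case $\lhc = \lhc'\esub{\vartwo}{\tm}$ is analogous, using $\esrule$.

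In each inductive step, the indices of the conclusion inherit the decrement $\spine_\tmthree - 1$ produced by the IH, while the other additive contributions from the surrounding rule appear identically on both sides of the equation, so the arithmetic goes through mechanically. The main obstacle is the bookkeeping in the application and ES cases, namely verifying that the type $\sig$ consumed from $\M$ is the one associated with the left premise (which contains the head occurrence of $x$) rather than the right one, and in particular that $\M_1 \neq \emm$ whenever the hole is non-trivially filled. This is precisely the content of a short auxiliary observation: any derivation of $\lhc\cwc{\var}$ must declare $\var$ with a non-empty multi-set in its context, which the paper already remarks just after the definition of $\systemlsp$.
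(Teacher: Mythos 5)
You follow the same route the paper announces---induction on $\lhc$ with a case analysis on the last typing rule---and your base case, the abstraction scaffolding, and the index arithmetic are all fine. The gap sits exactly where you locate the ``main obstacle'', in the cases $\lhc = \lhc'\,\tm$ and $\lhc = \lhc'\esub\vartwo\tm$: you assert that the given $\sig$ lies in $\M_1$, the part of $\M$ that the derivation routes to the premise containing the hole. Your justification (unfolding the derivation down to the $\ax$ leaf typing the head occurrence) shows only that \emph{the type assigned by that axiom to the hole occurrence} lies in $\M_1$; it says nothing about an arbitrary $\sig \in \M$, which may be contributed entirely by occurrences of $x$ inside the argument $\tm$ and hence lie in $\M_2$. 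Concretely, take $\lhc = \ctxhole\, x$ and type $x\,x$ by $\appsteps$, the head occurrence with $\mult{\abstype}\rightarrow\atomtype$ and the argument occurrence with $\abstype$, so that $\M = \mult{\mult{\abstype}\rightarrow\atomtype,\ \abstype}$; choosing $\sig = \abstype$ and $\tmthree = \id$ (typable with $\abstype$ by $\funresult$), the conclusion would require $\id\,x$ to be typed with $\atomtype$ in context $x:\mult{\mult{\abstype}\rightarrow\atomtype}$, which is impossible: $\id$ cannot receive $\neutype$ (so $\appresult$ is out), and $\appsteps$ would force $\varthree:\mult{\mult{\abstype}\rightarrow\atomtype}\vdash\varthree:\atomtype$, i.e.\ $\atomtype = \mult{\abstype}\rightarrow\atomtype$. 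So the step ``$\sig\in\M_1$'' is not merely unproved; for an arbitrary element of $\M$ it is false, and the rule read as universally quantified over $\sig\in\M$ is not admissible.

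The repair is to prove the lemma for the \emph{specific} $\sig$ that the given derivation assigns, through its $\ax$ leaf, to the occurrence of $x$ sitting in the hole: the induction hypothesis must carry the identity of that type along, and then $\sig\in\M_1$ holds by construction in the application and explicit substitution cases. This existential form is all that quantitative subject reduction needs, because when firing $\lhc\cwc{x}\esub{x}{\tm}\tolh\lhc\cwc{\tm}\esub{x}{\tm}$ the right premise of $\esrule$ supplies, via $\many$, a derivation of $\tm:\sig$ for \emph{every} $\sig\in\M$, so one may pick the one matching the hole occurrence. Your closing auxiliary observation (that any typing of $\lhc\cwc{x}$ declares $x$ with a non-empty multi-set) is correct but does not bridge this gap: non-emptiness of $\M_1$ does not place the given $\sig$ inside it.
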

\end{toappendix}


\paragraph{Subject reduction and correctness.} 
Quantitative subject reduction is also refined, by taking into account the fact that now there are two evaluation steps, whose numbers are traced by two different indices.

\begin{toappendix}
\begin{proposition}[Quantitative subject reduction for $\lhd$]
\label{prop:head-subject-head-reduction}
If  $\tderiv\tri \tyjn{(\steps, \esteps, \spine)}{\tm}\typctx\type$ 
then 
\begin{enumerate}
  \item 
  If $\tm  \tom  \tmb$ then $\steps \geq 2$ and there is a 
 typing  $\tderivtwo$ such that $\tderivtwo\tri \tyjn{(\steps-2, \esteps, \spine)}{\tmb}\typctx\type$.
  \item 
  If $\tm \toe \tmb$ then $\esteps \geq 1$ and there is a 
 typing 
$\tderivtwo$ such that $\tderivtwo \tri \tyjn{(\steps,\esteps -1, \spine)}{\tmb}\typctx\type$.
\end{enumerate}
\end{proposition}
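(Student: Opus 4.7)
The plan is to prove both clauses simultaneously by induction on the derivation of $\tm \tolh \tmb$. The contextual closure cases---reduction under an abstraction, inside the left-hand side of an application, or inside the body of an explicit substitution---are routine: in each case the outermost typing rule factors through a subderivation of the reduced subterm, so one applies the inductive hypothesis to the latter and rebuilds the full derivation using the same outer rule. None of these outer rules contribute to the index being decreased by the redex, so the promised decrease of $2$ in $\steps$ (respectively, of $1$ in $\esteps$) transfers directly to the conclusion, as does the preservation of the other indices, the typing context, and the type.

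For the multiplicative base case $\putinctx{\L}{\la\var\tm}\tmb \tolh \putinctx{\L}{\tm\esub\var\tmb}$, the outermost rule must be $\appsteps$, since rule $\appresult$ would require the left subterm to have type $\neutype$, which is incompatible with a term of shape $\putinctx{\L}{\la\var\tm}$. The premise typing $\putinctx{\L}{\la\var\tm}$ with an arrow type $\M \rightarrow \type$ then decomposes, by a simple induction on $\L$, into a stack of $\esrule$ applications (one per ES of $\L$) sitting on top of a bottom rule that introduces the abstraction; that rule must be $\funsteps$ binding $\var$ with multiset $\M$, since $\funresult$ yields $\abstype$ rather than an arrow. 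After reduction the same $\esrule$ stack remains, and the pair $(\appsteps,\funsteps)$ is replaced by a single $\esrule$ binding $\var$ to $\tmb$ with multiset $\M$. Tallying index contributions, the pair $(\appsteps,\funsteps)$ jointly adds $(+2,+\size\M,-\size\M)$ to $(\steps,\esteps,\spine)$, while the replacing $\esrule$ adds $(0,+\size\M,-\size\M)$. Hence $\steps$ decreases by exactly $2$ and $\esteps, \spine$ are preserved; in particular $\steps \ge 2$ holds automatically.

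For the exponential base case $\lhc\cwc\var\esub\var\tm \tolh \lhc\cwc\tm\esub\var\tm$, the last rule must be $\esrule$, yielding a typing of $\lhc\cwc\var$ in context $\typctx;\var\colon\M$ and a typing of $\tm$ with $\M$. Since $\var$ appears at the head of $\lhc\cwc\var$, \reflemma{lin-head-headvar-in-context} forces $\M \neq \emm$, so we pick some $\sig \in \M$, and the premise typing $\tm\colon\M$ factors through $\many$ into a subderivation of $\tm\colon\sig$ and a residual derivation of $\tm\colon\M\setminus\sig$. Applying the partial substitution lemma (\reflemma{subst:hwmilner}) to the $\lhc\cwc\var$ typing and to the $\sig$-typing of $\tm$ produces a typing of $\lhc\cwc\tm$ whose indices differ from the combined indices by $(0,0,-1)$. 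Rewrapping this with an $\esrule$ using the residual typing of $\tm\colon\M\setminus\sig$ restores the outer ES; since $\size{\M\setminus\sig} = \size\M - 1$, the $\esrule$ contribution of $+\size\M$ to $\esteps$ becomes $+\size\M-1$, and its contribution $-\size\M$ to $\spine$ becomes $-\size\M+1$, exactly cancelling the $-1$ shift to $\spine$ from the substitution lemma. The net effect is $(\steps,\esteps,\spine) \mapsto (\steps,\esteps-1,\spine)$, and $\esteps \ge 1$ is forced by $\size\M \ge 1$.

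The main obstacle is the careful index bookkeeping in the multiplicative base case: in the presence of a non-trivial substitution context $\L$ one must confirm that the entire stack of $\esrule$ layers and every associated premise is transported across the reduction without altering any index, so that the $-2$ drop is exact and isolated to $\steps$. The exponential case, by contrast, reduces to an algebraic verification once the partial substitution lemma is invoked, the only delicate point being that a single type $\sig$ must be extracted from $\M$, which is exactly what non-emptiness of $\M$ (and hence $\esteps \ge 1$) provides.
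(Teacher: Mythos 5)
Your proof is correct and follows the paper's own route: induction on the reduction step, with the pair $(\appsteps,\funsteps)$ collapsing into an $\esrule$ under the transported $\L$-stack in the multiplicative base case, and the partial substitution lemma (\reflemma{subst:hwmilner}) plus the rewrapped $\esrule$ on $\M\setminus\sig$ in the exponential one, with exactly the index bookkeeping the paper performs. One small citation slip: the non-emptiness of $\M$ in the exponential case comes from the paper's remark that any typing of $\lhc\cwc{\var}$ assigns $\var$ a non-empty multiset (an easy induction on $\lhc$), not from \reflemma{lin-head-headvar-in-context}, which only concerns normal and neutral terms.
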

\end{toappendix}

The proof is by induction on $\tm  \tom  \tmb$ and $\tm  \toe  \tmb$, using \reflemma{subst:hwmilner}. Note that quantitative subject reduction does not assume that the typing derivation is tight: as for the head case, the tight hypothesis is only used for the study of normal forms---it is needed for subject reduction / expansion only if evaluation can take place inside arguments, as in the leftmost and maximal cases.

According to the spirit of tight typings,
linear head correctness does not only 
provide the size of (linear head) normal forms,
but also 
 the lengths of evaluation
sequences to (linear head) normal form: the two first integers
$\steps$ and $\esteps$ in
the final judgement count exactly the total number 
of  evaluation steps to (linear-head) normal form. 

\begin{toappendix}
\begin{theorem}[Tight correctness for $\lhd$]
\label{tm:head-correctness} 
Let  $\tderiv \exder[\systemlsp] \Deri[(\steps, \esteps,\spine)] {\typctx}{\tm}{\type}$
be a tight derivation.
Then there exists $\tmtwo$ such that $\tm  \tolsp^{\steps/2 + \esteps} \tmtwo$, 
$\lhnormal\tmtwo$ and $\lspsize\tmtwo = \spine$. Moreover, if $\type=\neutype$
  then $\sysneutral\lh\tmtwo$.
\end{theorem}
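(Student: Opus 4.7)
The plan is to mirror the proof schema of Theorem~\ref{thm:correctness}, proceeding by induction on the quantity $\steps + 2\esteps$. By Proposition~\ref{prop:head-subject-head-reduction}, this measure drops by exactly $2$ at every $\tolsp$-step: a multiplicative step subtracts $2$ from $\steps$ and leaves $\esteps$ unchanged, while an exponential step subtracts $1$ from $\esteps$ and leaves $\steps$ unchanged. A key observation is that the derivation produced by subject reduction remains \precise, because Proposition~\ref{prop:head-subject-head-reduction} preserves both the typing context $\typctx$ and the final type $\type$ verbatim, and the definition of \precise\ derivations (Definition~\ref{def:tightderiv}) only refers to these last-judgement data.

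For the base case $\steps + 2\esteps = 0$, I would argue by contraposition that $\tm$ must be a $\lhsp$-normal form: otherwise, by Proposition~\ref{prop:memory-bievaluation}, $\tm$ would reduce either as $\tm \tolspb \tmthree$ or as $\tm \tolspe \tmthree$, and the two parts of Proposition~\ref{prop:head-subject-head-reduction} would respectively demand $\steps \geq 2$ or $\esteps \geq 1$, contradicting the hypothesis. Taking $\tmtwo \defeq \tm$, the identity $\lspsize\tmtwo = \spine$ and the neutrality clause for $\type = \neutype$ are immediate from parts~2 and~3 of Proposition~\ref{prop:tight-normal-forms-indicies-linear-head}.

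For the inductive step, $\tm$ cannot already be in normal form, as otherwise part~2 of Proposition~\ref{prop:tight-normal-forms-indicies-linear-head} would force $\steps = \esteps = 0$, contradicting the strict positivity of the measure. Hence $\tm$ reduces, and two subcases arise. If $\tm \tolspb \tmthree$, part~1 of Proposition~\ref{prop:head-subject-head-reduction} supplies a \precise\ derivation of $\tmthree$ with indices $(\steps-2, \esteps, \spine)$; the induction hypothesis then yields some $\tmtwo$ with $\tmthree \tolsp^{(\steps-2)/2 + \esteps} \tmtwo$, $\lhnormal\tmtwo$, $\lspsize\tmtwo = \spine$, and the neutrality property, and prepending the initial step gives $\tm \tolsp^{\steps/2 + \esteps} \tmtwo$. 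The exponential subcase $\tm \tolspe \tmthree$ is symmetric, using part~2 of Proposition~\ref{prop:head-subject-head-reduction} to transition to indices $(\steps, \esteps-1, \spine)$.

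No step in this outline is a genuine obstacle: the only real novelty compared to Theorem~\ref{thm:correctness} is the arithmetic bookkeeping separating the multiplicative and exponential counters in the choice of induction measure. In particular, \preciseness transfers along subject reduction for free here — a much more delicate matter in the maximal setting of Section~\ref{sect:maximal}, where erasing steps shrink the context and force the globally defined \intprecise\ condition.
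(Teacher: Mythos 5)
Your proposal is correct and follows essentially the same route as the paper: induction on the evaluation length (your measure $\steps+2\esteps$ is just a bookkeeping variant), with Proposition~\ref{prop:tight-normal-forms-indicies-linear-head} handling the base case and Proposition~\ref{prop:head-subject-head-reduction} the inductive step, tightness being preserved for free since it is a property of the final judgement only and subject reduction keeps $\typctx$ and $\type$ unchanged. The arithmetic on the split counters and the treatment of the neutrality clause are both as in the paper's argument.
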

\end{toappendix}

\subsection{Tight Completeness}

As in the case of head and LO evaluation the completeness proof is
based on the following properties: typability of linear head normal
forms, interaction between (linear head) anti-substitution and
typings, and subject expansion. The proofs are analogous to
those of the completeness for head and LO evaluation, up to the
changes for the linear case, that are instead analogous to those of
the correctness of the previous subsection. The statements
follow.

\begin{toappendix}
\begin{proposition}[Linear head normal forms are tightly typable for $\lhd$]
  \label{prop:lin-head-normal-forms-are-tightly-typable}
 Let  $\tm$ be such that $\sysnormal\systemlsp\tm$. Then 
there exists a \precise\ typing $\tderiv \exder[\systemlsp] \Deri[(0, 0, \lspsize\tm)]{\typctx}{\tm}{\type}$. Moreover, if 
$\sysneutral\systemlsp\tm$ then $\type = \neutype$, and if $\sysabs\systemlsp\tm$ then $\type = \abstype$.
\end{proposition}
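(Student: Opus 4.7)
The plan is to proceed by mutual structural induction on the derivations of the three predicates $\lhneutralp{\tm}{\var}$, $\lhnormalp{\tm}{\var}$, and $\lhnormalclose{\tm}$ that jointly characterise linear head normal forms, in the same spirit as \refprop{normal-forms-are-tightly-typable} for $\spi$ and $\ske$. The statement is strengthened to track, in each case, the shape of the typing context and of the type: for $\lhneutralp{\tm}{\var}$, the type is $\neutype$ and the context is $\{\var:\mult{\neutype}\}$; for $\lhnormalp{\tm}{\var}$, the type is a tight constant and the context is $\{\var:\mtight\}$; for $\lhnormalclose{\tm}$, the type is $\abstype$ and the context is empty. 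In all cases the indices will be exactly $(0,0,\lspsize\tm)$.

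The base case $\tm=\var$ is rule $\ax$, yielding $\Deri[(0,0,1)]{\var:\mult{\neutype}}{\var}{\neutype}$, which matches $\lspsize\var=1$. The application case $\lhneutralp{\tm\tmtwo}{\var}$ applies $\appresult$ to the IH derivation for $\tm$, adding $1$ to the third index (matching $\lspsize{\tm\tmtwo}=\lspsize\tm+1$) and leaving the context and type unchanged, since $\appresult$ does not type the argument. The abstraction cases $\la\vartwo\tm$ apply $\funresult$ to the IH derivation: whether the IH is $\lhnormalp{\tm}{\var}$ (with $\vartwo\neq\var$, so $\vartwo$ gets $\emm$, which is tight) or $\lhnormalclose{\tm}$ (empty context), $\funresult$ produces type $\abstype$ and adds $1$ to the third index, matching $\lspsize{\la\vartwo\tm}=\lspsize\tm+1$.

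The ES cases $\tm\esub{\vartwo}{\tmtwo}$ apply rule $\esrule$ to the IH derivation for $\tm$. The critical observation---and the main technical point of the proof---is that the strengthened IH guarantees that $\vartwo$ does not appear in the context constructed for $\tm$: in the neutral and the normal-with-head-variable cases this is because the context only contains $\var$, and $\vartwo\neq\var$ is a side condition; in the closed case the context is empty. We therefore take $\M=\emm$, type $\tmtwo$ with $\emm$ via the instance of $\many$ with empty index set $I=\emptyset$, yielding $\Deri[(0,0,0)]{\emptyset}{\tmtwo}{\emm}$, and apply $\esrule$. This preserves the triple of indices, the type, and the context, and matches $\lspsize{\tm\esub\vartwo\tmtwo}=\lspsize\tm$ (ES do not contribute to the linear head size).

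The main potential obstacle is precisely the ES case, because $\tmtwo$ is not required to be in any kind of normal form and so in general cannot be tightly typed. The strengthened induction hypothesis---that the typing context of the IH derivation does not mention $\vartwo$---combined with the zero-premiss instance of $\many$, is what lets us bypass this difficulty. The final ``moreover'' clauses follow from the strengthened IH: the neutrality clause because $\sysneutral\systemlsp\tm$ means $\lhneutralp{\tm}{\var}$ for some $\var$, for which we produce type $\neutype$; the abstraction clause by a straightforward case analysis showing that the $\funresult$ and $\esrule$ steps used to build the derivation when $\sysabs\systemlsp\tm$ always yield type $\abstype$, and that neutral terms are never abstractions modulo ES.
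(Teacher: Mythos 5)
Your proof is correct and follows essentially the same route as the paper: a mutual induction on the derivations of $\lhneutralp{\tm}{\var}$, $\lhnormalp{\tm}{\var}$ and $\lhnormalclose{\tm}$, strengthened to record the exact shape of the typing context, with the explicit-substitution case discharged by typing the substituted term with $\emm$ via the zero-premiss instance of $\many$. You correctly identify the one genuinely delicate point, namely that the strengthened invariant on the typing context (its domain is at most the head variable) is what licenses taking $\M=\emm$ in rule $\esrule$ even though the argument of the substitution need not be normalising.
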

\end{toappendix}


 
\begin{toappendix}
\begin{lemma}[Partial anti-substitution and typings for $\lhd$]
\label{l:anti-subst}
Let  $\tderiv \exder[\systemlsp] \Deri[(\steps, \esteps, \spine)]{\typctx}{\lhc\cwc{\tmthree}}{\type}$,
where $\var \notin \tmthree$. Then  there exists 
\begin{itemize}
\item a type  $\typetwo$
\item a typing derivation $\tderiv_\tmthree  \exder[\systemlsp] \Deri[(\steps_\tmthree, \esteps_\tmthree, \spine_\tmthree)]{\typctx_\tmthree}{\tmthree}{\typetwo}$ 
\item a typing derivation $\Phi_{\lhc\cwc{x}} \exder[\systemlsp] 
 \Deri[(\steps', \esteps', \spine')]{  \typctx'+x{:}\mult{\typetwo} }{\lhc\cwc{x}}{\type}$
\end{itemize}
such that 
\begin{itemize}
\item \emph{Typing contexts}: $\typctx = \typctx' \uplus \typctx_\tmthree$.

\item \emph{Indices}: $(\steps, \esteps, \spine) = (\steps' +  \steps_\tmthree, \esteps' + \esteps_\tmthree, \spine' + \spine_\tmthree - 1)$.
\end{itemize}
Moreover, if $\tderiv$ is \precise\ then so are
  $\tderiv_\tmthree$ and $\Phi_{\lhc\cwc{x}}$.

\end{lemma}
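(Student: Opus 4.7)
I would proceed by structural induction on the linear head context $\lhc$. In each case, the shape of $\lhc\cwc{\tmthree}$ forces the last rule of $\tderiv$ (up to the binary choice between steps- and result-variants of the abstraction and application rules), so I inspect that rule, apply the inductive hypothesis to the sub-derivation on the branch containing the hole, and rebuild the outer rule around the IH output. Crucially, the hole of a linear head context always sits in a position typed by a single type rather than a multi-set, so the IH (stated for type derivations) applies directly on that branch, while the side sub-derivations (if any) stay untouched.

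\textbf{Base case.} If $\lhc = \ctxhole$, take $\typetwo := \type$, $\tderiv_\tmthree := \tderiv$, $\typctx' := \emptyset$, and let $\Phi_{\lhc\cwc{\var}}$ be the axiom $\Deri[(0,0,1)]{\var:\mult\type}{\var}{\type}$. The context split and the index equations are immediate from $0 + \steps = \steps$, $0 + \esteps = \esteps$, and $1 + \spine - 1 = \spine$.

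\textbf{Inductive cases.} For $\lhc = \la\vartwo \lhc'$ the last rule is $\funsteps$ or $\funresult$; for $\lhc = \lhc' \tm$ it is $\appsteps$ or $\appresult$; for $\lhc = \lhc' \esub\vartwo\tm$ it is $\esrule$. In each case the IH applied to the premise typing $\lhc'\cwc{\tmthree}$ returns $\typetwo$, $\tderiv_\tmthree$, and $\Phi_{\lhc'\cwc{\var}}$, whose typing context contains the fresh binding $\var:\mult{\typetwo}$. I then reapply the same last rule --- keeping any side sub-derivation unchanged --- to build $\Phi_{\lhc\cwc{\var}}$. Barendregt's convention ensures $\vartwo \neq \var$, so the rule's manipulation of $\vartwo$ (in $\funsteps$ or $\esrule$) does not touch the fresh binding for $\var$, which is simply carried along. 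The final context splits as $\typctx = \typctx' \uplus \typctx_\tmthree$ by combining the IH split with the context of any side sub-derivation on the $\typctx'$ side, and the three indices add up because the increments contributed by the outer rule are identical in both the original and the rebuilt derivation. Tightness transfers to the IH (tightness of $\typctx$ descends to the sub-contexts, since any variable bound by the outer rule must have been mapped to a tight multi-set), and the outer rule preserves the tight final type $\type$.

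\textbf{Main obstacle.} The main difficulty is the three-index arithmetic in the $\funsteps$ and $\esrule$ cases, where one must verify that the $\pm\size\M$ shifts between $\esteps$ and $\spine$ commute with the addition of the fresh binding $\var:\mult{\typetwo}$; they do, precisely because $\var$ is fresh and thus outside the multi-set $\M$ being manipulated by the outer rule. The $\many$ rule plays no role in the induction proper --- $\tderiv$ concludes a type, not a multi-set --- but it remains untouched inside the side sub-derivations of $\appsteps$ and $\esrule$.
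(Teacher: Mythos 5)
Your overall strategy --- induction on the linear head context $\lhc$, inverting the last rule of $\tderiv$, applying the inductive hypothesis to the premise containing the hole, and rebuilding that rule around the IH's output --- is the right one and is the same route the paper takes: the lemma is the exact dual of the partial substitution lemma, which the paper proves by induction on $\lhc$. The base case, the index arithmetic, and the context split are handled correctly; in particular your appeal to freshness is rightly complemented by relevance of the system (no weakening), which guarantees that a binder $\vartwo$ introduced by the outer $\funsteps$, $\funresult$ or $\esrule$ rule sends its whole multi-set to the $\lhc\cwc{x}$ side of the split, since $\lhc$ does not capture the free variables of $\tmthree$; this is what makes the $\pm\size\M$ shifts match between the original and rebuilt derivations.

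The one genuine gap is your treatment of the \emph{Moreover} clause. Preciseness of \emph{both} output derivations requires $\tightpred{\typetwo}$: for $\tderiv_\tmthree$ because $\typetwo$ is its final type, and for $\Phi_{\lhc\cwc{x}}$ because $x\col\mult{\typetwo}$ sits in its typing context. Your justification only argues that $\type$ stays tight and that sub-contexts of a tight $\typctx$ are tight; it never produces $\tightpred{\typetwo}$. Moreover the inductive hypothesis cannot supply it in the $\appsteps$ and $\esrule$ cases: there the premise containing the hole is typed with an arrow type (respectively, has its context extended with a possibly non-tight multi-set for the bound variable), so that premise is not \precise\ and the IH's tightness clause is unavailable. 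Concretely, for $\lhc = \ctxhole\,\tmtwo$ and a \precise\ derivation ending in $\appsteps$, the extracted $\typetwo$ is $\ty{\M}{\type}$, which is not a tight constant. As written this step fails; it needs either a strengthened invariant constraining which rules can occur on the path to the hole in a \precise\ derivation, or a separate explicit argument. Note that the quantitative content of the lemma --- the part actually consumed by subject expansion for $\lhd$, which is stated without any tightness hypothesis --- is unaffected by this issue.
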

\end{toappendix}


\begin{toappendix}
\begin{proposition}[Quantitative subject expansion for $\lhd$] 
\label{prop:linear-subject-expansion}
If $\tderiv' \exder[\systemlsp] \Deri[(\steps, \esteps, \spine)]{\typctx}{\tm'}{\type}$ then
\begin{enumerate}
\item If  $\tm \tolhb \tm'$ then  there is a derivation $\tderiv \exder[\systemlsp] \Deri[(\steps+2, \esteps, \spine)]{\typctx}{\tm}{\tau}$.
\item If $t\tolhs t'$ then  there is a derivation $\tderiv \exder[\systemlsp]
\Deri[(\steps, \esteps+1, \spine)]{\typctx}{\tm}{\type}$.
\end{enumerate}
\end{proposition}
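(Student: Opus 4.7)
The plan is to prove both items simultaneously by induction on the derivation of $\tm \tolsp \tm'$, using the partial anti-substitution lemma (\reflemma{anti-subst}) for the base cases, in perfect analogy with the subject expansion theorems for the head and LO strategies. The indices then bookkeep exactly the two new contributions: a $\lhb$-step adds $2$ to $\steps$ (one $\funsteps$ and one $\appsteps$ rule are re-introduced, and one $\esrule$ replaces them so the net effect on $\syssize\systemlsp\tderiv$ is $+1$), while a $\lhs$-step adds $1$ to $\esteps$ (one occurrence of $\ax$ typing the head variable is re-introduced in place of the derivation for the substituted copy).

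For the base case of item~(1), namely $\putinctx{\L}{\la \var \tm_1}\,\tm_2 \tolhb \putinctx{\L}{\tm_1 \esub{\var}{\tm_2}}$, I first argue by induction on $\L$ that the derivation $\tderiv'$ of the right-hand side must decompose, through some number of $\esrule$ rules corresponding to $\L$, into a derivation of $\tm_1 \esub{\var}{\tm_2}$ under a context containing the types needed by the outer substitutions. The outermost $\esrule$ for $\esub{\var}{\tm_2}$ splits $\tderiv'$ into a premise $\tderiv_1$ typing $\tm_1$ with $\var \col \M$ in its context and a premise $\tderiv_2$ typing $\tm_2$ with $\M$. Applying $\funsteps$ to $\tderiv_1$ yields a derivation of $\la\var\tm_1$ with type $\M \rightarrow \type$, and then rule $\appsteps$ with $\tderiv_2$ produces a derivation of $(\la\var\tm_1)\,\tm_2$. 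Finally, I re-introduce the outer $\esrule$ rules of $\L$ at the top. Counting rules: I have added one $\funsteps$ and one $\appsteps$ and removed one $\esrule$, which yields $+2$ on $\steps$ and $0$ on $\esteps$ and $\spine$, since the contributions of $\M$ to the arrow and $\esrule$ indices exactly cancel.

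For the base case of item~(2), namely $\lhc\cwc{\var}\esub{\var}{\tm} \tolhs \lhc\cwc{\tm}\esub{\var}{\tm}$, the derivation $\tderiv'$ of the right-hand side ends with $\esrule$, splitting into a derivation of $\lhc\cwc{\tm}$ and a derivation $\tderiv_\tm$ of $\tm$ with some multi-set $\N$. Applying the partial anti-substitution lemma (\reflemma{anti-subst}) to the first premise with some type $\typetwo$ taken from $\N$ (and a sub-derivation of $\tderiv_\tm$ typing $\tm$ with $\typetwo$, extracted via rule $\many$) returns a type derivation for $\lhc\cwc{\var}$ with $\var \col \mult{\typetwo}$ added to the context. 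Forming the multi-set union $\mult{\typetwo} \mplus (\N \setminus \typetwo)$ is precisely $\N$; applying $\esrule$ with $\tderiv_\tm$ then yields the target derivation of $\lhc\cwc{\var}\esub{\var}{\tm}$. The index book-keeping from \reflemma{anti-subst} is exactly: $\steps$ unchanged, $\esteps$ increased by $1$ (the new $\ax$ for $\var$), and $\spine$ preserved since the $+1$ gained by the $\ax$ is compensated by the adjustment inside $\esrule$.

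The inductive cases follow the contextual closure rules $\lhabs$, $\lhapp$, $\lhsub$ uniformly: one inverts the last rule of $\tderiv'$ to isolate the subderivation of the subterm being reduced, applies the induction hypothesis to obtain the expanded subderivation with the correct index adjustment, and rebuilds the same typing rule on top, observing that only one subderivation changes and that its index adjustment propagates additively to the conclusion. The main delicate point, and the only place where real care is needed, is the decomposition step in the multiplicative base case when $\L$ is non-empty: one must verify that the $\esrule$ rules of the derivation really correspond to the explicit substitutions in $\L$, which is a routine syntactic analysis because no rule other than $\esrule$ can produce an ES at the top of a term. Everything else is a direct mirror image of the subject-reduction proof.
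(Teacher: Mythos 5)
Your overall strategy---induction on the reduction step, with the partial anti-substitution lemma (\reflemma{anti-subst}) for the exponential base case and a direct re-assembly for the multiplicative one---is the same as the paper's, and your treatment of the contextual-closure cases and your final index arithmetic are correct. However, both base cases are mis-executed as written. In the multiplicative case you apply, in this order, $\funsteps$ to $\tderiv_1$, then $\appsteps$ with $\tderiv_2$, and \emph{finally} the $\esrule$ rules of $\L$. That types $\putinctx{\L}{(\la\var\tm_1)\,\tm_2}$, which is not the redex: in $\putinctx{\L}{\la\var\tm_1}\,\tm_2$ the application sits \emph{outside} the substitution context. The $\esrule$ rules of $\L$ must be re-inserted \emph{between} the $\funsteps$ and the $\appsteps$, so as to first obtain $\putinctx{\L}{\la\var\tm_1}$ with type $\M\rightarrow\type$ and only then apply it to $\tm_2$ (the index count is unaffected because the contributions are additive, but the term typed by your construction is the wrong one).

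In the exponential case you use \reflemma{anti-subst} backwards. The lemma does not accept a type ``taken from $\N$'' and a sub-derivation of the argument's derivation as inputs: applied to the premise typing $\lhc\cwc{\tm}$, it \emph{produces} a type $\typetwo$ and a derivation of $\tm$ with type $\typetwo$, both extracted from the copy of $\tm$ occupying the hole (hence from the \emph{first} premise of the final $\esrule$, not the second), together with a derivation of $\lhc\cwc{\var}$ whose context assigns $\var$ the multi-set $\N\mplus\mult{\typetwo}$. The rebuilt $\esrule$ must therefore use this \emph{enlarged} multi-set, its right premise being the $\many$-merge of the old derivation of $\tm$ with $\N$ and the extracted derivation of $\tm$ with $\typetwo$. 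Your claim that ``$\mult{\typetwo}\mplus(\N\setminus\typetwo)$ is precisely $\N$'' keeps the multi-set at size $\size\N$; with an unchanged multi-set the $+\size\M$ term of $\esrule$ contributes exactly as before, and the construction does not deliver $\esteps+1$ (indeed it would \emph{lose} the $\esteps$-contribution of the removed sub-derivation). The $+1$ on $\esteps$ comes precisely from the multi-set growing by one element, while the new $\ax$ for $\var$ feeds $\spine$, where it is cancelled by the $-\size\M$ adjustment---not, as you write, from the $\ax$ itself.
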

\end{toappendix}


As  for linear head correctness, linear head completeness 
also \emph{refines} the information provided about the lenghts of the
evaluation sequences: the number $k$ of evaluation steps
to (linear head) normal form is now split
into two integers $k_1$ and $k_2$ representing,
respectively, 
the  multiplicative and exponential steps in such evaluation sequence.

\begin{toappendix}
\begin{theorem}[Tight completeness for $\lhd$]
\label{th:completeness-linear-head}
 Let $\tm \tolsp^k \tmtwo$, where  $\lhnormal\tmtwo$. Then there exists a \precise\ type
derivation  $\tderiv \exder[\systemlsp] \Deri[(2k_1, k_2, \lhsize\tmtwo)]{\typctx}{ \tm}{\type}$, where
$k=k_1+k_2$. Moreover, if $\lhneutral\tmtwo$, then $\type = \neutral$, and if $\sysabs\systemlsp\tmtwo$ then $\type = \abstype$.
\end{theorem}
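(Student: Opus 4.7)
The plan is to mimic the proof schema used for head and LO completeness (and for maximal completeness), instantiated for the linear head system with its refined counting. The argument is an induction on $k$, using the base case provided by \refprop{lin-head-normal-forms-are-tightly-typable} and the inductive step provided by \refprop{linear-subject-expansion}. Since the rewrite $\tolsp$ is the union of the multiplicative $\tolspb$ and exponential $\tolspe$ relations, for a given sequence $\tm \tolsp^k \tmtwo$ we define $k_1$ as the number of $\tolspb$-steps and $k_2$ as the number of $\tolspe$-steps, so that $k = k_1 + k_2$.

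\emph{Base case} ($k=0$): then $\tm = \tmtwo$ and $\lhnormal\tmtwo$. By \refprop{lin-head-normal-forms-are-tightly-typable} there is a \precise\ derivation $\tderiv \exder[\systemlsp] \Deri[(0,0,\lhsize\tmtwo)]{\typctx}{\tm}{\type}$, with the extra information on $\type$ in the neutral and abstraction cases; this matches the claim since $k_1 = k_2 = 0$.

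\emph{Inductive step}: write the sequence as $\tm \tolsp \tm' \tolsp^{k-1} \tmtwo$. By the induction hypothesis applied to $\tm' \tolsp^{k-1} \tmtwo$, there is a \precise\ derivation $\tderiv' \exder[\systemlsp] \Deri[(2k_1', k_2', \lhsize\tmtwo)]{\typctx}{\tm'}{\type}$ with $k_1' + k_2' = k-1$, and with $\type \in \set{\neutype,\abstype}$ determined by the shape of $\tmtwo$ as in the statement. Now do a case analysis on the first step $\tm \tolsp \tm'$: if it is a $\tolspb$-step, apply \refprop{linear-subject-expansion}(1) to obtain a derivation $\tderiv \exder[\systemlsp] \Deri[(2k_1'+2, k_2', \lhsize\tmtwo)]{\typctx}{\tm}{\type}$, which yields the desired judgement with $k_1 = k_1' + 1$ and $k_2 = k_2'$; if it is a $\tolspe$-step, apply \refprop{linear-subject-expansion}(2) to obtain $\tderiv \exder[\systemlsp] \Deri[(2k_1', k_2'+1, \lhsize\tmtwo)]{\typctx}{\tm}{\type}$, which yields the claim with $k_1 = k_1'$ and $k_2 = k_2' + 1$. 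In both cases $\typctx$ and $\type$ are preserved by subject expansion, hence the derivation remains \precise\ and the optional information about $\type$ carries over unchanged.

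\emph{Main obstacle.} The conceptual content of the proof has been entirely packaged into \refprop{lin-head-normal-forms-are-tightly-typable} and \refprop{linear-subject-expansion}, so the induction itself is short and routine. The only delicate point is ensuring that the two subject-expansion clauses really preserve both the typing context $\typctx$ and the type $\type$ (and thus \preciseness), so that the moreover clause propagates from the normal form through every step of the sequence; this is already guaranteed by the statement of \refprop{linear-subject-expansion}, which is where the real work lives (via the partial anti-substitution lemma \reflemma{anti-subst}).
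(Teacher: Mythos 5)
Your proof is correct and follows essentially the same route as the paper: induction on the length of the evaluation sequence, with \refprop{lin-head-normal-forms-are-tightly-typable} for the base case and the two clauses of \refprop{linear-subject-expansion} for the multiplicative and exponential inductive cases, the preservation of $\typctx$ and $\type$ by subject expansion carrying tightness and the \emph{moreover} clause through. Nothing is missing.
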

\end{toappendix}

\subsection{Relationship Between Head and Linear Head}

The head and linear head strategies are specifications at different granularities of the same notion of evaluation. Their type systems are also closely related---in a sense that we now make explicit, they are the same system.

In order to formalise this relationship we
define the transformation $\lintransfpr$ of $\spi$-derivations into (linear, hence the notation) $\lsp$-derivations as:
$\ax$ in $\spi$ is mapped to $\ax$ in $\lsp$, $\funresult$ in
$\spi$ is mapped to $\funresult$ in $\lsp$, $\funsteps$ in
$\spi$ is mapped to $\funsteps$ in $\lsp$, and so on.  This
transformation preserves the context and the type of all the typing
judgements. Of course, if one restricts the $\lsp$ system to $\l$-terms, there is an inverse transformation $\nonlintransfpr$ of $\lsp$-derivations into (non-linear, hence the notation) $\spi$-derivations, defined as expected. Together, the two transformation realise an isomorphism.

\begin{proposition}[Head isomorphism]
\label{prop:head-iso}
Let $\tm$ be a $\l$-term without explicit substitutions. Then 
\begin{enumerate}
  \item \label{p:head-iso-nl-to-l}
  \emph{Non-linear to linear}: if $\tderiv \exder[\spi] \Deri[(\steps, \result)] {\typctx}{\tm}{\type}$
then there exists $\esteps\geq 0$ such that $\lintransf \tderiv \exder[\systemlsp] \Deri[(\steps, \esteps, \result+1)] {\typctx}{\tm}{\type}$. Moreover, $\nonlintransf {\lintransf \tderiv} = \tderiv$.

  \item \label{p:head-iso-l-to-nl}
  \emph{Linear to non-linear}: if $\tderiv \exder[\systemlsp] \Deri[(\steps, \esteps, \result)] {\typctx}{\tm}{\type}$ then $\nonlintransf \tderiv \exder[\spi] \Deri[(\steps, \result-1)] {\typctx}{\tm}{\type}$. Moreover, $\lintransf {\nonlintransf \tderiv} = \tderiv$.
\end{enumerate}
\end{proposition}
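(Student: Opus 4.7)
The plan is to prove both items by structural induction on the derivation $\tderiv$, leveraging the fact that $\lintransf$ and $\nonlintransf$ are defined as bijective rule-by-rule translations between $\spi$ and $\lsp$. Because $\tm$ contains no explicit substitutions, the $\esrule$ of $\lsp$ never appears in the relevant derivations, so the two rule schemas are in exact one-to-one correspondence. This bijectivity immediately delivers the two ``moreover'' identities $\nonlintransf \circ \lintransf = \mathrm{id}$ and $\lintransf \circ \nonlintransf = \mathrm{id}$: each rule translates uniquely in either direction, and composing two such translations returns the original derivation.

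For Part 1, I would strengthen the induction hypothesis to also cover derivations ending with $\many$ (with multi-type conclusions), since these arise as right premises of $\appsteps$. The case analysis on the last rule of $\tderiv \exder[\spi] \Deri[(\steps, \result)] {\typctx}{\tm}{\type}$ then proceeds as follows: the $\ax$ base case is a direct check $(0,0) \mapsto (0,0,1)$, which establishes the base $+1$ on the third index; for $\funresult$ and $\appresult<\spi>$, the $\lsp$ counterparts increment the third index by $1$ in exactly the same way that the $\spi$ rules increment the second, so the IH transports the offset upwards; for $\funsteps$ and $\appsteps$, the first index $\steps$ is preserved identically by both systems, while the third-index offset is maintained by combining the $+1$ contribution from each axiom in $\lsp$ with the $-\size\M$ absorbed by each $\funsteps$ binder. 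The middle index $\esteps$ is built along the way by accumulating $+\size\M$ contributions at each $\funsteps$, which makes $\esteps \geq 0$ automatic.

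Part 2 runs the symmetric induction on the $\lsp$-derivation, using the absence of $\esrule$ to reduce every case to its $\spi$-counterpart; dropping $\esteps$ and subtracting $1$ from the third index then yields the target $\spi$-judgement, and the offsets compose in reverse exactly as in Part 1.

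The main obstacle I expect is the bookkeeping of the strengthened IH: the $+1$ offset on type-conclusions should correspond to an offset of $|I|$ for multi-type conclusions obtained by $\many$ on $|I|$ premises, so that when $\appsteps$ combines a type-conclusion left premise with a multi-type-conclusion right premise, the two offsets aggregate correctly back to $+1$ on the application's conclusion. Carrying this invariant cleanly through the $\funsteps$ case, where the linear rule's $-\size\M$ is meant to exactly compensate for the bound-variable axiom contributions accumulated in the premise's subderivation, is the technical heart of the argument.
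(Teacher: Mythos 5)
Your plan for the two \emph{moreover} clauses (the rule\mbox{-}by\mbox{-}rule bijection between the two derivation trees, using the absence of $\esrule$) is fine, and the first index is indeed preserved rule by rule. The problem is the third index: the invariant you propose --- offset $+1$ on every type-conclusion judgement, offset $+\size{I}$ on a $\many$-conclusion --- is not preserved by $\appsteps$. In both systems $\appsteps$ \emph{adds} the third indices of its premises, so the offset of its conclusion is the \emph{sum} of the offsets of the premises, i.e.\ $1+\size\M$, not $1$. The $-\size\M$ correction you invoke is performed by the $\funsteps$ rule that \emph{binds} the variables occurring in the argument, which is a different (and possibly absent) node of the derivation: if those variables are free in the whole term, or bound by a $\funresult$, no compensation ever happens. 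The paper's own running example already refutes the $+1$ invariant at intermediate judgements: the subderivation of $\var_0 \var_1$ carries $(1,0)$ in system $\spi$ and $(1,0,2)$ in system $\lsp$ (offset $2$), while the subderivation of $\la {\var_1} (\la {\var_0} \var_0 \var_1) \var_1$ carries $(4,0)$ and $(4,3,0)$ (offset $0$). An induction carrying ``third index equals second index plus one'' through every judgement therefore cannot close at the $\appsteps$ case.

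What a clean induction actually establishes is that the discrepancy between the third $\lsp$-index and the second $\spi$-index of a judgement with context $\typctx$ equals $\sum_{\var\in\dom\typctx}\size{\typctx(\var)}$ plus the sum of the $\size\mtight$ over the $\funresult$ rules occurring in the derivation --- equivalently, the number of $\ax$ rules not absorbed by a $\funsteps$ binder. This quantity is additive under $\appsteps$ and $\many$, unchanged by $\appresult$ and $\funresult$, and decreases by exactly $\size\M$ at $\funsteps$, so it is the invariant you should carry. The ``$+1$'' of the statement is the special case in which exactly one axiom survives, which is what happens for the tight derivations the paper actually uses this proposition for (one surviving axiom for the head variable, matching the fact that the linear head size of a normal form is its head size plus one), but it does not hold for an arbitrary $\spi$-derivation of an arbitrary judgement. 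To repair your argument you must either prove the generalized invariant above and then specialize to the situations where the surviving-axiom count is $1$, or add such a hypothesis explicitly; the same remark applies symmetrically to Part~2.
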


The proof is straightforward.

Morally, the same type system measures both head and linear head
evaluations. The difference is that to measure head evaluation and head normal forms
one forgets the number of axiom typing rules, that coincides
exactly with the number of linear substitution steps, plus 1 for the head variable of the linear normal form. In this sense,
multi types more naturally measure linear head evaluation. Roughly, a
tight multi type derivation for a term is nothing else but a coding of
the evaluation in the LSC, including the normal form itself.

\paragraph{On the number of substitution steps.} It is natural to wonder how the index $\esteps$ introduced by $\lintransfpr$ in \refpropp{head-iso}{nl-to-l} is related to the other indices $\steps$ and $\result$. This kind of questions has been studied at length in the literature about reasonable cost models. It is known that
$\esteps = {\mathcal O}(\steps^2)$  for \emph{any} $\l$-term, even for untypable ones, see~\cite{DBLP:conf/rta/AccattoliL12} for details. The bound is typically reached by the diverging term $\delta\delta$, which is untypable, but also by the following
typable term  $\tm_n \defeq (\la {\var_n} \ldots
(\la {\var_1} (\la {\var_0}
(\var_0 \var_1 \ldots \var_n)) \var_1) \var_2\ldots \var_n) I$.
Indeed, $\tm_n$ evaluates in $2n$ multiplicative steps (one for turning 
each $\beta$-redex into an ES, and one for each
time that the identity comes in head position) and $\Omega(n^2)$
exponential steps. 

\paragraph{On terms with ES} Relating typings for $\l$-terms with ES to typings for ordinary $\l$-terms is a bit trickier---we only sketch the idea. One needs to introduce the \emph{unfolding} operation $\unf{(\cdot)}: \ltermslsc \rightarrow \lterms$ on $\l$-terms with ES, that turns all ES into meta-level substitutions, producing the underlying ordinary $\l$-term. 
For instance,
$\unf{(\var\esub{\var}{\vartwo} \esub{\vartwo}{\varthree})} =
\varthree$. As in \refpropp{head-iso}{l-to-nl}, types are preserved:
\begin{lemma}[Unfolding and $\lsp$ derivations]
Let $\tm \in \ltermslsc$. If 
$\tderiv \exder[\systemlsp] \Deri[(\steps, \esteps, \result)] {\typctx}{\tm}{\type}$
then there exists $\tderivtwo \exder[\spi] \Deri[(\steps, \result -1)] {\typctx}{\unf\tm}{\type}$.
\end{lemma}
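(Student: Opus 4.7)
The proof is by induction on $\tderiv$, with a strengthened IH that also handles derivations whose conclusion is a multi-set type (i.e.\ those ending in $\many$), since such derivations arise as premises of $\appsteps$ and $\esrule$. The overall idea is to translate each $\lsp$ rule into its $\spi$ counterpart, except for $\esrule$, which has no $\spi$ analogue and must be eliminated using the $\spi$ substitution lemma.

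The base case $\ax$ is immediate: $\unf{\var} = \var$, and the $\lsp$-axiom $\Deri[(0,0,1)]{\var:\mult{\type}}{\var}{\type}$ maps to the $\spi$-axiom $\Deri[(0,0)]{\var:\mult{\type}}{\var}{\type}$, with $1-1=0$ as required. The structural inductive cases ($\funsteps$, $\funresult$, $\appsteps$, $\appresult$, and $\many$ for the strengthened IH) are routine: unfolding commutes with the root constructor (e.g.\ $\unf{\la\var\tm_0} = \la\var\unf{\tm_0}$ and $\unf{\tm_0 \tm_1} = \unf{\tm_0}\,\unf{\tm_1}$), so one applies the IH to each premise and rebuilds the derivation using the corresponding $\spi$-rule. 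The index bookkeeping reduces to a direct comparison of rule schemata, in the spirit of \refpropp{head-iso}{l-to-nl}.

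The substantive case is $\esrule$, where $\tm = \tm_0\esub{\var}{\tm_1}$ and $\unf{\tm} = \unf{\tm_0}\isub{\var}{\unf{\tm_1}}$. The two $\lsp$-premises give a derivation of $\tm_0$ (of type $\type$, with $\var:\M$ in its context) and a derivation of $\tm_1$ (of multi-set type $\M$, via a $\many$ rule). Applying the IH to both produces a $\spi$-derivation of $\unf{\tm_0}$ with the same context and type, and, by the strengthened IH, $\spi$-derivations of $\unf{\tm_1}$ typing it with $\M$. These are then glued together by the $\spi$ substitution lemma (\reflemma{typing-substitution-overview}) into a single $\spi$-derivation of $\unf{\tm_0}\isub{\var}{\unf{\tm_1}} = \unf{\tm}$.

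The main obstacle will be the index arithmetic in the $\esrule$ case: the $-\size{\M}$ adjustment that $\lsp$-$\esrule$ applies to $\spine$ must be reconciled with the purely additive behaviour of the $\spi$ substitution lemma. The intuition is that the $\size{\M}$ axiom occurrences for $\var$ inside the $\spi$-derivation of $\unf{\tm_0}$ get replaced by $\size{\M}$ independent copies of $\spi$-derivations of $\unf{\tm_1}$; this reshuffles exactly $\size{\M}$ units between the two sides of the target identity $\result = \spine - 1$, so the invariant is preserved by construction. An alternative strategy, if one wants to avoid re-doing the arithmetic, is to first eliminate every $\esrule$ in $\tderiv$ by iterated use of the partial substitution lemma (\reflemma{subst:hwmilner}), producing a $\lsp$-derivation of $\unf{\tm}$ with matching indices, and then invoke \refpropp{head-iso}{l-to-nl} directly on the ES-free result.
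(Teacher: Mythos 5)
You have localised the difficulty in the wrong place, and this hides a real gap. The case you single out as the obstacle, $\esrule$, is in fact the one where the arithmetic closes: the IH gives second index $\spine-1$ for the body and $\spine'-\size\M$ for the multiset premise, the $\spi$ substitution lemma is additive, and $(\spine-1)+(\spine'-\size\M)$ is exactly one less than the index $\spine+\spine'-\size\M$ produced by $\esrule$. The cases you call routine are the ones that fail. For $\funsteps$, the IH yields $\spine-1$ and the $\spi$ rule leaves the second index unchanged, whereas the $\lsp$ conclusion carries $\spine-\size\M$, so the target is $\spine-\size\M-1$: you are off by $\size\M$. Dually, for $\appsteps$ the two premises contribute $-1$ and $-\size\M$ respectively, giving $\spine+\spine'-\size\M-1$ against a target of $\spine+\spine'-1$. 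These two discrepancies cancel only when every $\funsteps$ is eventually consumed by an $\appsteps$ on the same multiset, which does not happen in general.

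Indeed, no bookkeeping can rescue the invariant $\result=\spine-1$, because the statement read for arbitrary judgements is false: for $\var\vartwo$ typed by $\appsteps$ from axioms assigning $\ty{\mult{\type}}{\typetwo}$ to $\var$ and $\type$ to $\vartwo$, the $\lsp$ judgement has indices $(1,0,2)$ while the only $\spi$ derivation with that context and type has indices $(1,0)$, not $(1,1)$. What your induction can actually establish is that $\steps$ is preserved and that the difference $\spine-\result$ equals the number of type occurrences in $\typctx$ plus the total size of the multisets $\mtight$ discarded by the $\funresult$ rules of $\tderiv$; this quantity happens to be $1$ for the paper's example and for the tight derivations the authors have in mind, which is presumably the intended scope. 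Be aware that the paper gives no proof of this lemma (it sits in an explicitly sketched paragraph) and that \refpropp{head-iso}{l-to-nl} shares the same subtlety, so part of the gap is inherited from the statement---but your proof as planned would not detect it, since it declares precisely the failing cases immediate. Your fallback via iterated applications of \reflemma{subst:hwmilner} does not repair this either: that lemma substitutes only the occurrence of $\var$ sitting in the linear head context, so it cannot unfold an explicit substitution whose variable also occurs elsewhere, and in any case it bottoms out on the same faulty index claim.
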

Note that the indices are also preserved. It is possible to also spell out the relationship between $\tderiv$ and $\tderivtwo$ (as done in~\cite{KesnerVR18}), that simply requires a notion of unfolding of typing derivations, and that collapses on the transformation $\nonlintransfpr$ in the case of ordinary $\l$-terms.

\section{Leftmost Evaluation and Minimal Typings}

\label{sect:shrinking}
This section focuses on the LO system and on the relationship between tight and tight-free---deemed \emph{traditional}---typings. Contributions are manyfold:
\begin{enumerate}
  \item \emph{LO normalisation, revisited}: we revisit the characterisation of LO normalising terms as those typable with \emph{shrinking} typings, that is, those where the empty multi-set has no negative occurrences. The insight is that the shrinking and tight constraints are of a very similar nature, showing that our technique is natural rather than ad-hoc. Moreover, our notion of shrinking derivation can also include the tight constants, thus we provide a strict generalisation of the characterisation in the literature.
  
  \item \emph{Minimality}: we show that tight typings can be seen as a characterisation of minimal traditional shrinking typings. The insight here is that tight typings are simply a device to focalise what traditional types can already observe in a somewhat 
more technical way.
  
  \item \emph{Type bound}: we show that for traditional shrinking typings already the type itself---with no need of the typing derivation---provides a bound on the size of the normal form, and this bound is exact if the typing is minimal. The insight is then the inherent inadequacy of multi types as a tool for reliable complexity measures.
\end{enumerate}

This study is done with respect to LO evaluation because among the case studies of the paper it is the most relevant one for reasonable cost models. It may however be easily adapted, \emph{mutatis mutandis}, to the other systems.

\paragraph{Shrinking typings.} It is standard to characterise LO normalising terms as those typable with intersection types without negative occurrences of $\Omega$~\cite{krivine1993lambda}, or, those typable with multi types without occurrence of the empty 
multi-set $\emm$~\cite{BKV17}. We call this constraint \emph{shrinking}. We need some basic notions. We use the notation $\atype$ to denote a \emph{(multi)-type}, that is, either a type $\type$ or a multi-set of types $\M$. 

\begin{definition}[Positive and negative occurrences]
  Let $\atype$ be a (multi)-type. The sets of positive and negative occurrences of $\atype$ in a type/multi-set of types/typing context are defined by mutual induction as follows:
  \[\begin{array}{cccccc}
    \infer{\type \in \possubtype \type }{}
    &    
    \infer{ \atype \in \possubtype \M }{ \exists\typetwo \in \M \mbox{ such that } \atype \in \possubtype \typetwo }
    &
    \infer{ \atype \in \negsubtype \M }{ \exists\typetwo \in \M \mbox{ such that } \atype \in \negsubtype \typetwo }
    \\\\
    \infer{\M \in \possubtype \M }{}
    &
    \infer{ \atype \in \possubtype {\M \rightarrow \type} }{ \atype \in \negsubtype\M \mbox{ or } \atype \in \possubtype \type }
    &
    \infer{ \atype \in \negsubtype {\M \rightarrow \type} }{ \atype \in \possubtype\M \mbox{ or } \atype \in \negsubtype \type }
    \\\\
    &
    \infer{ \atype \in \possubtype {\var : \M, \typctx} }{ \atype \in \possubtype\M \mbox{ or } \atype \in \possubtype \typctx }
    &
    \infer{ \atype \in \negsubtype {\var : \M, \typctx} }{ \atype \in \negsubtype\M \mbox{ or } \atype \in \negsubtype \typctx }
  \end{array}\]
\end{definition}

Shrinking typings are defined by imposing a condition on the final judgement of the derivation, similarly to tight typings.

\begin{definition}[Shrinking typing]
  Let $\tderiv \exder[\lo] \tyjn{(\steps, \result)} \tm \typctx \type$ be a typing derivation.
  \begin{itemize}
    \item the typing context $\typctx$ is \emph{shrinking} if  $\emm \notin \negsubtype\typctx$;
    \item the final type $\type$ is \emph{shrinking} if 
$\emm \notin \possubtype\type$;
    \item The typing derivation $\tderiv$ is \emph{shrinking} if both $\typctx$ and $\type$ are shrinking.
  \end{itemize}
\end{definition}

Note that 
\begin{itemize}
  \item \emph{Final judgement}: being shrinking depends only on the final judgement of a typing derivation, and that
  \item \emph{Tight implies shrinking}: a tight typing derivation is always shrinking.
\end{itemize}

In this section we also have a close look to traditional derivations without tight constants.

\begin{definition}[\Dynamic typings]\label{def:traditional}
Let $\tderiv \exder[\ske] \tyjn{(\steps, \spine)}{\tm}{\typctx}{\type}$ be a typing derivation. Then $\tderiv$ is \emph{\dynamic} if no type occurring in $\tderiv$ is a tight type (and so rules $\funresult$ and $\appresult<\ske>$ do not occur in $\tderiv$). 
\end{definition}

One of our results is that the type itself bounds the size of $\lo$ normal forms, for traditional typings, according to the following notion of type size.

\begin{definition}[Type size]
The type size $\tysize\cdot$ of types, multi-sets, typing contexts, and derivations is defined as follows:
$$\begin{array}{rcl \colspace rcl }
\tysize\atomtype & \defeq & 0
&
\tysize\tight & \defeq & 0
\\
\tysize\M & \defeq & \sum_{\type\in\M}\tysize\type
&
\tysize{\tarrow\M\type} & \defeq & \tysize\M + \tysize\type + 1
\\
\tysize{\emptyctx} & \defeq & 0
&
\tysize{\var\col\M; \typctx} & \defeq & \tysize\M + \tysize\typctx
\\\\
 \tysize{\tderiv} & \defeq & \tysize\typctx + \tysize\type & \multicolumn{3}{l}{\mbox{if $\tderiv \exder \tyjn{(\steps, \spine)}{\tm}{\typctx}{\type}$}}
\end{array}$$
\end{definition}

\subsection{Shrinking Correctness}
Here we show that shrinking typability is preserved by LO evaluation and that the size of shrinking typings decreases along it---hence the name---so that every shrinkingly typable term is LO normalising. For the sake of completeness, we also show that typability is always preserved, but if the typing is not shrinking then its size may not decrease.

Once more, we follow the abstract schema of the other sections (but
replacing \emph{tight} with \emph{shrinking} and obtaining bounds that
are less tight). The properties of the typings of normal forms and the
substitution lemma for the $\lo$ system have already been proved
(\refprop{tight-normal-forms-indicies}
and \reflemma{typing-substitution-overview}). We deal again with
normal forms, however, because we now focus on traditional typings,
and show that their type bounds the size of the LO normal form. As in the previous sections, neutral terms play a key role, showing that our isolation of the relevance of
neutral terms for characterisation via multi types is not specific to
tight types.

\begin{toappendix}
\begin{proposition}[Traditional types bounds the size of neutral and normal terms]  
\label{prop:shrinking-normal-forms-forall}
  Let $\tderiv \exder[\ske] \tyjn{(\steps, \spine)}{\tm}{\typctx}{\type}$ be a traditional typing. Then:
  \begin{enumerate}
  \item If $\skneutral{\tm}$ then $\tysize\type + \sksize\tm \leq \tysize\typctx$.

 \item If $\sknormal \tm$ then $\sksize\tm \leq \tysize\tderiv$.
 \end{enumerate} 
\end{proposition}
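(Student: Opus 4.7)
The plan is to prove the two statements simultaneously by mutual induction, either on the structure of $\tm$ or equivalently on the derivation $\tderiv$, exploiting the mutually recursive definitions of $\sknormalpr$ and $\skneutralpr$. Since $\tderiv$ is \dynamic, the only rules that can occur are $\ax$, $\funsteps$, $\appsteps$, and $\many$; in particular $\funresult$ and $\appresult<\ske>$ never appear, which restricts the shape of the last rule in a useful way.

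For part 1 (neutral case), I would case-split on the shape of $\tm$. If $\tm = \var$ the derivation must end with $\ax$, so $\typctx = \var \col \mult{\type}$ and both sides equal $\tysize\type$. If $\tm = t u$ with $\skneutral t$ and $\sknormal u$, the last rule must be $\appsteps$, producing sub-derivations $\Deri[]{\typctx_1}{t}{\M \rightarrow \type}$ and $\Deri[]{\typctx_2}{u}{\M}$ with $\typctx = \typctx_1 \mplus \typctx_2$; the right premise is obtained by $\many$ from a family $(\tderiv_{2,i})_{\iI}$ of derivations $\Deri[]{\typctx_{2,i}}{u}{\type_i}$ with $\M = \MSigma{\type_i}\iI$ and $\typctx_2 = \mplus_\iI \typctx_{2,i}$. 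Applying the IH of part 1 to $t$ yields $\tysize{\M} + \tysize\type + 1 + \sksize t \leq \tysize{\typctx_1}$; applying the IH of part 2 to any $\tderiv_{2,i}$ gives $\sksize u \leq \tysize{\typctx_{2,i}} + \tysize{\type_i} \leq \tysize{\typctx_2} + \tysize\M$. Summing these two bounds and using $\sksize{tu} = \sksize t + \sksize u + 1$ gives exactly $\tysize\type + \sksize{tu} \leq \tysize\typctx$.

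For part 2 (normal case), a further case split. If $\skneutral\tm$, the bound follows at once from part 1, since $\sksize\tm \leq \tysize\typctx \leq \tysize\typctx + \tysize\type = \tysize\tderiv$. If $\tm = \la \var. t$ with $\sknormal t$, the derivation must end with $\funsteps$ (rule $\funresult$ is excluded by traditionality), so we have a sub-derivation $\tderiv' \exder[\ske] \Deri[]{\typctx'; \var \col \M}{t}{\type'}$ with $\typctx = \typctx' \setminus \var$ and $\type = \M \rightarrow \type'$. Applying the IH of part 2 to $\tderiv'$ gives $\sksize t \leq \tysize{\typctx'} + \tysize\M + \tysize{\type'}$, and since $\sksize{\la \var. t} = \sksize t + 1$ and $\tysize\tderiv = \tysize{\typctx'} + \tysize\M + \tysize{\type'} + 1$, the inequality holds immediately with exactly one unit of slack absorbed by the arrow constructor.

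The main obstacle is the application case in part 1: one must orchestrate a simultaneous appeal to both parts of the induction, to the left subterm via part 1 and to the right subterm via part 2 through a derivation extracted from the $\many$ rule. The arithmetic of the $+1$ introduced by the arrow type $\M \rightarrow \type$ has to be balanced against the $+1$ in $\sksize{tu}$, and one must be careful about the degenerate case in which $I = \emptyset$ (and hence $\M = \emm$), where no $\tderiv_{2,i}$ is available to invoke part 2; this is the one subtlety that requires either an implicit restriction to meaningful premises or the shrinking discipline to rule out such premises for the applications inside neutral terms where they would break the bound.
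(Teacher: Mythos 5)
Your proof follows essentially the same route as the paper's: a mutual induction on $\skneutral{\tm}$ and $\sknormal{\tm}$ with the identical case analysis (variable and application for the neutral part; neutral and $\funsteps$-typed abstraction for the normal part) and the identical arithmetic balancing the $+1$ of the arrow type $\ty{\M}{\type}$ against the $+1$ of $\sksize{\tm\tmtwo}$. The one subtlety you flag---the degenerate $I=\emptyset$ premise of $\many$, where no derivation of the argument is available and the needed bound $\sksize{\tmtwo}\leq\tysize{\mplus_{\iI}\typctxtwo_i}+\tysize{\M}=0$ can genuinely fail (e.g.\ typing $\var\,(\la\vartwo\vartwo)$ with $\var\col\mult{\ty{\emm}{\atomtype}}$)---is real, and the paper's own proof silently glosses over it at exactly the same spot, so noting that it must be excluded by a shrinking-style discipline is a point in your favour rather than a defect relative to the paper's argument.
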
  
\end{toappendix}

As usual, shrinking correctness is based on a subject reduction property. In turn, subject reduction depends as usual on a spreading property on neutral terms, that is expressed by the following lemma.

\begin{toappendix}
\begin{lemma}[Neutral terms and positive occurrences]
  \label{l:shrinking-neutral-spreading}
  Let $\tm$ be such that $\loneutral \tm$ and $\tderiv \exder[\lo] \Deri[(\steps, \result)] {\typctx}{\tm}{\type}$ be a typing derivation. Then $\type$ is a positive occurrence of $\typctx$.
\end{lemma}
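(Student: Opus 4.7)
The plan is to proceed by structural induction on the LO-neutral term $\tm$ (equivalently, on the derivation of $\loneutral \tm$). The key preliminary observation is that LO-neutrality forces $\tm$ to be either a variable or an application $\tm' \tmtwo$ with $\loneutral{\tm'}$; in particular $\tm$ is never an abstraction. This severely restricts the last rule of $\tderiv$: inspecting \reffig{type-system}, only $\ax$, $\appsteps$, or $\appresult<\ske>$ can appear as the root rule, since $\funsteps$ and $\funresult$ would demand $\tm$ to be an abstraction, and $\many$ would assign a multi-set rather than a type.

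I would then dispatch the three cases. For $\tm = \var$ the only option is $\ax$, the context is $\var \col \mult{\type}$, and $\type$ occurs positively in $\mult{\type}$, hence in the context. For $\tm = \tm' \tmtwo$ with last rule $\appresult<\ske>$, we have $\type = \neutype$, the left premise types $\tm'$ with $\neutype$ in some context $\typctx_1$, and the induction hypothesis (applied via $\loneutral{\tm'}$, obtained by inverting $\loneutral{\tm'\tmtwo}$) yields $\neutype \in \possubtype{\typctx_1}$. For $\tm = \tm' \tmtwo$ with last rule $\appsteps$, the left premise types $\tm'$ with an arrow $\M \rightarrow \type$ in $\typctx_1$, and the induction hypothesis gives $\M \rightarrow \type \in \possubtype{\typctx_1}$; since $\type$ sits at the codomain of this arrow---hence positively in it---one gets $\type \in \possubtype{\typctx_1}$ as well. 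In both application cases, the conclusion's context is $\typctx = \typctx_1 \mplus \typctx_2$, and the result follows by monotonicity of $\possubtype{\cdot}$ under multi-set union.

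Along the way, the proof relies on two auxiliary facts about the polarity relations, both obtained by a straightforward induction on the definitions of $\possubtype{\cdot}$ and $\negsubtype{\cdot}$: (i) \emph{polarity composition}---positive-in-positive is positive and, more generally, two polarities compose like signs---which is what legitimates descending from $\M \rightarrow \type$ to $\type$ in the $\appsteps$ case; (ii) \emph{monotonicity} of positive occurrences under multi-set union of contexts, i.e.\ if $\atype \in \possubtype{\typctx_1}$ then $\atype \in \possubtype{\typctx_1 \mplus \typctx_2}$, which holds because union only adds multi-set elements and positive occurrence in a multi-set is defined existentially.

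I do not anticipate any genuine obstacle: the argument is essentially a reading of the typing rules against the restricted shape of neutral terms. The most delicate point is bookkeeping---remembering to exploit the inversion $\loneutral{\tm'\tmtwo} \Rightarrow \loneutral{\tm'}$ in the $\appsteps$ case to license the inductive call, and to note that $\many$ and the two abstraction rules are excluded \emph{for free} by the shape of the conclusion. This mirrors closely the earlier spreading statements (\reflemma{tight-spreading-spi-ske} and \reflemma{mxtight-spreading}), with positive/negative occurrences playing the role that tightness played there.
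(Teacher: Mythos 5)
Your proof is correct and follows essentially the same route as the paper's: induction on the derivation of $\loneutral\tm$, case analysis on the last typing rule ($\ax$, $\appsteps$, or $\appresult<\ske>$), an appeal to transitivity/composition of polarities to descend from $\M\rightarrow\type$ to $\type$ (the paper's Lemma on transitivity of polarities), and monotonicity of positive occurrences under context union.
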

\end{toappendix}

It is interesting to note that this lemma subsumes the \emph{tight
spreading on neutral terms} property
of \reflemma{tight-spreading-spi-ske}, showing a nice harmony between
the shrinking and tight predicates on derivations. If the typing
context $\typctx$ is tight, indeed, the fact that $\type$ is a
positive occurrence of $\typctx$ implies that $\type$ is
tight.

\begin{toappendix}
\begin{proposition}[Shrinking subject reduction]
\label{prop:shrinking-subject-reduction} 
Let $\tderiv \exder[\ske] \tyjn{(\steps, \result)}{\tm}\typctx\type$. If $\tm \toske \tmtwo$ then $\steps\geq 2$ and there exists $\tderivtwo$ such that $\tderivtwo \exder[\ske]  \tyjn{(\stepstwo, \result)}{\tmtwo}\typctx\type$ with $\stepstwo \leq \steps$ (and so $\sksize\tderivtwo \leq \sksize\tderiv$). Moreover, if $\tderiv$ is shrinking then $\stepstwo \leq \steps - 2$ (and so $\sksize\tderivtwo \leq \sksize\tderiv-2$).
\end{proposition}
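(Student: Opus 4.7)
The plan is to prove the statement by induction on the derivation of $\tm \toske \tmtwo$, adapting the schema of the tight subject reduction (\refprop{subject-reduction}) to the broader class of shrinking---and a fortiori arbitrary---derivations.

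\textbf{Base case.} For $(\la\var\tmthree)\tmfour \toske \tmthree\isub\var\tmfour$, the derivation $\tderiv$ must end with $\appsteps$ (no abstraction can be typed with $\neutype$, so $\appresult<\ske>$ is excluded), and its left premise must end with $\funsteps$ (since $\appsteps$ requires an arrow type, not $\abstype$ from $\funresult$). Applying the substitution lemma (\reflemma{typing-substitution-overview}) to the subderivations of $\tmthree$ and $\tmfour$ produces a derivation $\tderivtwo$ for $\tmthree\isub\var\tmfour$ with indices $(\steps-2, \result)$, yielding $\steps \geq 2$ and $\stepstwo = \steps - 2$ unconditionally.

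\textbf{Inductive cases.} For each remaining closure rule of $\toske$, I apply IH to the subderivation of the reducing subterm and reassemble $\tderiv$'s outer rule. In the easy subcases---outer $\funresult$ or $\appresult<\ske>$ (where the reducing subderivation is already tight and \refprop{subject-reduction} applies directly), outer $\funsteps$ (where the positions in the subderivation judgement are covered by those of the conclusion, so shrinking transfers), and outer $\appsteps$ with reduction on the left (where the context and output type of the left premise are covered by the conclusion)---the strict decrease $\stepstwo = \steps - 2$ follows from the strong IH under the shrinking hypothesis, and the weak bound $\stepstwo \leq \steps$ from the corresponding weak IH.

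\textbf{The subtle case.} The genuinely delicate case is $\tmthree\tm \toske \tmthree\tmtwo$ with $\skneutral\tmthree$, outer rule $\appsteps$, and right premise $\Deri\typctxtwo\tm\M$ derived by $\many$ with $\M = \mult{\type_i}_{i \in I}$. When $I \neq \emptyset$, IH applied to each of the $|I|$ sub-premises of $\many$ yields $\stepstwo \leq \steps$, and a decrease of at least $2|I| \geq 2$ when $\tderiv$ is shrinking. When $I = \emptyset$, no subderivation for $\tm$ exists; one can only reuse the empty $\many$ for $\tmtwo$, preserving all indices and giving the trivial $\stepstwo = \steps$. The main obstacle is therefore to rule out $I = \emptyset$ whenever $\tderiv$ is shrinking: assuming $I = \emptyset$ would force $\tmthree$ to be typed with $\emm \to \type$, and an induction on the LO-neutral structure of $\tmthree$---in the spirit of \reflemma{shrinking-neutral-spreading}---propagates this occurrence of $\emm$ to a negative position inside the declaration of $\tmthree$'s head variable in $\typctx$, contradicting shrinking of $\typctx$. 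The very same spreading argument also delivers the universal lower bound $\steps \geq 2$.
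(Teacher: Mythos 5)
Your proposal follows the same route as the paper's proof: induction on the step, the substitution lemma (\reflemma{typing-substitution-overview}) at the root, and---in the crucial case of reduction inside the argument of an application headed by a neutral term---\reflemma{shrinking-neutral-spreading} to exclude the empty multi-set. That last point, namely that $\skneutral{\tmthree}$ forces the type $\M\rightarrow\type$ of $\tmthree$ to occur positively in its shrinking context, so that $\M=\emm$ would place $\emm$ in negative position there, is exactly the paper's argument and is the heart of the matter. So the skeleton is right.

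There are, however, some concrete missteps. First, in the subcases with outer rule $\funresult$ or $\appresult<\ske>$, the reducing subderivation is \emph{not} ``already tight'': tightness of a derivation requires the typing \emph{context} to be tight, and here that context is merely shrinking (a sub-context of $\typctx$, possibly extended with a $\mtight$ declaration). Hence \refprop{subject-reduction} does not apply to it; you must instead invoke the shrinking induction hypothesis, which does go through because tight types are shrinking. Relatedly, you never state the strengthened statement your induction actually runs on---the paper's is ``$\typctx$ shrinking and ($\type$ shrinking or $\sysnotabs\ske\tm$)''---and the second disjunct is what rescues the left-application subcase, since the type $\M\rightarrow\type$ of the left premise is \emph{not} covered by the final judgement and need not be shrinking. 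Second, in the delicate case with $I\neq\emptyset$, applying the shrinking IH to each premise of the $\many$ rule requires each $\typetwo_i$ to be shrinking (the argument may well be an abstraction); this needs one further use of spreading together with transitivity of polarities (\reflemma{trans-pol}): a positive occurrence of $\emm$ in some $\typetwo_i$ is a negative one in $\M\rightarrow\type$, hence a negative one in the context of $\tmthree$---absurd. You omit this verification. Finally, your claim that the spreading argument yields $\steps\geq 2$ ``universally'' cannot be right, since spreading is only available under the shrinking hypothesis; when the contracted redex sits below an empty $\many$ (as in a typing of $\var(\delta\delta)$ with $\var:\mult{\emm\rightarrow\atomtype}$), it contributes nothing to $\steps$, so the non-shrinking half of the bound requires a separate argument from the one you sketch.
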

\end{toappendix}

Note that a LO diverging term like $\var (\delta\delta)$ is typable in system $\lo$ by assigning to $\var$ the type $\emptymset \rightarrow \atomtype$ and typing $\delta\delta$ with $\emptymset$, and that its type is preserved by LO evaluation, by \refprop{shrinking-subject-reduction}. Note however that the resulting judgement is not shrinking---only shrinkingly typable terms are LO normalising, in fact.

\begin{toappendix}
\begin{theorem}[Shrinking correctness]
\label{tm:shrinking-correctness} 
Let $\tderiv \exder[\ske]  \Deri[(\steps, \result)] {\typctx}{\tm}{\type}$ be a shrinking derivation. Then there exists a $\toske$ normal form  $\tmtwo$ and $k\leq \steps/2$ such that
\begin{enumerate}
 \item \emph{Steps}: $\tm$ $\toske$-evaluates to $\tmtwo$ in $k$ steps, \ie $\tm  \Rewn[k]{\ske} \tmtwo$;

 \item \emph{Result size}:  $\sksize\tmtwo \leq \sksize\tderiv - 2 k$;
\end{enumerate}
Moreover, if $\tderiv$ is traditional then $\sksize\tm \leq \tysize\tderiv$.
\end{theorem}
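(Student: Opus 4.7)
The plan is to mirror the structure of Theorem~\ref{thm:correctness} (tight correctness), using \emph{shrinking subject reduction} (Proposition~\ref{prop:shrinking-subject-reduction}) in place of the tight one. I proceed by induction on the first index $\steps$ of the derivation $\tderiv$.

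For the base case, either $\tm$ is $\toske$-normal or not. If $\sknormal\tm$, I set $\tmtwo \defeq \tm$ and $k \defeq 0$; the size bound $\sksize\tmtwo \leq \sksize\tderiv = \sksize\tderiv - 2k$ follows immediately from Proposition~\ref{prop:tight-normal-forms-indicies}.\ref{p:mxtight-normal-forms-indices-size-bound} (which holds for arbitrary typings, not only tight ones). If $\tm$ is not $\toske$-normal, then by determinism of $\toske$ there exists $\tmthree$ with $\tm \toske \tmthree$, and shrinking subject reduction yields $\steps \geq 2$ together with a derivation $\tderivtwo \exder[\ske] \tyjn{(\stepstwo, \result)}{\tmthree}{\typctx}{\type}$ with $\stepstwo \leq \steps - 2$. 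Crucially, $\tderivtwo$ has the same final judgement as $\tderiv$, hence it is still shrinking; so the inductive hypothesis applies to $\tderivtwo$, giving a normal form $\tmtwo$, $\tmthree \Rewn[k']{\ske}\tmtwo$ with $k' \leq \stepstwo/2$, and $\sksize\tmtwo \leq \sksize\tderivtwo - 2k' = \stepstwo + \result - 2k'$. Setting $k \defeq k'+1$, I get $\tm \Rewn[k]{\ske}\tmtwo$ with $k \leq (\steps-2)/2 + 1 = \steps/2$, and $\sksize\tmtwo \leq (\steps - 2) + \result - 2k' = \sksize\tderiv - 2k$, as required.

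For the moreover part, I argue that traditionality is preserved by subject reduction, then invoke Proposition~\ref{prop:shrinking-normal-forms-forall}. Concretely, inspecting the proof of shrinking subject reduction (which boils down to an application of the substitution lemma and local reshuffling of rules), one sees that no new tight constants or new instances of rules $\funresult$/$\appresult<\ske>$ are introduced: if the input derivation is traditional, so is the output. Iterating along the evaluation sequence $\tm \Rewn[k]{\ske} \tmtwo$ produced above, one obtains a traditional derivation $\tderivtwo'\exder[\ske]\tyjn{(\stepstwo', \result)}{\tmtwo}{\typctx}{\type}$ with the same final typing context and type as $\tderiv$, so $\tysize{\tderivtwo'} = \tysize\tderiv$. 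Applying Proposition~\ref{prop:shrinking-normal-forms-forall}.(2) to $\tderivtwo'$ (using $\sknormal\tmtwo$) gives $\sksize\tmtwo \leq \tysize{\tderivtwo'} = \tysize\tderiv$.

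The main conceptual obstacle is the preservation of the traditional property through subject reduction: shrinking subject reduction only asserts preservation of the final judgement and the shrinking condition, not of traditionality. Either one should prove a small refinement of Proposition~\ref{prop:shrinking-subject-reduction} observing that the constructed $\tderivtwo$ uses only types/rules already present in $\tderiv$ (hence stays traditional), or one can reason directly: since traditional derivations live in the fragment obtained by discarding $\funresult$ and $\appresult<\ske>$, the substitution lemma and the case analysis behind subject reduction restrict cleanly to that fragment. Once this is in hand, the rest is a direct assembly from the already-established components.
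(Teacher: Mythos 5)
Your proof is correct and follows essentially the same route as the paper's: induction on the derivation (the paper inducts on $\sksize\tderiv$, you on $\steps$, which is equivalent since $\result$ is unchanged along reduction), with the normal-form size bound for the base case and shrinking subject reduction (Proposition~\ref{prop:shrinking-subject-reduction}) for the inductive step. Your handling of the \emph{moreover} part is in fact more explicit than the paper's, which dismisses it with ``follows from the induction hypothesis'': preservation of traditionality along subject reduction is indeed needed, is not stated in Proposition~\ref{prop:shrinking-subject-reduction}, and your observation that the reconstruction in its proof introduces no tight constants or tight rules is exactly the missing ingredient.
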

\end{toappendix}

\paragraph{Minimality} The minimality of tight typings is hidden in the statement of the shrinking correctness theorem. By combining together its two points, indeed, we obtain that $\losize\tmtwo +2k \leq \losize\tderiv$, that is, the size of every typing derivation bounds both the number of LO steps and the size of the LO normal form. Consequently, tight typings---whose size is exactly the sum of these two quantities---are minimal typings. To complete the picture, one should show that there also exist traditional shrinking typings that are minimal. We come back to this point at the end of the completeness part.

\subsection{Shrinking Completeness}
The proof of completeness for shrinking typings also follows, \emph{mutatis mutandis}, the usual schema. Normal forms and anti-substitution have already been treated (\refprop{normal-forms-are-tightly-typable} and \reflemma{anti-substitution}). Again, however, we repeat the study of (the existence of typings for) LO normal forms focussing now on traditional typings and on the bound provided by types. Their study is yet another instance of \emph{spreading on (LO) neutral terms}, in this case of the size bound provided by types: for neutral terms the size of the typing context $\typctx$ allows bounding both the size of the term and the size of its type, which is stronger than what happens for general LO normal terms.

\begin{toappendix}
\begin{proposition}[Neutral and normal terms have minimal traditional shrinking typings]  
\label{prop:shrinking-normal-forms-exist} 
  \hfill
  \begin{enumerate}
  \item If $\skneutral{\tm}$ then 
for every type $\type$ there exists a traditional shrinking typing $\tderiv \exder[\ske] \tyjn{(\sksize\tm, 0)} \tm \typctx \type$ such that $\tysize\type + \sksize\tm = \tysize\typctx$.

 \item If $\sknormal \tm$ then 
 there exists a type $\type$ and a traditional shrinking typing $\tderiv \exder[\ske] \tyjn{(\sksize\tm, 0)} \tm \typctx \type$ such that $\sksize\tm = \tysize\tderiv$.
 \end{enumerate}
\end{proposition}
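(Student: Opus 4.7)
The proof will proceed by simultaneous induction on the derivations of $\skneutral\tm$ (for part 1) and $\sknormal\tm$ (for part 2), with the two parts invoking each other. Because the derivations built are \dynamic, every abstraction is introduced by $\funsteps$ and every application by $\appsteps$; hence the indices of the final judgement come out to be $(\sksize\tm, 0)$ automatically in every case, and the second index $0$ is immediate.

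For part 1, the base case $\tm = \var$ is just a single $\ax$, giving $\tyjn{(0,0)}{\var}{\var \col \mult\type}{\type}$ with $\tysize{\var \col \mult\type} = \tysize\type = \tysize\type + \sksize\var$. For the inductive case $\tm = \tm_1\tm_2$, the plan is first to apply the IH for part 2 to $\tm_2$, obtaining a type $\typetwo$ and a derivation $\tderiv_2$ of size $\sksize{\tm_2}$; then apply the IH for part 1 to $\tm_1$ with target type $\mult\typetwo \to \type$, obtaining $\tderiv_1$. Lifting $\tderiv_2$ by $\many$ and combining with $\tderiv_1$ via $\appsteps$ yields a derivation of $\tm_1\tm_2$ with final type $\type$. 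A brief arithmetic check gives $\tysize{\typctx_1 \mplus \typctx_2} = \tysize\type + 1 + \sksize{\tm_1} + \sksize{\tm_2} = \tysize\type + \sksize{\tm_1\tm_2}$, since the two $\tysize\typetwo$ contributions arising from $\typctx_1$ and $\typctx_2$ cancel.

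For part 2, the base case (where $\tm$ is LO-neutral) follows from part 1 by picking $\type$ to be an atomic type, which has type size $0$; then $\tysize\tderiv = \tysize\typctx + 0 = \sksize\tm$. For the inductive case $\tm = \la\vartwo\tm_0$, I would invoke the IH on $\tm_0$ to get $\tderiv_0$ with final type $\typetwo_0$, and then apply $\funsteps$. Setting $\M := \typctx_0(\vartwo)$ (which is $\emm$ when $\vartwo \notin \fv{\tm_0}$), the new contribution $\tysize{\M \to \typetwo_0} = \tysize\M + \tysize{\typetwo_0} + 1$ offsets the $\tysize\M$ removed when restricting the context, giving $\tysize\tderiv = \tysize{\tderiv_0} + 1 = \sksize{\la\vartwo\tm_0}$.

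The main obstacle is verifying the shrinking condition at each step. For a vacuous abstraction $\la\vartwo\tm_0$ with $\vartwo \notin \fv{\tm_0}$, the $\funsteps$ step emits the type $\emm \to \typetwo_0$, which is itself shrinking (no positive occurrence of $\emm$) but carries $\emm$ as a \emph{negative} occurrence. Consequently, when part 2 returns such a type and part 1 feeds $\mult\typetwo \to \type$ to its inner IH, this intermediate type is no longer shrinking (the negative $\emm$ in $\typetwo$ becomes a positive $\emm$ after the arrow). The key argument will be that, after this type lands in the subsequent typing context via $\var \col \mult{\cdots}$, the sign flips once more, so the offending $\emm$ surfaces only as a \emph{positive} occurrence of $\typctx_1$, which is permitted; the final context $\typctx_1 \mplus \typctx_2$ therefore stays shrinking, and so does the final derivation. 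This relies crucially on the fact, already exploited earlier in the paper, that being shrinking is a condition on the final judgement only, not an invariant that has to be maintained at every sub-derivation.
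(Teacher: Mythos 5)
Your proposal is correct and follows essentially the same route as the paper's proof: a mutual induction on $\skneutral\tm$ and $\sknormal\tm$ with the same constructions in every case and the same size bookkeeping, and you rightly single out the one delicate point---vacuous abstractions emitting a type of the form $\ty{\emm}{\typetwo_0}$---which the paper's own proof passes over with a one-line remark. One correction to your justification of that point, though: there is no second polarity flip when a type enters the typing context, since by the paper's definition $\possubtype{\typctx}$ simply collects the positive occurrences of the types assigned in $\typctx$; the occurrence of $\emm$ that is positive in $\ty{\mult{\typetwo}}{\type}$ therefore remains a \emph{positive} occurrence of the context. The argument still goes through, but for the reason that the shrinking condition on contexts only forbids \emph{negative} occurrences of $\emm$, and part~2 of the induction hypothesis gives $\emm\notin\possubtype{\typetwo}$, whence $\emm\notin\negsubtype{\ty{\mult{\typetwo}}{\type}}$ and the resulting context stays shrinking.
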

\end{toappendix}

Note that the typings given by the propositions are minimal because they satisfy $\losize{\tderiv} = \losize\tm$ and $\losize\tm$ is a lower bound on the size of the typings of $\tm$ by \refprop{tight-normal-forms-indicies}.1---note that it is also the size of a tight typing of $\tm$ by \refprop{normal-forms-are-tightly-typable}.

The last bit is a subject expansion property. Note in particular that since $\beta$-redexes are typed using traditional rules, the expansion preserves traditional typings.

\begin{toappendix}
\begin{proposition}[Shrinking subject expansion]
\label{prop:shrinking-subject-expansion}
If $\tm \toske \tmtwo$ and $\tderiv \exder[\ske] \tyjn{(\steps,
  \spine)}\tmtwo\typctx\type$ then there exists $\tderivtwo$ such that
$\tderivtwo \exder[\ske] \tyjn{(\stepstwo, \spine)}\tm\typctx\type$ with $\stepstwo \geq  \steps $. Moreover, if $\tderiv$ is shrinking then $\stepstwo \geq  \steps +2 $, and if $\tderiv$ is traditional then $\tderivtwo$ is traditional.
\end{proposition}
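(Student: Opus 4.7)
The plan is to proceed by induction on the derivation of $\tm \toske \tmtwo$, following the scheme of \refprop{subject-expansion}. In the base case $(\la\var\tmthree)\tmfour \toske \tmthree\isub\var\tmfour$, apply the anti-substitution lemma \reflemma{anti-substitution} to decompose $\tderiv$ into a typing $\tderiv_\tmthree$ for $\tmthree$ (with $\var : \M$ added to its context) and a typing $\tderiv_\tmfour$ for $\tmfour$ with type $\M$, whose indices sum to $(\steps,\spine)$ and whose contexts combine to $\typctx$. Reconstruct by applying $\funsteps$ then $\appsteps$, each contributing $1$ to the steps index; this yields $\stepstwo = \steps + 2$ unconditionally, with $\spine$, $\typctx$, and $\type$ preserved. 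In the three inductive cases (abstraction context, application-left, application-right), apply the \ih to the sub-derivation corresponding to the reducing sub-term and re-apply the outer rule; the wrapping rule contributes the same constant to $\stepstwo$ as to $\steps$, so the \ih difference is inherited, yielding the general bound $\stepstwo \geq \steps$.

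For the shrinking refinement $\stepstwo \geq \steps + 2$, the key is to transfer shrinkingness of the outer final judgement to the sub-derivation relevant for the \ih, triggering the stronger bound. The abstraction cases are direct: for $\funsteps$, the sub-derivation's context $\typctx; \var : \M$ and type inherit shrinkingness from the outer judgement; for $\funresult$ the sub-derivation is tight and hence trivially shrinking. The $\appresult$ case is immediate since its premises are typed with tight constants. The $\appsteps$ right-premise case arising from the rule $\tmthree \tm \toske \tmthree \tmtwo$ uses $\skneutral{\tmthree}$: by \reflemma{shrinking-neutral-spreading}, the type $\M \to \type$ of $\tmthree$ is a positive occurrence of the neg-shrinking context $\typctx_\tmthree$, and polarity composition (the argument multiset of an arrow is at a negative position) implies that $\emm$ cannot occur positively in $\M$; therefore $\M \neq \emm$ and each $\type_i \in \M$ is pos-shrinking, so every sub-premise of the $\many$-derivation for $\tmtwo$ is shrinking and each \ih call contributes $+2$, yielding $\stepstwo_R \geq \steps_R + 2$. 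The $\appsteps$ left-premise case arising from the rule $\tm \tmthree \toske \tmtwo \tmthree$ is handled by observing that the $+2$ increment originates at the base-case anti-substitution at the bottom of the sub-induction and propagates unchanged through the wrapping rules.

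Traditionality is preserved throughout, since the construction introduces only the rules $\funsteps$ and $\appsteps$ (never $\funresult$ or $\appresult$), and the anti-substitution lemma preserves the absence of tight rules. The main obstacle I expect is the polarity-composition argument in the $\appsteps$ right-premise case, which must carefully track positive and negative occurrences across the arrow $\M \to \type$ via \reflemma{shrinking-neutral-spreading}; the $\appsteps$ left-premise case may benefit from a mild strengthening of the induction hypothesis---for instance, replacing shrinkingness by a disjunction with a condition such as ``the reducing sub-term is not an abstraction''---mirroring the subject expansion proof for maximal evaluation in the appendix.
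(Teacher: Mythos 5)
Your proposal is correct and follows essentially the same route as the paper: induction on the step with the anti-substitution lemma at the root, the induction hypothesis strengthened to ``$\typctx$ shrinking and either $\type$ shrinking or the reduced term is not an abstraction'' (which is exactly how the paper discharges the left-application case you hedge on at the end), and, for the delicate right-application case, the neutral-spreading lemma plus transitivity of polarities to conclude that $\M$ is non-empty and each of its elements is shrinking. The only nitpick is that your first, informal handling of the left-application case (``the $+2$ originates at the base case and propagates'') is not a valid inductive argument on its own, but the disjunctive strengthening you then propose is precisely the paper's fix, so the proof goes through.
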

\end{toappendix}

The completeness theorem then follows.

\begin{toappendix}
\begin{theorem}[Shrinking completeness]
\label{thm:shrinking-completeness}
Let $\tm \Rewn[k]{\ske} \tmtwo$ with $\tmtwo$ such that $\sknormal \tmtwo$. Then there exists a traditional shrinking typing $\tderiv \exder[\ske] \tyjn{(\steps, 0)} \tm \typctx \type$ such that $k\leq \steps/2$ and $\sksize\tmtwo = \tysize\tderiv$.
\end{theorem}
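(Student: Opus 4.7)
The plan is to proceed by induction on $k$, following the general schema used throughout the paper, and to assemble the two intermediate results already developed for shrinking completeness: the existence of minimal traditional shrinking typings of LO normal forms (\refprop{shrinking-normal-forms-exist}) for the base case, and shrinking subject expansion (\refprop{shrinking-subject-expansion}) for the inductive step.

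For the base case $k=0$, I have $\tm = \tmtwo$ with $\sknormal \tm$. Applying point 2 of \refprop{shrinking-normal-forms-exist} I directly obtain a traditional shrinking typing $\tderiv \exder[\ske] \tyjn{(\sksize\tm,0)}{\tm}{\typctx}{\type}$ such that $\sksize\tm = \tysize\tderiv$. Setting $\steps \defeq \sksize\tm$ yields $k = 0 \leq \steps/2$ and $\sksize\tmtwo = \tysize\tderiv$, as required.

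For the inductive step $k \geq 1$, write $\tm \toske \tm' \Rewn[k-1]{\ske} \tmtwo$. By the induction hypothesis there is a traditional shrinking typing $\tderiv' \exder[\ske] \tyjn{(\steps',0)}{\tm'}{\typctx}{\type}$ with $k-1 \leq \steps'/2$ and $\sksize\tmtwo = \tysize{\tderiv'}$. Applying \refprop{shrinking-subject-expansion} to $\tm \toske \tm'$ and $\tderiv'$, I obtain a derivation $\tderiv \exder[\ske] \tyjn{(\steps,0)}{\tm}{\typctx}{\type}$ with $\steps \geq \steps'+2$; moreover, the same proposition ensures that $\tderiv$ is still traditional because $\tderiv'$ is. The final typing context and type are preserved by expansion, hence $\tderiv$ remains shrinking (shrinkingness being a property of the final judgement only) and $\tysize\tderiv = \tysize\typctx + \tysize\type = \tysize{\tderiv'} = \sksize\tmtwo$. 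Finally, $k = (k-1)+1 \leq \steps'/2 + 1 \leq \steps/2$.

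No real obstacle arises here: all the hard work has already been discharged in the earlier propositions. The only point that deserves attention is to observe that the quantity $\tysize\tderiv$, by definition, depends only on the final typing context and the final type of $\tderiv$, and that both are preserved by subject expansion; this is what allows the type-size bound established at the base case to propagate unchanged along the expansion chain, while the step index $\steps$ grows by at least $2$ per step and thereby absorbs the increment in $k$.
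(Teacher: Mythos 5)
Your proof is correct and follows essentially the same route as the paper's: induction on the evaluation length, using \refprop{shrinking-normal-forms-exist} for the base case and \refprop{shrinking-subject-expansion} for the inductive step, with the observation that the final judgement (hence shrinkingness and $\tysize\tderiv$) is preserved by expansion. Your arithmetic $k = (k-1)+1 \leq \steps'/2 + 1 = (\steps'+2)/2 \leq \steps/2$ is in fact cleaner than the paper's, which contains a harmless slip at that point.
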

\end{toappendix}

\paragraph{Minimality, again} We have shown that there exist minimal traditional shrinking (MTS for short) typings of normal forms of the same size of the tight typings (\refprop{shrinking-normal-forms-exist}). It is possible to lift such a correspondence to all typable terms. This refinement is however left to a longer version of the paper, because it requires additional technicalities, needed to strengthen subject expansion. 

Let us nonetheless assume, for the sake of the discussion, that tight types characterise minimal typings for all typable terms. What's the difference between the two approaches then? It is easy to describe MTS typings for normal forms: an explicit description can be extracted from the proof of \refprop{shrinking-normal-forms-exist}, and essentially it interprets the terms \emph{linearly}, that is, typing arguments of applications only once, similarly to the tight $\appresult<\lo>$ rule. For non-normal terms, however, MTS typings can be obtained only \emph{indirectly}: first obtaining a MTS typing of the normal form, and then pulling the MTS property back via subject expansion. Tightness instead is a predicate that applies \emph{directly} to derivations of \emph{any} typable term, and is then a direct alternative to MTS typings.

\paragraph{Type bounds} The fact that for traditional typings the type itself provides a bound on the size of the normal form is a very strong property. It is in particular the starting point for de Carvalho's transfer of the study of bounds to the relational semantics of terms~\cite{Carvalho07,DBLP:journals/corr/abs-0905-4251}---a term is interpreted as the set of its possible types (including the typing context), that is a notion independent of the typing derivations themselves. Because of the size explosion problem, however, such property also shows that the bounds provided by the relational semantics are doomed to be lax and not really informative.

\emph{Relational denotational semantics.} As we said in the introduction, multi types can be seen as a syntactic presentation of relational denotational semantics, which is the model obtained by interpreting the $\l$-calculus into the relational model of linear logic~\cite{Girard88,DBLP:journals/apal/BucciarelliE01,Carvalho07,DBLP:conf/csl/Carvalho16}, often considered as a canonical model. 

The idea is that the interpretation (or 
semantics) of a term is simply the set of its types, together with their typing contexts. More precisely, let $\tm$ be a term and $\var_1, \dots, \var_n$ (with $n \geq 0$) be pairwise distinct variables.
If $\fv{\tm} \subseteq \{\var_1, \dots, \var_n\}$, we say that the list $\vec{\var} = (\var_1, \dots, \var_n)$ is \emph{suitable for} $\tm$.
If $\vec{\var} = (\var_1, \dots, \var_n)$ is suitable for $\tm$, the (\emph{relational}) \emph{semantics} 
\emph{of} $\tm$ \emph{for} $\vec{\var}$ is
\[
    \sem{\tm}_{\vec{\var}} \defeq \{((\M_1,\dots, \M_n),\type) \mid \exists  \, 
    \tderiv \exder[\ske] \tyjn{(\steps, \result)} \tm {\var_1 \colon\! \M_1, \dots, \var_n \colon\! \M_n} \type
    \} \,.
\]

By subject reduction and expansion, the interpretation $\sem{\tm}_{\vec{\var}} $ is an invariant of evaluation, and by correctness and completeness it is non-empty if and only if $\tm$ is LO normalisable. Said differently, multi types provide an adequate denotational model with respect to the chosen notion of evaluation, here the LO one. If the interpretation is restricted to traditional typing derivations (in the sense of \refdef{traditional}), then it coincides with the one in the relational model in the literature. General derivations still provide a relational model, but a slightly different one, with the two new types $\abstype$ and $\neutral$, whose categorical semantics still has to be studied.

\section{Conclusions}
\label{s:conclusion}

Type systems provide guarantees both \emph{internally} and \emph{externally}.
  Internally,
  a typing discipline ensures that a program in isolation has a given desired property.
  Externally, the property is ensured \emph{compositionally}:
  plugging a typed program in a typed environment preserves the desired property.
  Multi types (a.k.a.~non-idempotent intersection types)
  are used in the literature
  to quantify the resources that are needed to produce normal forms.  
  Minimal typing derivations provide exact upper bounds
  on the number of $\beta$-steps plus the size of the normal form%
  ---this is the internal guarantee.
  Unfortunately, such minimal typings provide almost no compositionality,
  as they essentially force the program to interact with a linear environment.
  Non-minimal typings allow compositions with less trivial environments,
  at the price of laxer bounds.

  In this paper we have engineered typing so that,
  via the use of \emph{tight constants} among base types,
  some typing judgements express compositional properties of programs
  while other typing judgements, namely the \emph{tight} ones,
  provide exact and separate bounds
  on the lengths of evaluation sequences on the one hand,
  and on the sizes of normal forms on the other hand.
  The distinction between the two counts
  is motivated by the size explosion problem,
  where the size of terms can grow exponentially with respect to the number of evaluation steps.

  We conducted this study, building on some of the ideas in~\cite{bernadet13},
  by presenting a flexible and parametric typing framework,
  which we systematically applied
  to three evaluation strategies of the pure $\l$-calculus:
  head, leftmost-outermost, and maximal.

In the case of leftmost-outermost evaluation, we have also developed 
the traditional shrinking approach which does not make use
of tight constants. One of the results is that the number of
(leftmost) evaluation steps can be measured using only the (sizes)
of the types of the final typing judgement, in contrast to
the size of the \emph{whole} typing derivation. Another point, is the connection between tight typings and minimal shrinking typings without tight constants.

In the case of maximal evaluation, we have circumvented the 
traditional techniques to show strong normalisation:
by focusing on the maximal deterministic strategy, we do
not require any use of memory operator or
subtyping for abstractions to recover subject reduction.

We have also extended our (pure) typing framework
to linear head evaluation, presented in the linear substitution calculus (LSC). 
The result is that tight  typings 
naturally  encode evaluation in the LSC,
which can be seen as the natural 
computing devide behind multi types. In particular, and surprisingly, exact bounds for head and linear head evaluation rely on the same type system. 

Different future directions are suggested by our contribution.
It is natural to extend this framework to other evaluation strategies
such as call-by-value and call-by-need. Richer
programming features such as pattern matching, or control operators
deserve also special attention.

\begin{acks}                            
  We are grateful to Alexis Bernadet, Giulio Guerrieri and anonymous reviewers for useful discussions and comments.
  This work has been partially funded by the ANR JCJC grant COCA HOLA (ANR-16-CE40-004-01).
\end{acks}



\longshort{\newpage
  \appendix
  \section{Appendix: Head and Leftmost Evaluation}

\subsection{Tight Correctness}

\gettoappendix {l:tight-spreading-spi-ske}
\begin{proof}
By induction on $\sysneutral\system\tm$. Cases:
\begin{itemize}
	\item \emph{Variable}, \ie $\tm = \var$. Then $\typctx = \var
          : \mult \type$, and $\type$ is tight because $\typctx$ is
          tight by hypothesis.

	\item \emph{Head application}, \ie $\tm = \tmtwo \tmthree$ and $\sysneutral\hd\tm$ because $\sysneutral\hd\tmtwo$. The
          last rule of $\tderiv$ can only be $\appsteps$ or
          $\appresult<\spi>$. In both cases the left subterm $\tmtwo$ is
          typed by a sub-derivation $\tderiv' \exder \Deri[(\steps',
            \result')] {\typctx_\tmtwo}{\tmtwo}{\typetwo}$ such that
          all types in $\typctx_\tmtwo$ appear in $\typctx$, and so
          they are all tight by hypothesis. Since $\sysneutral\hd\tmtwo$, we can apply the \ih and obtain that
          $\typetwo$ is tight. The only possible case is then  $\typetwo = \neutral$ and
          the last rule of $\tderiv$ is then  $\appresult$. Then $\type =\typetwo=
          \neutral$.
          
          \item \emph{LO application}, \ie $\tm = \tmtwo \tmthree$ and $\sysneutral\lo\tm$ because $\sysneutral\lo\tmtwo$ and $\sysnormal\lo\tmthree$. The reasoning of the previous case applies here too, because it does not depend on the right subterm $\tmtwo$ (the only difference, of course, is that the last rule of $\tderiv$ can only be $\appsteps$ or
          $\appresult<\lo>$). 
    
\end{itemize}
\end{proof}

\gettoappendix {prop:tight-normal-forms-indicies}
\begin{proof}
By induction on $\tm$. Note that  $\sysneutralpr\system$ implies $\sysnormalpr\system$ and so we can apply the \ih when $\sysneutralpr\system$ holds on some subterm of $\tm$. If $\sysnormal\system\tm$ because $\sysneutral\system\tm$ there are three cases:
  \begin{itemize}
    \item \emph{Variable}, \ie $\tm = \var$. Then $\tderiv$ has the following form and evidently verifies all the points of the statement:
 
 $$\infer[\ax]{\Deri[(0, 0)] {\var \col \mult\type} \var \type}{}$$
  
    \item \emph{Head application}, \ie $\tm = \tmtwo \tmthree$ and $\hdneutral\tmtwo$. Cases of the last rule of $\tderiv$:
    \begin{itemize}
    \item \emph{$\appsteps$ rule}: 
    \[\begin{array}{c}
\infer[\appsteps]{\Deri
[(\steps_\tmtwo + \steps_\tmthree + 1, \spine_\tmtwo + \spine_\tmthree)]
{\typctx_\tmtwo \mplus \typctx_\tmthree}
{\tmtwo \tmthree} \type}
{ 	\tderiv_\tmtwo \exder[\hd] { \Deri[(\steps_\tmtwo, \spine_\tmtwo)] {\typctx_\tmtwo} \tmtwo {\M \rightarrow \type}}
\quad
\tderiv_\tmthree \exder[\hd] \Deri[(\steps_\tmthree, \spine_\tmthree)] {\typctx_\tmthree} \tmthree{\M}
}\\\\
\end{array}
\]
with $\steps = \steps_\tmtwo + \steps_\tmthree + 1$, $\spine = \spine_\tmtwo + \spine_\tmthree$,  and $\typctx = \typctx_\tmtwo \mplus \typctx_\tmthree$.
		\begin{enumerate}
		  \item \emph{Size bound}: by \ih, $\hdsize\tmtwo \leq \hdsize{\tderiv_\tmtwo}$, from which it follows $\hdsize\tm = \hdsize\tmtwo +1  \leq_{\ih} \hdsize{\tderiv_\tmtwo} + 1 \leq \hdsize\tderiv$.

		  \item \emph{Tightness}: we show that this case is impossible. If $\tderiv$ is tight then $\typctx = \typctx_\tmtwo \mplus \typctx_\tmthree$ is a tight typing context, and so is $\typctx_\tmtwo$. Since $\sysneutral\hd{\tm}$, the tight spreading on neutral terms (\reflemma{tight-spreading-spi-ske}) implies that the type of $\tmtwo$ in $\tderiv_\tmtwo$ has to be tight---absurd.
		  
		  \item \emph{Neutrality}: $\hdneutral\tm$ holds by hypothesis.
		\end{enumerate}

    \item \emph{$\appresult<\hd>$ rule}: 
    $$
\infer[\appresult]{\Deri[(\steps, \spine_\tmtwo+1)] {\typctx} {\tmtwo \tmthree} \neutype}
{
\tderiv_\tmtwo \exder[\hd] {\Deri[(\steps, \spine_\tmtwo)] \typctx \tmtwo \neutype
}}
$$
with $\spine = \spinetwo +1$. 
		\begin{enumerate}
		  \item \emph{Size bound}: by \ih, $\hdsize\tmtwo \leq \hdsize{\tderiv_\tmtwo}$. Then $\hdsize\tm = \hdsize\tmtwo + 1 \leq_{\ih} \hdsize{\tderiv_\tmtwo} + 1 = \hdsize\tderiv$.

		  \item \emph{Tightness}: if $\tderiv$ is tight, then $\tderiv_\tmtwo$ is tight and by \ih $\spine_\tmtwo = \hdsize\tmtwo$ and $\steps = 0$. Then, $\spine = \spine_\tmtwo +1 =_{\ih} \hdsize\tmtwo + 1 = \hdsize{\tmtwo \tmthree} = \hdsize\tm$.

		  \item \emph{Neutrality}: $\hdneutral\tm$ holds by hypothesis.
		\end{enumerate}
    \end{itemize}
    
    \item \emph{LO application}, \ie $\tm = \tmtwo \tmthree$, $\loneutral\tmtwo$ and $\lonormal\tmthree$. The case is essentially as the previous one, with a minor change in the case of a $\appresult<\lo>$ rule---we spell out all details anyway. Cases of the last rule of $\tderiv$:
    \begin{itemize}
    \item \emph{$\appsteps$ rule}: 
    \[\begin{array}{c}
\infer[\appsteps]{\Deri
[(\steps_\tmtwo + \steps_\tmthree + 1, \spine_\tmtwo + \spine_\tmthree)]
{\typctx_\tmtwo \mplus \typctx_\tmthree}
{\tmtwo \tmthree} \type}
{ 	\tderiv_\tmtwo \exder[\lo] { \Deri[(\steps_\tmtwo, \spine_\tmtwo)] {\typctx_\tmtwo} \tmtwo {\M \rightarrow \type}}
\quad
\tderiv_\tmthree \exder[\lo] \Deri[(\steps_\tmthree, \spine_\tmthree)] {\typctx_\tmthree} \tmthree{\M}
}\\\\
\end{array}
\]
with $\steps = \steps_\tmtwo + \steps_\tmthree + 1$, $\spine = \spine_\tmtwo + \spine_\tmthree$,  and $\typctx = \typctx_\tmtwo \mplus \typctx_\tmthree$.
		\begin{enumerate}
		  \item \emph{Size bound}: by \ih, $\losize\tmtwo \leq \losize{\tderiv_\tmtwo}$, from which it follows $\losize\tm = \losize\tmtwo +1  \leq_{\ih} \losize{\tderiv_\tmtwo} + 1 \leq \losize\tderiv$.

		  \item \emph{Tightness}: we show that this case is impossible. If $\tderiv$ is tight then $\typctx = \typctx_\tmtwo \mplus \typctx_\tmthree$ is a tight typing context, and so is $\typctx_\tmtwo$. Since $\sysneutral\lo{\tm}$, the tight spreading on neutral terms (\reflemma{tight-spreading-spi-ske}) implies that the type of $\tmtwo$ in $\tderiv_\tmtwo$ has to be tight---absurd.
		  
		  \item \emph{Neutrality}: $\loneutral\tm$ holds by hypothesis.
		\end{enumerate}

    \item \emph{$\appresult<\lo>$ rule}: 
    $$
\infer[\appresult]{\Deri[(\steps_\tmtwo + \steps_\tmthree, \spine_\tmtwo + \spine_\tmthree + 1)] {\typctx_\tmtwo \mplus \typctx_\tmthree} {\tmtwo \tmthree} \neutype}
{
\tderiv_\tmtwo \exder[\lo] {\Deri[(\steps_\tmtwo, \spine_\tmtwo)] {\typctx_\tmtwo} \tmtwo \neutype}
\quad
\tderiv_\tmthree \exder[\lo] {\Deri[(\steps_\tmthree, \spine_\tmthree)] {\typctx_\tmthree} \tmthree \tight
}}
$$
with $\steps = \steps_\tmtwo + \steps_\tmthree$, $\spine = \spine_\tmtwo + \spine_\tmthree + 1$,  and $\typctx = \typctx_\tmtwo \mplus \typctx_\tmthree$.
		\begin{enumerate}
		  \item \emph{Size bound}: by \ih, $\losize\tmtwo \leq \losize{\tderiv_\tmtwo}$ and $\losize\tmthree \leq \losize{\tderiv_\tmthree}$. Then $\losize\tm = \losize\tmtwo + \losize\tmthree \leq_{\ih} \losize{\tderiv_\tmtwo} + \losize{\tderiv_\tmthree} = \losize\tderiv$.

		  \item \emph{Tightness}: if $\tderiv$ is tight, then $\tderiv_\tmtwo$ and $\tderiv_\tmthree$ are tight and by \ih $\spine_\tmtwo = \losize\tmtwo$ and $\steps_\tmtwo = 0$, and $\spine_\tmthree = \losize\tmthree$ and $\steps_\tmthree = 0$. Then, $\spine = \spine_\tmtwo + \spine_\tmthree + 1 =_{\ih} \losize\tmtwo + \losize\tmthree + 1 = \losize{\tmtwo \tmthree} = \losize\tm$ and $\steps = \steps_\tmtwo + \steps_\tmthree = 0 + 0 = 0$.
		  
		  \item \emph{Neutrality}: $\loneutral\tm$ holds by hypothesis.
		\end{enumerate}
    \end{itemize}
  \end{itemize}\bigskip

Now, there is only one case left for $\sysnormal\system\tm$:
  \begin{itemize}
    \item \emph{Abstraction}, \ie $\tm = \la\var\tmtwo$ and $\sysnormal\system\tm$ 
 because $\sysnormal\system\tmtwo$. Cases of the last rule of $\tderiv$:
    \begin{itemize}
    \item \emph{$\funsteps$ rule}: 
  $$\infer[\funsteps]{
  \Deri[(\steps_\tmtwo+ 1, \spine)]
{\typctx} { \la\var\tmtwo } {\M \rightarrow \type}
}
{
\tderiv_\tmtwo \exder[\system] {\Deri[(\steps_\tmtwo, \spine)] {\typctx; \var \col \M} \tmtwo \type}
}$$
with $\steps = \steps_\tmtwo+ 1$. 
\begin{enumerate}
  \item \emph{Size bound}: by \ih, $\syssize\system\tmtwo \leq \syssize\system{\tderiv_\tmtwo}$. Then, $\syssize\system\tm = \syssize\system\tmtwo +1 \leq_{\ih} \syssize\system{\tderiv_\tmtwo} + 1 = \syssize\system\tderiv$.

  \item \emph{Tightness}: $\tderiv$ is not tight, so the statement trivially holds.
  
    \item \emph{Neutrality}: $\type \neq \neutral$, so the statement trivially holds.
\end{enumerate}
    \item \emph{$\funresult$ rule}: 
      $$\infer[\funresult]{\Deri[(\steps, \spine_\tmtwo +1)] {\typctx} { \la\var\tmtwo } \abstype}
{\tderiv_\tmtwo \tri {\Deri[(\steps, \spine_\tmtwo)] {\typctx; \var \col \mtight} \tmtwo \tight }}$$
with $\spine = \spine_\tmtwo +1$. 

\begin{enumerate}
  \item \emph{Size bound}: by \ih, $\syssize\system\tmtwo \leq \syssize\system{\tderiv_\tmtwo}$.
Then, $\syssize\system\tm = \syssize\system\tmtwo +1 \leq_{\ih} \syssize\system{\tderiv_\tmtwo} + 1 = \syssize\system\tderiv$.

  \item \emph{Tightness}: if $\tderiv$ is tight, then $\tderivtwo$ is tight and by \ih $\spinetwo = \syssize\system\tmtwo$ and $\steps = 0$. Then, $\spine = \spine_\tmtwo +1 =_{\ih} \syssize\system\tmtwo + 1 = \syssize\system\tm$.
  
      \item \emph{Neutrality}: $\type \neq \neutral$, so the statement trivially holds.
\end{enumerate}
 \end{itemize}
  \end{itemize}
  \end{proof}

\gettoappendix {l:typing-substitution-overview}
\begin{proof}
Let $\tderiv_\tm$ and $\tderiv_\tmtwo$ be the typing derivations of final judgements $\Deri[(\steps', \result')] {\typctxtwo; \var: {\M}} {\tm}{\type}$ and $\Deri[(\steps, \result)]{\typctx} {\tmtwo}{{\M}}$ in system $\system \in \set{\spi, \ske}$. We prove that there exists a typing $\tderiv_{\tm\isub\var\tmtwo} \exder[\system]\Deri[(\steps + \steps', \result + \result')]
          {\typctx\mplus\typctxtwo}{\tm\isub\var\tmtwo }{\type}$. The proof is by induction on $\tderiv_\tm$, distinguishing the two systems only on the rule on which they differ, namely $\appresult<\spi>$ and $\appresult<\ske>$. Let us write $\M$ as $\mult{\typetwo_i}_{\iI}$ for some (potentially empty) set of indices $\indset$. We reason by cases of the last rule of $\tderiv_\tm$:
\begin{itemize}
 \item \emph{Rule $\ax$}. Two cases:
 \begin{enumerate}
   \item $\tm = \var$, and so $\tm \isub\var\tmtwo = \var
     \isub\var\tmtwo = \tmtwo$ and $\tderiv_\tm \exder[\system] {\Deri[(0, 0)]
       {\var \col \single \type} \var \type}$. Thus, $i = 1$ and $\M = \mult{\type}$, and the
     second hypothesis comes from $\tderiv_\tmtwo \exder[\system] \tyjn{(\stepstwo,
         \spinetwo)}{\tmtwo} \typctx \type$ followed
     by a unary $\many$ rule. Given that $\var\isub\var\tmtwo =
     \tmtwo$, $\spine + \spinetwo = 0 +\spinetwo  =
     \spinetwo$, and $\steps +\steps' = \steps'$, the typing derivation $\tderiv_{\tm\isub\var\tmtwo} \defeq \tderiv_\tmtwo$
     satisfies the requirements.
 
  \item $\tm = \vartwo$, and so $\M = \emm$, $\stepstwo = \spinetwo =  0$ and $\tm \isub\var\tmtwo = \vartwo \isub\var\tmtwo = \vartwo$. The it is enough to take $\tderiv_{\tm\isub\var\tmtwo} \defeq \tderiv_\tm$.
 \end{enumerate}
 
 \item \emph{Rule $\funsteps$}. Then $\tm = \la\vartwo\tmthree$, and $\tderiv_\tm$ is such that $\steps = \steps_\tmthree +1$,  and it has the following form:
 \[\begin{array}{c}
 \infer[\funsteps]{
 	\Deri[(\steps_\tmthree+1, \spine)] { \typctxtwo;  \var: \M} {\la\vartwo\tmthree} {\N \rightarrow \typetwo}
	}{
	\tderiv_\tmthree \exder[\system] \Deri[(\steps_\tmthree, \spine)] {\typctxtwo; \var: \M;  \vartwo: \N} \tmthree \typetwo
	} 
 \end{array}\]
By \ih there exists $\tderiv_{\tmthree\isub\var\tmtwo}$ such that 
 \[\begin{array}{c}
\tderiv_{\tmthree\isub\var\tmtwo}\exder[\system]  {\tyjn{ (\steps_\tmthree + \stepstwo, \spine + \spinetwo)}{\tmthree\isub\var\tmtwo }{\typctx \mplus \typctxtwo;  \vartwo: \N}{\typetwo}}
\end{array}\]
from which by applying $\funsteps$ we obtain:
 \[\begin{array}{c}
\tderiv_{\tm\isub\var\tmtwo} \exder[\system] {\tyjn{ (\steps_\tmthree + \stepstwo+1, \spine + \spinetwo)}{\la\vartwo\tmthree\isub\var\tmtwo }{\typctx \mplus \typctxtwo}{\N \rightarrow \typetwo}}
\end{array}\]
that satisfies the requirements because 
\begin{enumerate}
 \item $\steps_\tmthree + \stepstwo+1 = \steps + \stepstwo$,
 \item $\spine + \spinetwo = \spine + \spinetwo$.
 \end{enumerate}

 \item \emph{Rule $\funresult$}. Then $\tm = \la\vartwo\tmthree$, and $\tderiv_\tm$ is such that $\spine = \spinethree+1$ and it has the following form:
 \[\begin{array}{c}
 \infer[\funsteps]{
 	\Deri[(\steps, \spinethree+1)] { \typctxtwo; \var: \M} {\la\vartwo\tmthree} {\abstype}
	}{
	\tderiv_\tmthree \exder[\system] \Deri[(\steps, \spinethree)] {\typctxtwo; \var: \M; \vartwo: \mtight } \tmthree \nf
	} 
 \end{array}\]
By \ih there exists $\tderiv_{\tmthree\isub\var\tmtwo}$ such that
 \[\begin{array}{c}
\tderiv_{\tmthree\isub\var\tmtwo} \exder[\system] {\tyjn{ (\steps + \stepstwo, \spinethree + \spinetwo)}{\tmthree\isub\var\tmtwo }{\typctx \mplus \typctxtwo; \vartwo: \mtight}{\nf}}
\end{array}\]
from which by applying $\funresult$ we obtain:
 \[\begin{array}{c}
\tderiv_{\tm\isub\var\tmtwo} \exder[\system] {\tyjn{ (\steps + \stepstwo , \spinethree + \spinetwo +1)}{\la\vartwo\tmthree\isub\var\tmtwo }{\typctx \mplus \typctxtwo }{\abstype}}
\end{array}\]
that satisfies the requirements because $\spinethree + \spinetwo +1 = \spine + \spinetwo $.

 \item \emph{Rule $\appsteps$}. Then $\tm = \tmthree \tmfour$. The
   left premise of the $\appsteps$ rule in $\tderiv_\tm$ assigns a  type $\tmthree :
   \N \rightarrow \type$ and the right premise is a $\many$ rule
   with $k \defeq \size\N$ premises. The multiset $\M$ assigned to
   $\var$ can be partioned in $k+1$ (potentially empty) multisets
   $\M_1,\dots \M_k$ and $\M_\tmthree$, to be distributed among the
   premises of the $\appsteps$ rule of $\tderiv$ as follows (if k=0
   then the $\many$ rule has no premises):
  \[\small\begin{array}{c}
  \infer[\appsteps]{
  \Deri
 [(\stepsthree + \steps^\circ + 1, \spinethree + \spine^\circ)]
 {\typctxtwo_\tmthree + \typctxtwo_\tmfour; \var: \M}
 {\tmthree \tmfour} \type
 }
 {\tderiv_\tmthree \exder[\system] \Deri[(\stepsthree, \spinethree)] {\typctxtwo_\tmthree; \var: \M_\tmthree} \tmthree {\N \rightarrow \type} 
 \quad
 \infer[\many]{
 \tderiv_\tmfour \exder[\system] \Deri[(\steps^\circ, \spine^\circ)] {\typctxtwo_\tmfour; \var: \M^\circ} \tmfour {\N} 
 }
 {
 \tderiv_\tmfour^j \exder[\system] (\Deri[(\steps_j, \spine_j)] {\typctxtwo_\tmfour^j; \var: \M_j} \tmfour {\typethree_j})_{j=1,\dots, k} 
 }
 }
 \end{array}\]
where the notations satisfy:
 \begin{enumerate}
 \item[] $\steps^\circ = \sum_{j=1}^k \steps_j$, $\spine^\circ = \sum_{j=1}^k \spine_j$,  $\typctxtwo_\tmfour = \mplus_{j=1}^k \typctxtwo_\tmfour^j$, and $\M^\circ = \mplus_{j=1}^k \M_j$,  
   \item[] $\typctxtwo = \typctxtwo_\tmthree \mplus \typctxtwo_\tmfour$,
   \item[] $\steps = \stepsthree + \steps^\circ + 1$, 
   \item[] $\spine = \spinethree + \spine^\circ$, and
   
 \end{enumerate}

Moreover the derivation $\tderivtwo \exder[\system] \tyjn{(\stepstwo, \spinetwo)}{\tmtwo}{\typctx}{\M}$ of the second hypothesis gives rise to a derivation for each one of the multisets $\M_1,\dots \M_k$ and $\M_\tmthree$ in which $\M$ is partitioned. Let them be $ \tderivtwo_\tmthree \exder[\system]\tyjn{(\steps_\tmthree, \spine_\tmthree)}{\tmtwo}{\typctx_\tmthree}{\M_\tmthree}$ and $({\tderivtwo_\tmfour}^j \exder[\system] \tyjn{(\stepstwo_j, \spinetwo_j)}{\tmtwo}{\typctx_j}{\M_j})_{j=1,\ldots,k}$ with
\begin{itemize}
	\item[] $\stepstwo = \steps_\tmthree + \sum_{j=1}^k \stepstwo_j$,
	\item[] $\spinetwo = \spine_\tmthree + \sum_{j=1}^k \spinetwo_j$, and
	\end{itemize}

Now, by \ih we can substitute these derivations $\tderivtwo_\tmthree$ and ${\tderivtwo_\tmfour}^j$ into the premises of the $\appsteps$ rule, obtaining the derivations $\tderiv_{\tmthree \isub\var\tmtwo}$, $\tderiv_{\tmfour \isub\var\tmtwo}^j$, and $\tderivfour_{\tmfour \isub\var\tmtwo}$ such that 
 $$\tderiv_{\tmthree \isub\var\tmtwo} \exder[\system]  \Deri[(\stepsthree + \steps_\tmthree, \spinethree + \spine_\tmthree )] {\typctxtwo_\tmthree \mplus \typctx_\tmthree} {\tmthree \isub\var\tmtwo} {\N \rightarrow \type} $$
$$\infer[\many]{
\tderivfour_{\tmfour \isub\var\tmtwo} \exder[\system] \Deri[(\sum_{j=1}^k(\steps_j + \stepstwo_j), \sum_{j=1}^k(\spine_j + \spinetwo_j ))] {\typctxtwo_\tmfour \mplus \typctx_\tmfour} {\tmfour \isub\var\tmtwo} {\mult{\typethree_j}_{j=1,\ldots,k}} 
}
{
(\tderiv_{\tmfour \isub\var\tmtwo}^j \exder[\system] \Deri[(\steps_j + \stepstwo_j, \spine_j + \spinetwo_j )] {\typctxtwo_\tmfour^j \mplus \typctx_j} {\tmfour \isub\var\tmtwo} {\typethree_j})_{j=1,\ldots,k} 
}$$
where $\typctx_\tmfour$ stands for $\mplus_{j=1}^k \typctx_j$. By applying $\appsteps$ we obtain:
 $$
 \tderivtwo \exder[\system] 
 \Deri[(\steps^*, \spine^*)]
{\typctxtwo_\tmthree \mplus \typctxtwo_\tmfour \mplus \typctx_\tmthree \mplus \typctx_\tmfour}
{\tmthree\isub\var\tmtwo \tmfour\isub\var\tmtwo = (\tmthree \tmfour)\isub\var\tmtwo} \type
$$
 where (remember that $\M$ splits into $\size \N +1$ multisets $\M_\tmthree$, and $\M_j$ with $j= 1,\ldots,k = \size \N$):
 \begin{itemize}
   \item $\typctxtwo_\tmthree \mplus \typctxtwo_\tmfour \mplus \typctx_\tmthree \mplus \typctx_\tmfour = \typctxtwo \mplus \typctx$, 
   \item 
      
   \[\begin{array}{llllll}
\steps^* &= &\stepsthree + \steps_\tmthree + \sum_{j=1}^k(\steps_j + \stepstwo_j) +1\\
& = &
\stepsthree + \underbrace{\sum_{j=1}^k\steps_j}_{\steps^\circ} + \underbrace{\steps_\tmthree + \sum_{j=1}^k \stepstwo_j}_{\stepstwo}  +1 \\
& = &
\stepsthree + \steps^\circ + \stepstwo + 1 \\
& = & 
\underbrace{\stepsthree + \steps^\circ +1}_{\steps}+ \stepstwo & = & \steps + \stepstwo
   \end{array}\]
   
   \item 
   \[\begin{array}{llll}
   \spine^* & = & \spinethree + \spine_\tmthree - \size{\M_\tmthree} + \sum_{j=1}^k(\spine_j + \spinetwo_j - \size{\M_j}) \\
   & = &
   \underbrace{\spinethree + \sum_{j=1}^k\spine_j}_{\spine} + \underbrace{\spine_\tmthree + \sum_{j=1}^k\spinetwo_j}_{\spinetwo} - \underbrace{(\size{\M_\tmthree}+ \sum_{j=1}^k\size{\M_j})}_{\size\M} \\
   & = &
   \spine + \spinetwo - \size\M
   \end{array}\]

 \end{itemize}
 
 \item \emph{Rule $\appresult<\spi>$}. Then $\tm = \tmthree \tmfour$ and $\tderiv_\tm$ is such that $\spine = \spine_\tmthree + 1$ and it has the folliwing form:
  $$ 
 \infer[\appresult<\spi>]{ 
 \Deri[(\steps, \spine_\tmthree+1)] {\typctxtwo; \var: \M} {\tmthree \tmfour} \neutype} 
{\tingD{\tderiv_\tmthree}{
\Deri[(\steps, \spine_\tmthree)] {\typctxtwo; \var: \M} \tmthree \neutype}
}
$$
with $\spine=\spine_\tmthree+1$. By \ih we can substitute $\tderiv_\tmtwo$ into $\tderiv_\tmthree$ obtaining $\tderiv_{\tmthree\isub\var\tmtwo}$ such that 
$$\tderivfour \exder[\spi]
\Deri[(\steps + \stepstwo, \spine_\tmthree + \spinetwo  )] {\typctx\mplus\typctxtwo} {\tmthree \isub\var\tmtwo} \neutype$$
By applying $\appresult<\spi>$ we obtain
$$\tderivthree \exder[\spi]
\Deri[(\steps + \stepstwo, \spine_\tmthree  + \spinetwo + 1)] {\typctx\mplus\typctxtwo} {\tmthree \isub\var\tmtwo \tmfour\isub\var\tmtwo = (\tmthree \tmfour)\isub\var\tmtwo} \neutype$$
that satisfies the requirements because $\spine_\tmthree  + \spinetwo + 1 = \spine + \spinetwo $.

\item \emph{Rule $\appresult<\ske>$}.
Now, $\tm = \tmthree \tmfour$ and $\var: \MSigma {\typetwo_i} {\iI}$ splits into two multisets $\MSigma {\typetwo_i} {i \in I_\tmthree}$ and $\MSigma {\typetwo_i} {i \in I_\tmfour}$ so that $\tderiv$ has the following form:
 \[\small \begin{array}{c}
 \infer[\appresult<\ske>]{
 \Deri
[(\stepstwo + \stepsthree, \resulttwo + \resultthree + 1)]
{\typctxtwo_\tmthree \mplus \typctxtwo_\tmfour; \var: \MSigma {\typetwo_i} {\iI}}
{\tmthree \tmfour} \neutype
}
{
\tderiv_\tmthree \exder[\ske] \Deri[(\stepstwo, \resulttwo)] {\typctxtwo_\tmthree; \var: \MSigma {\typetwo_i} {i \in I_\tmthree}} \tmthree {\neutype} 
\quad
\tderiv_\tmfour \exder[\ske] \Deri[(\stepsthree, \resultthree)] {\typctxtwo_\tmfour; \var: \MSigma {\typetwo_i} {i \in I_\tmfour}} \tmfour {\nf} 
}
 \end{array}\]
 with 
 \begin{enumerate}
   \item[] $\steps = \stepstwo + \stepsthree$,
   \item[] $\result = \resulttwo + \resultthree +1 $, and
   \item[] $\typctxtwo = \typctxtwo_\tmthree \mplus \typctxtwo_\tmfour$.
 \end{enumerate}

 By \ih there exist $\tderivtwo_\tmthree$ and $\tderivtwo_\tmfour$ such that for appropriate $\typctx_\tmthree$, $\typctx_\tmfour$, $\steps_i$ and $\result_i$, we have:
 $$\tderivtwo_\tmthree \exder[\ske] \Deri[(\stepstwo +_{\iI_\tmthree} \steps_i, \resulttwo +_{\iI_\tmthree} \result_i, )] {\typctxtwo_\tmthree \mplus \typctx_\tmthree} {\tmthree \isub\var\tmtwo} {\neutype} $$
 $$\tderivtwo_\tmfour \exder[\ske] \Deri[(\stepsthree +_{\iI_\tmfour} \steps_i, \resultthree +_{\iI_\tmfour} \result_i)] {\typctxtwo_\tmfour \mplus \typctx_\tmfour} {\tmfour \isub\var\tmtwo} {\nf} $$
 from which by applying $\appresult<\ske>$ we obtain:
 $$
 \tderivtwo \exder[\ske] 
 \Deri[(\steps^*, \result^*)]
{\typctx \mplus \typctxtwo}
{\tmthree\isub\var\tmtwo \tmfour\isub\var\tmtwo = (\tmthree \tmfour)\isub\var\tmtwo} \neutype$$
 where:
 \begin{enumerate}
   \item[] $\steps^* = \stepstwo +_{\iI_\tmthree} \steps_i + \stepsthree +_{\iI_\tmfour} \steps_i = \underbrace{\stepstwo + \stepsthree}_{\steps} +_{\iI} \steps_i = \steps +_{\iI} \steps_i$, and
   \item[] $\result^* = \resulttwo +_{\iI_\tmthree} \result_i + \resultthree +_{\iI_\tmfour} \result_i  = \underbrace{\resulttwo + \resultthree + 1}_{\result} +_{\iI} \result_i = \result +_{\iI} \result_i$.
 \end{enumerate}
\end{itemize}

\end{proof}

\gettoappendix {prop:subject-reduction}

\begin{proof}
  We prove, by induction on $\tm\Rew{\system}\tmtwo$,
  the stronger statement:
  
  Assume
  $\tm\Rew{\system}\tmtwo$,
  $\tderiv\exder[\system] \Deri[(\steps, \result)]\typctx{\tm}\type$,
  $\tightpred\typctx$,
  and either $\tightpred\type$ or $\sysnotabs \system \tm$.
  
  Then there exists a typing
  $\tderivtwo\exder[\system] \Deri[(\steps-2, \result)]{\typctx}{\tmtwo}\type$.

  We prove this stronger statement by induction on $\tm\Rew{\system}\tmtwo$. Cases:
  \begin{itemize}
  \item Rule
    \[\infer{
      (\la\var \tmthree) \tmfour \Rew{\system} \tmthree \isub \var \tmfour
    }{}
    \]
    Assume 
    $\tderiv \exder[\system]\Deri[(\steps,\result)]
    {\typctx}{(\la\var \tmthree) \tmfour}\type$
    and $\tightpred\typctx$.
    The derivation $\tderiv$ must end with rule $\appsteps$,
    and the derivation of its premiss for $(\la\var \tmthree)$
    must end with $\funsteps$.
    Hence, there are two derivations
    $\tderiv_\tmthree\exder[\system]\Deri[
      (\steps_\tmthree, \result_\tmthree)
    ]{\typctx_\tmthree; \var: {\M}}{\tmthree}{\type}$
    and
    $\tderiv_\tmtwo\exder[\system]\Deri[
      (\steps_\tmtwo, \result_\tmtwo)
    ]{\typctx_\tmtwo}{\tmtwo}{\M}$,
    with $(\steps,\result) = (\steps_u+\steps_\tmfour+2, \result_u+\result_\tmfour)$
    and $\typctx=\typctx_\tmthree\mplus\typctx_\tmtwo$.
   Applying the substitution lemma (\reflemma{typing-substitution-overview}), we obtain $\tderiv'\exder[\system]\Deri[(\steps_u+\steps_\tmfour,\result_u+\result_\tmfour)]
    {\typctx}{\tmthree \isub \var \tmfour}\type$.

  \item Rule
    \[
    \infer{\la\var \tm \Rew{\system} \la\var \tmtwo}{
      \tm \Rew{\system} \tmtwo
    }
    \]
    Assume
    $\tderiv\exder[\system]\Deri[(\steps,\result)]
    {\typctx}{\la\var \tm}\type$
    and $\tightpred\typctx$.
    Since $\sysabs \system {\la\var \tm}$ we must have hypothesis $\tightpred\type$,
    and as $\tderiv$ must then finish with rule $\funresult$
    we must have a subderivation
    $\tderiv_\tm \exder[\system]
    \Deri[
      (\steps, \result-1)
    ]{\typctx, \var \col \mtight}{\tm}{\nf}$.
    As $\tightpred{\typctx, \var \col \mtight}$
    we can apply the \ih and get the premise of the derivation $\tderiv'$ below:
    \[
    \infer{
      \Deri[(\steps+2, \result)]{
        \typctx
      }{\la\var\tmtwo}\abstype
    }{
      \tderiv_\tmtwo \exder[\system]
      \Deri[(\steps+2, \result-1)]{\typctx, \var \col \mtight}{\tmtwo}\nf
    }
    \]

  \item Rule
    \[
    \infer{
      \tm \tmthree \Rew{\system} \tmtwo \tmthree
    }{
      \sysnotabs \system \tm \quad \tm \Rew{\system}\tmtwo 
    }
    \]
    Assume
    $\tderiv\exder[\system]\Deri[(\steps,\result)]
    {\typctx}{\tm \tmthree}\type$
    and $\tightpred\typctx$.
    The derivation $\tderiv$ must end
    with rule $\appsteps$ or $\appresult<\system>$.
    In the simple case of rule $\appresult<\spi>$,
    there is a subderivation
    $\tderiv_\tm\exder[\spi]\Deri[
      (\steps, \result-1)
    ]{\typctx}{\tm}{\neutype}$
    in $\tderiv$,
    and we can apply the \ih to get the premiss of the derivation $\tderiv'$ below:
    \[
    \infer{
      \Deri[
        (\steps-2, \result)
      ]{\typctx}{\tmtwo \tmthree}{\type}
    }{
      \tderiv_\tmtwo\exder[\spi]\Deri[
        (\steps-2, \result-1)
      ]{\typctx}{\tmtwo}{\neutype}
    }
    \]
    In the case of $\appsteps$ or $\appresult<\ske>$,
    there are derivations
    $\tderiv_\tm\exder[\system]\Deri[
      (\steps_\tm, \result_\tm)
    ]{\typctx_\tm}{\tm}{\type_\tm}$
    and
    $\tderiv_\tmthree\exder[\system]\Deri[
      (\steps_\tmthree, \result_\tmthree)
    ]{\typctx_\tmthree}{\tmthree}{\type_\tmthree}$,
    with $\typctx=\typctx_\tm\mplus\typctx_\tmthree$.
    Since $\tightpred{\typctx}$ we have $\tightpred{\typctx_\tm}$,
    and since $\sysnotabs \system \tm$
    we can apply the \ih to get the derivation
    $\tderiv_\tmtwo \exder[\system]
    \Deri[
      (\steps_\tm-2, \result_\tm)
    ]{\typctx_\tm}{\tmtwo}{\type_\tm}$ and build,
    using the same rule $\appsteps$ or $\appresult<\ske>$,
    the derivation $\tderiv'$ below:
    \[
    \infer{
      \Deri[
        (\steps-2, \result)
      ]{\typctx}{\tmtwo \tmthree}{\type}
    }{
      \tderiv_\tmtwo \exder[\system]
    \Deri[
      (\steps_\tm-2, \result_\tm)
    ]{\typctx_\tm}{\tmtwo}{\type_\tm}
    \quad
    \tderiv_\tmthree\exder[\system]\Deri[
      (\steps_\tmthree, \result_\tmthree)
    ]{\typctx_\tmthree}{\tmthree}{\type_\tmthree}
    }
    \]
    
  \item Rule
    \[
    \infer{
      \tmthree \tm \Rew{\ske}\tmthree \tmtwo
    }{
      \skneutral \tmthree \quad  \tm \Rew{\ske}\tmtwo
    }
    \]
    Assume
    $\tderiv\exder[\ske]\Deri[(\steps,\result)]
    {\typctx}{\tmthree \tm}\type$
    and $\tightpred\typctx$.
    The derivation $\tderiv$ must end
    with rule $\appsteps$ or $\appresult<\ske>$,
    and therefore there are two derivations
    $\tderiv_\tmthree\exder[\ske]\Deri[
      (\steps_\tmthree, \result_\tmthree)
    ]{\typctx_\tmthree}{\tmthree}{\type_\tmthree}$
    and
    $\tderiv_\tm\exder[\ske]\Deri[
      (\steps_\tm, \result_\tm)
    ]{\typctx_\tm}{\tm}{\type_\tm}$,
    for some types $\type_\tmthree$ and $\type_\tm$,
    with $\typctx=\typctx_\tmthree\mplus\typctx_\tm$.
    Since $\tightpred{\typctx}$ we have
    $\tightpred{\typctx_\tmthree}$ and $\tightpred{\typctx_\tm}$.
    Theorem~\ref{l:tight-spreading-spi-ske}
    concludes $\tightpred{\type_\tmthree}$ from $\skneutral \tmthree$.
    So the last rule of $\tderiv$ must be $\appresult<\ske>$,
    whence $\type=\neutype$ and $\type_\tm=\tight$.
    Therefore we can apply the \ih to get the derivation
    $\tderiv_\tmtwo \exder[\ske]
    \Deri[
      (\steps_\tm-2, \result_\tm)
    ]{\typctx_\tm}{\tmtwo}{\type_\tm}$
    and build,
    using the same rule $\appresult<\ske>$,
    the derivation $\tderiv'$ below:
    \[
    \infer{
      \Deri[
        (\steps-2, \result)
      ]{\typctx}{\tmthree\tmtwo}{\type}
    }{
    \tderiv_\tmthree\exder[\system]\Deri[
      (\steps_\tmthree, \result_\tmthree)
    ]{\typctx_\tmthree}{\tmthree}{\type_\tmthree}
    \quad
    \tderiv_\tmtwo \exder[\system]
    \Deri[
      (\steps_\tm-2, \result_\tm)
    ]{\typctx_\tm}{\tmtwo}{\type_\tm}
    }
    \]
  \end{itemize}
\end{proof}

\gettoappendix {thm:correctness}
\begin{proof} 
By induction on $\syssize\system\tderiv$---the proof is modular in $\system \in \set{\hd,\lo}$. If $\tm$ is a $\tostrat$ normal form  then by taking $\tmtwo \defeq \tm$ and 
$k=0$ the statement follows from the \emph{tightness} property of tight typings of normal forms (\refprop{tight-normal-forms-indicies}.2)---the \emph{moreover} part follows from the \emph{neutrality} property (\refprop{tight-normal-forms-indicies}.3). Otherwise, $\tm \tostrat \tmthree$ and by quantitative subject reduction (\refprop{subject-reduction}) there is a derivation $\tderivtwo \exder[\system] \Deri[(\steps-2, \spine)] {\typctx}{\tmthree}{\type}$. By \ih, there exists $\tmtwo$ such that $\sysnormal\system \tmtwo$ and $\tmthree
  \Rewn[(\steps-2)/2]{\system} \tmtwo$ and $\syssize\system\tmtwo = \result$. Just note that $\tm \tostrat \tmthree
  \Rewn[\steps/2-1]{\system} \tmtwo$, that is, $\tm 
  \Rewn[\steps/2]{\system} \tmtwo$. 
\end{proof}

\subsection{Tight Completeness}

\gettoappendix {prop:normal-forms-are-tightly-typable}
\begin{proof}

By induction on $\sysnormal\system\tm$. Cases:
  \begin{enumerate}
    \item \emph{Variable}, \ie $\tm = \var$. Then the following derivation evidently satisfies all points of the statement:
 $$\infer[\ax]{\Deri[(0,0)] {\var \col \single \neutype} \var \neutype}{}$$
  
    \item \emph{Abstraction}, \ie $\tm = \la\vartwo\tmtwo$ with $\hdnormal{\tmtwo}$. By \ih
      there is a tight derivation $\tderiv_\tmtwo \exder[\system] \Deri[(0,\syssize\system\tmtwo)] \typctxtwo \tmtwo
      \nf$. Since the derivation $\tderiv_\tmtwo$ is tight, the typing context $\typctxtwo$ has the shape $\typctx; \vartwo \col \mtight$ (potentially, $\vartwo \col \emptymset$). Then the following is a \precise derivation for
      $\la\vartwo\tmtwo$:
 $$\infer[\funresult]{\Deri[(0,\syssize\system\tmtwo+1)] {\typctx} {\la\vartwo\tmtwo} \abstype}
{\tderiv_\tmtwo   \exder[\system] \Deri[(0,\syssize\system\tmtwo)] {\typctx; \vartwo \col \mtight} \tmtwo \nf } $$
Moreover, $\tm$ is not neutral so the part about neutral terms is trivially true, while it is an abstraction and it is indeed typed with $\abstype$.

\item \emph{Application}, \ie $\tm = \tmtwo \tmthree$. Two sub-cases, depending on $\system \in \set{\hd,\lo}$.

\begin{itemize}
\item \emph{Head normal form}: $\hdnormal \tm$ implies $\hdneutral \tm$, that implies $\hdneutral \tmtwo$, that implies $\hdnormal \tmtwo$. By \ih, there is a \precise derivation 
$\tderivtwo \exder[\hd] \Deri[(0,\hdsize\tmtwo)] \typctx \tmtwo \nf$ typing $\tmtwo$ with $\neutype$. Then the following is a \precise derivation $\tderiv$ types $\tm = \tmtwo \tmthree$ with $\neutype$, and having as second index satisfies $\hdsize\tmtwo + 1 = \hdsize{\tmtwo\tmthree} = \hdsize\tm$, as required:
$$
\infer[\appresult<\hd>]{\Deri[(0,\hdsize\tmtwo+1)] {\typctx} {\tmtwo \tmthree} \neutype}
{\tderivtwo \exder[\hd] \Deri[(0,\hdsize\tmtwo)] \typctx \tmtwo \neutype}
$$
Moreover, $\hdneutral{\tm}$ and $\tderiv$ does indeed type $\tm$ with $\neutype$. Dually, $\tm$ is not an abstraction and so that point trivially holds.

\item \emph{LO normal form}: $\lonormal \tm$ implies $\loneutral \tm$, that implies $\loneutral \tmtwo$ and $\lonormal \tmthree$, and the first implies $\lonormal \tmtwo$. By \ih, there are \precise derivations 
\begin{itemize}
\item $\tderiv_\tmtwo \exder[\lo] \Deri[(0,\losize\tmtwo)] {\typctx_\tmtwo} \tmtwo \nf$ typing $\tmtwo$ with $\neutype$ (because $\loneutral{\tmtwo}$), and
\item $\tderiv_\tmthree \exder[\lo] \Deri[(0,\losize\tmthree)] {\typctx_\tmthree} \tmthree \nf$.
\end{itemize}
Then the following is a \precise derivation $\tderiv$ for $\tm = \tmtwo \tmthree$ whose second index satisfies $\losize\tmtwo + \losize\tmthree + 1 = \losize\tm$, as required:
$$
\infer[\appresult<\lo>]{\Deri[(0,\losize\tmtwo + \losize\tmthree + 1)] {\typctx_\tmtwo \mplus \typctx_\tmthree} {\tmtwo \tmthree} \neutype}
{\tderiv_\tmtwo \exder[\lo] \Deri[(0,\losize\tmtwo)] {\typctx_\tmtwo} \tmtwo \nf
\quad
\tderiv_\tmthree \exder[\lo] \Deri[(0, \losize\tmthree)] {\typctx_\tmthree} \tmtwo \nf
}\\\\
$$
Moreover, $\loneutral{\tm}$ and $\tderiv$ does indeed type $\tm$ with $\neutype$. Dually, $\tm$ is not an abstraction and so that point trivially holds.

  \end{itemize}
  \end{enumerate}
  
\end{proof}

\gettoappendix {l:anti-substitution}
\begin{proof}
By induction on $\tm$. Cases:
\begin{itemize}
  \item   
  \emph{Variable}, \ie $\tm = \vartwo$. Two subcases, depending on the identity of $\vartwo$:
  \begin{enumerate}
   \item $\var = \vartwo$. Then $\tm\isub\var\tmtwo = \var\isub\var\tmtwo = \tmtwo$, so that $\tderiv \exder[\system] \tyjn{(\steps, \spine)}{\tmtwo} \typctx \type$. There is only one possibility: $\size \M = 1$, $\tderiv_\tmtwo$ is 
   \[\begin{array}{c}
 \infer[\many]{
 	\tyjn{(\steps, \spine)}{\tmtwo} \typctx {\mult\type}
	}{
	\tderiv \exder[\system] \tyjn{(\steps, \spine)}{\tmtwo} \typctx \type
	} 
 \end{array}\]
 and $\tderiv_\tm$ is $$\infer[\ax]{\Deri[(0, 0)] {\var \col \single \type} \var \type}{}$$
   
    \item $\var \neq \vartwo$. Then $\tm\isub\var\tmtwo = \vartwo\isub\var\tmtwo = \vartwo$. There is only one possibility: $\size \M = 0$, $\tderiv_\tm$ is exactly $\tderiv$, that is, 
    $$\infer[\ax]{\Deri[(0, 0)] {\vartwo \col \single \type} \vartwo \type}{}$$
    and $\tderiv_\tmtwo$ is
    \[\begin{array}{c}
      \infer[\many]{
 	\tyjn{(0, 0)}{\tmtwo} {} {\emptymset}
	}{
	
	} 
 \end{array}\]
  \end{enumerate}
 
  \item \emph{Abstraction}, \ie $\tm = \la\vartwo\tmthree$. Then $ \tm \isub\var\tmtwo = \la\vartwo \tmthree \isub\var\tmtwo $. Two sub-cases, depending on the last rule of $\tderiv$:
  \begin{enumerate}
    \item \emph{Rule $\funsteps$}. Then $\tderiv$ has the following form:
 \[\begin{array}{c}
 \infer[\funsteps]{
 	\Deri[(\steps_{\tmthree \isub\var\tmtwo}+1, \spine)] { \typctx} {\la\vartwo\tmthree\isub\var\tmtwo } 
        {\mtypetwo \rightarrow \typefour}
	}{
	\tderiv_\tmtwo \exder[\system] \Deri[(\steps_{\tmthree \isub\var\tmtwo}, \spine)]  {\typctx;  \vartwo: \mtypetwo  } {\tmthree\isub\var\tmtwo} \typefour
	} 
 \end{array}\]
 with $\steps = \steps_{\tmthree \isub\var\tmtwo} +1$. By \ih there exist $\mtype$ and typing derivations 
$\tderiv_\tmthree \exder[\system] \tyjn{(\steps_\tmthree, \spine_\tmthree)} \tmthree {\typctxtwo_\tmthree;  \vartwo: \mtypetwo;  \var \col \mtype} \type$ 
and 
$\tderiv_\tmtwo \exder[\system] \tyjn{(\steps_\tmtwo, \spine_\tmtwo)}{\tmtwo } {\typctxtwo_\tmtwo} \mtype$
such that:
\begin{itemize}
\item \emph{Typing context}: $(\typctx;  \vartwo: \mtypetwo) = (\typctxtwo_\tmthree;  \vartwo: \mtypetwo \mplus \typctxtwo_\tmtwo)$;

\item \emph{Indices}: $(\steps_{\tmthree \isub\var\tmtwo}, \spine) = (\steps_\tmthree + \steps_\tmtwo, \spine_\tmthree + \spine_\tmtwo)$.
\end{itemize}
  Then the derivation $\tderiv_\tm$ defined as 
 \[\begin{array}{c}
 \infer[\funsteps]{
 	\Deri[(\steps_\tmthree+1, \spine_\tmthree)] { \typctx ;  \var: \mtype } {\la\vartwo\tmthree} {\mtypetwo \rightarrow \typefour}
	}{
	\tderiv_\tmthree \exder[\system] \Deri[(\steps_\tmthree, \spine_\tmthree)]  {\typctx;   \vartwo: \mtypetwo;  \var: \mtype  } \tmthree \typefour
	} 
 \end{array}\]
satisfies the statement with respect to $\steps_\tm \defeq \steps_\tmthree+1$ and $\spine_\tm \defeq \spine_\tmthree$
because:

\begin{itemize}
\item \emph{Typing context}: the \ih implies $\typctx = (\typctxtwo_\tmthree \mplus \typctxtwo_\tmtwo)$;
\item \emph{Indices}:
\begin{enumerate}
 \item $\steps_\tm + \steps_\tmtwo = \steps_\tmthree + 1 + \steps_\tmtwo =_{\ih} \steps_{\tmthree \isub\var\tmtwo} + 1 = \steps$,
 \item $\spine_\tm + \spine_\tmtwo = \spine_\tmthree + \spine_\tmtwo  =_{\ih} \spine_{\tmthree \isub\var\tmtwo} = \spine$.
\end{enumerate}
\end{itemize}

 \item \emph{Rule $\funresult$}. Then $\tderiv$ has the following form:
 \[\begin{array}{c}
 \infer[\funresult]{
 	\Deri[(\steps, \spine_{\tmthree\isub\var\tmtwo}+1)] { \typctx} {\la\vartwo\tmthree\isub\var\tmtwo } \abstype
	}{
	\tderiv_\tmtwo \exder[\system] \Deri[(\steps, \spine_{\tmthree\isub\var\tmtwo})]  {\typctx;  \vartwo: \mneutral} {\tmthree\isub\var\tmtwo} \nf
	} 
 \end{array}\]
 with $\spine = \spine_{\tmthree\isub\var\tmtwo} +1$. By \ih there exist $\mtype$ and typing derivations 
$\tderiv_\tmthree \tri \tyjn{(\steps_\tmthree, \spine_\tmthree)} \tmthree {\typctxtwo_\tmthree;  \vartwo: \mneutral;  \var \col \mtype} \nf$ 
and 
$\tderiv_\tmtwo \tri \tyjn{(\steps_\tmtwo, \spine_\tmtwo)}{\tmtwo } {\typctxtwo_\tmtwo} \mtype$
such that:
\begin{itemize}
\item \emph{Typing context}: $(\typctx; \vartwo: \mneutral) = (\typctxtwo_\tmthree; \vartwo: \mneutral  \mplus \typctxtwo_\tmtwo)$;

\item \emph{Indices}: $(\steps, \spine_{\tmthree\isub\var\tmtwo}) = (\steps_\tmthree + \steps_\tmtwo, \spine_\tmthree + \spine_\tmtwo)$.
\end{itemize}
  Then the derivation $\tderiv_\tm$ defined as 
 \[\begin{array}{c}
 \infer[\funsteps]{
 	\Deri[(\steps_\tmthree, \spine_\tmthree +1)] { \typctx; \var: \mtype } {\la\vartwo\tmthree} {\abstype}
	}{
	\tderiv_\tmthree \exder[\system] \Deri[(\steps_\tmthree, \spine_\tmthree)]  {\typctx;   \vartwo: \mneutral; \var: \mtype  } \tmthree \nf
	} 
 \end{array}\]
satisfies the statement with respect to $\spine_\tm \defeq \spine_\tmthree + 1$ because:
\begin{itemize}
\item \emph{Typing context}: the \ih implies $\typctx = (\typctxtwo_\tmthree \mplus \typctxtwo_\tmtwo)$
\item \emph{Indices}:
\begin{enumerate}
 \item $\steps_\tmthree + \steps_\tmtwo =_{\ih} \steps$,
 \item $\spine_\tm + \spine_\tmtwo = \spine_\tmthree + 1 + \spine_\tmtwo  =_{\ih} \spine_{\tmthree\isub\var\tmtwo} + 1 = \spine$.
 \end{enumerate}
\end{itemize}

 \end{enumerate}
 
 \item \emph{Application}, \ie $\tm = \tmthree \tmfour$. Then $ \tm \isub\var\tmtwo = \tmthree \isub\var\tmtwo  \tmfour \isub\var\tmtwo $. Three sub-cases, depending on the last rule of $\tderiv$: 
 \begin{enumerate}
 \item \emph{Rule $\appsteps$}. 
 Let $\tm = \tmthree \tmfour$ so that
$ \tm \isub\var\tmtwo = \tmthree \isub\var\tmtwo  \tmfour \isub\var\tmtwo $.
Then $\tderiv$ has the following form:

 \[\begin{array}{c}
 \infer[\appsteps]{\Deri[(\steps_1 + \steps_2 +1, \spine_1 + \spine_2)]
                  {\typctx_1 \uplus  \typctx_2}
                  {\tmthree\isub\var\tmtwo \tmfour\isub\var\tmtwo} {\type}
                  }
{
  { \tderiv_{\tmthree \isub\var\tmtwo} \exder[\system] \Deri[(\steps_1, \spine_1)]
                             {\typctx_1}
                             {\tmthree \isub\var\tmtwo}
                             {\tarrow{\M}{\type} }  }
\quad
 { \tderiv_{\tmfour \isub\var\tmtwo} \exder[\system]\Deri[(\steps_2, \spine_2)] {\typctx_2} {\tmfour \isub\var\tmtwo} {\M} }
}
 \end{array}\]
 with $\typctx = \typctx_1 \uplus  \typctx_2$, 
    $\steps = \steps_1 + \steps_2+1$, and
    $\spine = \spine_1 + \spine_2$. 
  
 By \ih applied to $\tmthree \isub\var\tmtwo$ and $\tmfour \isub\var\tmtwo$, there exist (disjoint) finite sets $\M_\tmthree$ and $\M_\tmfour$ and typing derivations:
 $$\tderiv_\tmthree \exder[\system] \Deri[(\steps_\tmthree, \spine_\tmthree)]
   {\typctxtwo_\tmthree; \var: \M_\tmthree} {\tmthree} {\tarrow{\M}{\type}} $$
$$\tderiv_\tmfour \exder[\system] \Deri[(\steps_\tmfour, \spine_\tmfour)] {\typctxtwo_\tmfour; \var: \M_\tmfour} {\tmfour} {\M} $$
$$\tderiv_\tmtwo^\tmthree \exder[\system] \tyjn{(\steps_\tmtwo^\tmthree , \spine_\tmtwo^\tmthree )}{\tmtwo } {\typctxthree_\tmthree} {\M_\tmthree}$$
$$\tderiv_\tmtwo^\tmfour \exder[\system] \tyjn{(\steps_\tmtwo^\tmfour, \spine_\tmtwo^\tmfour)}{\tmtwo } {\typctxthree_\tmfour} {\M_\tmfour}$$
 such that:
 \begin{itemize}
 \item \emph{Typing context}: $\typctx_1 = \typctxtwo_\tmthree \uplus \typctxthree_\tmthree$ and
   $\typctx_2 = \typctxtwo_\tmfour \uplus  \typctxthree_\tmfour$. 
 \item \emph{Indices}:
   $(\steps_1, \spine_1) = (\steps_\tmthree+\steps_\tmtwo^\tmthree, \spine_\tmthree+\spine_\tmtwo^\tmthree)$
   and $(\steps_2, \spine_2) = (\steps_\tmfour+\steps_\tmtwo^\tmfour, \spine_\tmfour+\spine_\tmtwo^\tmfour)$.
 \end{itemize}

The derivations $\tderiv_\tmtwo^\tmthree$ and $\tderiv_\tmtwo^\tmfour$ can be summed (by inverting their $\many$ final rule and reapplying a many rule to the union of the premises) obtaining a derivation
$\tderiv_\tmtwo \exder[\system] \tyjn{(\steps_\tmtwo, \spine_\tmtwo)}{\tmtwo } {\typctxthree} {\M }$,  
where $\typctxthree = \typctxthree_\tmthree \uplus \typctxthree_\tmfour$ and
$\steps_\tmtwo = \steps_\tmtwo^\tmthree+\steps_\tmtwo^\tmfour$ and
$\spine_\tmtwo = \spine_\tmtwo^\tmthree+\spine_\tmtwo^\tmfour$
and $\M = \M_\tmthree + \M_\tmfour $. 
We  then apply  $\appsteps$ to obtain the following derivation $\tderiv'$: 
 \[\begin{array}{c}
 \infer[\appresult]{
 \Deri
[(\steps_\tmthree + \steps_\tmfour+1, \spine_\tmthree + \spine_\tmfour)]
{\typctxtwo_\tmthree \uplus \typctxtwo_\tmfour; \var: \M_\tmthree + \M_\tmfour }
{\tmthree \tmfour} \nf}
{ \tderiv_\tmthree \exder[\system] \Deri[(\steps_\tmthree, \spine_\tmthree)]
   {\typctxtwo_\tmthree; \var: \M_\tmthree} {\tmthree} {\tarrow{\M}{\type}} \quad \quad \tderiv_\tmfour \exder[\system] \Deri[(\steps_\tmfour, \spine_\tmfour)] {\typctxtwo_\tmfour; \var: \M_\tmfour} {\tmfour} {\M}}
 \end{array}\]

 We let $\typctxtwo :=\typctxtwo_\tmthree \uplus \typctxtwo_\tmfour $,
 $\steps_\tm:=\steps_\tmthree+\steps_\tmfour $ and $\spine_\tm:=\spine_\tmthree+\spine_\tmfour+1$
 and then conclude because of the following statements: 
 \begin{enumerate}
 \item \emph{Typing context}: $\typctx = \typctx_1 \uplus  \typctx_2 =
   \typctxtwo_\tmthree \uplus \typctxthree_\tmthree \uplus
   \typctxtwo_\tmfour \uplus  \typctxthree_\tmfour = \typctxtwo \uplus \typctxthree$.
   \item \emph{Indices}: 
     $(\steps, \spine) = (\steps_1+\steps_2, \spine_1+
     \spine_2+1) = (\steps_\tm+\steps_\tmtwo, \spine_\tm+\spine_\tmtwo)$.
   \end{enumerate}

  \item \emph{Rule $\appresult<\lo>$}. Let $\tm = \tmthree \tmfour$ so that
$ \tm \isub\var\tmtwo = \tmthree \isub\var\tmtwo  \tmfour \isub\var\tmtwo $.
Then $\tderiv$ has the following form:

 \[\begin{array}{c}
 \infer[\appresult]{\Deri[(\steps_1 + \steps_2 , \spine_1 + \spine_2  + 1)]
                  {\typctx_1 \uplus  \typctx_2}
                  {\tmthree\isub\var\tmtwo \tmfour\isub\var\tmtwo} \nf
                  }
{
  { \tderiv_{\tmthree \isub\var\tmtwo} \exder[\system] \Deri[(\steps_1, \spine_1)]
                             {\typctx_1}
                             {\tmthree \isub\var\tmtwo}
                             {\neutype}  }
\quad
 { \tderiv_{\tmfour \isub\var\tmtwo} \exder[\system] \Deri[(\steps_2, \spine_2)] {\typctx_2} {\tmfour \isub\var\tmtwo} {\nf} }
}
 \end{array}\]
 with $\typctx = \typctx_1 \uplus  \typctx_2$, 
    $\steps = \steps_1 + \steps_2$, 
    $\spine = \spine_1 + \spine_2+1$. 
  
 By \ih applied to $\tmthree \isub\var\tmtwo$ and $\tmfour \isub\var\tmtwo$,  there exist (disjoint) finite sets $\M_\tmthree$ and $\M_\tmfour$ and typing derivations:
 $$\tderiv_\tmthree \exder[\lo]  \Deri[(\steps_\tmthree, \spine_\tmthree)]
   {\typctxtwo_\tmthree; \var: \M_\tmthree} {\tmthree} {\neutype} $$
$$\tderiv_\tmfour \exder[\lo]  \Deri[(\steps_\tmfour, \spine_\tmfour)] {\typctxtwo_\tmfour; \var: \M_\tmfour} {\tmfour} {\nf} $$
$$\tderiv_\tmtwo^\tmthree \exder[\lo] \tyjn{(\steps_\tmtwo^\tmthree , \spine_\tmtwo^\tmthree )}{\tmtwo } {\typctxthree_\tmthree} {\M_\tmthree}$$
$$\tderiv_\tmtwo^\tmfour \exder[\lo] \tyjn{(\steps_\tmtwo^\tmfour, \spine_\tmtwo^\tmfour)}{\tmtwo } {\typctxthree_\tmfour} {\M_\tmfour}$$
 such that:
 \begin{itemize}
 \item \emph{Typing context}: $\typctx_1 = \typctxtwo_\tmthree \uplus \typctxthree_\tmthree$ and
   $\typctx_2 = \typctxtwo_\tmfour \uplus  \typctxthree_\tmfour$. 
 \item \emph{Indices}:
   $(\steps_1, \spine_1) = (\steps_\tmthree+\steps_\tmtwo^\tmthree, \spine_\tmthree+\spine_\tmtwo^\tmthree)$
   and $(\steps_2, \spine_2) = (\steps_\tmfour+\steps_\tmtwo^\tmfour, \spine_\tmfour+\spine_\tmtwo^\tmfour)$.
 \end{itemize}

The derivations $\tderiv_\tmtwo^\tmthree$ and $\tderiv_\tmtwo^\tmfour$ can be summed (by inverting their $\many$ final rule and reapplying a many rule to the union of the premises) obtaining a derivation
$\tderiv_\tmtwo \exder[\lo] \tyjn{(\steps_\tmtwo, \spine_\tmtwo)}{\tmtwo } {\typctxthree} {\M }$,  
where $\typctxthree = \typctxthree_\tmthree \uplus \typctxthree_\tmfour$ and
$\steps_\tmtwo=\steps_\tmtwo^\tmthree+\steps_\tmtwo^\tmfour$ and
$\spine_\tmtwo= \spine_\tmtwo^\tmthree+\spine_\tmtwo^\tmfour$
and $\M = \M_\tmthree + \M_\tmfour $. 
We  then apply  $\appresult<\lo>$ to obtain the following derivation $\tderiv_\tm$: 
 \[\begin{array}{c}
 \infer[\appresult<\lo>]{
 \Deri
[(\steps_\tmthree + \steps_\tmfour, \spine_\tmthree + \spine_\tmfour +1)]
{\typctxtwo_\tmthree \uplus \typctxtwo_\tmfour; \var: \M_\tmthree + \M_\tmfour }
{\tmthree \tmfour} \nf}
{ \tderiv_\tmthree \exder[\lo]  \Deri[(\steps_\tmthree, \spine_\tmthree)]
   {\typctxtwo_\tmthree; \var: \M_\tmthree} {\tmthree} {\neutype} 
  \quad \quad    
  \tderiv_\tmfour \exder[\lo]  \Deri[(\steps_\tmfour, \spine_\tmfour)] {\typctxtwo_\tmfour; \var: \M_\tmfour} {\tmfour} {\nf}}
 \end{array}\]

 We let $\typctxtwo :=\typctxtwo_\tmthree \uplus \typctxtwo_\tmfour $,
 $\steps_\tm:=\steps_\tmthree+\steps_\tmfour $ and $\spine_\tm:=\spine_\tmthree+\spine_\tmfour+1$
 and then conclude because of the following statements: 
 \begin{enumerate}
 \item \emph{Typing context}: $\typctx = \typctx_1 \uplus  \typctx_2 =
   \typctxtwo_\tmthree \uplus \typctxthree_\tmthree \uplus
   \typctxtwo_\tmfour \uplus  \typctxthree_\tmfour = \typctxtwo \uplus \typctxthree$.
   \item \emph{Indices}: 
     $(\steps, \spine) = (\steps_1+\steps_2, \spine_1+
     \spine_2+1) = (\steps_\tm+\steps_\tmtwo, \spine_\tm+\spine_\tmtwo)$.
   \end{enumerate}

  \item \emph{Rule $\appresult<\hd>$}. Similar to the previous case, but simpler.
\end{enumerate}
 \end{itemize}
\end{proof}

\gettoappendix {prop:subject-expansion}

\begin{proof}
  We prove, by induction on $\tm\Rew{\system}\tmtwo$,
  the stronger statement:
  
  Assume
  $\tm\Rew{\system}\tmtwo$,
  $\tderiv\exder[\system] \Deri[(\steps, \result)]\typctx{\tmtwo}\type$,
  $\tightpred\typctx$,
  and either $\tightpred\type$ or $\sysnotabs \system \tm$.
  
  Then there exists a typing
  $\tderivtwo\exder[\system] \Deri[(\steps+2, \result)]{\typctx}{\tm}\type$.

  \begin{itemize}
  \item Rule
    \[\infer{
      (\la\var \tmthree) \tmfour \Rew{\system} \tmthree \isub \var \tmfour
    }{}
    \]
    Assume 
    $\tderiv \exder[\system]\Deri[(\steps,\result)]
    {\typctx}{\tmthree \isub \var \tmfour}\type$
    and $\tightpred\typctx$.
    By applying Lemma~\ref{l:anti-substitution}
    we get the premisses of the following derivation $\tderiv'$:
    \[
    \infer{
      \Deri[
        (\steps_u+\steps_\tmfour+2,\result_u+\result_\tmfour)
      ]{\typctx_u\mplus \typctx_\tmfour}{(\la\var \tmthree) \tmfour}\type
    }{
      \infer{
        \Deri[
          (\steps_u+1,\result_u)
        ]{\typctx_u}{\la\var\tmthree}{\M\rightarrow\type}
      }{
        \tderiv_\tmthree \exder[\system]
        \Deri[
          (\steps_u,\result_u)
        ]{\typctx_u, \var \col \M}{\tmthree}\type
      }
      \qquad
      \tderiv_\tmfour \exder[\system]
      \Deri[
        (\steps_\tmfour,\result_\tmfour)
      ]{\typctx_\tmfour}{\tmfour}{\M}
    }
    \]
    with $(\steps,\result) = (\steps_u+\steps_\tmfour, \result_u+\result_\tmfour)$
    and $\typctx=\typctx_u\mplus\typctx_\tmfour$.
    
  \item Rule
    \[
    \infer{\la\var \tm \Rew{\system} \la\var \tmtwo}{
      \tm \Rew{\system} \tmtwo
    }
    \]
    Assume
    $\tderiv\exder[\system]\Deri[(\steps,\result)]
    {\typctx}{\la\var \tmtwo}\type$
    and $\tightpred\typctx$.
    Since $\sysabs \system {\la\var \tm}$ we must have hypothesis $\tightpred\type$,
    and as $\tderiv$ must then finish with rule $\funresult$
    we must have a subderivation
    $\tderiv_\tmtwo \exder[\system]
    \Deri[
      (\steps, \result-1)
    ]{\typctx, \var \col \mtight}{\tmtwo}{\nf}$.
    As $\tightpred{\typctx, \var \col \mtight}$
    we can apply the \ih and get the premiss of the derivation $\tderiv'$ below:
    \[
    \infer{
      \Deri[(\steps+2, \result)]{
        \typctx
      }{\la\var\tm}\type
    }{
      \tderiv_\tm \exder[\system]
      \Deri[(\steps+2, \result-1)]{\typctx, \var \col \mtight}{\tm}\nf
    }
    \]

  \item Rule
    \[
    \infer{
      \tm \tmthree \Rew{\system} \tmtwo \tmthree
    }{
      \sysnotabs \system \tm \quad \tm \Rew{\system}\tmtwo 
    }
    \]
    Assume
    $\tderiv\exder[\system]\Deri[(\steps,\result)]
    {\typctx}{\tmtwo \tmthree}\type$
    and $\tightpred\typctx$.
    The derivation $\tderiv$ must end
    with rule $\appsteps$ or $\appresult<\system>$.
    In the simple case of rule $\appresult<\spi>$,
    there is a subderivation
    $\tderiv_\tmtwo\exder[\spi]\Deri[
      (\steps, \result-1)
    ]{\typctx}{\tmtwo}{\neutype}$
    in $\tderiv$,
    and we can apply the \ih to get the premiss of the derivation $\tderiv'$ below:
    \[
    \infer{
      \Deri[
        (\steps+2, \result)
      ]{\typctx}{\tm \tmthree}{\type}
    }{
      \tderiv_\tm\exder[\spi]\Deri[
        (\steps+2, \result-1)
      ]{\typctx}{\tm}{\neutype}
    }
    \]
    In the case of $\appsteps$ or $\appresult<\ske>$,
    there are derivations
    $\tderiv_\tmtwo\exder[\system]\Deri[
      (\steps_\tmtwo, \result_\tmtwo)
    ]{\typctx_\tmtwo}{\tmtwo}{\type_\tmtwo}$
    and
    $\tderiv_\tmthree\exder[\system]\Deri[
      (\steps_\tmthree, \result_\tmthree)
    ]{\typctx_\tmthree}{\tmthree}{\type_\tmthree}$,
    with $\typctx=\typctx_\tmtwo\mplus\typctx_\tmthree$.
    Since $\tightpred{\typctx}$ we have $\tightpred{\typctx_\tmtwo}$,
    and since $\sysnotabs \system \tm$
    we can apply the \ih to get the derivation
    $\tderiv_\tm \exder[\system]
    \Deri[
      (\steps_\tmtwo+2, \result_\tmtwo)
    ]{\typctx_\tmtwo}{\tm}{\type_\tmtwo}$ and build,
    using the same rule $\appsteps$ or $\appresult<\ske>$,
    the derivation $\tderiv'$ below:
    \[
    \infer{
      \Deri[
        (\steps+2, \result)
      ]{\typctx}{\tm \tmthree}{\type}
    }{
      \tderiv_\tm \exder[\system]
    \Deri[
      (\steps_\tmtwo+2, \result_\tmtwo)
    ]{\typctx_\tmtwo}{\tm}{\type_\tmtwo}
    \quad
    \tderiv_\tmthree\exder[\system]\Deri[
      (\steps_\tmthree, \result_\tmthree)
    ]{\typctx_\tmthree}{\tmthree}{\type_\tmthree}
    }
    \]
    
  \item Rule
    \[
    \infer{
      \tmthree \tm \Rew{\ske}\tmthree \tmtwo
    }{
      \skneutral \tmthree \quad  \tm \Rew{\ske}\tmtwo
    }
    \]
    Assume
    $\tderiv\exder[\ske]\Deri[(\steps,\result)]
    {\typctx}{\tmthree \tmtwo}\type$
    and $\tightpred\typctx$.
    The derivation $\tderiv$ must end
    with rule $\appsteps$ or $\appresult<\ske>$,
    and therefore there are two derivations
    $\tderiv_\tmthree\exder[\ske]\Deri[
      (\steps_\tmthree, \result_\tmthree)
    ]{\typctx_\tmthree}{\tmthree}{\type_\tmthree}$
    and
    $\tderiv_\tmtwo\exder[\ske]\Deri[
      (\steps_\tmtwo, \result_\tmtwo)
    ]{\typctx_\tmtwo}{\tmtwo}{\type_\tmtwo}$,
    for some types $\type_\tmthree$ and $\type_\tmtwo$,
    with $\typctx=\typctx_\tmthree\mplus\typctx_\tmtwo$.
    Since $\tightpred{\typctx}$ we have
    $\tightpred{\typctx_\tmthree}$ and $\tightpred{\typctx_\tmtwo}$.
    Theorem~\ref{l:tight-spreading-spi-ske}
    concludes $\tightpred{\type_\tmthree}$ from $\skneutral \tmthree$.
    So the last rule of $\tderiv$ must be $\appresult<\ske>$,
    whence $\type=\neutype$ and $\type_\tmtwo=\tight$.
    Therefore we can apply the \ih to get the derivation
    $\tderiv_\tm \exder[\ske]
    \Deri[
      (\steps_\tmtwo+2, \result_\tmtwo)
    ]{\typctx_\tmtwo}{\tm}{\type_\tmtwo}$
    and build,
    using the same rule $\appresult<\ske>$,
    the derivation $\tderiv'$ below:
    \[
    \infer{
      \Deri[
        (\steps+2, \result)
      ]{\typctx}{\tmthree\tm}{\type}
    }{
    \tderiv_\tmthree\exder[\system]\Deri[
      (\steps_\tmthree, \result_\tmthree)
    ]{\typctx_\tmthree}{\tmthree}{\type_\tmthree}
    \quad
    \tderiv_\tm \exder[\system]
    \Deri[
      (\steps_\tmtwo+2, \result_\tmtwo)
    ]{\typctx_\tmtwo}{\tm}{\type_\tmtwo}
    }
    \]
  \end{itemize}
\end{proof}

\gettoappendix {thm:completeness}
\begin{proof}
  By induction on $\tm \Rewn[k]{\system} \tmtwo$.
  If $k = 0$ the statement is given by the existence of tight typings for $\sysnormalpr\system$ terms
  (\refprop{normal-forms-are-tightly-typable}), that also provides the \emph{moreover} part.
  Let $k > 0$ and $\tm \Rew{\system} \tmthree \Rewn[k-1]{\system}\tmtwo$.
  By \ih, there exists a tight typing derivation
  $\tderivtwo \tri \tyjnpre{(2(k-1), \syssize\system\tmtwo)} \tmthree$.
  By subject expansion (\refprop{subject-expansion})
  there exists a typing derivation $\tderiv$ of $\tmthree$
  with the same types in the ending judgement of $\tderivtwo$---then $\tderiv$ is tight---and with indices $(2k, \syssize\system\tmtwo)$.
\end{proof}

  \section{Appendix: Maximal Evaluation}

\subsection{Tight Correctness}
\gettoappendix {prop:mxsubject-reduction}
\gettoappendix {prop:mxsubject-reductionproof}

\subsection{Tight Completeness}
\gettoappendix {prop:mxsubject-expansion}
\gettoappendix {prop:mxsubject-expansionproof}

\section{Appendix: Linear Head Evaluation}
\label{app:linear-spine}
\newcommand{\proofslinearhead}{./proofs/linear-head}

\gettoappendix {prop:memory-bievaluation}
\begin{proof}
  The determinism of $\tolh$ is straightforward. We prove here the characterisation of 
  $\lsp$-normal terms and $\lsp$-neutral terms.

  $\Rightarrow)$  Let $\tm$ be $\tolh$-normal.  Then $\tm$ has either a free head variable $\var$
  or a bound head variable. We then refine the general statement as
  follows:
  \begin{enumerate}
  \item If $\tm$ is $\tolh$-normal and
    has a free head variable $\var$ and  is not a (potentially) substituted abstraction, then
    $\lhneutralp\tm\var$.
  \item If $\tm$ is $\tolh$-normal and has a free head variable $\var$ and
    is  a (potentially) substituted abstraction, then
    $\lhnormalp\tm\var$.
  \item If $\tm$ is $\tolh$-normal has  a bound head variable, then
    $\lhnormalclose \tm$.
  \end{enumerate}

  We show simultaneously the three 
  statement by induction on terms.

  If $\tm$ is a variable, then it corresponds to case (1) and we
  conclude by rule $\lhvar$.

  If $\tm = \la \vartwo \tmtwo$, then $\tmtwo$ is also $\tolh$-normal.
  There are two cases: case (2) or (3). If $\la \var \tmtwo$ corresponds to case (2), then $\vartwo \neq \var$
  and $\tmtwo$ corresponds to case (1) or (2). In the first case
  the \ih(1) gives that 
  $\lhneutralp\tmtwo \var$ and thus we conclude by
  rules $\lhnn$ and  $\lhnlam$. In the second  case
  the \ih(2) gives that 
  $\lhnormalp\tmtwo \var$ and we conclude with rule $\lhnlam$.
  
  If  $\la \vartwo \tmtwo$ corresponds to case (3), then either
  $\tmtwo$ corresponds to case (3), or $\tmtwo$ corresponds to cases (1) or (2) with
  $\vartwo = \var$. In the first case we get that
  $\lhnormalclose\tmtwo$ by the \ih(3) and thus  $\lhnormalclose{\la \var \tmtwo}$
  by rule $\lhhclose$. In the second case we get  that
  $\lhneutralp\tmtwo\var$ by the \ih(1) (resp. $\lhnormalp\tmtwo\var$ by the \ih(2)).
  We conclude with rules $\lhnn$, $\lhnlam$ and $\lhlamm$ (resp. $\lhnlam$ and $\lhlamm$).

  If $\tm = \tmtwo \tmthree$, then  $\tmtwo$ is also $\tolh$-normal, otherwise rule $\lhabs$
  would apply, and $\tmtwo$ is not a (potentially) substituted abstraction,
  otherwise rule $\lhb$ would apply. The term $\tmtwo \tmthree$ necessarily corresponds to case (1)
  for some variable $\var$ and  the same for $\tmtwo$. 
  We thus obtain that  $\lhneutralp\tm\var$ by the \ih(1) and we conclude by rule $\lhnapp$.

  If $\tm = \tmtwo \esub\vartwo \tmthree$, $\tmtwo$ is also $\tolh$-normal, otherwise rule $\lhsub$
  would apply, and $\tmtwo$ has no free head variable $\vartwo$,
  otherwise rule $\lhs$ would apply. Then $\tmtwo\esub\vartwo \tmthree$ corresponds to one of cases (1)-(2)-(3). If $\tmtwo$ corresponds to (1), then
  $\lhneutralp\tmtwo\var$ by the \ih(1) and we conclude with rule $\lhnesub$.
  If $\tmtwo$ corresponds to (2), then
  $\lhnormalp\tmtwo\var$ by the \ih(2) and we conclude with rule $\lhnosub$.
  If $\tmtwo$ corresponds to (3), then
  $\lhnormalclose\tmtwo\var$ by the \ih(3) and we conclude with rule $\lhmsub$.

  Now, given $\tm$ in $\tolh$-normal: if case (1) holds we conclude $\lhneutralp \tm \var$ with
  the previous statement (1), then rules $\lhnn$ and $\lhnpn$; 
  if case (2) holds we conclude $\lhnormalp \tm \var$ with
  the previous statement (2), then rule $\lhnpn$; if case (3)
  holds we conclude $\lhnormalp \tm \var$ with
  the previous statement (3), then rule $\lhnmn$; 

  $\Leftarrow)$ By induction on $\lhnormal \tm$. We remark that
  two cases are possible: either $\lhnormalp \tm \var$ for some variable $\var$
  or $\lhnormalclose \tm $.   We then refine the statement as follows:

  \begin{enumerate}
  \item If $\lhneutralp \tm \var$, then $\tm$ is $\tolh$-normal and
        $\tm$ has a head free variable $\var$ and $\tm$ is not a (potentially) substituted abstraction.
  \item If $\lhnormalp \tm \var$, then $\tm$ is $\tolh$-normal and
        $\tm$ has a head free variable $\var$.
  \item If $\lhnormalclose \tm $, then $\tm$ is $\tolh$-normal and
        $\tm$ has a head bound variable. 
  \end{enumerate}

  We reason by induction on the definition.

  If $\lhneutralp \tm \var$ by rule $\lhvar$, then property (1) trivially holds.

  If $\lhneutralp {\tmtwo \tmthree} \var$ because
   $\lhneutralp {\tmtwo } \var$ by rule $\lhnapp$, then 
  by the \ih(1) $\tmtwo$ is $\tolh$-normal --so rule $\lhb$ does not apply-- and
  $\tmtwo$ has a head free variable $\var$ and is not a (potentially) substituted abstraction
  --so rule $\lhapp$ does not apply. Then $\tmtwo \tmthree$ is $\tolh$-normal,
  it has a head free variable $\var$ and is not a (potentially) substituted abstraction.
  
  If $ \lhneutralp {\tmtwo \esub\vartwo\tmthree} \var$ because
  $ \lhneutralp {\tmtwo } \var$ and $\vartwo \neq \var$ by rule $\lhnesub$,
  then   by the \ih(1) $\tmtwo$ is $\tolh$-normal --so rule $\lhsub$ does not apply-- and
  $\tmtwo$ has a head free variable $\var$ and is not a (potentially) substituted abstraction
  --so rule $\lhs$ does not apply. Then $\tmtwo  \esub\vartwo\tmthree$ is $\tolh$-normal,
  it has a head free variable $\var$ and is not a (potentially) substituted abstraction.
  
  If $ \lhnormalp \tm \var$ because $\lhneutralp \tm \var$ by rule $ \lhnn$,
  then by the \ih(1) $\tm$ is $\tolh$-normal and
  has a head free variable $\var$. We are then done for this case.

  If $\lhnormalp {\la\vartwo\tmtwo} \var$ because
  $ \lhnormalp \tm \var$ and $\vartwo \neq \var$ by rule $\lhnlam$, 
 then by the \ih(2) $\tmtwo$ is $\tolh$-normal --so that rule $\lhabs$ does not apply--
 and $\tmtwo$ has a head free variable $\var$. We conclude
 $\la\vartwo\tmtwo$ is $\tolh$-normal and  has a head free variable $\var$.

 If $\lhnormalp {\tmtwo \esub\vartwo\tmthree} \var$ because
 $ \lhnormalp \tmtwo \var$ and $\vartwo\neq\var$ by rule $\lhnosub$,
 then by the \ih(2) $\tmtwo$ is $\tolh$-normal --so that rule $\lhsub$ does not apply--
 and $\tmtwo$ has a head free variable $\var$ --so that rule $\lhs$ does not apply--. We conclude
 $\tmtwo \esub\vartwo\tmthree$ is $\tolh$-normal and  has a head free variable $\var$.

 If $\lhnormalclose {\la\var\tmtwo} $ because
 $\lhnormalp {\tmtwo} \var$ by rule $\lhlamm$,
 then by the \ih(2) $\tmtwo$ is $\tolh$-normal --so that rule $\lhabs$ does not apply--.
 We conclude $\la\var\tmtwo$ is $\tolh$-normal and  has a  bound head variable.

 If $\lhnormalclose {\la\vartwo\tmtwo}  $ because $\lhnormalclose \tmtwo $ by rule $\lhhclose$,
 then by the \ih(3) $\tmtwo$ is $\tolh$-normal --so that rule $\lhabs$ does not apply--
 and $\tmtwo$ has a bound head variable.
 We conclude $\la\var\tmtwo$ is $\tolh$-normal and  has a bound head  variable.

 If $\lhnormalclose {\tmtwo \esub\vartwo\tmthree}  $ because
 $ \lhnormalclose \tmtwo  $ by rule $\lhmsub$, then
 by the \ih(3) $\tmtwo$ is $\tolh$-normal --so that rule $\lhabs$ does not apply--
 and $\tmtwo$ has a bound head variable. 
 We conclude $\tmtwo \esub\vartwo\tmthree$ is $\tolh$-normal and  has a bound head  variable.

\end{proof}


\subsection{Tight Correctness}

\gettoappendix {l:lin-head-headvar-in-context}
 \begin{proof}
 \hfill
 \begin{enumerate}
 \item By induction on $\lhneutralp \tm\var$. Cases:
 \begin{itemize}
 	\item \emph{Variable}, \ie $\tm = \var$. Then $\tderiv$ is 
 	$$\infer[\ax]{\Deri[(0, 0, 1)] {\var \col \single \type} \var \type}{}$$
	
 and so $\dom\typctx = \set\var$. If $\typctx(\var) = \mtight$ then it must be $\type = \tight$.

     \item \emph{Application}, \ie $\tm = \tmtwo \tmthree $. The last rule of $\tderiv$ can only be $\appsteps$ or $\appresult$. In both cases the left subterm $\tmtwo$ is typed by a sub-derivation
     $\tderiv' \exder[\systemlsp] \Deri[(\stepstwo, \estepstwo, \spinetwo)] {\typctx_\tmtwo}{\tmtwo}{\typetwo}$ such that all assignments in $\typctx_\tmtwo$ appear in $\typctx$. Since $\tm$  lh-neutral on $\vartwo$ implies $\tmtwo$ lh-neutral on $\vartwo$, we can apply the \ih and obtain that $\var \in \dom{\typctx_\tmtwo} \subseteq \dom\typctx$. If moreover, $\typctx(\var) = \mtight$ then $\typctx_\tmtwo(\var) = \mtight$ and
 by \ih  $\typetwo=\tight$ and $\dom{\typctx_\tmtwo} = \set\var$. This forces $\typetwo = \neutral$ and the last rule of $\tderiv$ to be $\appresult$. Then $\type = \neutral$ and $\typctx = \typctx_\tmtwo$, that implies $\dom\typctx = \set\var $.
    
     \item \emph{Explicit substitution}, \ie $\tm = \tmtwo
       \esub{\vartwo}{\tmthree}$ and $\vartwo \neq \var$. The
       last rule of $\tderiv$ is $\esrule$ and the left subterm
       $\tmtwo$ is typed by a sub-derivation $\tderiv' \exder[\systemlsp]
       \Deri[(\stepstwo, \estepstwo, \spinetwo)]
            {\typctx_\tmtwo; \vartwo \col \M}{\tmtwo}{\type}$ such that all types in $\typctx_\tmtwo$
            appear in $\typctx$. Since $\tm$ lh-neutral on $\var$
            implies $\tmtwo$ lh-neutral on $\var$, we can apply the \ih
            and obtain that $\var \in \dom{\typctx_\tmtwo} \subseteq
            \dom\typctx$. If moreover, $\typctx(\var) = \mtight$ then
            $(\typctx_\tmtwo; \vartwo \col \M) (\var) = \mtight$
              and by the \ih $\type=\tight$ and $\dom{\typctx_\tmtwo; \vartwo \col \M}
            = \set\var$. This forces $\M = \emm$ and the
            $\esrule$ rule to have no right premise. Then $\typctx =
            \typctx_\tmtwo$, that implies $\dom\typctx = \set\var $.
 \end{itemize}

 \item By induction on $\lhnormalp\tm\var$. If $\lhnormalp\tm\var$
    because  $\lhneutralp  \tm \var$ then it follows from the previous point. The two other cases are:
 \begin{itemize}
 \item \emph{Abstraction}, \ie $\tm = \la\vartwo\tmtwo$ with
   $\lhnormalp\tmtwo\var$ and $\vartwo \neq \var$. The last rule of $\tderiv$ can only be $\funsteps$ or $\funresult$. In both cases the subterm $\tmtwo$ is typed by a sub-derivation
     $\tderiv' \exder[\systemlsp] \Deri[(\stepstwo, \estepstwo, \spinetwo)] {\typctx;  \vartwo \col \M }{\tmtwo}{\typetwo}$. By \ih, $\var \in \dom{\typctx;  \vartwo \col \M }$ and so $\var \in \dom\typctx$, because $\vartwo\neq\var$. If moreover, $\typctx(\var) = \mtight$ then by \ih $\dom{\typctx;  \vartwo \col \M } = \set\var$, that is, $\M = \emm$. Then $\dom\typctx = \set\var$.
    
     \item \emph{Explicit substitution}, \ie $\tm = \tmtwo \esub\vartwo\tmthree$ with $ \lhnormalp\tmtwo\var$ and $\vartwo \neq \var$. The last rule of $\tderiv$ is $\esrule$ and the left subterm $\tmtwo$ is typed by a sub-derivation
     $\tderiv' \exder[\systemlsp] \Deri[(\stepstwo, \estepstwo, \spinetwo)] {\typctx_\tmtwo;  \vartwo \col \M }{\tmtwo}{\type}$ such that all types in $\typctx_\tmtwo$ appear in $\typctx$. By \ih, $\var \in \dom{\typctx_\tmtwo} \subseteq \dom\typctx$. If moreover, $\typctx(\var) = \mtight$ then by \ih  $\dom{\typctx_\tmtwo; \vartwo \col \M } = \set\var$, that is, $\M = \emm$. Therefore, the $\esrule$ rule has no right premise. Then $\typctx = \typctx_\tmtwo$, that implies $\dom\typctx = \set\var $.
 \end{itemize}

 \item By induction on $\lhnormalclose\tm$. Cases:
 \begin{itemize}
     \item \emph{Abstraction on the head variable}, \ie $\tm = \la\var\tmtwo$ with $ \lhnormalp\tmtwo\var$. If $\type = \tight$ then the last rule of $\tderiv$ can only be $\funresult$ and $\type = \abstype$: 
     $$
     \infer[\funresult]{\Deri[(\steps,\esteps,  \spine +1)] {\typctx} { \la\var\tmtwo } \abstype}
                   {\Deri[(\steps,\esteps, \spine)] {\typctx; \var \col \mtight} \tmtwo \tight} 
 $$
     By the previous point, $\dom{\typctx; \var \col \mtight} = \set\var$, that is, $\typctx$ is empty.    
    
   \item \emph{Abstraction on a non-head variable}, \ie $\tm = \la\var\tmtwo$
     with $ \lhnormalclose\tmtwo$. If $\type = \tight$ then the last rule of $\tderiv$ can only be $\funresult$ and $\type = \abstype$: 
     $$
     \infer[\funresult]{\Deri[(\steps,\esteps,  \spine +1)] {\typctx} { \la\var\tmtwo } \abstype}
                   {\Deri[(\steps,\esteps, \spine)] {\typctx; \var \col \mtight} \tmtwo \tight} 
 $$
     By \ih, $\typctx$ is empty.
    
     \item \emph{Explicit substitution}, \ie $\tm = \tmtwo \esub\vartwo\tmthree$ with $ \lhnormalclose\tmtwo$. The last rule of $\tderiv$ is $\esrule$ and the left subterm $\tmtwo$ is typed by a sub-derivation
     $\tderiv' \exder[\systemlsp] \Deri[(\stepstwo, \estepstwo, \spinetwo)] {\typctx_\tmtwo;  \vartwo \col \M }{\tmtwo}{\tight}$ such that all types in $\typctx_\tmtwo$ appear in $\typctx$. By \ih, the typing context $\typctx_\tmtwo;  \vartwo \col \M$ is empty, that forces $\M = \emm$. Therefore, the $\esrule$ rule has no right premise. Then $\typctx = \typctx_\tmtwo$, \ie $\typctx$ is empty.
 \end{itemize}

 \end{enumerate}
 \end{proof}

\gettoappendix{prop:tight-normal-forms-indicies-linear-head}
\begin{proof}
By induction on $\tderiv$. Cases of $\tm$:
  \begin{itemize}
    \item \emph{Variable}, \ie $\tm = \var$. Then $\tderiv$ has  the following  form and evidently verifies all the points of the statement:
 \[\begin{array}{cccc}
 \infer[\ax]{\Deri[(0,0,1)] {\var \col \mult\type} \var \type}{} 
  \end{array}\]
 The derivation verifies $\spine = 1 = \lhsize\var = \size\tderiv$, $\steps =  \esteps = 0$, as required.
  
\item \emph{Abstraction}, \ie $\tm = \la\var\tmtwo$ with
  $\lhnormal\tmtwo$. Cases of the last rule of $\tderiv$:
    \begin{itemize}
    \item \emph{$\funsteps$ rule}: 
  $$\infer[\funsteps]{
  \Deri[(\stepstwo+ 1, \esteps, \spine)]
{\typctx} { \la\var\tmtwo } {\M \rightarrow \type}
}
{
\tderivtwo \exder[\systemlsp] {\Deri[(\stepstwo, \esteps, \spine)] {\typctx; \var \col \M} \tmtwo \type }
}$$
with $\steps = \stepstwo+ 1$. 
\begin{enumerate}
  \item \emph{Size bound}: by \ih, $\lhsize\tmtwo \leq \size\tderivtwo$. Then, $\lhsize\tm = \lhsize\tmtwo +1 \leq_{\ih} \size\tderivtwo + 1 = \size\tderiv$.

  \item \emph{Tight bound}: $\tderiv$ is not tight, so the statement trivially holds.
\end{enumerate}
    \item \emph{$\funresult$ rule}: 
      $$\infer[\funresult]{\Deri[(\steps, \esteps, \spinetwo +1)] {\typctx} { \la\var\tmtwo } \abstype}
{\tderivtwo \exder[\systemlsp] {\Deri[(\steps, \esteps, \spinetwo)] {\typctx; \var \col \mtight} \tmtwo \tight }}$$
with $\spine = \spinetwo +1$. 

\begin{enumerate}
  \item \emph{Size bound}: by \ih, $\lhsize\tmtwo \leq \size\tderivtwo$.
Then, $\lhsize\tm = \lhsize\tmtwo +1 \leq_{\ih} \size\tderivtwo + 1 = \size\tderiv$.

  \item \emph{Tight bound}: if $\tderiv$ is tight, then $\tderivtwo$ is tight and by \ih $\spinetwo = \lhsize\tmtwo$ and $\steps = \esteps = 0$. Then, $\spine = \spinetwo +1 =_{\ih} \lhsize\tmtwo + 1 = \lhsize\tm$. 
\end{enumerate}
 \end{itemize}

  \item \emph{Application}, \ie $\tm = \tmtwo \tmthree$ with
    $\lhneutralp\tmtwo\var$ for some $\var$. Cases of the last rule of $\tderiv$:
    \begin{itemize}
    \item \emph{$\appsteps$ rule}: 
    \[\begin{array}{c}
\infer[\appsteps]{\Deri
[(\stepstwo + \stepsthree + 1, \estepstwo + \estepsthree, \spinetwo + \spinethree)]
{\typctxtwo \mplus \typctxthree}
{\tmtwo \tmthree} \type}
{ 	\tderivtwo \exder[\systemlsp] { \Deri[(\stepstwo, \estepstwo, \spinetwo)] \typctxtwo \tmtwo {\M \rightarrow \type}}
\quad
\TDeri[(\stepsthree, \estepsthree, \spinethree)] {\tderivthree}{\typctxthree} \tmthree{\M}
}\\\\
\end{array}
\]
with $\steps = \stepstwo + \stepsthree + 1$, $\esteps = \estepstwo + \estepsthree$, $\spine = \spinetwo + \spinethree$,  and $\typctx = \typctxtwo \mplus \typctxthree$.
		\begin{enumerate}
		  \item \emph{Size bound}: by \ih, $\lhsize\tmtwo \leq \size\tderivtwo$, from which it follows $\lhsize\tm = \lhsize\tmtwo +1  \leq_{\ih} \size\tderivtwo + 1 =  \size\tderiv$.

		  \item \emph{Tight bound}: we show that this case is impossible. If $\tderiv$ is tight then $\typctx = \typctxtwo \mplus \typctxthree$ is a tight typing context, and so is $\typctxtwo$. Then by \reflemma{lin-head-headvar-in-context}.1 the type of $\tmtwo$ in $\tderivtwo$ has to be tight---absurd.
		\end{enumerate}

    \item \emph{$\appresult$ rule}: 
    $$
\infer[\appresult]{\Deri[(\steps, \esteps, \spinetwo+1)] {\typctx} {\tmtwo \tmthree} \neutype}
{
\tderivtwo \exder[\systemlsp] {\Deri[(\steps, \esteps, \spinetwo)] \typctx \tmtwo \neutype
}}
$$
with $\spine = \spinetwo +1$. 
    \begin{enumerate}
		  \item \emph{Size bound}: by \ih, $\lhsize\tmtwo \leq \size\tderivtwo$. Then $\lhsize\tm = \lhsize\tmtwo + 1 \leq_{\ih} \size\tderivtwo + 1 = \size\tderiv$.

		  \item \emph{Tight bound}: if $\tderiv$ is tight, then $\tderivtwo$ is tight and by \ih $\spinetwo = \lhsize\tmtwo$ and $\steps = \esteps = 0$. Then, $\spine = \spinetwo +1 =_{\ih} \lhsize\tmtwo + 1 = \lhsize{\tmtwo \tmthree} = \lhsize\tm$. 
		\end{enumerate}
    \end{itemize}
    
  \item \emph{Explicit substitution}, \ie $\tm = \tmtwo \esub\var\tmthree$ and the last rule of $\tderiv$ is:
  $$
  \infer[\esrule]{\Deri[(\steps + \steps',  \esteps+ \esteps' + \size\M, \spine  + \spine' - \size\M)]
                     {\typctxtwo + \typctxthree}
                     {\tmtwo \esub\var\tmthree} \type}
               {\tderivtwo \exder[\systemlsp] \Deri[(\steps,  \esteps, \spine)] {\typctxtwo; \var  \col \M } \tmtwo \type 
                  \quad \quad
                \Deri[(\steps',  \esteps', \spine')] {\typctxthree} \tmthree \M                 
                 }
$$
with $\steps = \stepstwo + \stepsthree$, $\esteps = \estepstwo + \estepsthree$, $\spine = \spinetwo + \spinethree$,  and $\typctx = \typctxtwo \mplus \typctxthree$.
  \begin{enumerate}
		  \item \emph{Size bound}: by \ih, $\lhsize\tmtwo \leq \size\tderivtwo$. Then $\lhsize\tm = \lhsize\tmtwo  \leq_{\ih} \size\tderivtwo < \size\tderiv$.

\item \emph{Tight bound}: There are two cases:
		  \begin{itemize}
		    \item \emph{$\lhnormalp \tmtwo\vartwo$ for some $\vartwo \neq \var$}. By \reflemma{lin-head-headvar-in-context}.2 $\vartwo \in \dom{\typctxtwo}$. All assignments in $\typctxtwo$ are $\mtight$ because $\tderiv$ is tight, and so applying \reflemma{lin-head-headvar-in-context}.2 again we obtain that $\dom{\typctxtwo} = \set\vartwo$, that is, that $\M = \emm$. Two consequences: first, the $\esrule$ has no right premise, that is, it rather has the following shape:
		    $$
  \infer[\esrule]{\Deri[(\steps,  \esteps, \spine  )]
                     {\typctx}
                     {\tmtwo \esub\var\tmthree} \type}
               {\tderivtwo \exder[\systemlsp] \Deri[(\steps,  \esteps, \spine)] {\typctx} \tmtwo \type }
$$
second, $\tderivtwo$ is tight, and so by \ih $\steps = \esteps = 0$ and $\spine = \lhsize\tmtwo$. The statement follows from the fact that $\lhsize\tmtwo = \lhsize{ \tmtwo \esub\var\tmthree }$.
		    \item \emph{$ \lhnormalclose\tmtwo$}. If $\tderiv$ is tight then $\type = \tight$ and by \reflemma{lin-head-headvar-in-context}.3 the context $\typctxtwo; \var  \col \M $ is empty, that is, $\M = \emm$. Two consequences: first, the $\esrule$ has no right premise, that is, it rather has the following shape:
		    $$
  \infer[\esrule]{\Deri[(\steps,  \esteps, \spine  )]
                     {}
                     {\tmtwo \esub\var\tmthree} \type}
               {\tderivtwo \exder[\systemlsp] \Deri[(\steps,  \esteps, \spine)] {} \tmtwo \type }
$$
second, $\tderivtwo$ is tight, and so by \ih $\steps = \esteps = 0$ and $\spine = \lhsize\tmtwo$. The statement follows from the fact that $\lhsize\tmtwo = \lhsize{ \tmtwo \esub\var\tmthree }$.
  \end{itemize}
\end{enumerate}

    \end{itemize}
  \end{proof}

\gettoappendix{l:subst:hwmilner}
\begin{proof} 
By induction on $\lhc$. 
\begin{itemize}

\item If $\lhc = \ctxhole$, then by construction 
$\Gam = \emptyset$, $I = \cset{\iz}$, $ \tau= \sig_{\iz}$.
Then  $\Phi_\var$ has necessarily the following form:
$$\infer{\Deri[(0,0,1)]{\var:\mult{\tau}}{\var}{\tau}}{ \phantom{.}}\; \ax$$
where $\steps=0$, $\esteps=0$,
and $\spine=1$.

In this case  we have  $\lhc\cwc{\tmb} = \tmb$ and 
we let $\Phi_{\tmb} := \Phi_\tmb^{\iz}$. Moreover, $\steps_{\iz} = 0 + \steps_{\iz} = \steps + \steps_{\iz}$,
$\esteps_{\iz} = 0  +  \esteps_{\iz}  = \esteps + \esteps_{\iz}$,
$\spine_{\iz} = 1   + \spine_{\iz} -1  = \spine + \spine_{\iz} - 1$. 

Thus the statement holds. 

\item In all the other cases the property is straightforward by the \ih\

\end{itemize}

\end{proof}

\gettoappendix {prop:head-subject-head-reduction}

\begin{proof} By induction on the reduction relation $\tolh$. 
\begin{itemize}
\item $t =  \putinctx{\L}{\l \var. v} s    \tolhb    \putinctx{\L}{v\esub{\var}{s}} = t'$,
then 
we proceed by induction on $\L$. Let $\L = \ctxhole$. By 
construction the derivation $\Phi$ is of the form: 

$$\infer{\Deri[(\steps +  \steps' +2, \esteps + \esteps'+|\M|, \spine + \spine'-|\M|)]{\Pi  \mplus   \Gam}{(\la \var. v)s}{\tau}}
        {\infer{\Deri[(\steps +1, \esteps +|\M|, \spine -|\M|)]{\Pi}{\la \var. v}{ \M \rightarrow \sig}}
               {\Deri[(\steps, \esteps, \spine)]{\var:\M;\Pi}{v}{\sig}} \quad 
          \Deri[(\steps', \esteps', \spine')]{\Gam}{s}{\M}  } $$

We notice that $\Steps= \steps + \steps' +2 \geq 2$ as required. We construct the 
following derivation $\tderivtwo$:

$$ \infer{\Deri[(\steps + \steps', \esteps + \esteps'+|\M|, \spine + \spine'-|\M|)]
  {\Pi \mplus_{\iI} \Gam_i}{v\esub{\var}{s}}{\sig}}
         {\Deri[(\steps, \esteps, \spine)]{\var:\M;\Pi}{v}{\sig}  \quad 
           \Deri(\steps', \esteps', \spine')]{\Gam}{s}{\M}   }$$

So that we can verify $\steps+  \steps' = \Steps -2$,
$\esteps + \esteps' = \ESteps$ and
$\spine + \spine'  = \Spine$.

For  $\L = \L' \esub{\vartwo}{s}$, the statement follows from the \ih\ 

\item $\tm = \lhc\cwc{\var}\esub{\var}{\tmbb} \tolh  \lhc\cwc{\tmbb}\esub{\var}{\tmbb} = \tmb$, 
then $\Phi$ is of the form 
$$\infer{\Deri[(\steps +  \steps', \esteps +  \esteps'+|\M|, \spine +  \spine'-|\M|)]
              {\Pi \mplus   \Delta}{\lhc\cwc{\var}\esub{\var}{ \tmbb}}{\tau}}
        {\Deri[(\steps,\esteps,\spine)]{\var:\M;\Pi}{\lhc\cwc{\var}}{\tau} \quad 
         \Deri[(\steps',\esteps',\spine')]{\Delta}{\tmbb}{\M} }$$
where $\Steps=\steps +  \steps'$, $\ESteps=\esteps +  \esteps'+|\M|$ and $\Spine=\spine +  \spine'-|\M|$.

It is not difficult to see
that  $\size \M  \neq 0$ and thus $\ESteps \geq 1$ 
as required. 

Let $\M = \mult{\rho_i}_{\iI}$, where $I \neq \emptyset$.  We know
that $\Delta = \mplus_{\iI} \Delta_i$, where
$\Deri[(\steps_i,\spine_i,\garbage_i)]{\Delta_i}{ \tmbb}{\rho_i}$,
$\steps' = +_{\iI} \steps_i$, $\esteps' = +_{\iI} \esteps_i$, and
$\spine'= +_{\iI} \spine_i$.  By Lemma~\ref{l:subst:hwmilner} we have
$\Phi_{\lhc\cwc{ \tmbb}} \exder[\systemlsp]
\Deri[(\Steps', \ESteps', \Spine')]{\var:\mult{\rho_i}_{i \in I \setminus
    \iz}; \Pi \mplus  \Delta_{\iz}}{\lhc\cwc{ \tmbb}}{\tau}$, for some $\iz \in I$ where
$\Steps ' = \steps+ \steps_{\iz}$, $\ESteps ' = \esteps+ \esteps_{\iz}$, 
and $\Spine' = \spine + \spine_{\iz}-1$. Hence we construct
the following derivation $\Phi'$, where $J = I \setminus \iz$,
$\steps'' = +_{\jJ} \steps_{j}$, 
$\esteps''= +_{\jJ} \esteps_{j}$,
$\spine''= +_{\jJ} \spine_{j}$,
$$\infer{\Deri[(\Steps'+ \steps'',
                \ESteps'+ \esteps'' +|J|,
                \Spine' + \spine'' - |J|)]{\Pi \mplus   \Delta_{\iz} \mplus_{\jJ} \Delta_j }{\lhc\cwc{ \tmbb}\esub{\var}{ \tmbb}}{\tau}}
        {\Deri[(\Steps', \ESteps', \Spine')]{\var:\mult{\rho_j}_{\jJ}; \Pi \mplus  \Delta_{\iz}}{\lhc\cwc{ \tmbb}}{\tau} \quad  
          \Deri[(\steps'', \esteps'', \spine'')]{\mplus_{\jJ} \Delta_j}{ \tmbb}{\mult{\rho_j}_{\jJ}}  }    $$

Notice that 
$\Steps'+  \steps'' =  \steps+ \steps_{\iz}  +_{\jJ} \steps_{j}  = \steps+ \steps' = \Steps$, 
$\ESteps' + \esteps''+|J| =  \esteps+ \esteps_{\iz}  +_{\jJ} \esteps_{j} +|J|= 
\esteps+ \esteps' +|J|  = \esteps+ \esteps' +|\M| - 1 = \ESteps-1$, 
$\Spine' + \spine''-|J| =\spine+ \spine_{\iz} -1  +_{\jJ} \spine_{j}-|J| =  
\spine+ \spine' -| \M| = \Spine$.

\item All the other cases follow from the \ih\
\end{itemize}

\end{proof}

\gettoappendix {tm:head-correctness}
\begin{proof} 
By induction on $\lhsize\tderiv$. If $\tm$ is a $\tolh$ normal form  then by taking $\tmtwo \defeq \tm$ and 
$k=0$ the statement follows from the \emph{tightness} property of tight typings of normal forms (\refprop{tight-normal-forms-indicies-linear-head}.2)---the \emph{moreover} part follows from the \emph{neutrality} property (\refprop{tight-normal-forms-indicies-linear-head}.3). Otherwise, two cases:
\begin{enumerate}
\item \emph{Multiplicative steps}: $\tm \tom \tmthree$ and by quantitative subject reduction (\refprop{subject-reduction}) there is a derivation $\tderivtwo \exder[\lh] \Deri[(\steps-2, \esteps, \spine)] {\typctx}{\tmthree}{\type}$. By \ih, there exists $\tmtwo$ such that $\sysnormal\lh \tmtwo$ and $\tmthree
  \Rewn[(\steps-2)/2 + \esteps]{\lh} \tmtwo$ and $\syssize\lh\tmtwo = \result$. Just note that $\tm \tom \tmthree
  \Rewn[\steps/2-1 + \esteps]{\lh} \tmtwo$, that is, $\tm 
  \Rewn[\steps/2 + \esteps]{\lh} \tmtwo$. 
  
  \item \emph{Exponential steps}: $\tm \toe \tmthree$ and by quantitative subject reduction (\refprop{subject-reduction}) there is a derivation $\tderivtwo \exder[\lh] \Deri[(\steps, \esteps-1, \spine)] {\typctx}{\tmthree}{\type}$. By \ih, there exists $\tmtwo$ such that $\sysnormal\lh \tmtwo$ and $\tmthree
  \Rewn[(\steps)/2 + \esteps-1]{\lh} \tmtwo$ and $\syssize\lh\tmtwo = \result$. Just note that $\tm \toe \tmthree
  \Rewn[\steps/2-1 + \esteps-1]{\lh} \tmtwo$, that is, $\tm 
  \Rewn[\steps/2 + \esteps]{\lh} \tmtwo$. 
  \end{enumerate}
\end{proof}


\subsection{Tight Completeness}

\gettoappendix{prop:lin-head-normal-forms-are-tightly-typable}
\begin{proof}
In the proof, for the sake of simplicity, we let the indicies on the judgements generic, and not as precise as in the statement, because once one knows that there is a tight derivation then the indicies are forced by \refprop{tight-normal-forms-indicies-linear-head}.
  \begin{enumerate}
    \item By induction on $\lhneutralp\tm\var$:
  \begin{itemize}
\item \emph{Variable}, \ie $\tm = \var$. Then the derivation 
  $$\infer[\axres]{\Deri[(0, 0, 1)] {\var \col \single \neutral} \var \neutral}
                                { } $$  
  is tight and types $\var$ with $\neutral$.

\item \emph{Application}, \ie $\tm = \tmtwo \tmthree$ and
  $\lhneutral\tm$  because $\lhneutral\tmtwo$. By \ih, there is a \precise derivation 
$\tderivtwo \exder[\systemlsp] \Deri[(\steps,\esteps,\spine)] \typctx \tmtwo \neutral$. Then the following is a \precise derivation $\tderiv$ typing  $\tm = \tmtwo \tmthree$ with $\neutral$:
$$
\infer[\appresult]{\Deri[(\steps,\esteps, \spine+1)] {\typctx} {\tmtwo \tmthree} \neutype}
{\tderivtwo \exder[\systemlsp] \Deri[(\steps,\esteps, \spine)] \typctx \tmtwo \neutype}
$$

\item \emph{Explicit substitution}, \ie $\tm =\tmtwo \esub\vartwo\tmthree$
  and $\lhneutral\tm$  because $\lhneutralp\tmtwo\var$ and $\var \neq \vartwo$. By \ih, there is a \precise derivation 
$\tderivtwo \exder[\systemlsp] \Deri[(\steps,\spine)] \typctx \tmtwo \neutral$. By \reflemma{lin-head-headvar-in-context}.1, $\dom\typctx = \set\vartwo$, that is, in $\typctx$ the variable $\var$ is implicitly typed with $\emm$. Then the following tight derivation $\tderiv$ types $\tm =\tmtwo \esub\var\tmthree$ with $\neutral$:
$$
\infer[\esrule]{\Deri[(\steps,  \esteps, \spine  )]
                     {\typctx}
                     {\tm \esub\var\tmb} \neutral}
               {\Deri[(\steps,  \esteps, \spine)] {\typctx; \var  \col \emm } \tm \neutral
                 }
$$
\end{itemize}

  \item First, by induction on $ \lhnormalp\tm\var$:
  \begin{itemize}
    \item $\lhnormalp\tm\var$ because $ \lhneutralp \tm\var$. Then it follows from the previous point.
    
    \item \emph{Abstraction}, \ie $\tm = \la\vartwo\tmtwo$
      and $\lhnormalp \tm\var$ because $\lhnormalp\tmtwo\var$ and $\var \neq \vartwo$. By \ih
      there is a tight derivation $\tderivtwo \exder[\systemlsp] \Deri[(\steps,\esteps,\spine)] \typctxtwo \tmtwo
      \tight$. Since the derivation $\tderivtwo$ is tight, the typing context $\typctxtwo$ has the shape $\typctx; \vartwo \col \mtight$ (potentially, $\vartwo \col \emm$).  Then the following is a \precise derivation for
      $\la\vartwo\tmtwo$ with $\abstype$:
 $$\infer[\funresult]{\Deri[(\steps,\esteps,\spine+1)] {\typctx} {\la\vartwo\tmtwo} \abstype}
{\tderivtwo   \exder[\systemlsp] \Deri[(\steps,\esteps,\spine+1)] {\typctx; \vartwo \col \mtight} \tmtwo \tight } $$
    
\item \emph{Explicit substitution}, \ie $\tm =\tmtwo \esub\vartwo\tmthree$
  and $ \lhnormalp\tm \var$ because $\lhnormalp\tmtwo\var$ and $\var \neq \vartwo$. It is essentially like in the neutral case. By \ih, there is a \precise derivation 
$\tderivtwo \exder[\systemlsp] \Deri[(\steps,\esteps,\spine)] {\typctxtwo} \tmtwo \tight$. By \reflemma{lin-head-headvar-in-context}.1, $\dom\typctxtwo  = \set \var$, that is, in $\typctxtwo$ the variable $\vartwo$ is implicitly typed with $\emm$. Then using the notation $\typctxtwo = \typctx; \vartwo  \col \emm$ the following tight derivation $\tderiv$ types $\tm =\tmtwo \esub{\vartwo}{\tmthree}$:
$$
\infer[\esrule]{\Deri[(\steps,  \esteps, \spine  )]
                     {\typctx}
                     {\tmtwo \esub\vartwo\tmb} \tight}
               {\Deri[(\steps,  \esteps, \spine)] {\typctx; \vartwo  \col \emm } {\tmtwo} \tight
                 }
$$
The part about predicates follows from the \ih
  \end{itemize}

  Now, by induction on $ \lhnormalclose \tm$:
  \begin{itemize}
  \item \emph{Abstraction on the head variable}, \ie $\tm = \la\var\tmtwo$
    and $\lhnormalp \tm \var$ because $ \lhnormalp \tmtwo\var$. By \ih
      there is a tight derivation $\tderivtwo \exder[\systemlsp] \Deri[(\steps,\esteps,\spine)] \typctxtwo \tmtwo
      \tight$. Since the derivation $\tderivtwo$ is tight, the typing context $\typctxtwo$ has the shape $\typctx; \vartwo \col \mtight$ (potentially, $\vartwo \col \emm$).  Then the following is a \precise derivation for
      $\la\vartwo\tmtwo$ with $\abstype$:
 $$\infer[\funresult]{\Deri[(\steps,\esteps,\spine+1)] {\typctx} {\la\vartwo\tmtwo} \abstype}
{\tderivtwo   \exder[\systemlsp] \Deri[(\steps,\esteps,\spine)] {\typctx; \vartwo \col \mtight} \tmtwo \tight } $$
    
\item \emph{Abstraction on a non-head variable}, \ie $\tm = \la\var\tmtwo$
  and $\lhnormalp \tm \var$ because $\lhnormalclose\tmtwo$. It is exactly as in the previous sub-case. By \ih
      there is a tight derivation $\tderivtwo \exder[\systemlsp] \Deri[(\steps,\esteps,\spine)] \typctxtwo \tmtwo
      \tight$. Since the derivation $\tderivtwo$ is tight, the typing context $\typctxtwo$ has the shape $\typctx; \vartwo \col \mtight$ (potentially, $\vartwo \col \emm$).  Then the following is a \precise derivation for
      $\la\vartwo\tmtwo$ with $\abstype$:
 $$\infer[\funresult]{\Deri[(\steps,\esteps,\spine+1)] {\typctx} {\la\vartwo\tmtwo} \abstype}
{\tderivtwo   \exder[\systemlsp] \Deri[(\steps,\esteps,\spine)] {\typctx; \vartwo \col \mtight} \tmtwo \tight } $$

\item \emph{Explicit substitution}, \ie $\tm =\tmtwo \esub\vartwo\tmthree$
  and $\lhnormalclose \tm$ because $\lhnormalclose\tmtwo$. By \ih, there is a \precise derivation 
$\tderivtwo \exder[\systemlsp] \Deri[(\steps,\esteps,\spine)] \typctxtwo \tmtwo \tight$. By \reflemma{lin-head-headvar-in-context}.3, $\typctxtwo$ is empty, that is, the variable $\vartwo$ is implicitly typed with $\emm$. Then the following tight derivation $\tderiv$ types $\tm =\tmtwo \esub\var\tmthree$:
$$
\infer[\esrule]{\Deri[(\steps,  \esteps, \spine  )]
                     {}
                     {\tmtwo \esub\vartwo\tmb} \tight}
               {\Deri[(\steps,  \esteps, \spine)] {\vartwo  \col \emm } \tmtwo \tight
                 }
$$
The part about predicates follows from the \ih
  \end{itemize}
  \end{enumerate}
\end{proof}

\gettoappendix {l:anti-subst}
\begin{proof} By induction on $\lhc$.
\begin{itemize}

\item If $\lhc = \ctxhole$, then 
we let $\Gam_0 = \emptyset$, $ \sig_{1}=  \tau$,
$\Delta_1 = \Gam$. We have $(\steps_1, \esteps_1, \spine_1)= (\Steps, \ESteps, \Spine)$
and $(\steps, \esteps, \spine) = (0,0,1)$. All the equalities are verified. 

\item In all the other cases the property is straightforward by the \ih\
\end{itemize}
\end{proof}

\gettoappendix {prop:linear-subject-expansion}
\begin{proof}
The proof is by induction on $t  \tolh  t'$.
 \begin{itemize}
\item  If 
$t = \L\cwc{\l {\var}.\tmtwo}\tmthree \to \L\cwc{\tmtwo[\var/\tmthree]} = t'$, then 
we proceed by induction on $\L$. Let $\L = \ctxhole$, then by 
construction
 $\Gam = \Del \uplus \Pi$ and we have the following derivation: 

$$\infer{\Deri[(\steps+\steps', \esteps +\esteps' + |\M|, \spine + \spine' -|\M|)]
              {\Del\uplus \Pi}{\tmtwo[\var/\tmthree]}{\tau}}
        {\Deri[(\steps, \esteps, \spine)]{\var{:}\M;\Del}{\tmtwo}{\tau}\quad 
         \Deri[(\steps', \esteps', \spine')]{\Pi}{\tmthree}{\M} }
        $$

We then construct the following derivation

$$\infer{\Deri[(\steps+\steps'+2, \esteps +\esteps' + |\M|, \spine + \spine' -|\M|]{\Del\uplus \Pi}{(\l \var. \tmtwo)\tmthree}{\tau}}
        {\infer{\Deri[(\steps+1, \esteps+|\M|, \spine-|\M|)]{\Del}{\l \var. \tmtwo}{\M \rightarrow \tau}}
               {\Deri[(\steps, \esteps, \spine)]{\var{:}\M;\Del}{\tmtwo}{\tau}} \quad 
         \Deri[(\steps', \esteps', \spine')]{\Pi}{\tmthree}{\M}}$$ 

For  $\L = \L' [\vartwo/u]$, the statement follows from the \ih\

\item If $t = \lhc\cwc{\var}[\var/\tmthree]  \Rew{}   \lhc\cwc{\tmthree}[\var/\tmthree] = t'$, then by 
construction $\Gam = \Del \uplus  \Pi $ and the 
type  derivation of $t'$ has the following form:

$$\infer{\Deri[(\Steps', \ESteps', \Spine')]{\Del   \uplus  \Pi}{\lhc\cwc{\tmthree}[\var/\tmthree]}{\tau}}
        {\Deri[(\steps_\lhc, \esteps_\lhc, \spine_\lhc)]{\var{:}\M; \Del}{\lhc\cwc{\tmthree}}{\tau} \quad 
         \Deri[(\steps_\tmthree, \esteps_\tmthree, \spine_\tmthree)]{\Pi}{\tmthree}{ \M}}$$

where $(\Steps', \ESteps', \Spine') = (\steps_\lhc+\steps_\tmthree,
\esteps_\lhc+\esteps_\tmthree+|\M|, \spine_\lhc+\spine_\tmthree-|\M|)$.

By Lemma~\ref{l:anti-subst} 
$\exder[\systemlsp]  \Deri[(\steps, \esteps, \spine)]{\Gam_0 + \var{:}\mult{\sig_1}  }{\lhc\cwc{\var}}{\tau}$ and 
$\exder[\systemlsp]  \Deri[(\steps_1, \esteps_1, \spine_1)]{\Del_1}{\tmthree}{\sig_1}$, where
$\steps_\lhc=\steps+\steps_1$, 
$\esteps_\lhc=\esteps+\esteps_1$ and 
$\spine_\lhc=\spine+\spine_1-1$. Note that $\var \notin \fv{\tmthree}$. We let 
$I = K \uplus \{ 1 \}$ where  $\M = \mult{\sig_i}_{i \in K}$. 
We have necessarily $\Gam_0 = \Gam'_0 ;   \var{:}\mult{\sig_k}_{k \in K}$.  

We remark that $\tmthree$ has necesarily been typed with a $(\many)$ rule so that
there are derivations $\Deri[(\steps_k,\esteps_k,\spine_k)]{\Pi_k}{\tmthree}{\sig_k}\ (k \in K)$,
such that $\Pi = \uplus_{k \in K } \Pi_k$, and
$\M = +_{k \in K} \sig_k$ and $\steps_\tmthree=+_{k \in K}\steps_k$, 
$\esteps_\tmthree= +_{k \in K}\esteps_k$, $\spine_\tmthree=+_{k \in K}\spine_k$. 
By applying rule $(\many)$ again we obtain 
$\Deri[(\steps_\tmthree+\steps_1,\esteps_\tmthree+\esteps_1,\spine_\tmthree+\spine_1)]
      {\Pi + \Delta_1}{\tmthree}{\M + \mult{\sig_1}}$. 

We can now construct the following derivation

$$\infer{\Deri(\steps+\steps_\tmthree+\steps_1, 
               \esteps+\esteps_\tmthree+\esteps_1 +|I|, 
               \spine+\spine_\tmthree+\spine_1 -|I|)]{\Gam'_0\uplus \Pi \uplus  \Delta_1}{\lhc\cwc{\var}[\var/\tmthree]}{\tau}}     
        {\Deri(\steps, \esteps, \spine)]{\Gam'_0 ;  \var{:}\mult{\sig_i}_{\iI}}{\lhc\cwc{\var}}{\tau} \quad 
         \Deri[(\steps_\tmthree+\steps_1,\esteps_\tmthree+\esteps_1,\spine_\tmthree+\spine_1)]
             {\Pi \uplus  \Delta_1} {\tmthree}{\mult{\sig_i}_{\iI}}}$$

We conclude since
$\steps+\steps_\tmthree+\steps_1=
\steps_\lhc+\steps_\tmthree=\Steps'$,
$\esteps+\esteps_\tmthree+\esteps_1+|I|=
\esteps_\lhc+\esteps_\tmthree+|I|=\esteps_\lhc+\esteps_\tmthree+|\M|+1=\ESteps'+1$,
$\spine+\spine_\tmthree+\spine_1 -|I| = 
\spine_\lhc+\spine_\tmthree -|\M| = \Spine'$. 

\item All the inductive cases are straightforward.

 \end{itemize}
\end{proof}

\gettoappendix{th:completeness-linear-head}
\begin{proof}
By induction on $\tm \tolh^k \tmtwo$. If $k = 0$ then $\tm =  \tmtwo$. 
\refprop{lin-head-normal-forms-are-tightly-typable}
gives the existence of  a \precise\
typing $\tderiv \exder[\systemlsp]  \tyjnpre{(\steps, \esteps, \spine)} \tm$. 
\refprop{tight-normal-forms-indicies-linear-head} then gives
$\spine = \lhsize\tm= \lhsize\tmtwo$ and $\steps =  \esteps = 0$.
The property then holds for $k_1=k_2=0$. 

Let $0 < k = k'+1$ and
$\tm \tolh \tmthree \tolh^{k'}\tmtwo$. By \ih\  there exists a tight
typing derivation $\tderivtwo\exder[\systemlsp]  \tyjnpre{(2k'_1, k'_2, \lhsize\tmtwo)}
\tmthree$, where $k'=k'_1+k'_2$. By quantitative subject expansion \refprop{linear-subject-expansion}
there exists a typing derivation $\tderiv$ of $\tmthree$ with the same
types in the ending judgement of $\tderivtwo$---then $\tderiv$ is
tight---and with indices $(2k'_1+2, k'_2, \lhsize\tmtwo)$ or $(2k'_1, k'_2+1,
\lhsize\tmtwo)$. 

In the first case we let $k_1 = k'_1 +1$ and $k_2 = k'_2$, so that
$k = 1 + k' =_{\ih} 1+ k'_1+k'_2 = k_1 +k_2$ as required. Moreover,
$\tderiv \exder[\systemlsp]  \tyjnpre{(2k_1, k_2, \lhsize\tmtwo)} \tm$.

In the second  case we let $k_1 = k'_1$ and $k_2 = k'_2+1$, so that
$k = 1 + k' =_{\ih} 1+ k'_1+k'_2  = k_1 +k_2$ as required. Moreover,
$\tderiv \exder[\systemlsp]  \tyjnpre{(2k_1, k_2, \lhsize\tmtwo)} \tm$.
\end{proof}

  \section{Appendix: Leftmost Evaluation and Minimal Typings}

The relation \emph{to be a positive/negative occurrence} is transitive in the following sense.

\begin{lemma}[Transitivity of polarities]
\label{l:trans-pol}
  Let $\atype,\atypetwo,\atypethree$ be (multi)-types and $a,b \in \set{+,-}$. Then $\atypetwo \in \typeocc a \atype$ and $\atypethree \in \typeocc b \atypetwo$ then $\atypethree \in \typeocc {\polcomp a b} \atype $, where 
  \[\begin{array}{ccc\colspace \colspace ccc \colspace \colspace ccc\colspace \colspace ccc}
    \polcomp + + & \defeq & +
    &
    \polcomp - + & \defeq & -
    &
    \polcomp - - & \defeq & +
    &
    \polcomp + - & \defeq & -
  \end{array}\]
  \end{lemma}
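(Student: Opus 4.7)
The plan is to proceed by induction on the derivation of $\atypetwo \in \typeocc{a}{\atype}$. A short observation on the polarity composition makes the argument uniform: by inspection of its four cases one checks that
\[
\polcomp{+}{b} \,=\, b
\qquad\text{and}\qquad
\overline{\polcomp{\bar a}{b}} \,=\, \polcomp{a}{b},
\]
where $\bar{+} \defeq -$ and $\bar{-} \defeq +$. These two identities reflect exactly the way the arrow rules preserve the outer polarity on the codomain and flip it on the domain.

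In the two base cases, $\type \in \possubtype{\type}$ and $\M \in \possubtype{\M}$, we have $a = +$ and $\atypetwo = \atype$, so the hypothesis $\atypethree \in \typeocc{b}{\atypetwo}$ directly yields $\atypethree \in \typeocc{\polcomp{+}{b}}{\atype}$ by the first identity. In the multi-set step, if $\atypetwo \in \typeocc{a}{\M}$ is derived from $\atypetwo \in \typeocc{a}{\typetwo}$ for some $\typetwo \in \M$, the induction hypothesis gives $\atypethree \in \typeocc{\polcomp{a}{b}}{\typetwo}$, and reapplying the same multi-set rule in polarity $\polcomp{a}{b}$ concludes.

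The interesting step is the arrow case, where $\atype$ has the form $\M \rightarrow \type$ and $\atypetwo \in \typeocc{a}{\M \rightarrow \type}$ is derived either from $\atypetwo \in \typeocc{a}{\type}$ (codomain side, polarity preserved) or from $\atypetwo \in \typeocc{\bar a}{\M}$ (domain side, polarity flipped). In the codomain sub-case, the induction hypothesis yields $\atypethree \in \typeocc{\polcomp{a}{b}}{\type}$, and one further application of the codomain arrow rule gives $\atypethree \in \typeocc{\polcomp{a}{b}}{\M \rightarrow \type}$ directly. In the domain sub-case, the induction hypothesis yields $\atypethree \in \typeocc{\polcomp{\bar a}{b}}{\M}$; the domain arrow rule then flips this polarity once more, producing $\atypethree \in \typeocc{\overline{\polcomp{\bar a}{b}}}{\M \rightarrow \type}$, which by the second identity equals $\typeocc{\polcomp{a}{b}}{\M \rightarrow \type}$.

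The only real obstacle is the sign bookkeeping in the domain sub-case of the arrow step: without the identity $\overline{\polcomp{\bar a}{b}} = \polcomp{a}{b}$, the argument would naively split into the four values of $(a,b) \in \set{+,-}^2$ and verify each by hand. Factoring this identity out up front reduces the whole proof to three structural cases, each handled by a single invocation of the induction hypothesis followed by a single instance of the appropriate defining rule.
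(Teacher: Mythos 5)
Your proof is correct and follows essentially the same route as the paper's: an induction on the derivation of $\atypetwo \in \typeocc a \atype$, with the base (axiom) cases, the multi-set case, and the two arrow sub-cases, the domain one being the only place where the polarity flips. The paper spells out the positive and negative instances of each case separately (while explicitly noting that a polarity-parametric presentation is possible), whereas you factor the sign bookkeeping into the identities $\polcomp{+}{b}=b$ and $\overline{\polcomp{\bar a}{b}}=\polcomp{a}{b}$ --- both of which check out --- so yours is precisely the compact version the paper alludes to but does not write out.
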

  
 \begin{proof}
  Let $\neg + \defeq -$ and $\neg - \defeq +$.   By induction on $\atypetwo \in \typeocc \atype a$. The proof can be presented in a way that is completely parametric in the polarities, but for readability reasons we spell out the positive and negative cases separetely. Cases in which $a = +$:
  \begin{itemize}
    \item \emph{Axioms}, \ie $\atypetwo = \atype$. Note that $\polcomp + b = b$. Then $\atypethree \in \typeocc \atypetwo b$ becomes $\atypethree \in \typeocc \atype b = \typeocc \atype {\polcomp + b}$ as required.
    
    \item \emph{Positive occurrence in an element $\type$ of a multiset $\mtype$}, \ie $\atype = \mtype$ and $\atypetwo \in \typeocc \mtype +$ because $\atypetwo \in \typeocc \type +$. By ih, $\atypethree \in \typeocc \type {\polcomp + b}$ and so $\atypethree \in \typeocc \mtype {\polcomp + b}$ by one of the two rules about multisets.
    
    \item \emph{Positive occurrence on the right of $\tarrow\mtype\type$}, \ie $\atype = \tarrow\mtype\type$ and $\atypetwo \in \typeocc {\tarrow\mtype\type} +$ because $\atypetwo \in \typeocc \type +$. By \ih, $\atypethree \in \typeocc \type {\polcomp + b}$ and so $\atypethree \in \typeocc \mtype {\polcomp + b}$ by one of the two rules about arrow types.
    
    \item \emph{Negative occurrence on the left of $\tarrow\mtype\type$}, \ie $\atype = \tarrow\mtype\type$ and $\atypetwo \in \typeocc {\tarrow\mtype\type} +$ because $\atypetwo \in \typeocc \mtype -$. By \ih, $\atypethree \in \typeocc \type {\polcomp - b}$ and so $\atypethree \in \typeocc \mtype {\neg\polcomp - b} = \typeocc \mtype {\polcomp + b}$ by one of the two rules about arrow types.
  \end{itemize}
  
  Cases in which $a = -$:
  \begin{itemize}    
    \item \emph{Negative occurrence in an element $\type$ of a multiset $\mtype$}, \ie $\atype = \mtype$ and $\atypetwo \in \typeocc \mtype -$ because $\atypetwo \in \typeocc \type -$. By ih, $\atypethree \in \typeocc \type {\polcomp - b}$ and so $\atypethree \in \typeocc \mtype {\polcomp - b}$ by one of the two rules about multisets.
    
    \item \emph{Negative occurrence on the right of $\tarrow\mtype\type$}, \ie $\atype = \tarrow\mtype\type$ and $\atypetwo \in \typeocc {\tarrow\mtype\type} -$ because $\atypetwo \in \typeocc \type -$. By \ih, $\atypethree \in \typeocc \type {\polcomp - b}$ and so $\atypethree \in \typeocc \mtype {\polcomp - b}$ by one of the two rules about arrow types.
    
    \item \emph{Poisitive occurrence on the left of $\tarrow\mtype\type$}, \ie $\atype = \tarrow\mtype\type$ and $\atypetwo \in \typeocc {\tarrow\mtype\type} -$ because $\atypetwo \in \typeocc \mtype +$. By \ih, $\atypethree \in \typeocc \type {\polcomp + b}$ and so $\atypethree \in \typeocc \mtype {\neg\polcomp + b} = \typeocc \mtype {\polcomp  b}$ by one of the two rules about arrow types.
  \end{itemize}
\end{proof}

\subsection{Shrinking Correctness}
\gettoappendix {prop:shrinking-normal-forms-forall}
\begin{proof}
  By mutual induction on $\skneutral \tm$ and $\sknormal \tm$. 
  \begin{enumerate} 
    \item Cases of $\skneutral \tm$:  
  \begin{itemize}
	\item \emph{Variable}, \ie $\tm = \var$. Then 
	    $$\infer[\ax]{\Deri[(0, 0)] {\var \col \single \type} \var \type}{}$$
	    Moreover, $\tysize \type + \sksize\var =  \tysize \type + 0 = \tysize \type = \tysize{\single \type} = \tysize{\var \col \single \type}$.

	\item \emph{Application}, \ie $\tm = \tmtwo \tmthree$ with $\skneutral \tmtwo$ and $\sknormal \tmthree$. The hypothesis that $\tderiv$ is traditional forces the last rule of $\tderiv$ to be $\appsteps$ and $\tderiv$ to have the following form:	
	$$	\infer[\appsteps]{
\Deri[(\steps_\tmtwo+ \sum_\iI \stepstwo_i + 1, 0)]
{\typctx_\tmtwo \mplus (\mplus_\iI \typctxtwo_i)}
{\tmtwo \tmthree} \type
}
{	
\tderiv_\tmtwo \exder[\lo] \Deri[(\steps_\tmtwo, 0)] {\typctx_\tmtwo} \tmtwo {\tarrow{\mult{\typetwo_i}_\iI} \type} \quad
\infer[\many]{
	\Deri[(\sum_\iI \stepstwo_i, 0)] {\mplus_\iI \typctxtwo_i} \tmthree {\mult{\typetwo_i}_\iI} 
     }
	{
    		(\tderiv_\tmthree \exder[\lo] \Deri[(\stepstwo_i, 0)] {\typctxtwo_i} \tmthree {\typetwo_i})_\iI
	}
	}$$

	By \ih (Point 2) (repeatedly) applied to $\tmthree$, $\sksize\tmthree \leq \tysize {\typctxtwo_i} + \tysize {\typetwo_i}$ for every $\iI$, and so $\sksize\tmthree \leq \tysize {\mplus_\iI\typctxtwo_i} + \tysize {\mult{\typetwo_i}_\iI}$.

By \ih (Point 1) applied to $\tmtwo$,  $\sksize\tmtwo + \tysize {\tarrow {\mult{\typetwo_i}_\iI} \type} \leq \tysize {\typctx_\tmtwo}$.
	
	Then:
	$$\begin{array}{rcllllll}
	  \sksize\tm + \tysize {\type}  
	  & = &
	  \underbrace{\sksize\tmtwo + \sksize\tmthree + 1}_{=\sksize\tm} + \tysize {\type} 
	  \\
	  & \leq_{\tiny \mbox{\ih on $\tmthree$}} &
	  \sksize\tmtwo + \tysize {\mplus_\iI\typctxtwo_i} + \underbrace{\tysize {\mult{\typetwo_i}_\iI} + 1 + \tysize {\type}}_{=\tysize {\tarrow {\mult{\typetwo_i}_\iI} \type}}\\
	  & = &
	  \tysize {\mplus_\iI\typctxtwo_i} + \underbrace{\sksize\tmtwo + \tysize {\tarrow {\mult{\typetwo_i}_\iI} \type}}_{\leq \tysize {\typctx_\tmtwo}}
	  \\
	  & \leq_{\tiny \mbox{\ih on $\tmtwo$}} &
	  \tysize {\mplus_\iI\typctxtwo_i} + \tysize {\typctx_\tmtwo}
	  \\
	  & = & \tysize{\typctx_\tmtwo \mplus (\mplus_\iI \typctxtwo_i)}
	\end{array}$$

\end{itemize}

\item Cases of $\sknormal{\tm}$:
\begin{enumerate}
\item \emph{$\skneutral{\tm}$}. By \ih, $\tysize\type + \sksize\tm \leq \tysize\typctx$, from which it trivially follows $\sksize\tm \leq \tysize\typctx + \tysize\type = \tysize\tderiv$. 

\item \emph{Abstraction}, \ie $\tm = \la\vartwo\tmtwo$ and $\sknormal{\tmtwo}$. Since $\tderiv$ is traditional, its last rule is necessarily $\funsteps$. Two sub-cases:
	\begin{enumerate}
		\item $\vartwo \in \dom{\typctx_\tmtwo}$. Then let $\vartwo \col \M$ the declaration of $\vartwo$ in the premise of $\funsteps$. Then $\tderiv$ has the following form:
		$$\infer[\funsteps]{
				\tyjn{(\steps, 0)} {\la{\vartwo}\tmtwo} {\typctx} {\tarrow \M \typetwo}
			}
			{
				\tderiv_\tmtwo \exder[\lo] \tyjn{(\steps-1, 0)} \tmtwo {\vartwo:\M; \typctx} \typetwo}$$
		with $\type = \tarrow \M \typetwo$. We have
		  $$\begin{array}{rcllllll}
		    \sksize{\la\vartwo\tmtwo}  & = & \sksize\tmtwo +1 
		    \\
		    & \leq_{\ih} &
		    \tysize {\vartwo:\M; \typctx} + \tysize\typetwo+1
		    \\
		    & = &
		    
		    \tysize\typctx + \underbrace{\tysize \M+ \tysize\typetwo+1}_{= \tysize {\tarrow \M \typetwo} = \tysize \type }
		    \\
		    & = &
		    \tysize\typctx + \tysize \type 
		    \\
		    & = &
		    \tysize \tderiv
		   \end{array}$$
	

		\item $\vartwo \notin \dom{\typctx_\tmtwo}$. Then $\tderiv$ is the derivation:
		$$\infer[\funsteps]{
				\tyjn{(\steps-1, 0)} {\la{\vartwo}\tmtwo} {\typctx} {\tarrow {\emptymset} \typetwo}
			}
			{
				\tderiv_\tmtwo \exder[\lo] \tyjn{(\steps, 0)} \tmtwo {\typctx} \typetwo}$$		
		with $\type = \tarrow \emptymset \typetwo$. Then:
		  $$\begin{array}{rcllllll}
		    \sksize{\la\vartwo\tmtwo}  & = & \sksize\tmtwo +1 
		    \\
		    & \leq_{\ih} &
		    
		    \tysize {\typctx}+ \underbrace{\tysize\typetwo+1}_{= \tysize {\tarrow \emptymset \typetwo} = \tysize \type }
		    \\
		    & = &
		    \tysize\typctx + \tysize \type 
		   \\
		    & = &
		    \tysize \tderiv
		    \end{array}$$
	
	\end{enumerate}
\end{enumerate}
\end{enumerate}
\end{proof}

\gettoappendix {l:shrinking-neutral-spreading}
\begin{proof}
  By induction on
  $\loneutral \tm$: 
  \begin{itemize} 
  \item \emph{Variable}, \ie
  $\tm=\var$. Then $\typctx= \var:\mult{\type}$ and $\type$ is a
  positive occurrence of $\typctx$.
  
  \item \emph{Application}, \ie
  $\tm=\tmtwo\tmthree$, the last rule of $\tderiv$ can only be
  $\appsteps$ or $\appresult<\lo>$. In both cases the left subterm $\tmtwo$
  is typed by a sub-derivation
  $\tderiv_\tmtwo \exder[\lo] \Deri[(\steps', \result')]
  {\typctx_\tmtwo}{\tmtwo}{\typetwo}$ such that all types in
  $\typctx_\tmtwo$ appear in $\typctx$. Since $\loneutral \tm$ implies
  $\loneutral \tmtwo$, we can apply the \ih and obtain that $\typetwo$
  has a positive occurrence in $\typctx$, that is, that there is a
  declaration $\var\col\mtype$ in $\typctx$ such that
  $\typetwo \in \possubtype\mtype$. There are two cases, either
  $\typetwo = \type = \neutral$ or $\typetwo = \ty{\M'}\type$. In both
  cases $\type$ is a positive occurrence of $\typetwo$. By
  transitivity of polarised occurrences (\reflemma{trans-pol}), $\type$ is a positive occurrence of $\mtype$, and thus
  of $\typctx$.
   \end{itemize}
\end{proof}

\gettoappendix {prop:shrinking-subject-reduction}
\begin{proof}
The first part (without the shrinking hypothesis) is an easy induction on $\tm\tolo\tmtwo$. The \emph{moreover} part is also by induction on $\tm\tolo\tmtwo$, but it requires a strengthened statement, along the same lines of the proof for the tight case:

  If
  $\tm\Rew{\lo}\tmtwo$,
  $\tderiv\exder[\lo] \Deri[(\steps, \result)]\typctx{\tm}\type$,
  $\typctx$ is shrinking,
  and either $\type$ is shrinking or $\sysnotabs \lo \tm$, then there exists a typing
  $\tderivtwo\exder[\lo] \Deri[(\stepstwo, \result)]{\typctx}{\tmtwo}\type$ with $\stepstwo \leq \steps-2$.

The cases of evaluation at top level, under abstraction, and in the left subterm of an application follows exactly the schema of the tight case: at top level the tight/shrinking hypothesis does not play any role, the abstraction case immediately follows from the \ih, and the left application case follows from the reinforced hypothesis that the left subterm is not an abstraction. We treat the case of evaluation in the right subterm of an application, that is the delicate one, where shrinkness plays a crucial role.

The rule is:
\[\begin{array}{ccccc}
\infer{\tm = \tmthree \tmfour \tolo \tmthree \tmfive = \tmtwo}{\loneutral\tmthree \quad \tmfour \tolo \tmfive} 
\end{array}\]
There are two cases for the last rule of the derivation $\tderiv$:
\begin{itemize}
\item \emph{$\appsteps$ rule}:
\[\begin{array}{c}
\infer[\appsteps]{\Deri
[(\steps_\tmthree +_{\iI} \steps_i + 1, \result_\tmthree +_{\iI} \result_i)]
{\typctx = \typctxthree \mplus_{\iI} \typctxtwo_i}
{\tmthree \tmfour} \type}
{
	
 	\tderiv_\tmthree \exder[\lo] { \Deri[(\steps_\tmthree, \result_\tmthree)] \typctxthree \tmthree {\MSigma {\typetwo_i} {\iI} \rightarrow \type}}

\quad
 \infer[\many]{
\Deri[(+_{\iI} \steps_i, +_{\iI} \result_i)] {\mplus_{\iI} \typctxtwo_i} \tmfour {\mult{\typetwo_i}_{\iI}}
}
 {
 (\tderiv_{\tmfour_i} \exder[\lo] \Deri[(\steps_i, \result_i)] {\typctxtwo_i} \tmfour {\typetwo_i})_{\iI}
 }
}\\\\
\end{array}
\]
The \ih applied to each $\tderiv_{\tmfour_i}$ and $\tmfour \tolo \tmfive$ gives $\tderiv_{\tmfive_i}$ such that $\tderiv_{\tmfive_i} \exder[\lo] \Deri[(\stepstwo_i, \result_i)] {\typctxtwo_i} \tmfive {\typetwo_i}$ with $\stepstwo_i \leq \steps_i$ and $\size{\tderiv_{\tmfive_i}} \leq \size{\tderiv_{\tmfour_i}}$. Then the derivation $\tderivtwo$ given by:

\[\begin{array}{c}
\infer[\appsteps]{\Deri
[(\steps_\tmthree +_{\iI} \stepstwo_i+1, \result_\tmthree +_{\iI} \result_i)]
{\typctx = \typctxthree +_{\iI} \typctxtwo_i}
{\tmthree \tmfive} \type}
{\tderiv_\tmthree \exder[\lo] {\Deri[(\steps_\tmthree, \result_\tmthree)] \typctxthree \tmthree {\ty {\MSigma {\typetwo_i} {\iI} } \type}}
\quad
\infer{\Deri[(+_{\iI} \steps'_i, +_{\iI} \result_i)] {\mplus_{\iI} \typctxtwo_i} \tmfive {\mult{\typetwo_i}_{\iI}}}
{
 (\tderiv_{\tmfive_i} \exder[\lo] \Deri[(\stepstwo_i, \result_i)] {\typctxtwo_i} \tmfive {\typetwo_i})_{\iI}
}

}\\\\
\end{array}
\]
verifies the statement. 

\emph{Shrinking}: if
$\tderiv$ is shrinking then $\typctx$ is shrinking, and so is
$\typctxthree$. Since $\loneutral\tmthree$ holds by hypothesis, by
\reflemma{shrinking-neutral-spreading} $\ty {\MSigma {\typetwo_i} {\iI} } \type$ is
a positive occurrence of $\typctxthree$. By shrinkness of
$\typctxthree$, the multiset $\MSigma {\typetwo_i} {\iI} $ is not
empty. Moreover, every $\typetwo_i$ is shrinking: if---by
contradiction---one of them is not shrinking, then $\emm$ occurs
positively in $\typetwo_i$ and so it occurs negatively in $\ty
{\MSigma {\typetwo_i} {\iI} } \type$, and then by transitivity of occurrences (\reflemma{trans-pol})
  $\emm$ occurs negatively in $\typctxthree$ ---absurd. Note that all
derivations $\tderiv_{\tmfive_i}$ are then shrinking: the contexts
$\typctxtwo_i$ are shrinking because they are sub-contexts of
$\typctx$ and we just showed that the types $\typetwo_i$ are
shrinking. Then by \ih $\stepstwo_i \leq \steps_i -2$ for every $\iI$,
and so $\stepstwo = \steps_\tmthree +_{\iI} \stepstwo_i+1 \leq
\steps_\tmthree +_{\iI} (\steps_i -2)+1 \leq 
\underbrace{\steps_\tmthree +_{\iI}
\steps_i + 1}_{=\steps}- |I| \cdot 2  = \steps -|I| \cdot 2 \leq_{I \neq \es} \steps -2$, as required.

\item \emph{$\appresult<\lo>$ rule}: 
$$
\infer[\appresult]{
\Deri[(\steps_\tmthree + \steps_\tmfour, \result_\tmthree + \result_\tmfour+1)] {\typctx = \typctx_\tmthree \mplus \typctx_\tmfour} {\tmthree \tmfour} \neutype
}
{
\tderiv_\tmthree \exder[\lo] \Deri[(\steps_\tmthree, \result_\tmthree)] {\typctx_\tmthree} \tmthree \neutype 
\quad
\tderiv_\tmfour \exder[\lo] \Deri[(\steps_\tmfour, \result_\tmfour)] {\typctx_\tmfour} \tmfour \nf
}
$$

with $\steps = \steps_\tmthree + \steps_\tmfour$  and $\result = \result_\tmthree + \result_\tmfour+1$. 
The \ih applied to $\tderiv_\tmfour$ and $\tmfour \tolo \tmfive$ gives $\tderiv_\tmfive$ such that $\tderiv_\tmfive \exder[\lo] \Deri[(\steps_\tmfive, \result_\tmfour)] {\typctx_\tmfour} \tmfive \nf$ with $\steps_\tmfive \leq \steps_\tmfour$ and so $\size{\tderiv_\tmfive} \leq \size{\tderiv_\tmfour}$. Then the derivation $\tderivtwo$ given by:
$$
\infer[\appresult]{
\Deri[(\steps_\tmthree + \steps_\tmfive, \result_\tmthree + \result_\tmfour+1)] {\typctx = \typctx_\tmthree \mplus \typctx_\tmfour} {\tmthree \tmfour} \neutype
}
{
\tderiv_\tmthree \exder[\lo] \Deri[(\steps_\tmthree, \result_\tmthree)] {\typctx_\tmthree} \tmthree \neutype 
\quad
\tderiv_\tmfive \exder[\lo] \Deri[(\steps_\tmfive, \result_\tmfour)] {\typctx_\tmfour} \tmfive \nf
}
$$
verifies the statement. 

\emph{Shrinking}: if $\tderiv$ is shrinking then $\typctx_\tmfour$ is shrinking, and so is $\tderivtwo_\tmfour$ (because tight types are shrinking). By \ih then $\steps_\tmfive \leq \steps_\tmfour - 2$, and so $\stepstwo = \steps_\tmthree + \steps_\tmfive \leq \steps_\tmthree + \steps_\tmfour - 2 = \steps -2$, as required.

\end{itemize}
\end{proof}

\gettoappendix {tm:shrinking-correctness}
\begin{proof} 
By induction on $\losize\tderiv$. If $\tm$ is a $\tolo$ normal form then
by taking $\tmtwo \defeq \tm$ and $k:=0$ the statement follows from the
typing of normal forms given by
\refprop{shrinking-normal-forms-forall}, including the \emph{moreover} part. Otherwise, $\tm \tolo
\tmthree$ and by shrinking subject reduction
(\refprop{shrinking-subject-reduction}) there is a shrinking derivation $\tderivtwo \exder[\lo] \Deri[(\steps',
  \result)] {\typctx}{\tmthree}{\type}$ such that $\steps' \leq \steps-2$ and $\size \tderivtwo \leq \size \tderiv -2$. By \ih, there exists a
$\tolo$ normal form $\tmtwo$ and a natural number $k' \leq
\steps'/2$ satisfying the statement with respect to
$\tmthree$. Let $k \defeq k'+1$. Then:
\begin{enumerate}
  \item \emph{Steps}: $\tm \tolo^k \tmtwo$ because $\tm \tolo \tmthree
    \tolo^{k'} \tmtwo$. Moreover, $k = k'+1 \leq_{\ih} \steps'/2 +1 \leq
     (\steps-2)/2 +1 = \steps/2$.

  \item \emph{Result size}: $\losize\tmtwo \leq_{\ih} \losize\tderivtwo - 2k' \leq \losize\tderiv -2 -2k' = \losize\tderiv -2(k' +1) =  \losize\tderiv -2k$.
\end{enumerate}
The \emph{moreover} part follows from the \ih

\end{proof}

\subsection{Shrinking Completeness}

\gettoappendix {prop:shrinking-normal-forms-exist}
\begin{proof}
By mutual induction on $\loneutral{\tm}$ and $\lonormal{\tm}$. 
\begin{enumerate}
  \item Cases of $\loneutral{\tm}$:
\begin{itemize}
	\item \emph{Variable}, \ie $\tm = \var$. Then 
	    $$\infer[\ax]{\Deri[(0, 0)] {\var \col \single \type} \var \type}{}$$
	    Moreover, $\tysize \type + \sksize\var =  0 = \tysize{\single \type} = \tysize{\var \col \single \type}$.
	    
	\item \emph{Application}, \ie $\tm = \tmtwo \tmthree$ with $\skneutral \tmtwo$ and $\sknormal \tmthree$. By \ih applied to $\tmthree$, there exists a type $\typetwo$ and a traditional shrinking typing $\tderiv_\tmthree \exder[\lo] \tyjn{(\sksize\tmthree, 0)} \tmthree {\typctx_\tmthree} \typetwo$ with 
	$\sksize\tmthree = \tysize {\typctx_\tmthree} + \tysize \typetwo$.

	Now, let $\type$ be a type, and consider the type $\tarrow {\mult\typetwo} \type$. By \ih applied to $\tmtwo$ and $\tarrow {\mult\typetwo} \type$ there exists $\tderiv_\tmtwo \exder[\lo] \tyjn{(\sksize\tmtwo, 0)} \tmtwo {\typctx_\tmtwo} {\tarrow {\mult\typetwo} \type}$ with $\sksize\tmtwo + \tysize {\tarrow {\mult\typetwo} \type} = \tysize {\typctx_\tmtwo}$.
	
	Then the derivation $\tderiv$ built as follows:
$$	\infer[\appsteps]{
\Deri[(\sksize\tmtwo + \sksize\tmthree + 1, 0)]
{\typctx_\tmtwo \mplus \typctx_\tmthree}
{\tmtwo \tmthree} \type
}
{	
\tderiv_\tmtwo \exder[\lo] \Deri[(\sksize\tmtwo, 0)] {\typctx_\tmtwo} \tmtwo {\tarrow{\mult\typetwo} \type} \quad
\infer[\many]{
	\Deri[(\sksize\tmthree, 0)] {\typctx_\tmthree} \tmthree {\mult\typetwo} 
     }
	{
    		\tderiv_\tmthree \exder[\lo] \Deri[(\sksize\tmthree, 0)] {\typctx_\tmthree} \tmthree \typetwo
	}
	}$$
It is traditional and shrinking because $\tderiv_\tmtwo$ and $\tderiv_\tmthree$ are. Moreover,
	$$\begin{array}{rcllllll}
	  \sksize\tm + \tysize {\type}  
	  & = &
	  \underbrace{\sksize\tmtwo + \sksize\tmthree + 1}_{=\sksize\tm} + \tysize {\type} 
	  \\
	  & =_{\tiny \mbox{\ih on $\tmthree$}} &
	  \sksize\tmtwo + \tysize {\typctx_\tmthree} + \underbrace{\tysize \typetwo + 1 + \tysize {\type}}_{=\tysize {\tarrow {\mult\typetwo} \type}}\\
	  & = &
	  \tysize {\typctx_\tmthree} + \underbrace{\sksize\tmtwo + \tysize {\tarrow {\mult\typetwo} \type}}_{\leq \tysize {\typctx_\tmtwo}}
	  \\
	  & =_{\tiny \mbox{\ih on $\tmtwo$}} &
	  \tysize {\typctx_\tmthree} + \tysize {\typctx_\tmtwo}
	  \\
	  & = & \tysize{\typctx_\tmtwo \mplus \typctx_\tmthree}
	\end{array}$$
\end{itemize}

\item Cases of $\lonormal{\tm}$:
\begin{enumerate}
\item \emph{$\skneutral{\tm}$}. By \ih, \emph{for every} type $\type$ there exists a traditional shrinking typing $\tderiv \exder[\ske] \tyjn{(\sksize\tm, 0)} \tm \typctx \type$ satisfying $\tysize\type + \sksize\tm = \tysize\typctx$. It is then enough to pick $\sigma \defeq \atomtype$, so that $\tysize\type = 0$ and the statement trivially holds, because then  $\sksize\tm = \tysize\type + \sksize\tm =_{\ih} \tysize\typctx = \tysize\typctx + \tysize\type = \tysize \tderiv$.

\item \emph{Abstraction}, \ie $\tm = \la\vartwo\tmtwo$ and $\sknormal{\tmtwo}$. By \ih, there exist a type $\typetwo$ and a shrinking \communicative typing $\tderiv_\tmtwo \exder[\lo] \tyjn{(\sksize\tmtwo, 0)} \tmtwo {\typctx_\tmtwo} \typetwo$ with
$\sksize\tmtwo = \tysize \tderiv = \tysize {\typctx_\tmtwo} + \tysize\typetwo $ . 

Two sub-cases: 
	\begin{enumerate}
		\item $\vartwo \in \dom{\typctx_\tmtwo}$. Then let $\vartwo \col \M$ the declaration of $\vartwo$ in $\typctx_\tmtwo$ and set $\typctx$ be $\typctx_\tmtwo$ without $\vartwo \col \M$. Then let $\tderiv$ be the derivation
		$$\infer[\funsteps]{
				\tyjn{(\sksize\tmtwo+1, 0)} {\la{\vartwo}\tmtwo} {\typctx} {\tarrow \M \typetwo}
			}
			{
				\tderiv_\tmtwo \exder[\lo] \tyjn{(\sksize\tmtwo, 0)} \tmtwo {\vartwo:\M; \typctx} \typetwo}$$
		which is traditional and shrinking because $\tderiv_\tmtwo$ is.
		Setting $\type \defeq \tarrow \M \typetwo$, we have
		  $$\begin{array}{rcllllll}
		    \sksize{\la\vartwo\tmtwo} & = & \sksize\tmtwo +1 
		    \\
		    & =_{\ih} &
		    \tysize {\vartwo:\M; \typctx} + \tysize\typetwo+1
		    \\
		    & = &
		    
		    \tysize\typctx + \underbrace{\tysize \M+ \tysize\typetwo+1}_{= \tysize {\tarrow \M \typetwo} = \tysize \type }
		    \\
		    & = &
		    \tysize\typctx + \tysize \type 
		    \\
		    & = &
		\tysize \tderiv
		   \end{array}$$
	

		\item $\vartwo \notin \dom{\typctx_\tmtwo}$. Then let $\tderiv$ be the derivation
		$$\infer[\funsteps]{
				\tyjn{(\sksize\tmtwo+1, 0)} {\la{\vartwo}\tmtwo} {\typctx_\tmtwo} {\tarrow {\emptymset} \typetwo}
			}
			{
				\tderiv_\tmtwo \exder[\lo] \tyjn{(\sksize\tmtwo, 0)} \tmtwo {\typctx_\tmtwo} \typetwo}$$
		which is traditional and shrinking (because $\tarrow \emptymset \typetwo$ is shrinking).
		Now, if we set $\type \defeq \tarrow \emptymset \typetwo$ and $\typctx \defeq \typctx_\tmtwo$ we obtain
		  $$\begin{array}{rcllllll}
		    \sksize{\la\vartwo\tmtwo}  & = & \sksize\tmtwo +1 
		    \\
		    & =_{\ih} &
		    
		    \underbrace{\tysize {\typctx_\tmtwo}}_{\tysize {\typctx}}+ \underbrace{\tysize\typetwo+1}_{= \tysize {\tarrow \emptymset \typetwo} = \tysize \type }
		    \\
		    & = &
		    \tysize\typctx + \tysize \type 
		   \\
		    & = &
		    \tysize \tderiv
		    \end{array}$$
	\end{enumerate}
\end{enumerate}
\end{enumerate}

\end{proof}

\gettoappendix {prop:shrinking-subject-expansion}
\begin{proof}
The first part (without the shrinking hypothesis) is an easy induction on $\tm\tolo\tmtwo$. The part about shrinking minimal typings is also by induction on $\tm\tolo\tmtwo$, but it requires a strengthened statement, along the same lines of the proof for the tight case and of subject reduction:

  if
  $\tm\Rew{\lo}\tmtwo$,
  $\tderiv\exder[\lo] \Deri[(\steps, \result)]\typctx{\tmtwo}\type$,
  $\typctx$ is shrinking and of minimal size,
  and either $\type$ is shrinking or $\sysnotabs \lo \tm$, then there exists a typing
  $\tderivtwo\exder[\lo] \Deri[(\steps +2, \result)]{\typctx}{\tm}\type$.

The cases of evaluation at top level, under abstraction, and in the left subterm of an application follows exactly the schema of the tight case: at top level the tight / shrinking hypothesis does not play any role, the abstraction case immediately follows from the \ih, and the left application case follows from the reinforced hypothesis that the left subterm is not an abstraction. We treat the case of evaluation in the right subterm of an application, that is the delicate one, where shrinkness plays a crucial role.

The rule is:
\[\begin{array}{ccccc}
\infer{\tm = \tmthree \tmfour \tolo \tmthree \tmfive = \tmtwo}{\loneutral\tmthree \quad \tmfour \tolo \tmfive} 
\end{array}\]
There are two cases for the last rule of the derivation $\tderiv$:
\begin{itemize}
\item \emph{$\appsteps$ rule}:
\[\begin{array}{c}
\infer[\appsteps]{\Deri
[(\steps_\tmthree +_{\iI} \steps_i + 1, \result_\tmthree +_{\iI} \result_i)]
{\typctx = \typctxthree \mplus_{\iI} \typctxtwo_i}
{\tmthree \tmfive} \type}
{
	
 	\tderiv_\tmthree \exder[\lo] { \Deri[(\steps_\tmthree, \result_\tmthree)] \typctxthree \tmthree {\MSigma {\typetwo_i} {\iI} \rightarrow \type}}

\quad
 \infer[\many]{
\Deri[(+_{\iI} \steps_i, +_{\iI} \result_i)] {\mplus_{\iI} \typctxtwo_i} \tmfive {\mult{\typetwo_i}_{\iI}}
}
 {
 (\tderiv_{\tmfive_i} \exder[\lo] \Deri[(\steps_i, \result_i)] {\typctxtwo_i} \tmfive {\typetwo_i})_{\iI}
 }
}\\\\
\end{array}
\]
The \ih applied to each $\tderiv_{\tmfive_i}$ and $\tmfour \tolo \tmfive$ gives $\tderiv_{\tmfour_i}$ such that $\tderiv_{\tmfour_i} \exder[\lo] \Deri[(\stepstwo_i, \result_i)] {\typctxtwo_i} \tmfour {\typetwo_i}$ with $\stepstwo_i \geq \steps_i$ and $\size{\tderiv_{\tmfour_i}} \geq \size{\tderiv_{\tmfive_i}}$. Then the derivation $\tderivtwo$ given by:
\[\begin{array}{c}
\infer[\appsteps]{\Deri
[(\steps_\tmthree +_{\iI} \stepstwo_i+1, \result_\tmthree +_{\iI} \result_i)]
{\typctx = \typctxthree +_{\iI} \typctxtwo_i}
{\tmthree \tmfour} \type}
{\tderiv_\tmthree \exder[\lo] {\Deri[(\steps_\tmthree, \result_\tmthree)] \typctxthree \tmthree {\ty {\MSigma {\typetwo_i} {\iI} } \type}}
\quad
\infer{\Deri[(+_{\iI} \steps'_i, +_{\iI} \result_i)] {\mplus_{\iI} \typctxtwo_i} \tmfour {\mult{\typetwo_i}_{\iI}}}
{
 (\tderivtwo_{\tmfour_i} \exder[\lo] \Deri[(\stepstwo_i, \result_i)] {\typctxtwo_i} \tmfour {\typetwo_i})_{\iI}
}

}\\\\
\end{array}
\]
verifies the statement. 

\emph{Shrinking}: exactly the same reasoning used for shrinking subject reduction proves that $\tderiv_\tmthree$ is shrinking, $I$ is non-empty, and the $\typetwo_i$ are all shrinking. The \ih then provides $\stepstwo_i \geq \steps_i +2$ for every $\iI$, from which the property follows.

\item \emph{$\appresult<\lo>$ rule}: 
$$
\infer[\appresult]{
\Deri[(\steps_\tmthree + \steps_\tmfive, \result_\tmthree + \result_\tmfive+1)] {\typctx = \typctx_\tmthree \mplus \typctx_\tmfive} {\tmthree \tmfive} \neutype
}
{
\tderiv_\tmthree \exder[\lo] \Deri[(\steps_\tmthree, \result_\tmthree)] {\typctx_\tmthree} \tmthree \neutype 
\quad
\tderiv_\tmfive \exder[\lo] \Deri[(\steps_\tmfive, \result_\tmfive)] {\typctx_\tmfive} \tmfive \nf
}
$$

with $\steps = \steps_\tmthree + \steps_\tmfive$ and $\result = \result_\tmthree + \result_\tmfive+1$. 
The \ih applied to $\tderiv_\tmfive$ and $\tmfour \tolo \tmfive$ gives $\tderivtwo_\tmfour$ such that $\tderivtwo_\tmfour \exder[\lo] \Deri[(\steps_\tmfive, \result_\tmfour)] {\typctx_\tmfour} \tmfive \nf$ with $\steps_\tmfour \geq \steps_\tmfive$ and so $\size{\tderivtwo_\tmfour} \geq \size{\tderiv_\tmfive}$. Then the derivation $\tderivtwo$ given by:
$$
\infer[\appresult]{
\Deri[(\steps_\tmthree + \steps_\tmfour, \result_\tmthree + \result_\tmfive+1)] {\typctx = \typctx_\tmthree \mplus \typctx_\tmfour} {\tmthree \tmfour} \neutype
}
{
\tderiv_\tmthree \exder[\lo] \Deri[(\steps_\tmthree, \result_\tmthree)] {\typctx_\tmthree} \tmthree \neutype 
\quad
\tderivtwo_\tmfour \exder[\lo] \Deri[(\steps_\tmfour, \result_\tmfive)] {\typctx_\tmfour} \tmfive \nf
}
$$
verifies the statement. 

\emph{Shrinking}: if $\tderiv$ is shrinking then $\typctx_\tmfour$ is shrinking, and so is $\tderivtwo_\tmfour$ (because tight types are shrinking). By \ih then $\steps_\tmfour \geq \steps_\tmfive + 2$, and so $\stepstwo = \steps_\tmthree + \steps_\tmfour \leq \steps_\tmthree + \steps_\tmfive + 2 = \steps +2$, as required.

\end{itemize}
\end{proof}

\gettoappendix {thm:shrinking-completeness}
\begin{proof}
By induction on $\tm \tolo^k \tmtwo$. If $k = 0$ the statement is
given by the existence of tight typings for $\tolo$-normal terms
(\refprop{shrinking-normal-forms-exist}). Let $k > 0$ and
$\tm \tolo \tmthree \tolo^{k-1}\tmtwo$. By \ih, there exists a tight
typing derivation $\tderivtwo \exder[\lo] \tyjn{(\stepstwo, 0)}  
\tmthree \typctx \type$ with $(k-1) \leq \stepstwo /2$, and $\sksize\tmtwo = \tysize\tderivtwo$. By shrinking subject expansion (\refprop{shrinking-subject-expansion})
there exists a typing derivation $\tderiv$ of $\tm$ with the same
types in the ending judgement of $\tderivtwo$---then $\tderiv$ is
shrinking and $\tysize\tderiv = \tysize\tderivtwo$---and with indices $(\steps, 0)$ with $\steps \geq \stepstwo+2$. Then $k = k-1+1 \leq_{\ih} \stepstwo/ 2 +2 = (\stepstwo +2)/2 \leq \steps/ 2$.
\end{proof}

}
          {}

\end{document}